\colorlet{mycolor}{blue!25}
\algrenewcommand\algorithmicindent{0.75em}
\newcommand{\algmargin}{\the\ALG@thistlm}
\newlength{\whilewidth}
\algnewcommand{\parState}[1]{\State%
	\parbox[t]{\dimexpr\linewidth-\algmargin}{\strut #1\strut}}
\colorlet{mygray}{gray!40}
\newcommand*\circled[1]{\tikz[baseline=(char.base)]{
		\node[shape=circle,draw=red,inner sep=1pt] (char) {#1};}}
\newcommand*\bcircled[1]{\tikz[baseline=(char.base)]{
		\node[shape=circle,draw=black,inner sep=1pt] (char) {#1};}}
\declaretheorem[name=Claim]{claim}
\crefname{thm}{Theorem}{Theorems}
\crefname{prop}{Proposition}{Propositions}
\newcommand{\impord}[1]{\vartriangleright_{#1}} 
\newcommand{\ma}{\mathcal{A}}
\newcommand{\mr}{\mathcal{R}}
\newcommand{\mP}{\mathcal{P}}
\newcommand{\ram}{random} 
\newcommand{\sd}[1]{\nobreak\succeq^{sd}_{#1}\allowbreak}
\newcommand{\fsdef}{\text{sd-}\allowbreak\text{envy-}\allowbreak\text{freeness}}
\newcommand{\Fsdef}{\text{Sd-}\allowbreak\text{envy-}\allowbreak\text{freeness}}
\newcommand{\sdef}{\text{sd-}\allowbreak\text{EF}}
\newcommand{\Sdef}{\text{Sd-}\allowbreak\text{EF}}
\newcommand{\sdefa}{\text{sd-}\allowbreak\text{EF}}
\newcommand{\fepopt}{\text{ex-post }\allowbreak\text{Pareto-}\allowbreak\text{efficiency}}
\newcommand{\epopt}{\text{ep-}\allowbreak\text{PE}}
\newcommand{\fsdopt}{\text{sd-}\allowbreak\text{Pareto-}\allowbreak\text{efficiency}}
\newcommand{\Fsdopt}{\text{Sd-}\allowbreak\text{Pareto-}\allowbreak\text{efficiency}}
\newcommand{\sdopt}{\text{sd-}\allowbreak\text{PE}}
\newcommand{\sdopta}{\text{sd-}\allowbreak\text{PE}}
\newcommand{\fsdsp}{\text{sd-}\allowbreak\text{weak-}\allowbreak\text{strategy}\allowbreak\text{proofness}}
\newcommand{\Fsdsp}{\text{Sd-}\allowbreak\text{weak-}\allowbreak\text{strategy}\allowbreak\text{proofness}}
\newcommand{\sdsp}{\text{sd-}\allowbreak\text{WSP}}
\newcommand{\sdspa}{\text{sd-}\allowbreak\text{WSP}}
\newcommand{\fsdssp}{\text{sd-}\allowbreak\text{strategy}\allowbreak\text{proofness}}
\newcommand{\Fsdssp}{\text{Sd-}\allowbreak\text{strategy}\allowbreak\text{proofness}}
\newcommand{\sdssp}{\text{sd-}\allowbreak\text{SP}}
\newcommand{\Sdssp}{\text{Sd-}\allowbreak\text{SP}}
\newcommand{\sdsspa}{\text{sd-}\allowbreak\text{SP}}
\newcommand{\ucs}{U}			
\newcommand{\vx}{{\bf{x}}}
\newcommand{\vy}{{\bf{y}}}
\newcommand{\fsdwef}{\text{sd-}\allowbreak\text{weak-}\allowbreak\text{envy-}\allowbreak\text{freeness}}
\newcommand{\Fsdwef}{\text{Sd-}\allowbreak\text{weak-}\allowbreak\text{envy-}\allowbreak\text{freeness}}
\newcommand{\sdwef}{\text{sd-}\allowbreak\text{WEF}}
\newcommand\xiaoxi[1]{{\color{brown} \footnote{\color{brown}xiaoxi: #1}} }
\newtheorem{definition}{Definition}
\newtheorem{lemma}{Lemma}
\newtheorem{example}{Example}
\newtheorem{corollary}{Corollary}
\newtheorem{remark}{Remark}
\newenvironment{sketch}{{\par\bf\noindent Proof sketch.}}{\nobreak\hfill$\Box$}
\newcommand{\am}{\text{EBM}}
\newcommand{\abm}{\text{ABM}}
\newcommand{\pcr}{\text{PRE}}
\newcommand{\upre}{\text{UPRE}}
\newcommand{\rk}[2]{rk(#1,#2)}
\newcommand{\fetep}{\text{strong }\allowbreak\text{equal }\allowbreak\text{treatment }\allowbreak\text{of }\allowbreak\text{equals}}
\newcommand{\Fetep}{\text{Strong }\allowbreak\text{equal }\allowbreak\text{treatment }\allowbreak\text{of }\allowbreak\text{equals}}
\newcommand{\etep}{\text{SETE}}
\newcommand{\ffhr}{\text{favoring-}\allowbreak\text{higher-}\allowbreak\text{ranks}}
\newcommand{\Ffhr}{\text{Favoring-}\allowbreak\text{higher-}\allowbreak\text{ranks}}
\newcommand{\fhr}{\text{FHR}}
\newcommand{\fefr}{\text{ex-post }\allowbreak\text{favoring-}\allowbreak\text{higher-}\allowbreak\text{ranks}}
\newcommand{\efr}{\text{ep-}\allowbreak\text{FHR}}
\newcommand{\efhr}{\text{ep-}\allowbreak\text{FHR}}
\newcommand{\Efhr}{\text{Ep-}\allowbreak\text{FHR}}
\newcommand{\ffhcr}{\text{favoring-}\allowbreak\text{eagerness-}\allowbreak\text{for-}\allowbreak\text{remaining-}\allowbreak\text{items}}
\newcommand{\fhcr}{\text{FERI}}
\newcommand{\fefcr}{\text{ex-post }\allowbreak\ffhcr{}}
\newcommand{\efcr}{\text{ep-}\allowbreak\fhcr{}}
\newcommand{\Efcr}{\text{Ep-}\allowbreak\fhcr{}}
\newcommand{\fsdrf}{\text{ex-ante }\allowbreak\ffhr{}}
\newcommand{\Fsdrf}{\text{Ex-ante }\allowbreak\fhr{}}
\newcommand{\sdrf}{\text{ea-}\allowbreak\fhr{}}
\newcommand{\sdrfa}{\text{ea-}\allowbreak\fhr{}}
\newcommand{\fsdcrf}{\text{ex-ante }\allowbreak\ffhcr{}}
\newcommand{\sdcrf}{\text{ea-}\allowbreak\fhcr{}}
\newcommand{\Sdcrf}{\text{Ea-}\allowbreak\fhcr{}}
\newcommand{\sdcrfa}{\text{ea-}\allowbreak\fhcr{}}
\newcommand{\fopt}{\text{Pareto-}\allowbreak\text{efficiency}}
\newcommand{\opta}{\text{PE}}
\newcommand{\opt}{\text{PE}}
\newcommand{\tp}[2]{\nobreak top(#1,#2)\allowbreak}
\newcommand{\frkm}{\text{rank-}\allowbreak\text{maximality}}
\newcommand{\rkm}{\text{RM}}
\newcommand{\ferkm}{\text{ex-post }\allowbreak\text{rank-}\allowbreak\text{maximality}}
\newcommand{\erkm}{\text{ep-}\allowbreak\text{RM}}
\newcommand{\fcm}{\text{FCM}}
\newcommand{\ffcm}{\text{first-}\allowbreak\text{choice }\allowbreak\text{maximality}}
\newcommand{\Ffcm}{\text{First-}\allowbreak\text{choice }\allowbreak\text{maximality}}
\newcommand{\ffcma}{\text{first-}\allowbreak\text{choice }\allowbreak\text{maximal}}
\newcommand{\fpop}{\text{popularity}}
\newcommand{\fpopa}{\text{popular}}
\newcommand{\pop}{\text{POP}}
\newcommand{\tps}[2]{T_{#1,#2}}
\newcommand{\Mod}[1]{\ (\mathrm{mod}\ #1)}
\newcommand{\mU}{u}
\newcommand\Myperm[2][^n]{\prescript{#1\mkern-2.5mu}{}P_{#2}}
\newcommand\Mycomb[2][^n]{\prescript{#1\mkern-0.5mu}{}C_{#2}}
\def\oldproof{true}
\renewcommand{\cite}{\shortcite}
\renewcommand{\citeA}{\shortciteA}
\renewcommand{\citeR}{\shortciteR}
\begin{document}

\title{Favoring Eagerness for Remaining Items:\\ Designing Efficient, Fair, and Strategyproof Mechanisms}

\author{\name Xiaoxi Guo \email guoxiaoxi@pku.edu.cn \\
       \addr  Key Laboratory of High Confidence Software Technologies
        (MOE), School of Computer Science, Peking University,
        Beijing 100871, China
       \AND
       \name Sujoy Sikdar \email ssikdar@binghamton.edu \\
       \addr  Department of Computer Science, Binghamton University,\\
        4400 Vestal Parkway East, Binghamton 13902, New York, USA
       \AND
       \name Lirong Xia \email xialirong@gmail.com \\
       \addr  Department of Computer Science, Rensselaer Polytechnic Institute,\\
       110 Eighth Street, Troy 610101, New York, USA
       \AND
        \name Yongzhi Cao{\rm\textsuperscript{\Letter}} \email caoyz@pku.edu.cn \\
       \addr  Key Laboratory of High Confidence Software Technologies
        (MOE), School of Computer Science, Peking University,
        Beijing 100871, China
       \AND
        \name Hanpin Wang \email whpxhy@pku.edu.cn \\
       \addr   School of Computer Science and Cyber Engineering, Guangzhou
        University,\\ Guangzhou 510006, China\\
       Key Laboratory of High Confidence Software Technologies
        (MOE), School of Computer Science, Peking University,
        Beijing 100871, China
       }


\maketitle

\begin{abstract}
In the assignment problem, the goal is to assign indivisible items to agents who have ordinal preferences, efficiently and fairly, in a strategyproof manner. In practice, {\em\ffcm{}}, i.e., assigning a maximal number of agents their top items, is often identified as an important efficiency criterion and measure of agents' satisfaction. In this paper, we propose a natural and intuitive efficiency property, {\em \ffhcr{}} (\fhcr{}), which requires that each item is allocated to an agent who ranks it highest among remaining items, thereby implying \ffcm{}. Using \fhcr{} as a heuristic, we design mechanisms that satisfy ex-post or ex-ante variants of \fhcr{} together with combinations of other desirable properties of efficiency (\fopt{}), fairness (\fetep{} and \fsdwef{}), and strategyproofness (\fsdsp{}). 
We also explore the limits of \fhcr{} mechanisms in providing stronger efficiency, fairness, or strategyproofness guarantees through impossibility results.

\end{abstract}

\section{Introduction}\label{sec:intro}
In the {\em assignment problem}~\cite{hylland1979efficient,zhou1990conjecture}, $n$ agents have unit demands and {\em strict ordinal preferences} for $n$ items, each with unit supply, and the goal is to compute an {\em assignment} which allocates each agent with one unit of items and (approximately) maximizes agent satisfaction. This serves as a useful model for a variety of problems involving houses~\cite{shapley1974cores}, dormitory rooms~\cite{chen2002improving}, school choice without priorities~\cite{miralles2009school}, and computational resources in cloud computing~\cite{Ghodsi11:Dominant,Ghodsi12:Multi,Grandl15:Multi}. 
Due to the wide applicability of the assignment problem, there is a rich literature pursuing the design of assignment mechanisms satisfying desirable properties of efficiency, fairness, and strategyproofness. However, many of these properties are incompatible with each other, and trade-offs must be made.

In several practical assignment problems, whether a maximal number of agents are allocated their respective top items, or {\em \ffcm{}} (\fcm), is identified as an important measure of agents' satisfaction with an assignment. For example, in school choice programs, the percentage of students admitted to their most preferred school is often prominently reported in mass media as a measure of student welfare, and is therefore also an important consideration for school administrators~\cite{dur2018first}.
Often, additional efficiency guarantees are also desired such as {\em \fopt{}} (\opt{}), which requires that an assignment cannot be improved upon so that some agents are better off and no agent is worse off. 
In light of these considerations, we seek to address the following question in this paper: 
{\em Can we design mechanisms that satisfy important efficiency criteria (such as \fcm{} and \opt{} simultaneously), while also providing desirable fairness and strategyproofness guarantees?}

The desire for efficiency has motivated the design of mechanisms that (approximately) maximize total satisfaction by a natural heuristic which seeks to allocate each item to an agent who ranks it as highly as possible. A prominent example of this is the famous Boston mechanism, which proceeds iteratively by allocating as many items as possible to agents who rank it as their first choice, then allocating as many items as possible to agents who rank the items in second position, and so on. In fact, \citeA{Kojima2014:Boston} showed that the Boston mechanism is characterized by a formalization of this natural heuristic, the {\em \ffhr{}} (\fhr) efficiency property implying both \fcm{} and \opt{}, which requires that each item is allocated to an agent that ranks it highest, unless every such agent is allocated an item she ranks higher.

However, the Boston mechanism has long been criticized for failing to provide strategyproofness~\cite{abdulkadiroglu2006changing,pathak2017really,pathak2008leveling,roth2005pairwise} which is often considered equally important to \fcm{} in practical applications like school choice and kidney exchange. Several works have attempted to address this failing by proposing variants of the Boston mechanism. 
Most notably,~\citeA{Mennle2021:Partial} showed that some members of {\em adaptive Boston mechanisms} (\abm{},~\citeR{alcalde1996implementation}) satisfies strategyproofness, but do not consider the question of fairness.

When items are indivisible, even relatively basic fairness notions such as the equal treatment of agents with identical preferences can only satisfied by a random mechanism, such as the Boston mechanism where ties between agents are broken using a lottery. \citeA{Ramezanian2021:Ex-post} showed that for any lottery, the expected output of the Boston mechanism satisfies {\em ex-post} \fhr{} (\efhr{}). However, they also showed that no \efhr{} mechanism can satisfy either the fairness property {\em \fsdef{}} (\sdef), or the strategyproofness property {\em \fsdssp{}} (\sdssp). \Efhr{} is also incompatible with a combination of the weaker strategyproofness property {\em \fsdsp{}} (\sdsp) and the basic fairness property of {\em \fetep{}} (\etep{}). Here, \sdef{} is an extension of envy-freeness~\cite{foley1966resource,VARIAN197463} which requires that no agent considers her allocation to be dominated by that of another agent when allocations are compared using the notion of {\em stochastic dominance} (sd,~\citeR{Bogomolnaia01:New}); \sdssp{} and \sdsp{} require that no agent can manipulate the outcome of the mechanism to her benefit by misreporting her preferences~\cite{Bogomolnaia01:New}; and \etep{} requires that agents who share a common prefix in their rankings of items are allocated items in the shared prefix with equal probability~\cite{nesterov2017fairness}.

\subsection{Our Contributions}
We begin by showing that \efhr{} is not compatible with \etep{} and \sdwef{} (a mildly weaker variant of \sdef{}) in \Cref{prop:impefr}, which complements the impossibility results by \citeA{Ramezanian2021:Ex-post} of the incompatibility of \fhr{} with \sdef{} and with \etep{} and \sdsp{}. Together, this means that mechanisms that satisfy \efhr{} do not provide an avenue to answer our question (see \Cref{sec:app:imp}).

Our main conceptual contribution is a natural alternative principle for the design of assignment mechanisms: that each item be allocated to an agent most {\em ``eager''} for it, i.e., ranks it highest {\em among remaining items}. This forms the basis of a novel efficiency property, \ffhcr{} (\fhcr{}), which implies both \fcm{} and \opt{}.

We provide an affirmative answer to the question we seek to address in the paper through our main technical contributions. Using \fhcr{} as a heuristic, we design two mechanisms which  satisfy desirable combinations of efficiency, fairness, and strategyproofness properties (defined formally in Section~\ref{sec:properties}, and summarized in~\Cref{tab:abbrp}):


\paragraph{An ex-post \fhcr{} (\efcr{}), fair, and strategyproof mechanism.} The {\em eager Boston mechanism} (\am{},~\Cref{alg:am}) we design is efficient, fair, and strategyproof. \am{} satisfies ex-post \fhcr{} (\efcr), which implies ex-post \fcm{} and ex-post \opt{}, and also satisfies \etep{}, \sdwef{}, and \sdsp{} (\Cref{thm:amp}). 
\am{} bears a close resemblance to the \abm{} family of mechanisms (\Cref{alg:abm}), but as we show, \am{} is not a member of \abm{} (\Cref{rmk:amnotabm}), although they are closely related: Every \efcr{} assignment, including the output of \am{}, can be computed by some member of \abm{}, and every member of \abm{} satisfies \efcr{} (\Cref{thm:abmchar}).

\paragraph{An ex-ante \fhcr{} (\sdcrf{}) and fair mechanism.} We identify the {\em uniform probabilistic respecting eagerness mechanism} (\upre{},~\Cref{dfn:upre}) and show that it satisfies  \etep{} and \sdwef{}~(\Cref{thm:uprep}). In addition, \upre{} satisfies ex-ante \fhcr{} (\sdcrf{}) which implies ex-post \fcm{}, \epopt{}, and \sdopt{}. This is because \upre{} belongs to the family of {\em probabilistic respecting eagerness mechanisms} (\pcr{},~\Cref{alg:pcr}), and as we show that: Every member of \pcr{} satisfies \sdcrf{}, and in addition, every \sdcrf{} assignment must be the output of some member of \pcr{} (\Cref{thm:familychar}).

\vspace{0.5em}
In addition, we explore if \efcr{} or \sdcrf{} is compatible with stronger notions of fairness (\sdef{} over \sdwef) and strategyproofness (\sdssp{}, a stronger variant of \sdsp{}), and find that that no mechanism can satisfy the following combinations of properties:  
\efcr{} and \sdef{}~(\Cref{prop:impefcr1}); \sdcrf{} and \sdef{}~(\Cref{prop:impsdcfr1}); \efcr{}, \etep{} and \sdssp{}~(\Cref{prop:impefcr2}); \sdcrf{}, \etep{}, and \sdsp{}~(\Cref{prop:impsdcfr2}); and \efcr{}, \sdcrf{}, and \etep{}~(\Cref{prop:impefcrsdcfr}).

\subsection{Related Work}

\citeA{Chen2021:Theprobabilistic} proposed another extension of \fhr{}, ex-ante~\fhr{} (\sdrf{})\footnote{\citeA{Chen2021:Theprobabilistic} named this property sd-rank-fairness. We rename it here to emphasize its connection with \fhr{}.} and provided the probabilistic rank mechanism which satisfies \sdrf{}. Since \sdrf{} implies \efr{}, it suffers the same incompatibility with fairness and strategyproofness as \efhr{} does. Apart from \fhr{}, {\em \frkm{}} (\rkm{},~\citeR{irving2006rank,paluch2013capacitated}) is another popular efficiency property that implies \fcm{}, which has been widely studied for assigning schools to students~\cite{abraham2009matching}, assigning papers to referees~\cite{garg2010assigning}, and rental items to customers~\cite{abraham2006assignment}. However, since \rkm{} is stronger than \fhr{}~\cite{belahcene2021combining}, once again, the incompatibility with fairness and strategyproofness extends to \rkm{}.

Looking beyond mechanisms that attempt to allocate items to agents who rank them highest, RP and PS are famous mechanisms widely studied in the literature due to their fairness and strategyproofness guarantees (see \cref{tab:properties}). However, both RP and PS fail to satisfy \fcm{} (See~\Cref{sec:app:results:properties}), and therefore they do not provide a positive answer to the question we study in the paper.


\begin{table}[tp]
    \fontsize{9pt}{\baselineskip}\selectfont
    \renewcommand{\tabcolsep}{0.3em}
	\centering
	\begin{tabular}{|c|ccc|ccc|ccc|cc|}
		\hline
		\multirow{2}{*}{} & \multicolumn{3}{c|}{ex-post efficiency} & \multicolumn{3}{c|}{ex-ante efficiency}  &  \multicolumn{3}{c|}{ex-ante fairness} &  \multicolumn{2}{c|}{strategyproofness}\\
		
		& \efcr{} & \efr{} & \epopt{}
		& \sdcrf{}{} & \sdrf{} & \sdopt{}
		& \sdef{} & \sdwef{} & \etep{}
		& \sdssp{} & \sdsp{}\\\hline
		
		RP
		& \cellcolor{mycolor}N$^{P\ref{prop:rp}}$ & N$^\texttt{a}$ & Y$^\texttt{c}$
		& \cellcolor{mycolor}N$^{P\ref{prop:rp}}$ & N$^\texttt{b}$ & N$^\texttt{c}$
		& N$^\texttt{c}$ & Y$^\texttt{c}$ & Y$^\texttt{d}$
		& Y$^\texttt{c}$ & Y$^\texttt{c}$\\
		
		PS
		& \cellcolor{mycolor}N$^{P\ref{prop:ps}}$ & N$^\texttt{a}$ & Y$^\texttt{a}$
		& \cellcolor{mycolor}N$^{P\ref{prop:ps}}$ & N$^\texttt{b}$ & Y$^\texttt{c}$
		& Y$^\texttt{c}$ & Y$^\texttt{c}$ & Y$^{\texttt{c},\texttt{d}}$
		& N$^\texttt{c}$ & Y$^\texttt{c}$\\\hline
		
		BM$^*$
		& \cellcolor{mycolor}N$^{P\ref{prop:nbm}}$ & Y$^\texttt{a}$ & Y$^\texttt{a}$
		& \cellcolor{mycolor}N$^{P\ref{prop:nbm}}$  & N$^\texttt{b}$ & N$^\texttt{b}$
		& \cellcolor{mycolor}N$^{P\ref{prop:nbm}}$ & \cellcolor{mycolor}N$^{P\ref{prop:nbm}}$ & \cellcolor{mycolor}Y$^{P\ref{prop:nbm}}$
		& N$^\texttt{a}$ & N$^\texttt{a}$\\

        ABM$^*$
		& \cellcolor{mycolor}Y$^{P\ref{prop:abm}}$ & \cellcolor{mycolor}N$^{P\ref{prop:abm}}$ & Y$^\texttt{e}$
		& \cellcolor{mycolor}N$^{P\ref{prop:abm}}$ & \cellcolor{mycolor}N$^{P\ref{prop:abm}}$ & \cellcolor{mycolor}N$^{P\ref{prop:abm}}$
		& \cellcolor{mycolor}N$^{P\ref{prop:abm}}$ & ? & \cellcolor{mycolor}Y$^{T\ref{prop:abm}}$
		& N$^\texttt{e}$ & Y$^\texttt{f}$\\
		
		\am{}
		& \cellcolor{mycolor}Y$^{T\ref{thm:amp}}$ & \cellcolor{mycolor}N$^{P\ref{prop:ebm}}$ & Y$^{\texttt{e}}$
		& \cellcolor{mycolor}N$^{P\ref{prop:ebm}}$ & \cellcolor{mycolor}N$^{P\ref{prop:ebm}}$ & \cellcolor{mycolor}N$^{P\ref{prop:ebm}}$
		& \cellcolor{mycolor}N$^{P\ref{prop:ebm}}$ & \cellcolor{mycolor}Y$^{T\ref{thm:amp}}$ & \cellcolor{mycolor}Y$^{T\ref{thm:amp}}$
		& \cellcolor{mycolor}N$^{P\ref{prop:ebm}}$ & \cellcolor{mycolor}Y$^{T\ref{thm:amp}}$\\\hline

		PR
		& \cellcolor{mycolor}N$^{P\ref{prop:pr}}$ & Y$^{\texttt{a},\texttt{b}}$ & Y$^\texttt{b}$
		& \cellcolor{mycolor}N$^{P\ref{prop:pr}}$ & Y$^\texttt{b}$ & Y$^\texttt{b}$
		& N$^\texttt{b}$ & N$^\texttt{b}$ & \cellcolor{mycolor}Y$^{P\ref{prop:pr}}$
		& N$^\texttt{b}$ & N$^\texttt{b}$\\
		
		\upre{}
		& \cellcolor{mycolor}N$^{P\ref{prop:upre}}$
		& \cellcolor{mycolor}N$^{P\ref{prop:upre}}$
		& \cellcolor{mycolor}Y$^{C\ref{cor:pcrp}}$
		& \cellcolor{mycolor}Y$^{T\ref{thm:familychar}}$
		& \cellcolor{mycolor}N$^{P\ref{prop:upre}}$
		& \cellcolor{mycolor}Y$^{C\ref{cor:pcrp}}$
		& \cellcolor{mycolor}N$^{P\ref{prop:upre}}$
		& \cellcolor{mycolor}Y$^{T\ref{thm:uprep}}$
		& \cellcolor{mycolor}Y$^{T\ref{thm:uprep}}$
		& \cellcolor{mycolor}N$^{P\ref{prop:upre}}$
		& \cellcolor{mycolor}N$^{P\ref{prop:upre}}$\\
		\hline
	\end{tabular}
	\caption{Properties of RP, PS, BM, \am{}, PR and \pcr{}.}\label{tab:properties}
    \begin{flushleft}
    \footnotesize{Note: A `Y' indicates that the mechanism at that row satisfies the property at that column, and an `N' indicates that it does not. 
    Results annotated with `\texttt{a}' follow from~\protect\citeA{Ramezanian2021:Ex-post}, `\texttt{b}' from~\protect\citeA{Chen2021:Theprobabilistic}, `\texttt{c}' from~\protect\citeA{Bogomolnaia01:New}, `\texttt{d}' from~\protect\citeA{nesterov2017fairness},  `\texttt{e}' from~\protect\citeA{Dur2019modified}, and `\texttt{f}' from~\protect\citeA{Mennle2021:Partial}
    respectively.
    A result annotated with T, P or C refers to a Theorem, Proposition or Corollary in this paper (or Appendix~\ref{sec:app:results}), respectively.}
    
    \footnotesize{*Here, we refer to the expected outputs of the BM and ABM when the priority order over agents is drawn from a uniform distribution over priority orders.}
    \end{flushleft}
\end{table}

\Cref{tab:properties}
compares the properties of \am{} and \upre{} to the properties of the random priority mechanism (RP,~\citeR{Abdulkadiroglu98:Random}), probabilistic serial mechanism (PS,~\citeR{Bogomolnaia01:New}), Boston mechanism (BM,~\citeR{abdulkadirouglu2003school,Kojima2014:Boston}), adaptive Boston mechanism (\abm,~\citeR{alcalde1996implementation,Dur2019modified}), and probabilistic rank mechanism~(PR,~\citeR{Chen2021:Theprobabilistic}). 
\Cref{fig:nwaxioms} shows the relationship between efficiency properties based on \fhcr{} to extensions of \opt{} or \fhr{}.

We note that \citeA{Harless2018:Immediate} proposed the immediate division$^+$ mechanism and proved it satisfies \sdwef{}. This mechanism appears similar to \upre{}, although we are unable to prove or disprove their equivalence. In our paper, we define the family of \pcr{} mechanisms (of which \upre{} is a member) and prove that it is characterized by the newly-proposed property \sdcrf{}, which has not been considered earlier to the best of our knowledge. 
In addition, with the impossibility results we proved for \sdcrf{}, we show the limit of the family of \pcr{}, including \upre{}, on guarantees of efficiency, fairness, and strategyproofness.

\section{Preliminaries}
An instance of the {\em assignment problem} is given by a tuple $(N,M)$ and a {\em preference profile} $R$, where $N=\{1,\dots,n\}$ is a set of $n$ {\em agents}, and $M=\{o_1,\dots,o_n\}$ is a set of $n$ {\em items} with a single unit of {\em supply} of each item.

\vspace{1em}
\noindent{\bf Preferences.} A {\em preference profile} $R=(\succ_j)_{j\in N}$ specifies the ordinal preference of each agent $j\in N$ as a strict linear order over $M$, and $\succ_{-j}$ denotes the collection of preferences of agents in $N\setminus\{j\}$.
Let $\mr$ be the set of all the preference profiles.
For any $j\in N$, we use $\rk{\succ_j}{o}$ to denote the rank of item $o$ in $\succ_j$, and $\tp{\succ_j}{S}$ to denote the item ranked highest in $\succ_j$ among $S\subseteq M$. 
We also use $\rk{j}{o}$ and $\tp{j}{S}$ for short if it is clear in the context.
For any linear order $\succ$ over $M$ and item $o$,
$\ucs(\succ,o)=\{o'\in M\mid o'\succ o\}\cup\{o\}$
represents the items weakly preferred to $o$.
For any pair of agents $j,k\in N$, the common prefix of their preferences $\succ_{j,k}$ is the preference over the first several items which have the same upper contour set in $\succ_j$ and $\succ_k$. 
Formally, $\succ_{j,k}$ is a strict linear preference over $M'\subseteq M$ such that
\begin{enumerate*}[label=(\roman*)]
    \item for any $o\in M'$,  $\rk{j}{o}=\rk{k}{o}=\rk{\succ_{j,k}}{o}\le |M'|$, and 
    \item $\tp{j}{M\setminus M'}\neq\tp{k}{M\setminus M'}$.
\end{enumerate*}

\vspace{1em}
\noindent{\bf Allocations, Assignments, and Mechanisms.} A {\em \ram{} allocation} is a stochastic $n$-vector $p=[p_o]_{o\in M}$ describing the probabilistic share of each item. Let $\Pi$ be the set of all the possible \ram{} allocations. A {\em \ram{} assignment} is a doubly stochastic $n\times n$ matrix $P=[p_{j,o}]_{j\in N, o\in M}$. For each agent $j\in N$, the $j$-th row of $P$, denoted $P_j$, is agent $j$'s \ram{} allocation, and for each item $o\in M$, $p_{j,o}$ is $j$'s probabilistic share of $o$. We use $\mP$ to denote the set of all possible \ram{} assignments. A {\em deterministic assignment} $A:N\to M$ is a one to one mapping from agents to items, represented by a binary doubly stochastic $n\times n$ matrix. 
For each agent $j\in N$, we use $A(j)$ to denote the item allocated to $j$, and for each item $o\in M$, $A^{-1}(o)$ to denote the agent allocated $o$. 
Let $\ma$ denote the set of all the deterministic assignment matrices. By the Birkhoff-Von Neumann theorem, every \ram{} assignment $P\in\mP$ describes a probability distribution over $\ma$.

A {\em mechanism} $f\colon\mr\to\mP$ is a mapping from preference profiles to \ram{} assignments.
For any profile $R\in\mr$, we use $f(R)$ to refer to the \ram{} assignment output by $f$. For every agent $j\in N$, we use $f(R)_{j}$ to denote agent $j$'s \ram{} allocation, and for every item $o\in M$, we use $f(R)_{j,o}$ to denote $j$'s share of $o$.

\subsection{Economic Efficiency}\label{sec:eff}

\paragraph{\fopt{} (\opt{}).} A deterministic assignment $A$ satisfies \opt{} if no agent can be assigned a better item without assigning any other agent a worse item, i.e., there does not exist another $A'$ and a set $N'\subseteq N$ with $N'\neq\emptyset$ such that $A'(j)\succ_j A(j)$ for any $j\in N'$ and $A'(k)=A(k)$ for $k\in N\setminus  N'$.

\paragraph{\Ffcm{} (\fcm{}).} A deterministic assignment $A$ satisfies \fcm{} if it assigns a maximal number of agents their top ranked items, i.e., there does not exist another $A'$ such that $\lvert \{j\in N\mid \rk{j}{A'(j)}=1\}\rvert>\lvert\{ j\in N\mid \rk{j}{A(j)}=1\}\rvert$.


\paragraph{\Ffhr{} (\fhr{}).} A deterministic assignment $A$ satisfies \fhr{}, if every item is allocated to an agent that ranks it highest unless every such agent is allocated an item she ranks higher. 
Formally, $A$ satisfies \fhr{} if for any agents $j,k\in N$, $\rk{j}{A(j)}\le\rk{k}{A(j)}$ or $\rk{k}{A(k)}<\rk{k}{A(j)}$.

\begin{example}\label{eg:rkmfhr}
    \rm 
	Consider the preference profile $R$ in \Cref{fig:fhr}. 
	
	\begin{figure}[htb]
		\centering
		\renewcommand{\tabcolsep}{0.1em}
		\renewcommand\arraystretch{1.7}
    		\begin{tabular}{cccccccccccc}
    		     $\succ_1$: &~\circled{$a$} & $\succ_1$ & $b$ & $\succ_1$ & $c$ &$\succ_1$ & $~d~$ &$\succ_1$ & $~e~$ &$\succ_1$ & $f$ \\
    		     $\succ_2$: &~\circled{$b$} &$\succ_2$ &$a$ & $\succ_2$ &$c$ &$\succ_2$ &$d$ &$\succ_2$ & $e$ &$\succ_2$ & $f$ \\
    		     $\succ_3$: &~\circled{$c$} & $\succ_3$ & $e$ &$\succ_3$& $d$ &$\succ_3$& $f$& $\succ_3$& $a$ &$\succ_3$& $b$ \\
    		     $\succ_4:$ &~$c$&$\succ_4$&$ \circled{$e$}$&$\succ_4$&$ d$&$\succ_4$&$ f$&$ \succ_4$&$a$&$\succ_4$&$ b$ \\
    		     $\succ_5:$ &~$c$&$\succ_5$&$ e$&$\succ_5$&$ \circled{$d$}$&$\succ_5$&$ f$&$\succ_5$&$a$&$\succ_5$&$ b$ \\
                 $\succ_6:$ &~$c$ & $\succ_6$&$ a$&$\succ_6$&$ b$&$\succ_6$&$ d$&$\succ_6$&$e$&$ \succ_6$&$ \circled{$f$}$\\
    		\end{tabular}
		\caption{A linear preference profile $R$.}
		\label{fig:fhr}
	\end{figure}
	
	In any assignment that satisfies~\fhr{}, by definition, each item must be assigned to one of the agents who ranks it on the top if such agents exist.	
	Therefore, $a$ and $b$ go to agents $1$ and $2$, respectively.
	Notice that agents $3$-$6$ all rank $c$ on top.
	If $c$ is allocated to agents $3$-$5$, then by \fhr{}, agent $6$ cannot be assigned either item $d$ or item $e$, since for any $j\in\{3,4,5\}$, $\rk{6}{d}>\rk{j}{d}$ and $\rk{6}{e}>\rk{j}{e}$. The items circled in red represent one such deterministic assignment which satisfies \fhr{}.
\hfill$\square$
\end{example}

By the Birkhoff-Von Neumann theorem, all of the properties for deterministic assignments can naturally be extended to \ram{} assignments: A \ram{} assignment satisfies {\em ex-post} $X$ if it is a convex combination of deterministic assignments satisfying property $X$.
In this paper, we also say that a mechanism $f$ satisfies a property $Y$, if for every profile $R\in\mr$, $f(R)$ satisfies $Y$.

Besides the efficiency notions above, we also introduce ex-ante notions for random assignments. 
One of the notion is based on the stochastic dominance (sd), which extends an agent's preference over single items to lotteries over items~\cite{segal2020fair} and helps comparing random allocations and assignments.

\begin{definition}\cite{Bogomolnaia01:New}\label{dfn:sd}
	Given a preference relation $\succ$ over $M$, the {\em stochastic dominance} relation associated with $\succ$, denoted by $\sd{\null}$, is a partial ordering over $\Pi$ such that for any pair of \ram{} allocations $p,q\in\Pi$, $p$ (weakly) {\em stochastically dominates} $q$, denoted by $p\sd{\null} q$, if for any $o\in M$, $\sum_{o'\in\ucs(\succ,o)}p_{o'}\ge\sum_{o'\in\ucs(\succ,o)}q_{o'}$.
\end{definition}

\paragraph{\Fsdopt{} (\sdopt{}).} A random assignment $P$ satisfies \sdopt{} if $P$ is not stochastically dominated by other \ram{} assignments, i.e., there does not exist a \ram{} assignment $Q\neq P$ such that $Q_j\sd{j}P_j$ for every $j\in N$.

\paragraph{\Fsdrf{} (\sdrf{})} A random assignment $P$ satisfies \sdrf{}, if the shares of every item are allocated to agents that rank it highest unless every such agent's demand is satisfied. Formally, $P$ satisfies \sdrf{} if for every agent $j\in N$ and every $o\in M$ such that $p_{j,o}>0$, it holds that  for every $k\in N$ such that $\rk{k}{o}<\rk{j}{o}$, $\sum_{o'\in\ucs(k,o)}p_{k,o'}=1$.

\begin{remark}
\rm
    For deterministic assignments, \fhr{} implies \fcm{}~\cite{dur2018first,Kojima2014:Boston} and \opt{}~\cite{Ramezanian2021:Ex-post}, and \fcm{} and \opt{} do not imply each other.
    As for random assignments, \sdrf{} implies \sdopt{} and \efhr{}~\cite{Chen2021:Theprobabilistic}, while both \sdopt{} and \efhr{} implies \epopt{}~\cite{Bogomolnaia01:New,Ramezanian2021:Ex-post}.
\end{remark}

\begin{table}[t]
	
	\centering
	\begin{tabular}{l|l|l}
		Abbr. & full names & category\\\hline
		\sdcrf & \fsdcrf{} & ex-ante efficiency\\
		\sdrf{} & \fsdrf{} & ex-ante efficiency\\
		\efcr{} & \fefcr{} & ex-post efficiency\\
		\efr{} & \fefr{} & ex-post efficiency\\
		\epopt{} & \fepopt{} & ex-post efficiency\\
		\fhcr{} & \ffhcr{} & efficiency$^*$\\
		\fhr{} & \ffhr{} & efficiency$^*$\\
		\opt{} &\fopt{} & efficiency$^*$\\
		\sdef{} & \fsdef{} & ex-ante fairness\\
		\sdopt{} & \fsdopt{} & ex-ante efficiency\\
		\sdssp{} &\fsdssp{} & strategyproofness\\
		\sdwef{} & \fsdwef{} & ex-ante fairness\\
		\sdsp{} &\fsdsp{} & strategyproofness\\
		\etep{} &\fetep{} & ex-ante fairness\\
	\end{tabular}
	\caption{Acronyms for properties used in this paper.}
    \begin{flushleft}
      \footnotesize{Note: Properties annotated with $^*$ are for deterministic assignments}
    \end{flushleft}
	\label{tab:abbrp}
\end{table}

\subsection{Fairness and Strategyproofness}\label{sec:properties}

\paragraph{\Fetep{} (\etep{}).} A random assignment $P$ satisfies \etep{} if any two agents have the same allocation over items appearing in the common prefix of their preferences. Formally, for every pair of $j,k\in N$, $p_{j,o}=p_{k,o}$ for any $o$ appearing in $\succ_{j,k}$.

\paragraph{\Fsdef{} (\sdef{}).} A random assignment $P$ is \sdef{}, if every agent's allocation weakly stochastically dominates the others', i.e., $P_j\sd{j}P_{k}$ for every pair of $j,k\in N$.
	
\paragraph{\Fsdwef{} (\sdwef{}).} A random assignment $P$ is \sdwef{}, if no agent's allocation is dominated by others', i.e., $P_{k}\sd{j}P_j\implies P_j=P_k$ for every pair of $j,k\in N$.

\begin{remark}\rm
    \Sdef{} implies \sdwef{}~\cite{Bogomolnaia01:New} and \etep{}~\cite{nesterov2017fairness}, while \sdwef{} and \etep{} do not imply each other.
\end{remark}

\paragraph{\Fsdssp{} (\sdssp{}).} When an agent reports the true preference, a mechanism $f$ satisfying \sdssp{} always outputs an allocation which weakly dominates the ones when she misreports. Formally, for every $R\in\mr$, it holds that $f(R)\sd{j}f(R')$ for every $j\in N$ and $R'=(\succ'_j,\succ_{-j})$,

\paragraph{\Fsdsp{} (\sdsp{}).} A mechanism $f$ satisfying \sdsp{} guarantees that when an agent misreports her preference, she would not receive an allocation dominating the one when she truly reports. Formally, for every $R\in\mr$, it holds that $f(R')\sd{j}f(R)\implies f(R')_j=f(R)_j$ for every $j\in N$, and $R'=(\succ'_j,\succ_{-j})$.

\begin{remark}\rm
    \Sdssp{} implies \sdsp{}~\cite{Bogomolnaia01:New}.
\end{remark}


\section{Incompatibility of \fhr{} with Fairness}
\label{sec:app:imp}
In this section, we show that \fhr{} mechanisms are unable to satisfy desirable properties of fairness. In~\Cref{prop:impefr}, we show that requiring \efr{} together with \etep{} leads to a violation of \sdwef{}, meaning that no \fhr{} mechanisms can satisfy all of these properties simultaneously. This complements the results by \citeA{Ramezanian2021:Ex-post} which showed that \efr{} is not compatible with either \sdef{} or \sdssp{}, and that no mechanism satisfies \efr{}, \etep{}, and \sdsp{}.
Together these negative results demonstrate that \efhr{} mechanisms cannot provide an answer to the question proposed in~\Cref{sec:intro}.

\begin{restatable}{prop}{propimpefr}{}\label{prop:impefr}
     No mechanism simultaneously satisfies \fefr{} (\efr{}), \fsdwef{} (\sdwef{}), and \fetep{} (\etep{}).
\end{restatable}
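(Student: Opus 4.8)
The plan is to prove this impossibility by exhibiting a single profile on which $\efr{}$, $\etep{}$, and $\sdwef{}$ cannot coexist, the point being that $\efr{}$ restricts $f(R)$ to a convex combination of $\fhr{}$ assignments, $\etep{}$ then pins this combination down uniquely, and the resulting assignment is forced to violate $\sdwef{}$. Concretely, I would take $n=4$ with items $\{a,b,c,d\}$ and the profile $R$ in which three agents share $a$ as their common top choice but diverge immediately afterwards, while a fourth agent is dedicated to siphoning off $b$: $\succ_1\colon a\succ b\succ c\succ d$, $\succ_2\colon a\succ c\succ d\succ b$, $\succ_3\colon a\succ d\succ c\succ b$, and $\succ_4\colon b\succ a\succ c\succ d$. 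Note that agents $1,2,3$ pairwise share exactly the one-item common prefix $a$ (they differ already in second position), while agent $4$ shares no prefix with anyone.

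First I would characterise the deterministic $\fhr{}$ assignments. Since $\rk{4}{b}=1$ is the unique highest rank of $b$, every $\fhr{}$ assignment sends $b$ to agent $4$; and since $a$ is top-ranked only by agents $1,2,3$, it must go to one of them. Given who wins $a$, the remaining two of $\{1,2,3\}$ split $\{c,d\}$, and because agent $2$ (resp. agent $3$) is the unique highest unsaturated ranker of $c$ (resp. $d$), the routing is forced. This yields exactly $A_1=(1{\to}a,2{\to}c,3{\to}d,4{\to}b)$, $A_2=(1{\to}c,2{\to}a,3{\to}d,4{\to}b)$, and $A_3=(1{\to}d,2{\to}c,3{\to}a,4{\to}b)$. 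By $\efr{}$, $f(R)=\lambda_1A_1+\lambda_2A_2+\lambda_3A_3$ with $\lambda_i\ge 0$ and $\sum_i\lambda_i=1$. Agent $i$ receives $a$ only in $A_i$, so $p_{i,a}=\lambda_i$; since agents $1,2,3$ pairwise share the prefix $a$, $\etep{}$ forces $p_{1,a}=p_{2,a}=p_{3,a}$, i.e.\ $\lambda_1=\lambda_2=\lambda_3=\tfrac13$. The assignment is therefore uniquely determined, giving $P_1=\tfrac13 a+\tfrac13 c+\tfrac13 d$ and $P_2=\tfrac13 a+\tfrac23 c$. To finish, I would check the stochastic-dominance inequalities from agent $1$'s viewpoint $a\succ b\succ c\succ d$: the two allocations agree on the upper sets $\{a\}$ and $\{a,b\}$, while on $\{a,b,c\}$ agent $2$ accumulates $1>\tfrac23$. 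Hence $P_2\sd{1}P_1$ with $P_1\ne P_2$, contradicting $\sdwef{}$.

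The real difficulty is the construction, not the verification. Naive profiles fail for two structural reasons: $\fhr{}$ routes each item to an agent who ranks it highly, so agents tend to be satisfied with their own bundles and avoid envy, and whenever two agents share their entire relevant prefix, $\etep{}$ symmetrises them into identical allocations, which rules out strict domination. The construction is engineered to defeat both effects: having \emph{three} agents share the top item $a$ makes the $\etep{}$-equalised top share $\tfrac13<\tfrac12$, so the two ``losers'' of $a$ must absorb tail items; and dedicating agent $4$ to capture $b$ ensures agent $1$ never receives its own second choice. The crucial asymmetry is that agent $2$, upon losing $a$, still lands on $c$—which agent $1$ ranks above $d$—whereas agent $1$, upon losing $a$, is pushed down to $d$; this is precisely what lets $P_2$ dominate $P_1$ in agent $1$'s order despite their shares of $a$ being equal. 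The step demanding the most care is confirming that $A_1,A_2,A_3$ are the \emph{only} deterministic $\fhr{}$ assignments, since it is this exhaustiveness that allows the $\etep{}$ equalities to pin the lottery down and make the violation unavoidable.
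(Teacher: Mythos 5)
Your proof is correct and takes essentially the same route as the paper's: exhibit a single profile on which \efr{} restricts the outcome to a convex combination of a small set of \fhr{} assignments, \etep{} (applied to agents sharing the top item as their common prefix) pins that combination down uniquely, and the resulting assignment strictly violates \sdwef{}. The only difference is the witness instance---you use a $4$-agent profile while the paper reuses its $6$-agent profile from \Cref{fig:fhr}---and your enumeration of exactly three \fhr{} assignments and the final dominance check ($P_2\sd{1}P_1$ with strict inequality on the upper contour set $\{a,b,c\}$) both verify correctly.
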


\begin{proof}

	We prove it using the instance with preference $R$ in~\Cref{fig:fhr}.
	Let $P$ be the random assignment satisfying \efr{} and \etep{}.
	
	First, we look into the deterministic assignments satisfying \fhr{}.
	By \fhr{} implying \fcm{}, if an item is ranked top by some agents, then it should be assigned to one of them.
	Therefore, agents $1$ and $2$ get $a$ and $b$ respectively, which means that $p_{1,a}=p_{2,b}=1$, and one of agents $3$-$6$ gets $c$.
	Since agents $3$-$5$ share the same preference, there are two kinds of assignments satisfying \fhr{}:
	\begin{enumerate}[label=\rm(\roman*),wide,labelindent=0pt,topsep=0em,itemsep=0pt]
	    \item if agent $6$ gets $c$, then $\{d,e,f\}$ can be assigned arbitrarily among agents $3$-$5$;
	    \item if agent $6$ does not gets $c$, then she does not get $e$ or $d$ since $\rk{6}{d}>\rk{j}{d}$ and $\rk{6}{e}>\rk{j}{e}$ with $j\in\{3,4,5\}$, which also means that $p_{6,d}=p_{6,e}=0$.
	\end{enumerate}
	
	Then, by \etep{}, agents $3$-$5$ have the same allocation, and $p_{j,c}=p_{k,c}=1/4$ for any $j,k\in\{3,4,5,6\}$.
	With the observation above, $P$ can only be the assignment int the following.
	
    \begin{center}
    \centering
        \begin{tabular}{r|cccccc}
        	\multicolumn{7}{c}{Assignment $P$} \\
        	&  a & b & c & d & e & f\\\hline
        	$1$ & $1$ & $0$ & $0$ & $0$ & $0$ & $0$\\
        	$2$ & $0$ & $1$ & $0$ & $0$ & $0$ & $0$\\
        	$3$-$5$ & $0$ & $0$ & $1/4$ & $1/3$ & $1/3$ & $1/12$\\
        	$6$ & $0$ & $0$ & $1/4$ & $0$ & $0$ & $3/4$\\
        \end{tabular}
    \end{center}
    
	Assingment $P$ is not \sdwef{} because $\sum_{o'\in\ucs(\succ_6,o)}p_{6,o'}\le\sum_{o'\in\ucs(\succ_6,o)}p_{1,o'}$ holds for any $o\in M$, and it is strict when $o\in\{e,d\}$.
\end{proof}

Since \sdrf{} implies \efr{}, we can extend~\Cref{prop:impefr} to \sdrf{}~(\Cref{cor:impsdrf}).
We also discuss in~\Cref{sec:app:results:rkm} the compatibility of \frkm{}, which also implies~\fcm{}, with fairness, but the result is still negative.

\begin{corollary}\label{cor:impsdrf}
     No mechanism simultaneously satisfies \fsdrf{} (\sdrf{}), \fetep{} (\etep{}), and \fsdwef{} (\sdwef{}).
\end{corollary}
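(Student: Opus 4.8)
The cleanest route is to derive this as an immediate consequence of \Cref{prop:impefr} together with the implication that \sdrf{} is stronger than \efr{}. Recall from the Remark following \Cref{dfn:sd} that \sdrf{} implies \efhr{} (i.e., \efr{}), a fact attributed to \citeA{Chen2021:Theprobabilistic}. Since that implication is already available in the excerpt, the whole argument collapses to a one-line logical deduction, and there is essentially no independent combinatorial work to do.

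First I would argue by contradiction: suppose some mechanism $f$ satisfies \sdrf{}, \etep{}, and \sdwef{} simultaneously. Since \sdrf{} implies \efr{}, the mechanism $f$ in particular satisfies \efr{}. Hence $f$ satisfies all three of \efr{}, \etep{}, and \sdwef{} at once, which directly contradicts \Cref{prop:impefr}. Therefore no such $f$ exists, establishing the corollary.

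The only point requiring care is confirming that the implication runs in the correct direction: \sdrf{} (the ex-ante, stochastic-dominance strengthening) must entail \efr{} (its ex-post counterpart), so that any $\sdrf$-satisfying mechanism inherits the hypotheses of \Cref{prop:impefr}. This is exactly the content of the cited Remark, so I would simply invoke it rather than reprove it. There is no genuine obstacle here; the corollary is a formal weakening of the proposition once the implication chain $\sdrf \Rightarrow \efr$ is in place.

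\emph{Alternative (self-contained) route.} If one preferred not to rely on the black-box implication, I would instead rerun the proof of \Cref{prop:impefr} verbatim on the instance $R$ of \Cref{fig:fhr}: the \sdrf{} condition forces the same item-by-item allocation constraints that \fhr{} imposed there (each top-ranked item going to a top-ranking agent, with $p_{6,d}=p_{6,e}=0$ whenever agent $6$ does not receive $c$), \etep{} again pins down $p_{j,c}=1/4$ for $j\in\{3,4,5,6\}$, and the resulting unique assignment $P$ fails \sdwef{} through agent $6$'s comparison against agent $1$. This reproduces the contradiction without citing the implication, but the first route is shorter and preferable.
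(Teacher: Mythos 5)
Your proposal is correct and takes essentially the same approach as the paper: the paper's entire justification for \Cref{cor:impsdrf} is the one-line observation that \sdrf{} implies \efr{}, so \Cref{prop:impefr} applies directly. Your contradiction phrasing (and the optional self-contained rerun of the instance from \Cref{fig:fhr}) is just an elaboration of that same reduction.
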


\section{Ex-post Favoring Eagerness for Remaining Items}
Motivated by the desire for \fcm{} mechanisms that are also fair and strategyproof, we propose {\bf\ffhcr{} (\fhcr{})}, an efficiency property which implies both \fcm{}~(\Cref{rm:fhcr}) and \opt{}~(\Cref{prop:fhcr}). 
As we will show in \Cref{sec:ep:ebm}, the ex-post variant of \fhcr{} is compatible simultaneously with fairness (\sdwef{} and \etep{}) and strategyproofness (\sdsp{}).

Informally, a deterministic assignment satisfies \fhcr{} (\Cref{dfn:fhcr}) if it can be decomposed in a manner that every item ranked highest by some agents is allocated to one such agent, subject to which, every remaining item is allocated to a remaining agent who ranks it highest among remaining items if such an agent exists, and so on. 

\begin{definition}[\bf\fhcr{}]\label{dfn:fhcr}
	Given any deterministic assignment $A$, we define for each $r\in\{1,2,\dots\}$ a set of items $\tps{A}{r}=\{o\in M:o=\tp{j}{\allowbreak M\setminus\bigcup_{r'<r}\tps{A}{r'}}\text{ for some }j\in N\text{ with }A(j)\notin\bigcup_{r'<r}\tps{A}{r'}\}$.
	
	The assignment $A$ satisfies {\bf \ffhcr{}} if for every $r\in\{1,2,\dots\}$ and every item $o\in\tps{A}{r}$, it holds that the item $o$ is assigned to an agent most eager for it, i.e, $o=\tp{A^{-1}(o)}{\allowbreak M\setminus\bigcup_{r'\in\{1,2,\dots,r-1\}}\tps{A}{r'}}$.
\end{definition}

\Cref{dfn:fhcr} suggests the following heuristic for designing an \fhcr{} mechanism: In each iteration, remove agents already allocated an item. Then eliminate the allocated items
from the preference lists of every remaining agent. Now allocate each remaining item to an agent who ranks it as the top remaining item according to their preferences over {\em remaining} items, if such an agent exists, using a tie-breaking rule if there are multiple such agents. In contrast, in each iteration of the Boston mechanism which characterizes \fhr{} assignments, the remaining items that are ranked in the highest position by a remaining agent. This notion of iteratively making decisions based on preferences over remaining alternatives is similar in spirit to that of single transferable voting rules~\cite{hare1861treatise} that are resistant to strategic manipulation~\cite{bartholdi1991single} in social choice, and the iterated elimination of dominated strategies for solving strategic games in game theory~\cite{martin2004osborne}. In this vein, \fhcr{} is a natural alternative to \fhr{} since it also implies \fcm{} and \opt{}.

\begin{remark}\label{rm:fhcr}
\fhcr{} implies \fcm{}.
Specifically, in any \fhcr{} assignment $A$, when $r=1$, it requires that for every item $o$ ranked on the top by some agents, i.e., $o\in \tps{A}{1}=\{o\in M:o=\tp{j}{M}\text{ for some }j \in N\}$, item $o$ is allocated to one such agent, i.e., $o=\tp{A^{-1}(o)}{M}$. 
\end{remark}

Although \fhr{} and \fhcr{} both imply \fcm{}, they do not imply each other as we show in~\Cref{eg:fcr}.

\begin{example}\label{eg:fcr}{\rm[\fhr{}$\not\Rightarrow$\fhcr{}, \fhcr{}$\not\Rightarrow$\fhr{}]}
    \rm
	Consider again the profile in~\Cref{fig:fhr}. Let $A$ be the \fhr{} assignment indicated by the circled items, and $A^*$ be the following assignment, where $j\gets o$ means agent $j$ is allocated item $o$:
	\begin{equation*}
		\begin{split}
			A^*:&1\gets a,2\gets b,3\gets c,4\gets e,5\gets f,6\gets d.\\
		\end{split}
	\end{equation*}
	
	It is easy to see that $A$ violates \fhcr{} because item $d\in\tps{A}{2}$ due to the fact that $\tps{A}{1}=\{a,b,c\}$, $d=\tp{6}{M\setminus\tps{A}{1}}$, and $A(6)\notin\tps{A}{1}$; but $A^{-1}(d)=5$ and $d\neq\tp{5}{M\setminus\tps{A}{1}}=e$.
	
	Besides, we show that $A^*$ satisfies \fhcr{}:
	\begin{itemize}[label=-,leftmargin=*,itemsep=0pt,topsep=0pt]
	\item For $r=1$, it is easy to see that for every $o\in\tps{A^*}{1}=\{a,b,c\}$, $o=\tp{j}{M}$ for each $j$ with $A^*(j)=o$.
	\item For $r=2$, $\tps{A^*}{2}=\{e,d\}$. Items $e$ and $d$ are allocated to agents most eager for them among the remaining items $M'=M\setminus{}\tps{A^*}{1}$, i.e., 
	$e=\tp{4}{M'}=A^*(4)$ and $d=\tp{6}{M'}=A^*(6)$.
	\item For $r=3$, $\tps{A^*}{3}=\{f\}$. Since $M''=M\setminus{}\tps{A^*}{1}\cup\tps{A^*}{2}=\{f\}$, we have that $f=\tp{5}{M''}=A^*(5)$ trivially.
	\end{itemize}
	But $A^*$ violates \fhr{} because $A^*(6)=d$, $\rk{6}{d}>\rk{5}{d}$, and $d\succ_{5}A^*(5)$.
\hfill$\square$
\end{example}

\Cref{prop:fhcr} shows that \fhcr{} is a stronger efficiency property than \opt{}, which means that \efcr{} implies \epopt{}.
We also discuss in~\Cref{sec:app:results:relation} the relation of \fhcr{} to \fpop{}~\cite{abraham2007popular} which is also a famous efficiency property.

\begin{restatable}{prop}{propfhcr}{}\label{prop:fhcr}
     {\rm[\fhcr{}$\Rightarrow$\opt{}, \opt{}$\not\Rightarrow$\fhcr{}]} A deterministic assignment satisfying \ffhcr{} (\fhcr{}) also satisfies \fopt{} (\opt{}), but not vice versa.
\end{restatable}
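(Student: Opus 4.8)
The plan is to prove the two directions separately. The non-implication \opt{}$\not\Rightarrow$\fhcr{} is immediate from \Cref{eg:fcr}: the assignment $A$ there (the circled items) satisfies \fhr{}, hence \opt{} (since \fhr{} implies \opt{}), yet \Cref{eg:fcr} already exhibits that $A$ violates \fhcr{}. So $A$ is the required witness, and only the forward implication needs real work.

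For \fhcr{}$\Rightarrow$\opt{}, I would first record a structural fact that holds for \emph{every} deterministic (i.e.\ bijective) assignment $A$: the round-sets $\tps{A}{1},\tps{A}{2},\dots$ are pairwise disjoint and together partition $M$. Disjointness is built into the definition, since $\tps{A}{r}$ excludes $\bigcup_{r'<r}\tps{A}{r'}$. To see that the rounds exhaust $M$, let $r$ be the first round with $\tps{A}{r}=\emptyset$; then no item is placed from round $r$ onward, so the unplaced items are exactly $U=M\setminus\bigcup_{r'<r}\tps{A}{r'}$. If $U\neq\emptyset$, then because $A$ is a bijection some agent $j$ has $A(j)\in U$, and this $j$ is a remaining agent, so $\tp{j}{U}\in\tps{A}{r}$, contradicting $\tps{A}{r}=\emptyset$. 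Hence $U=\emptyset$, every item lies in a unique round, and I may speak unambiguously of the round in which an item is placed.

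Next I would reformulate the \fhcr{} condition into a single usable statement about preferences: if $o\in\tps{A}{r}$, then \fhcr{} gives $o=\tp{A^{-1}(o)}{M\setminus\bigcup_{r'<r}\tps{A}{r'}}$, which is equivalent to saying that every item the agent $A^{-1}(o)$ strictly prefers to $o$ must have been placed in some strictly earlier round $r'<r$. This is the only consequence of \fhcr{} I will invoke. I would then argue by contradiction: suppose $A$ satisfies \fhcr{} but is Pareto-dominated by some $A'$ via a nonempty $N'\subseteq N$, so $A'(j)\succ_j A(j)$ for $j\in N'$ and $A'(k)=A(k)$ for $k\notin N'$. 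Since $A$ and $A'$ agree off $N'$ and are both bijections, $A'(N\setminus N')=A(N\setminus N')$ and therefore $A'(N')=A(N')$; denote this common set by $S$. Let $o^*\in S$ be the item placed in the earliest round among all items of $S$, say round $r^*$, and set $j^*=A^{-1}(o^*)\in N'$. Then $A'(j^*)\succ_{j^*}A(j^*)=o^*$, so by the reformulated property $A'(j^*)$ was placed in a round strictly before $r^*$. But $A'(j^*)\in A'(N')=S$, while $o^*$ was by choice the earliest-placed element of $S$; thus every element of $S$ is placed in a round $\ge r^*$, contradicting that $A'(j^*)\in S$ is placed before $r^*$. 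This contradiction establishes \opt{}.

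I expect the main obstacle to be setting up the round decomposition cleanly — establishing that the rounds are well-defined and partition $M$, so that ``the round in which an item is placed'' makes sense — together with the correct extremal choice of $j^*$. The whole argument hinges on picking the agent of $N'$ whose $A$-item is placed earliest and then using \fhcr{} to push the improved item $A'(j^*)$ into a still-earlier round, which is impossible because all the reshuffled items lie in $S$. A secondary point requiring care is the identity $A'(N')=A(N')$, but this follows purely from $A'$ and $A$ being bijections that agree outside $N'$.
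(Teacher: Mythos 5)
Your proof is correct, and it departs from the paper's own argument in two genuine ways. For the forward implication, the paper first invokes, without proof, the standard fact that any Pareto improvement can be realized by an \emph{improving cycle} of agents trading items, and then runs the extremal argument on that cycle: it picks the cycle item lying in $\tps{A}{r}$ with smallest $r$ and uses \fhcr{} to conclude that the item this agent would receive in the dominating assignment is still unplaced at round $r$, hence ranked below her own item, a contradiction. You use the same extremal core (earliest-placed item plus the top-among-remaining reading of \fhcr{}), but you bypass the cycle decomposition entirely by noting that bijectivity and agreement off $N'$ force $A'(N')=A(N')$, so the argument can be run over this common set $S$; you also explicitly prove that the rounds $\tps{A}{r}$ are pairwise disjoint and exhaust $M$, a fact the paper uses tacitly when it speaks of ``the smallest possible value of $r$'' without arguing that every item belongs to some round. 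Your version is therefore more self-contained, at the cost of a little extra setup; the paper's is shorter modulo the unproven cycle lemma. For the non-implication, the paper constructs a fresh three-agent example (an RP outcome that is \opt{} but violates \fhcr{}), whereas you reuse \Cref{eg:fcr}: the circled assignment satisfies \fhr{} and violates \fhcr{}, and \fhr{} implies \opt{} by the result of Ramezanian et al.\ cited in the paper's remark. Both witnesses are valid; yours avoids a new example but leans on that cited external implication rather than being fully self-verifying.
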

\begin{proof}
	\noindent{\bf (\fhcr{}~$\Rightarrow{}$\opt{})}
	Consider an arbitrary preference profile $R$, and let $A$ be any deterministic assignment that satisfies \fhcr{}. Suppose for the sake of contradiction that $A$ is Pareto dominated by another assignment. Then, since agents have strict preferences, there must exist an assignment $A'$ that Pareto dominates $A$ and can be obtained from $A$ by agents in an {\em improving cycle} exchanging items along the cycle, while all other agents' allocations remain unchanged. More formally, there exists an assignment $A'$ such that a set of $h\le n$ agents $N'=\{j_1,j_2,\cdots,j_h\}$ are involved in an improving cycle where for any $i=1,\dots,h$, $A'(j_i)=A(j_{i+1 \Mod h})\succ_i A(j_i)$, and for every agent $j\in N\setminus N'$, $A'(j)=A(j)$.
	
	For ease of exposition, let the agents in the improving cycle be $N'=\{1,\dots,h\}$, and for any $i=1,\dots,n$, let $o_{i}=A(i)$.
	Without loss of generality, let $o_1$ be the items that belongs to the set $\tps{A}{r}$ with the smallest possible value of $r$ among $\{o_1,\dots,o_h\}$. 
	Then, by $A$ satisfying \fhcr{},
	\begin{equation}\label{eq:prop:fhcr:1}
	    o_1=\tp{A(1)}{M\setminus\bigcup_{r'<r}\tps{A}{r'}}.
	\end{equation}
	
	By our choice of $r$, item $o_{2}\in M\setminus\bigcup_{r'<r}\tps{A}{r'}$, and~Eq~(\ref{eq:prop:fhcr:1}) implies that $o_1\succ_1 o_2$.
	However, by our assumption that $A'$ Pareto dominates $A$, we must have that $o_2=A'(1)\succ_1 A(1)=o_1$, a contradiction. Therefore, any deterministic assignment satisfying \fhcr{} is also \opta{}.
	
	\paragraph{(\opt{}$\not\Rightarrow{}$\fhcr{})}
	For the instance with the following profile $R$ from~\citeA{Ramezanian2021:Ex-post}, the deterministic assignment $A$ is \opt{} since it is an outcome of RP with the priority order $2\impord{}1\impord{}3$.
	
	\vspace{1em}\noindent
	\begin{minipage}{\linewidth}
		\centering
		\begin{minipage}{0.4\linewidth}
			\begin{center}
				$\succ_1$: $a\succ_1 b\succ_1 c$,\\
				$\succ_2$: $a\succ_2 c\succ_2 b$,\\
				$\succ_3$: $b\succ_3 a\succ_3 c$.
			\end{center}
		\end{minipage}
		\begin{minipage}{0.4\linewidth}
			\centering
			\begin{center}
				\centering
				\begin{tabular}{c|ccc}
					\multicolumn{4}{c}{Assignment $A$} \\
					&  a & b & c\\\hline
					1 & $0$ & $1$ & $0$\\
					2 & $1$ & $0$ & $0$\\
					3 & $0$ & $0$ & $1$\\
				\end{tabular}
			\end{center}
		\end{minipage}
	\end{minipage}\vspace{1em}
	
	We see that $b\in \tps{A}{1}=M$ since $\tp{3}{M}=b$.
	However, $A^{-1}(b)=1$ and $b\neq\tp{1}{M}=a$, which violates \fhcr{}.
\end{proof}

\subsection{\am{} Satisfies \efcr{}, \sdwef{}, \etep{}, and \sdsp{}}\label{sec:ep:ebm}

In this section, we define the {\em eager Boston mechanism} (\am{},~\Cref{alg:am}), and prove that it is efficient (\efcr{} and therefore ex-post \fcm{} and \epopt{}), fair (\sdwef{} and \etep{}), and strategyproof (\sdsp{}). \am{} proceeds in multiple rounds using \fhcr{} as a heuristic to allocate items.
In each round, each unsatisfied agent $j$ applies for the item that she is most eager for, i.e., her top remaining item $o$.
We use $N_o$ to refer to the set of agents who apply for $o$. Every agent in $N_o$ gets $o$ with probability $1/\lvert N_o\rvert $, with the winner determined by a random lottery winner generator $G$. Given a set of agents $S\subseteq N$, $G(S)$ is a single agent drawn from $S$ uniformly at random. At the end of each round, for every item $o$ with $N_o\neq\emptyset$, both the item $o$ and the winner $G(N_o)$ are removed.
We illustrate the execution of \am{} in~\Cref{eg:amnfhr}.
The outcome \am$(R)$ is a deterministic assignment, which can be computed in polynomial time if $G$ runs in polynomial time as we show in~\Cref{sec:app:results:time}.

\begin{algorithm}[htb]
	\begin{algorithmic}[1]
		\State {\bf Input:} An assignment problem $(N,M)$, a strict linear preference profile $R$, and a lottery winner generator $G$.
		\State $M'\gets M$. $N'\gets N$. $A\gets 0^{n\times n}$.
		\While{$M'\neq\emptyset$}
		\For{each $o\in M'$}
		\State$N_o\gets\{j\in N'\mid \tp{j}{M'}=o\}$.
		\State Run a lottery over $N_o\neq\emptyset$ to pick an agent $j_o=G(N_o)$, and allocate $o$, $A_{j_o,o}\gets1$.
        \EndFor
		\State $M'\gets M'\setminus\{o\in M'\mid N_o\neq\emptyset\}$.
		$N'\gets N'\setminus{\cup_{o\in M'}\{j_o\}}$.
		\EndWhile
		\State \Return $A$
	\end{algorithmic}
	\caption{\label{alg:am} Eager Boston mechanism (\am{})}
\end{algorithm}

\begin{example}\label{eg:amnfhr}\rm
	We execute \am{} on the instance in~\Cref{fig:fhr}. The table below shows for each round, which item each agent applies for, and a `/' represents the fact that an agent does not apply for any item since she has already been allocated one. The circled items represent the allocation of an item to the lottery winner.
		\begin{center}
		\centering
		\begin{tabular}{c|cccccc}
			\diagbox{Round}{Agent}&  1 & 2 & 3 & 4 & 5 & 6\\
			\hline
			1 & \bcircled{a} & \bcircled{b} & \bcircled{c} & c & c & c\\
			2 & / & / & / & \bcircled{e} & e & \bcircled{d}\\
			3 & / & / & / & / & \bcircled{f} & /\\
			\hline
		\end{tabular}
	\end{center}

	\noindent{-}~At round $1$, agents $1$ and $2$ apply for $a$ and $b$, respectively, and win them since they are the only applicants, while agents $3$~-~$6$ apply for $c$ and enter a lottery with equal chances of winning.
	
	\noindent{-}~If~agent $3$ wins $c$ at round $1$, then at round $2$, agents $4$ and $5$ apply for $e$, while agent $6$ applies for $d$ alone and gets it.
	
	\noindent{-}~If~agent $4$ wins $e$ at round $2$, agent $5$ applies for and gets $f$ at round $3$.
	
	Then, \am{} outputs the assignment $A^*$ in~\Cref{eg:fcr}.
\hfill$\square$
\end{example}

The expected outcome of \am{} is a random assignment, which we refer to as $\mathbb E(\text{\am}(R))$. Therefore, \am{} can also be viewed as a random mechanism. We prove that \am{} satisfies ex-post \fhcr{} (\efcr{}, \Cref{dfn:efcr}), \sdwef{}, and \sdsp{} in \Cref{thm:amp}.
{\bf All the missing proofs can be found in \Cref{sec:app:proof}.} 

\begin{definition}\label{dfn:efcr}
    A random assignment $P$ satisfies \fefcr{} (\efcr{}) if it is a convex combination of \fhcr{} deterministic assignments, i.e., if $P=\sum_{A\in \ma'}\alpha_A*A$, where $\ma'\subseteq\ma$, $\sum_{A\in\ma'}\alpha_A=1$ and every $A\in\ma'$ satisfies \fhcr{}.
\end{definition}

\begin{restatable}{thm}{thmamp}{}\label{thm:amp}
	\am{} satisfies \fefcr{} (\efcr{}), \fsdwef{} (\sdwef{}),  \fetep{} (\etep{}), and \fsdsp{} (\sdsp{}).
\end{restatable}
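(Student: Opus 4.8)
The plan is to establish the four properties one at a time, ordered by increasing difficulty, reusing the round structure of \am{} throughout.

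\textbf{Ex-post \fhcr{}.} Every deterministic run of \am{} (for a fixed realization of $G$) outputs an assignment $A$, and $\mathbb E(\am(R))$ is by construction the corresponding convex combination of these $A$'s; so \efcr{} follows once I show each run satisfies \fhcr{}. I would prove by induction on $r$ that the set of items removed in round $r$ equals $\tps{A}{r}$. The invariant that the numbers of remaining agents and remaining items stay equal guarantees the loop terminates with a full assignment and that the agents still active at the start of round $r$ are exactly those with $A(j)\notin\bigcup_{r'<r}\tps{A}{r'}$. Each such agent applies for $\tp{j}{M\setminus\bigcup_{r'<r}\tps{A}{r'}}$, so the items removed in round $r$ are precisely $\tps{A}{r}$ and each is won by an agent for whom it is the top remaining item, which is exactly the defining condition of \fhcr{} at level $r$.

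\textbf{\etep{}.} I would show $p_{j,o}=p_{k,o}$ for every $o$ in the common prefix $\succ_{j,k}$ by induction on $|M|$. For the top prefix item $o_1$, both $j$ and $k$ apply for it in round $1$, so uniformity of $G$ gives $\Pr[A^{-1}(o_1)=j]=\Pr[A^{-1}(o_1)=k]=1/|N_{o_1}|$. For a lower prefix item $o$, I condition on the round-$1$ outcome. When a third agent wins $o_1$ both continue and share the shortened prefix, so the contributions match by the induction hypothesis; the branch where $k$ wins $o_1$ (so $j$ continues without $k$) and the branch where $j$ wins $o_1$ (so $k$ continues without $j$) are exchanged by the relabeling $j\leftrightarrow k$. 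This relabeling fixes the allocation of prefix items because $j$ and $k$ rank the prefix identically and \am{} always exhausts prefix items before any non-prefix item is applied for; hence the allocation of the prefix item $o$ is insensitive to how $j$ and $k$ rank non-prefix items, and the contributions are equal.

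\textbf{\sdwef{} and \sdsp{}.} These share one template: rule out that a comparison allocation $Q$ satisfies $Q\sd{j}P_j$ with $Q\neq P_j$, where $Q=P_k$ (viewed through $\succ_j$) for \sdwef{} and $Q=f(R')_j$ for a misreport $R'=(\succ'_j,\succ_{-j})$ for \sdsp{}. I would first pin down the true top item $o_1=\tp{j}{M}$: truthful play gives $P_{j,o_1}=1/|N_{o_1}|$, and a short case check shows that any other agent obtains $o_1$ with probability at most $1/|N_{o_1}|$, and that no misreport gives $j$ more than $1/|N_{o_1}|$ on $o_1$, with equality only when the competitor also ranks $o_1$ first, respectively when the misreport ranks $o_1$ first. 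Since $\{o_1\}=\ucs(\succ_j,o_1)$, the hypothesis $Q\sd{j}P_j$ forces equality on $o_1$ and hence forces $Q$ to also commit to $o_1$ in round $1$; round $1$ then proceeds identically, can be coupled, and the comparison reduces to the post-round-$1$ sub-instances, to which I would apply the induction hypothesis.

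\textbf{The main obstacle} is exactly this inductive step for \sdwef{} and \sdsp{}. Unlike \etep{}, there is no symmetry available: for \sdwef{} the agents $j$ and $k$ differ below $o_1$, and the branches ``$j$ continues without $k$'' and ``$k$ continues without $j$'' are genuinely asymmetric; and for both properties the stochastic-dominance comparison of the \emph{aggregate} allocation does not decompose into a comparison within each round-$1$ outcome, since a mixture of sub-allocations that are each only weakly (never strictly) dominated may still be dominated after averaging. Because \am{} is provably neither \sdef{} nor \sdssp{}, the coordinatewise-dominance statements that would make the induction trivial are false, so the argument cannot be rescued by proving something stronger. The real content is to identify an inductive invariant phrased in terms of the cumulative sums $\sum_{o'\in\ucs(\succ_j,o)}p_{\cdot,o'}$ that is preserved under averaging over the round-$1$ lottery, so that ``$Q\sd{j}P_j$'' propagates to the sub-instances and forces $Q=P_j$ level by level.
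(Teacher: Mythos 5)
Your \efcr{} argument matches the paper's Part~1 exactly, and your \etep{} sketch is in the same spirit as the paper's Part~4 (pairing each execution with the execution in which $j$ and $k$ swap their wins on prefix items, which has equal probability). The genuine gap is precisely where you place it yourself: \sdwef{} and \sdsp{}. Your proposed template --- force equality on the top item, couple round~1, then recurse on the post-round-1 sub-instances --- is the approach you then correctly argue cannot close, because dominance of the averaged assignment neither implies nor is implied by dominance conditional on each round-1 lottery outcome. Having named the obstacle, you do not supply the invariant that overcomes it, so the two hardest claims of the theorem remain unproven.

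The paper's resolution has a different structure and is worth internalizing: it inducts not on sub-instances or rounds but on the rank $i$ of items in the true order $\succ_j$, and the inductive invariant is not a statement about cumulative sums surviving averaging but a pair of \emph{behavioral} conditions together with an exact per-item identity. For \sdwef{}, with $P_k\sd{j}P_j$ assumed: (1) in any execution, at any round where both $j$ and $k$ are unassigned, if $j$ applies for $o_i$ then so does $k$; (2) if $j$ was assigned some $o\succ_j o_i$ at an earlier round, then at any later round $r$ where $k$ is still unassigned and $\tp{j}{M_r}=o_i$, agent $k$ applies for $o_i$; and (3) $p_{j,o_i}=p_{k,o_i}$. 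Conditions (1)--(2) at ranks below $i$ allow every execution in which $k$ could obtain $o_i$ to be paired, via an explicit winner-swapping bijection (swap who wins the earlier item $o_h$, then let $j$ inherit $k$'s lottery participation up to the round where $o_i$ is drawn), with an execution of \emph{equal} probability in which $j$ obtains $o_i$; this yields $p_{k,o_i}\le p_{j,o_i}$ unconditionally, with strict inequality whenever (1) or (2) fails at rank $i$. Only then is the hypothesis $P_k\sd{j}P_j$ invoked, together with (3) at ranks below $i$, to force the reverse inequality, hence equality --- and the equality in turn forces (1)--(2) at rank $i$, closing the induction. \Sdsp{} follows the same pattern, comparing executions under $R$ and $R'=(\succ_j',\succ_{-j})$ with identical lottery histories and showing $j$ applies for each $o_i$ at the same rounds in both, so $p_{j,o_i}=q_{j,o_i}$ for every $i$. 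In short, the dominance hypothesis is consumed rank by rank to propagate behavioral identity of the two agents (or of the truthful and misreporting runs); it is never averaged over the round-1 lottery, which is exactly how the paper sidesteps the failure mode you identified.
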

\begin{sketch}
    Given any profile $R$, let $P=\mathbb E(\am{}(R))$. For convenience, we refer to each possible execution of \am{}, i.e., each way in which lottery winners are picked, as a possible {\em world} below.
    
    \vspace{0.5em}\noindent{\bf(\efcr{})}
    Let $A=\text{\am}(R)$.
	We show that the following two conditions hold for each $r\ge1$:
	\begin{enumerate}[label=(\arabic*),leftmargin=*,topsep=0.5em,itemsep=0pt]
	\item the set of items assigned at each round $r$ of~\Cref{alg:am}, i.e., $\{o\in M'|N_o\neq\emptyset\}$, are exactly those in $\tps{A}{r}$, and
	\item the assignment $A$ allocates every item $o\in \tps{A}{r}$ to an agent who ranks $o$ as the top item in the set of remaining items $M'$, i.e. an agent in $\{j\in N'|\tp{j}{M'}=o\}$.  
	\end{enumerate}
	When $r=1$, we obtain condition (1) trivially with $M'=M$ and $N'=N$. 
	Condition (2) holds because at the beginning of round~$1$, we have that for any $o\in\tps{A}{1}$, $o=\tp{j_o}{\tps{A}{1}}$ with $j_o=A^{-1}(o)$ by Line~6 of~\Cref{alg:am}, which means that $N_o\neq\emptyset$ and $o$ is assigned to $j_o$ who ranks $o$ as the top item among $M$. 
	Before round $2$, by~Line~7, every such item $o$ and its winner $j_o$ are removed from $M'$ and $N'$ respectively . With the updated $M'$ and $N'$, we can obtain the two conditions hold for $r=2$ with a similar analysis.
	The proof follows by repeating a similar argument at each subsequent round.
	In this way we see that $A$ is \fhcr{}, and therefore $P$ is \efcr{}.

    \vspace{0.5em}\noindent{\bf(\sdwef{})}
	For any pair of agents $j,k$ with $P_{k}\sd{j}P_j$, we show that the following two conditions hold at any round $r$ during the execution of~\Cref{alg:am}, where $k$ has not been allocated an item yet:
	\begin{enumerate}[label=(\arabic*),leftmargin=*,topsep=0.5em,itemsep=0pt]
		\item if $j$ applies for $o$, then $k$ also applies for $o$, and
		\item if $j$ gets some item $o$ at round $r'<r$, then $k$ applies for the item that $j$ ranks highest among remaining items.
	\end{enumerate}
	Condition (1) shows that at any round where both agents are unassigned, they apply for the same item and therefore have the same chance to win it.
	Condition (2) shows that if $j$ gets an item at an earlier round, then $k$ applies to the item $j$ would have applied to had $k$ been allocated an item in an earlier round.
	The proof proceeds by comparing the probabilities of the worlds in which $j$ and $k$ get $o$ respectively, and shows that they are equal for every item $o$, considered one by one according to the preference order  $\succ_j$, from which it follows that $P_{k}=P_j$.

	\vspace{0.5em}\noindent{\bf (\etep{})}.
	We prove it by comparing the probabilities that agent $j$ and $k$ get each $o\in\ucs(\succ_{j,k},o_m)$.
	First, for the world $w$ where $j$ gets $o$ at round $r$ while $k$ gets $o'\in\ucs(\succ_{j,k},o_m)$ at round $r'$, we can find out another world $w'$ where only $j$ and $k$ swap their items.
	It follows that $Pr(w)= Pr(w')$ since the other lotteries keep the same as $w$.
	Then for the worlds $W_j$ where $j$ gets $o$ at round $r$ and $k$ does not get items in $\succ_{j,k}$, we can also construct another set of worlds $W_k$ such that: $k$ gets $o$ at round $r$, and $j$ participates in lotteries instead from round $r+1$ to the last round that $k$ applies for items in $\succ_{j,k}$.
	In this way, we also obtain that $Pr(W_j)= Pr(W_k)$.
	With both cases hold we make the proof.
	
	\vspace{0.5em}\noindent{\bf (\sdsp{})}
	Let $R'=(\succ'_j,\succ_{-j})$ be the profile when agent $j$ misreports her preferences as $\succ'_j$, $Q=\am(R')$, and assume that $Q_j\sd{j} P_j$.
	The proof proceeds by considering each item $o$ according to the order $\succ_j$, and shows that if $j$ applies for $o$ at round $r$ in some world $w$ for $\am{}(R)$, then $j$ also applies for $o$ at round $r$ in any world with the lotteries and winners before round $r$ identical to those of $w$ for $\am{}(R')$.
	This means that despite misreporting, $j$ applies for the same items as she does when truthfully reporting her preferences, and therefore $j$ has the same probability to win each item.
	It follows that $p_{j,o}=q_{j,o}$ for each $o\in M$, and therefore, that if $Q_j\sd{} P_j$, then $Q_j=P_j$.
\end{sketch}

\subsection{\Efcr{} and Adaptive Boston Mechanism}\label{sec:ep:abm}

We now show that not only is every member of adaptive Boston mechanism (\abm{}) guaranteed to output an \efcr{} assignment, but also that every \efcr{} assignment can be computed by some member of \abm{}, meaning that the output of \am{} must also be the output of some member of \abm{} which depends on the instance of the assignment problem. As we show in~\Cref{rmk:amnotabm}, although \am{} appears similar to \abm{}, \am{} does not belong to the family of \abm{} mechanisms.

Each algorithm in \abm{} (\Cref{alg:abm}) is specified by a probability distribution $\pi$ over the priority orderings of agents, and is denoted \abm{}$^{\pi}$, which computes an assignment as follows.
First, a priority order $\impord{}$, a strict linear order over $N$ is picked according to the probability distribution $\pi$. Then, items are allocated to agents in multiple rounds. In each round, each unsatisfied agent applies for a remaining item that she is most eager for. Each remaining item $o$, if it has applicants, is assigned to the agent $j_o$ who is ranked highest in $\impord{}$ among all the applicants. At the end of each round, every such item $o$ and the corresponding $j_o$ are removed from $M$ and $N$, respectively. 

\begin{algorithm}[htb]
	\begin{algorithmic}[1]
		\State {\bf Input:} An assignment problem $(N,M)$, a strict linear preference profile $R$, a probability distribution $\pi$ over all priority orderings of agents.
		\State $M'\gets M$. $N'\gets N$. $A\gets 0^{n\times n}$.
        \State Randomly choose a priority order $\impord{}$ according to $\pi$.
		\While{$M'\neq\emptyset$}
		\For{each $o\in M'$}
		\State$N_o\gets\{j\in N'\mid\tp{j}{M'}=o\}$.
		\State Allocate $o$ to agent $j_o$ which is ranked highest in $\impord{}$ among $N_o$, i.e., $A_{j_o,o}\gets1$.
        \EndFor
		\State $M'\gets M'\setminus\{o\in M'\mid N_o\neq\emptyset\}$.
		$N'\gets N'\setminus{\cup_{o\in M'}\{j_o\}}$.
		\EndWhile
		\State \Return $A$
	\end{algorithmic}
	\caption{\label{alg:abm} Adaptive Boston mechanism (\abm)}
\end{algorithm}

\Cref{thm:abmchar} shows that \efcr{} characterizes the family of \abm{} algorithms. Throughout, we will use $\pi(\impord{})$ to denote the probability of a priority order $\impord{}$ according to the distribution $\pi$. If  $\pi(\impord{})=1$ for a certain $\impord{}$, we will use \abm{}$^{\impord{}}$ to refer to the corresponding algorithm for convenience.

\begin{restatable}{thm}{thmabmchar}{}\label{thm:abmchar}
	Given a profile $R$, a random assignment $P$ satisfies \fefcr{} (\efcr{}) if and only if there exists a probability distribution over all the priorities $\pi$ such that $P=\mathbb E(\text{\abm{}}^{\pi}(R))$.
\end{restatable}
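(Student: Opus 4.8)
The plan is to prove the two directions of the characterization separately, using the structural description of $\fhcr{}$ assignments in terms of the sets $\tps{A}{r}$ (\Cref{dfn:fhcr}) together with the algorithmic description of $\abm{}$ (\Cref{alg:abm}).

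\textbf{The ``if'' direction (every $\abm{}^\pi$ output is $\efcr{}$).} Since $\mathbb{E}(\abm{}^\pi(R))$ is by construction a convex combination (weighted by $\pi$) of the deterministic assignments produced by $\abm{}^{\impord{}}$ for each priority order $\impord{}$, by \Cref{dfn:efcr} it suffices to show that for a fixed priority order $\impord{}$, the deterministic output $A=\abm{}^{\impord{}}(R)$ satisfies $\fhcr{}$. I would argue this by induction on the round number $r$, exactly mirroring the $\efcr{}$ part of the proof sketch of \Cref{thm:amp}: I would show that the set of items allocated in round $r$ of \Cref{alg:abm} coincides with $\tps{A}{r}$, and that each such item is allocated to an agent who ranks it top among the items remaining at the start of round $r$, i.e., $M\setminus\bigcup_{r'<r}\tps{A}{r'}$. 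The key point is that in \Cref{alg:abm} each agent applies for her top remaining item and the tie among applicants is broken by $\impord{}$; the winner therefore ranks the item highest among remaining items, which is precisely the $\fhcr{}$ condition. The argument is essentially identical to the one already sketched for $\am{}$, the only difference being that the lottery $G$ is replaced by selection according to $\impord{}$, which does not affect the structural claim.

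\textbf{The ``only if'' direction (every $\efcr{}$ assignment is some $\abm{}^\pi$ output).} This is the substantive direction. By \Cref{dfn:efcr}, an $\efcr{}$ assignment $P$ decomposes as $P=\sum_{A\in\ma'}\alpha_A A$ with each $A\in\ma'$ being a $\fhcr{}$ deterministic assignment. Since $\abm{}^\pi$ is linear in $\pi$ over priority orders, it suffices to show that \emph{every} $\fhcr{}$ deterministic assignment $A$ is the output of $\abm{}^{\impord{}}$ for \emph{some} single priority order $\impord{}$; one then sets $\pi(\impord{})=\alpha_A$ on a suitable collection of priority orders (one per $A\in\ma'$, re-using or perturbing to keep them distinct if necessary) to realize $P$ as a convex combination. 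So the crux reduces to: given a $\fhcr{}$ assignment $A$, construct a priority order $\impord{}$ that makes $\abm{}$ reproduce $A$. The natural construction is to order agents consistently with the round structure of $A$: agents whose assigned item lies in $\tps{A}{1}$ come first, then those whose item lies in $\tps{A}{2}$, and so on, breaking ties within a round arbitrarily. I would then verify by induction on $r$ that $\abm{}^{\impord{}}$ allocates exactly $\tps{A}{r}$ in round $r$ and matches $A$ on these items: the $\fhcr{}$ property guarantees each $A^{-1}(o)$ for $o\in\tps{A}{r}$ ranks $o$ top among the surviving items, so $A^{-1}(o)$ is among the applicants for $o$ in round $r$; and by placing round-$r$ winners above all later agents in $\impord{}$, the tie-break selects exactly $A^{-1}(o)$.

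\textbf{Main obstacle.} I expect the delicate step to be confirming that the priority-order construction forces $\abm{}$ to select $A^{-1}(o)$ rather than some other applicant in the same round, since multiple agents from earlier or later $\tps{A}{r'}$ blocks could in principle also apply for $o$ in round $r$ of the $\abm{}$ execution. One must check that the block-ordering of $\impord{}$ is fine enough: in particular, that no agent destined (under $A$) for a \emph{later} round can ``steal'' an item of $\tps{A}{r}$ in round $r$, and that the within-round agents do not interfere. This requires carefully showing that the $\abm{}$ round-$r$ applicant pool for $o$ and the $\fhcr{}$ definition stay in lockstep, so that ordering round-$r$ agents above round-$(r{+}1)$ agents suffices to reproduce $A$ item by item. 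A secondary bookkeeping issue is ensuring the priority orders chosen for distinct $A\in\ma'$ can be taken distinct (or that coincidences are handled by summing the corresponding $\alpha_A$), so that $\pi$ is a well-defined distribution realizing $P$ exactly.
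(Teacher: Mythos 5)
Your proposal is correct and follows essentially the same route as the paper's proof: the satisfaction direction by round-by-round induction mirroring the \am{} argument, and the uniqueness direction by constructing a priority order that ranks agents in blocks according to the round $r$ with $A(j)\in\tps{A}{r}$, then showing by induction that $\abm{}^{\impord{}}$ reproduces $A$. The ``main obstacle'' you flag is exactly the step the paper handles by contradiction (if some $k\neq A^{-1}(o)$ won $o$, then $k\impord{}A^{-1}(o)$ forces $A(k)$ to lie in an earlier block, so $k$ would already have been assigned by the induction hypothesis), and your secondary bookkeeping worry resolves itself since a single priority order cannot reproduce two distinct deterministic assignments.
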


\begin{proof}

{\bf\noindent(Satisfaction)}
The proof is similar to proving \am{} satisfies \efcr{}, and is provided in~\Cref{sec:app:proof} for the sake of completeness.

\vspace{0.5em}{\bf\noindent(Uniqueness)}
Consider an arbitrary random assignment $P$ satisfying \efcr{} for a preference profile $R$. Then, $P$ can be decomposed into a set $\ma'\subseteq\ma$ of deterministic assignments satisfying \fhcr{} with positive probability, i.e., $P=\sum_{A_i\in \ma'}\alpha_i*A_i$, where $\alpha_i>0$. 

Consider any $A\in\ma'$. Since $A$ satisfies \fhcr{}, by \Cref{dfn:fhcr}, there exist non-empty sets $\tps{A}{1},\dots,\tps{A}{K}$, such that for each $r\in\{1,\dots,K\}$, $\tps{A}{r}=\{o\in M:o=\tp{j}{M\setminus\bigcup_{r'<r}\tps{A}{r'}}\}$ for some $j\in N$ with $A(j)\not\in\bigcup_{r'<r}\tps{A}{r'}$.

Consider any priority ordering $\impord{}$ where for any $r',r\in\{1,\dots,K\}$ with $r'<r$, and any pair of items $o'\in\tps{A}{r'}$ and $o\in\tps{A}{r}$, it holds that agent $A^{-1}(o')$ has higher priority than $A^{-1}(o)$, denoted as $A^{-1}(o')\impord{}A^{-1}(o)$. It is easy to see that since $A$ is deterministic, and every agent receives exactly one item, at least one such priority ordering always exists. 

Let $B=\abm{}^{\impord{}}(R)$. We claim that at any round $r$ during the execution of $\abm{}^{\impord{}}(R)$, every item in $o\in\tps{A}{r}$ is allocated to $A^{-1}(o)$, i.e., $B^{-1}(o)=A^{-1}(o)$. It is easy to see that the claim is true for $r=1$. 
Since $A$ satisfies \fhcr{}, for any item $o\in\tps{A}{1}$, which is the set of items that are ranked on top by some agent, $A^{-1}(o)$ ranks $o$ as her top item, i.e., $A^{-1}(o)\in N_o=\{j\in N\mid \tp{j}{M}=o\}$, and therefore she applies for $o$ at round $1$.
Due to the construction of $\impord{}$, $A^{-1}(o)$ must have the highest priority among $N_o$ and obtain item $o$, i.e., $B^{-1}(o)=A^{-1}(o)$. 

Now, assume that it holds that at any round $r'<r$, every $o'\in\tps{A}{r'}$ is allocated to $A^{-1}(o')$, i.e., $B^{-1}(o')=A^{-1}(o')$. 
We show that at round $r$, any $o'\in\tps{A}{r}$ is allocated to $A^{-1}(o)$, i.e., $B^{-1}(o)=A^{-1}(o)$. 
Assume for the sake of contradiction that there exists an item $o\in\tps{A}{r}$, such that $B^{-1}(o)=k\neq j=A^{-1}(o)$. By our assumption about rounds $r'<r$, both $j$ and $k$ have not been assigned an item in an earlier round by $\abm{}^{\impord{}}(R)$. Notice that by Line 6 of \Cref{alg:abm}, $\tp{k}{M\setminus\bigcup_{r'<r}\tps{A}{r'}}=o$, since every item in $\bigcup_{r'<r}\tps{A}{r'}$ is allocated in an earlier round by our assumption. Also, since $A$ is \fhcr{}, we also have that $\tp{j}{M\setminus\bigcup_{r'<r}\tps{A}{r'}}=o$. Therefore, both $j$ and $k$ apply for item $o$ during round $r$ of $\abm{}^{\impord{}}(R)$. Then, it must hold that $k\impord{} j$ since $k$ is assigned $o$ in round $r$ of the execution of $\abm{}^{\impord{}}(R)$. 

However, by the construction of $\impord{}$ and the assumption that $j=A^{-1}(o)$, $k\impord{} j$ implies that there exists some $r^*<r$ such that $k$ gets an item in $\tps{A}{r^*}$. Then, there must exist some item $o^*\in\tps{A}{r^*}$ where $r^*<r$ such that $k=A^{-1}(o^*)$. It also means that $o^*\neq o=B(k)$, and therefore $B^{-1}(o^*)\neq=k=A^{-1}(o^*)$, a contradiction to our assumption that $B^{-1}(o')=A^{-1}(o')$ for every $o'\in\tps{A}{r'}$ with $r'<r$. Thus, by induction, it holds that $B=A$.

We have shown that for every deterministic assignment $A_i\in\ma'$, there exists a priority order $\impord{i}$ such that the output of $\abm{}^{\impord{i}}(R)=A_i$. Then, for the \efcr{} assignment $P=\sum_{A_i\in\ma'}\alpha_i A_i$, is the output of a member of the family of \abm{} algorithms specified by the probability distribution $\pi$ over priority orderings where for any $A_i\in\ma'$, $\pi(\impord{i})=\alpha_i$, i.e., $\abm{}^{\pi}(R)=P$. This completes the proof.
\end{proof}

\begin{remark}\label{rmk:amnotabm}
\rm
\citeA{Mennle2021:Partial} proved that \abm{}$^\pi$ is \sdsp{} if $\pi(\impord{})>0$ for every $\impord{}$.
However, as we show in~\Cref{sec:app:results:abm}, there is no such a distribution $\pi$ that $\am{}(R)=\abm{}^{\pi}(R)$ for every preference profile $R$, meaning that \am{} is not a member of \abm{}. Therefore, the sufficient condition proposed by~\citeA{Mennle2021:Partial} for checking if a member of \abm{} satisfies \sdsp{} is not applicable for the proof of \am{} satisfying \sdsp{}.
\end{remark}

\section{Ex-ante Favoring Eagerness for Remaining Items}

In this section, we design a family of mechanisms that satisfy the ex-ante variant of \fhcr{}, {\bf \fsdcrf{} (\sdcrf{})}, which implies ex-post \fcm{} (\Cref{rm:sdcrf}), \epopt{}, and \sdopt{} (\cref{prop:sdcrf}).
We further prove that a member of this family of mechanisms satisfies \sdwef{} and \etep{}.

Intuitively, a random assignment satisfies \sdcrf{}, if the shares of every remaining item are distributed among only the agents who are most eager for it unless every such agent's demand has been satisfied by better items.

\begin{definition}[\bf\sdcrf{}]\label{dfn:sdcrf}
	Given a random assignment $P$, we defined $M_{P,0}=\emptyset$ and for each item $o\in M$, $E_{P,0}(o)=\emptyset$. Then, for each $r\in\{1,2,\dots$\}, we define
	\begin{enumerate}[label=\rm(\roman*),wide,labelindent=0pt,topsep=0em,itemsep=0pt]
	    \item the set of items with positive supply after excluding the shares owned by agents in $\bigcup_{r'<r}E_{P,r'}(o)$, $M_{P,r}=\{o\in M:\sum_{k\in\bigcup_{r'<r}E_{P,r'}(o)}p_{k,o}<1\}$, and
	    \item for each item $o\in M_{P,r}$, $E_{P,r}(o)=\{j\in N:o=\tp{j}{M_{P,r}}\}$ to be the set of all agents eager for it.
	\end{enumerate}
	
	A random assignment $P$ satisfies {\bf\fsdcrf{} (\sdcrf{})} if for every $r\in\{1,2,\dots\}$ and item $o\in M_{P,r}$, it holds for any agent $j\in N$ that if there exists an $r'<r$ such that $j\in E_{P,r'}(o)$, agent $j$ is satisfied by items weakly preferred to $o$, i.e., $\sum_{o'\in\ucs(\succ_j,o)}p_{j,o'}=1$.
\end{definition}

\Sdcrf{} is a natural ex-ante extension of \fhcr{} for random assignments. 
Indeed, it is easy to see that for deterministic assignments, \sdcrf{} is equivalent to \fhcr{}. 
We also note that \sdcrf{} and \efcr{} do not imply each other. 
Please see~\Cref{sec:app:results:relation} for more details.

\begin{remark}\label{rm:sdcrf}
\Sdcrf{} implies ex-post \fcm{}.
Specifically, in any \sdcrf{} assignment $P$, for item $o\in M$ which is ranked top by some agents, i.e., $o\in M_{P,1}$ and $E_{P,1}(o)\neq\emptyset$, we have that $o\notin M_{P,r}$ with $r>1$ by \sdcrf{} and the fact that $\sum_{o'\in\ucs(\succ_j,o)}=0$ for $j\in E_{P,1}(o)$, and it follows that every such $o$ is allocated to one of agents in $E_{P,1}(o)$ who rank $o$ top among all the items in deterministic assignments that constitute a convex combination for $P$.
\end{remark}


\Cref{prop:sdcrf} below shows that \sdcrf{} implies \sdopt{}.

\begin{restatable}{prop}{propsdcrf}{}\label{prop:sdcrf}
     {\rm[\sdcrf{}$\Rightarrow$\sdopt{}, \sdopt{}$\not\Rightarrow$\sdcrf{}]}
     A random assignment satisfying \fsdcrf{} (\sdcrf{}) also satisfies \fsdopt{} (\sdopt{}), but not vice versa.
\end{restatable}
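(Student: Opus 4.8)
The plan is to prove the two directions separately. For \sdcrf{}$\Rightarrow$\sdopt{}, I would show that no random assignment can stochastically dominate a \sdcrf{} assignment $P$ by invoking the standard characterization of sd-efficiency~\cite{Bogomolnaia01:New}: $P$ is \sdopt{} iff the binary relation $T$ on items, where $o\,T\,o'$ whenever some agent $j$ satisfies $o\succ_j o'$ and $p_{j,o'}>0$, is acyclic. It then suffices to show that \sdcrf{} forces $T$ to be acyclic. For the reverse direction, \sdopt{}$\not\Rightarrow$\sdcrf{}, I would reuse the instance and deterministic assignment $A$ from the second half of the proof of \Cref{prop:fhcr}: $A$ is deterministic and \opta{} (an RP outcome), and a deterministic \opta{} assignment is automatically \sdopta{} (any dominating $Q$ decomposes by Birkhoff--von Neumann into deterministic assignments each weakly Pareto-dominating $A$, each of which must equal $A$ since $A$ is \opta{}); yet $A$ violates \fhcr{}, and since \sdcrf{} coincides with \fhcr{} on deterministic assignments, $A$ is not \sdcrf{}.

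The forward direction rests on the layered sets $M_{P,r}$ and $E_{P,r}(o)$ from \Cref{dfn:sdcrf}. First I would record that the $M_{P,r}$ are nested and decreasing in the finite set $M$, hence stabilize to some $M_\infty$ after finitely many rounds, and that every item $o$ which eventually leaves has a well-defined \emph{level} $\ell(o)=\max\{r:o\in M_{P,r}\}$, at which all of $o$'s mass is held by its eager agents $\bigcup_{r'\le\ell(o)}E_{P,r'}(o)$. The key lemma is that \sdcrf{} forces $M_\infty=\emptyset$, i.e.\ every item is eventually fully consumed by eager agents. I would prove this by contradiction: if $o\in M_\infty$, then positive mass of $o$ is held by some agent $j$ never eager for $o$, whose top item $o^*$ in the stabilized set $M_\infty$ satisfies $o^*\succ_j o$ and $j\in E_{P,R}(o^*)$ for the stabilization round $R$; applying the \sdcrf{} condition to item $o^*$ at round $R+1$ (with $r'=R<R+1$) yields $\sum_{o'\in\ucs(\succ_j,o^*)}p_{j,o'}=1$, which contradicts $p_{j,o}>0$ together with $o\prec_j o^*$.

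With $M_\infty=\emptyset$, every item has a finite level, and I would finish by showing each edge of $T$ strictly increases the level. Suppose $o\,T\,o'$, witnessed by an agent $j$ with $o\succ_j o'$ and $p_{j,o'}>0$. Since all of $o'$'s mass sits on eager agents, $j\in E_{P,r'}(o')$ for some $r'\le\ell(o')$, so $o'=\tp{j}{M_{P,r'}}$; because $o\succ_j o'$, the item $o$ cannot lie in $M_{P,r'}$, whence $\ell(o)<r'\le\ell(o')$. A cycle in $T$ would therefore produce a strictly increasing chain of levels returning to its starting value, which is impossible, so $T$ is acyclic and $P$ is \sdopt{}.

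The main obstacle I anticipate is the termination lemma $M_\infty=\emptyset$: this is the one place where the conditional and somewhat delicate \sdcrf{} requirement (which constrains only agents that were eager for $o$ at a \emph{strictly earlier} round) must be invoked, and getting the round indices ($r'<r$, the stabilization round $R$, and $R+1$) to align is the crux of the argument. Once termination and the characterization of \sdopt{} are in hand, the remaining level bookkeeping and acyclicity argument are routine.
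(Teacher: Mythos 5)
Your proof is correct, and while both directions rest on the same pillars as the paper's (the acyclicity characterization of \Cref{lem:BM} for the implication, and a small counterexample for the non-implication), the execution differs in ways worth noting. For \sdcrf{}$\Rightarrow$\sdopt{}, the paper likewise extracts a cycle of agents and items from \Cref{lem:BM} and derives a contradiction by selecting the item consumed at the earliest round; but it speaks of ``the round where $j_i$ consumes $o_i$'' in the language of \Cref{alg:pcr}, implicitly assuming both that $P$ arises from that algorithm (which strictly presupposes \Cref{thm:familychar}) and that every positive share $p_{j,o}>0$ is held by an agent who was eager for $o$ at some round. Your termination lemma $M_\infty=\emptyset$ is precisely the missing justification for the latter, and your level function $\ell(\cdot)$ turns the paper's minimal-round trick into a clean monotonicity argument: each $\tau$-edge strictly increases the level, so no cycle can exist. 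Your version is therefore self-contained relative to \Cref{dfn:sdcrf} and patches the step the paper leaves implicit. For the converse, the paper exhibits the PS outcome on the same three-agent profile and verifies the \sdcrf{} violation directly ($E_{P,1}(b)=\{3\}$, agent $3$ unsatisfied, yet $b\in M_{P,2}$), whereas you reuse the deterministic \opt{} assignment from \Cref{prop:fhcr}, invoking the facts that a deterministic \opt{} assignment is \sdopt{} (via Birkhoff--von Neumann and strictness of preferences) and that \sdcrf{} coincides with \fhcr{} on deterministic assignments (\Cref{prop:deterfhcrsdcrf}); both routes are legitimate, yours trading a direct computation for two auxiliary facts already available in the paper.
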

\begin{proof}
	{\bf(\sdcrf{}~$\Rightarrow{}$\sdopt{})}
	Assume for the sake of contradiction that $P$ is \sdcrfa{}, but not \sdopta{}.
	By assumption and~\Cref{lem:BM} (in \Cref{sec:app:results:properties}), we can find a set of agents $\{j_1,j_2,\cdots,j_h\}$ and items $M^*=\{o_1,o_2,\cdots,o_h\}$ such that $o_{i+1\Mod h}\succ_{j_i} o_i$ with $p_{j_i,o_i}>0$ with $i\le h$.
	For each $o_i$, let $r_i$ be the round where $j_i$ consumes it.
	We note that $r_i$ is unique for each $o_i$, because if item $o_i$ is consumed by $j_i$ at that round, then either $o_i$ is consumed to exhausted or $j_i$ is satisfied and does participate the later consumption.
	Without loss of generality, let $o_{i'}=\arg\min_{o_i\in M^*}r_i$.
	Then we have that all the items in $M^*$ are available at round $r_{i'}$ and $j_{i'}\in N_{o_{i'}}$, which means that $\tp{j_{i'}}{M'}=o_{i'}$ where $M'$ is the set of all the available items at that round in~\Cref{alg:pcr}.
	Since $M^*\subseteq M'$, we have that $o_{i'}\succ_{i'}o_{i'+1}$, a contradiction to the assumption.
	
	\vspace{0.5em}\noindent{\bf (\sdopt{}$\not\Rightarrow{}$\sdcrf{})}
	For the following preference profile $R$, the  assignment $P$ is the outcome of PS which satisfies \sdopt{}:
	
	\vspace{1em}\noindent
	\begin{minipage}{\linewidth}
		\centering
		\begin{minipage}{0.4\linewidth}
			\begin{center}
				$\succ_1$: $a\succ_1 b\succ_1 c$,\\
				$\succ_2$: $a\succ_2 c\succ_2 b$,\\
				$\succ_3$: $b\succ_3 a\succ_3 c$.
			\end{center}
		\end{minipage}
		\begin{minipage}{0.4\linewidth}
			\centering
			\begin{center}
				\centering
				\begin{tabular}{c|ccc}
					\multicolumn{4}{c}{Assignment $P$}\\
					&  a & b & c\\\hline
					1 & $1/2$ & $1/4$ & $1/4$\\
					2 & $1/2$ & $0$ & $1/2$\\
					3 & $0$ & $3/4$ & $1/4$\\
				\end{tabular}
			\end{center}
		\end{minipage}
	\end{minipage}\vspace{1em}
	
	We see that $E_{P,1}(b)=\{3\}$, $\sum_{o\in\ucs(\succ_3,b)}=3/4<1$, and $b\in M_{P,2}$, which violates \sdcrf{}.
\end{proof}

\subsection{\upre{} Satisfies \sdcrf{}, \sdwef{}, and \etep{}}\label{sec:ea:pre}


We propose the family of probabilistic respecting eagerness mechanisms (\pcr{}) defined in Algorithm~\ref{alg:pcr}, and show that the uniform probabilistic respecting eagerness mechanism (\upre), a member of \pcr{}, satisfies the desirable fairness notions \sdwef{} and \etep{} (\Cref{thm:uprep}). We also prove that not only does every \pcr{} mechanism satisfy \sdcrf{}, but also every \sdcrf{} assignment must be the output of some member of \pcr{} (\Cref{thm:familychar}).
Therefore, \upre{} satisfies \sdcrf{}, \sdwef{}, and \etep{}. 

Each member of the \pcr{} family of mechanisms is specified by a parameter $\omega=(\omega_j)_{j\in N}$, and denoted $\pcr{}_\omega$. Each $\omega_j$ is an eating speed function which maps each time instance $t$ to a rate of consumption for agent $j$ such that $\int_{0}^{1}\omega_{j}(t){\rm d}t=1$, $\omega_{j}(t)\ge0$ for $t\in [0,1]$, and $\omega_{j}(t)=0$ for $t>1$, i.e., agent $j$ consumes exactly one unit of item during one unit of time.
At the beginning of execution, we set $s(o)=1$ to refer to the supply of item $o$, and set $t_j=0$ to indicate the elapsed time each agent $j$ has spent on consumption.
At each round $r$, each agent $j$ determines $\tp{j}{M'}$, her top item among the set $M'$ consisting of every item $o$ which remains available, i.e., all items with $s(o)>0$.
For each item $o\in M'$, $N_o$ is the set of agents for whom $o$ is the top item. All the agents in $N_o$ consume $o$ together for $\gamma_\omega(N_o, (t_j)_{j\in N}, s(o))$ units of time. For any $N'\subseteq N$, elapsed consumption times $(t_j)_{j\in N}$, and supply $s$, we define:
\begin{equation}\label{eq:precon}
	\begin{split}
		&\gamma_\omega(N', (t_j)_{j\in N}, s)=\min\Big\{\{\rho\mid\sum_{k\in N'}\int^{t_k+\rho}_{t_k}\omega_{k}(t){\rm d}t= s\}\\
		&\cup\{\rho\in[0,1]\mid\sum_{k\in N'}\int^{t_k+\rho}_{t_k}\omega_{k}(t){\rm d}t=\sum_{k\in N'}\int^{1}_{t_k}\omega_{k}(t){\rm d}t\}\Big\},
	\end{split}
\end{equation}
Notice that for any agent $j$, $\int^{1}_{t_j}\omega_{j}(t){\rm d}t$ refers to her remaining demand.
In words, Eq~(\ref{eq:precon}) requires that agents in $N_o$ stop their consumption when either the supply of $o$ is exhausted, or all of them are satisfied.
Then the amount that agent $j$ consumes at this round is the shares of $o$ she gets in the final outcome, and we update the supply $s(o)$ and the elapsed time $t_j$.

\begin{algorithm}[htb]
	\begin{algorithmic}[1]
		\State {\bf Input:} An assignment problem $(N,M)$, a strict linear preference profile $R$, a collection of eating functions $\omega=(\omega_{j})_{j\in N}$.
		\State $M'\gets M$, $P\gets 0^{n\times n}$, $s(o)\gets 1$ for every $o$, and $t_j\gets 0$ for every $j$.
		\While{$M'\neq\emptyset$}
		\State $N_{o}\gets\{j\in N\mid \tp{j}{M'}=o\}$.
		\For{each item $o\in M'$}
		\State Agents in $N_{o}$ {\bf consume} $o$.
		\begin{enumerate}[label=6.\arabic*:,itemindent=2.5em]
			\item        $\rho_o\gets\gamma_{\omega}(N_o,(t_j)_{j\in N}, s(o))$.
			\item For each $j\in N_{o}$, $p_{j,o}\gets \int^{t_j+\rho_o}_{t_j}\omega_{j}(t){\rm d}t$.
		\end{enumerate}
		\EndFor
		\State $s(o)\gets s(o)-\sum_{k\in N_{o}}\int^{t_k+\rho_o}_{t_k}\omega_{k}(t){\rm d}t$.
		\State For each $j\in N_{o}$, $t_j\gets t_j+\rho_o$.
		\State $M'\gets M'\setminus\{o\in M'\mid s(o)=0\}$.
		\EndWhile
		\State \Return $P$
	\end{algorithmic}
	\caption{\label{alg:pcr} Probabilistic respecting eagerness (\pcr{})}
\end{algorithm}

We define the {\em uniform probabilistic respecting eagerness} mechanism (\upre{}) to be the member of \pcr{} (defined in~\Cref{alg:pcr}) in which every agent consumes items uniformly at the same speed during the time period $[0,1]$. 
We prove in~\Cref{thm:uprep} that \upre{} satisfies \sdwef{} and \etep{}. 
As a member of \pcr{}, \upre{} also satisfies \sdcrf{} by~\Cref{thm:familychar}. 
We demonstrate the execution of \upre{} in \Cref{eg:pcrnefr}, and also show that \upre{} runs in polynomial time in~\Cref{sec:app:results:time}.

\begin{definition}[\bf\upre{}]\label{dfn:upre}
    The {\bf uniform probabilistic respecting eagerness mechanism} (\upre) is a member of \pcr{}, where every agent eats at a uniform eating speed of one unit of item per one unit of time, i.e., for each $j\in N$,
    \begin{equation}\label{eq:eatingfunction}
		 \omega_j(t)=
		\begin{cases}
			
			1, & \mbox{$t\in[0,1]$,}\\
			
			0, & \mbox{$t>1$.}
			
		\end{cases}
	\end{equation}
\end{definition}

\begin{example}\label{eg:pcrnefr}\rm
    In \Cref{fig:ref6ag} and the discussion below, we illustrate the execution of \upre{} applied to the profile in~\Cref{fig:fhr}.
	
	\noindent{-}~At round $1$, agents $1$ and $2$ consume $a$ and $b$ respectively, and other agents consume $c$.
	After consumption, agents $1$ and $2$ fully get $a$ and $b$, respectively, and the other four agents each get $1/4$ units of $c$.
	The supply of each consumed item is updated as $s(a)=s(b)=s(c)=0$.
	
	\noindent{-}~At round $2$, agents $3$~-~$5$ consume $e$ and each get $1/3$ units such that $s(e)=0$, and agent $6$ consumes $d$ till satisfied and gets $3/4$ units, leaving $s(d)=1/4$.
	
	\noindent{-}~At round $3$, agents $3$~-~$5$ consume $d$ and each get $1/12$ units each such that $s(d)=0$.
	
	\noindent{-}~At round $4$, agents $3$~-~$5$ consume $f$ and get $1/3$ units each.
	
	The final output is assignment $P$ in~\Cref{prop:impefr}. 
\hfill$\square$
\end{example}

\begin{figure}[t]
	\centering
	\includegraphics[width=0.8\linewidth]{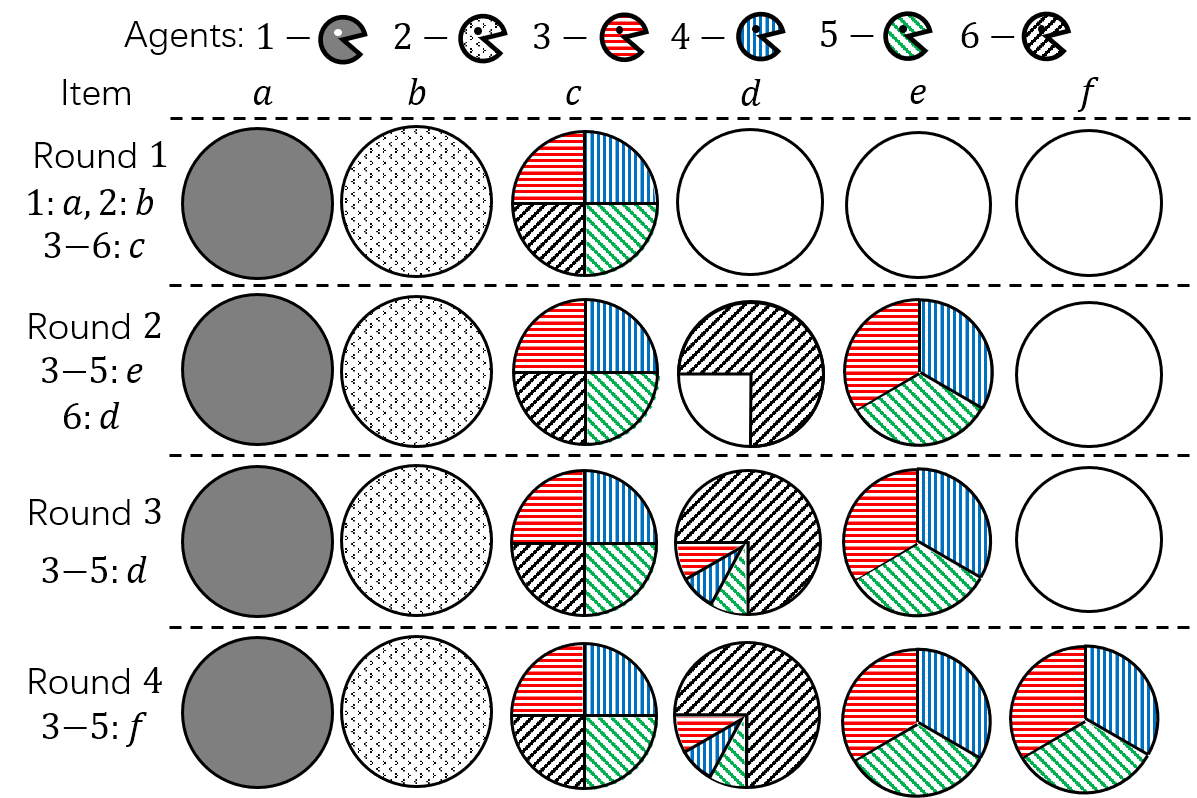}
	\caption{An example of the execution of a member of \pcr{} (\upre).}
	\label{fig:ref6ag}
\end{figure}

\begin{restatable}{thm}{thmuprep}{}\label{thm:uprep}
	\upre{} satisfies \fsdwef{} (\sdwef{}) and \fetep{} (\etep{}).
\end{restatable}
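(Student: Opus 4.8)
The plan is to prove the two fairness properties separately, exploiting the structure of \upre{} as a synchronized simultaneous-eating mechanism where, at every round, all agents eat at the same unit speed and every agent always eats her top \emph{remaining} item. The key structural invariant I would establish first is that, because every agent consumes at the identical uniform rate and starts at $t_j=0$, the elapsed consumption time $t_j$ is the same for all agents that have been continuously eating; more usefully, an agent's cumulative consumption of her top items up to any point depends only on the total time elapsed, not on her identity. This symmetry is what drives both proofs.

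\textbf{(\etep{}).} Let $j,k$ be two agents sharing a common prefix $\succ_{j,k}$ over a set $M'$ of items. The plan is to show by induction on the rounds of \Cref{alg:pcr} that, as long as neither agent has left the prefix, $j$ and $k$ are always eating the \emph{same} item (their mutual top remaining item among $M'$), the supplies of the items in $M'$ evolve identically under both, and hence $p_{j,o}=p_{k,o}$ for every $o$ in $\succ_{j,k}$. Concretely, at round $1$ both point to $\tp{}{M'}$, so $j\in N_o$ iff $k\in N_o$ for each prefix item $o$, and since they eat at identical speed starting from $t_j=t_k$ they accrue identical shares and reach identical elapsed times. The induction step repeats this: having consumed the same shares of the same items, their remaining-item sets agree on the prefix, so they again target the same top item. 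The only subtlety is that the two agents may differ \emph{outside} the prefix, but since $\tp{j}{M\setminus M'}\neq\tp{k}{M\setminus M'}$ and the prefix items are exhausted in the same order regardless, the shares on prefix items are forced equal. This yields $p_{j,o}=p_{k,o}$ for all $o$ appearing in $\succ_{j,k}$, which is exactly \etep{}.

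\textbf{(\sdwef{}).} Suppose $P_k\sd{j}P_j$ for some pair $j,k$; I must show $P_j=P_k$. The plan is to prove, by considering items in the order of $\succ_j$, that $j$ obtains at least as much cumulative upper-contour mass as any other agent whenever $j$ is actively eating — i.e., \upre{} never leaves agent $j$ strictly worse off on any upper contour set. The core claim to establish by induction on rounds is: at every round where $j$ is still unsatisfied, the top remaining item of $j$ is consumed by $j$ at the uniform speed, so $\sum_{o'\in\ucs(\succ_j,o)}p_{j,o'}\ge\sum_{o'\in\ucs(\succ_j,o)}p_{k,o'}$ for the current eating item $o$, because $k$ cannot have eaten more of $j$'s top-ranked remaining items than $j$ herself did — $j$ was eating her genuine top item while $k$, if eating the same item, eats an equal amount, and if eating a different item, that item is lower in $\succ_j$. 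Combining this with the hypothesis $P_k\sd{j}P_j$ (which forces the reverse inequality on every upper contour set) pins down equality on every $\ucs(\succ_j,o)$, and by the standard telescoping argument over consecutive items in $\succ_j$ this gives $p_{j,o}=p_{k,o}$ for each $o$, i.e., $P_j=P_k$.

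The main obstacle will be the bookkeeping in the \sdwef{} argument: I must carefully track the possibility that $j$ becomes satisfied (stops eating before time $1$) on some item while $k$ continues, or that $j$ and $k$ diverge onto different items mid-execution after a shared item is exhausted, and argue that neither scenario lets $k$ overtake $j$ on any $\ucs(\succ_j,o)$ without contradicting $P_k\sd{j}P_j$. Handling the round in which an item's supply runs out — so that $\gamma_\omega$ is determined by exhaustion rather than satiation, splitting the set $N_o$ across the boundary — is where the synchronization invariant must be invoked most delicately, and I expect that to be the technically heaviest step.
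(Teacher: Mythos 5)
Your \etep{} argument is essentially the paper's own proof (a round-by-round induction showing that two agents sharing a common prefix always eat the same item, for the same duration, at the same speed, until the prefix items are gone), and that half is fine.

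The \sdwef{} half has a genuine gap: its core claim is false. You propose to show, \emph{unconditionally}, that for every other agent $k$ and every item $o$ that $j$ eats, $\sum_{o'\in\ucs(\succ_j,o)}p_{j,o'}\ge\sum_{o'\in\ucs(\succ_j,o)}p_{k,o'}$, and only afterwards to invoke the hypothesis $P_k\sd{j}P_j$ to pin down equality. That unconditional claim is exactly the statement that \upre{} is \fsdef{} (\sdef{}), which is false: the paper itself shows \upre{} is not \sdef{} (\Cref{prop:upre}; more generally \Cref{prop:impsdcfr1} combined with \Cref{thm:familychar}). A concrete counterexample is the paper's own running profile in \Cref{fig:fhr}: \upre{} gives agent $3$ shares $1/4$ of $c$, $1/3$ of $e$, $1/12$ of $d$, $1/3$ of $f$, and gives agent $6$ shares $1/4$ of $c$ and $3/4$ of $d$; hence on $\ucs(\succ_3,d)=\{c,e,d\}$ agent $3$ holds $2/3$ while agent $6$ holds $1$. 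The reason your justification breaks is that your opening ``synchronization invariant'' is itself false: in \Cref{alg:pcr} the eating duration $\rho_o$ is item-specific (exhaustion versus satiation), so agents eating different items in the same round eat for different lengths of time. In round $2$ of this example agent $6$ eats $d$ for $3/4$ units of time while agent $3$ eats $e$ for only $1/3$, after which $t_6=1\neq 7/12=t_3$. Consequently $k$ can pile up mass on an item that is ranked below $j$'s current item in $\succ_j$ yet lies inside a \emph{later} upper contour set of $j$, and overtake $j$ there; ``$k$ only eats items lower in $\succ_j$'' does not bound $k$'s upper-contour mass.

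The repair is not bookkeeping but a change of logic, and it is what the paper's proof does: the dominance hypothesis must drive the induction rather than be paired at the end with an unconditional inequality. Assuming $P_k\sd{j}P_j$, one proves round by round that while $j$ is unsatisfied, $k$ must be eating the very same item $o_r$ as $j$ — for if not, one exhibits an upper contour set of $\succ_j$ on which $k$'s mass is \emph{strictly} smaller than $j$'s (if $o_r$ is exhausted in round $r$ then $p_{k,o_r}=0<p_{j,o_r}$; if $o_r$ is not exhausted then $j$ ends up satisfied by items weakly preferred to $o_r$ while $k$ has positive share on something strictly worse), contradicting $P_k\sd{j}P_j$. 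This lockstep then yields $t_j=t_k$, equal shares of every item consumed, and finally $P_j=P_k$.
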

\begin{sketch}
	{\bf (\sdwef{})}
	Given a profile $R$, let $P=\text{\upre}(R)$.
	For agents $j$ and $k$ with $P_k \sd{j} P_j$, we show that at each round $r$, agent $k$ applies for the same item $o$ as $j$ because otherwise,
	\begin{enumerate}[label=(\arabic*),leftmargin=*,labelindent=0pt,topsep=0em,itemsep=0pt]
	\item if $o$ is consumed to exhaustion by $j$ and other agents, then agent $k$ can never obtain shares of $o$ at later rounds, and
	\item if $j$ is satisfied upon consuming $o$, i.e., satisfied by shares of items in $\ucs(j,o)$, then agent $k$ must get shares of an item ranked below $o$ (according to $\succ_j$).
	\end{enumerate}
	
	Both cases above imply that $P_j \sd{j} P_k$ while $P_j\neq P_k$, a contradiction. Further, this means that at the end of each round $r$, both agents have the same elapsed consumption time, i.e., $t_k=t_j$, and therefore both agents consume the same shares of items since they have the same eating speeds.
	Together, this means $p_{j,o}=p_{k,o}$ for any $o$ they have consumed, and it follows that $P_k=P_j$.
	
	\noindent{\bf (\etep{})}
	We show that before consuming items not in $\succ_{j,k}$, agent $j$ and $k$ consume the same item $o$ at each round $r$.
	If agent $k$ consumes $o'\neq o$, then let $o\succ o'$ without loss of generality, which means that $k$ does not consume the top item at round $r$, a contradiction to the execution of \upre{}.
	Since the agents consume the same item at a round for the same length of time, we obtain that $p_{j,o}=p_{k,o}$ for each $o$ appearing in $\succ_{j,k}$.
\end{sketch}


\begin{restatable}{thm}{thmfamilychar}{}\label{thm:familychar}
	Given a profile $R$, a random assignment $P$ satisfies \fsdcrf{} (\sdcrf{}) if and only if there exists an eating speed function $\omega$ such that $P=\pcr_{\omega}{}(R)$.
\end{restatable}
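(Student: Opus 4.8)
The plan is to prove the two directions separately: (soundness) every $\pcr_{\omega}(R)$ satisfies $\sdcrf{}$, and (completeness) every $\sdcrf{}$ assignment arises as $\pcr_{\omega}(R)$ for a suitable $\omega$. Both rest on the same structural observation extracted from \Cref{dfn:sdcrf}: if $o\in M_{P,r}$ (so the eager shares of $o$ from earlier rounds do not yet sum to $1$) and $j\in E_{P,r'}(o)$ for some $r'<r$, then $\sum_{o'\in\ucs(\succ_j,o)}p_{j,o'}=1$. Contrapositively, whenever an agent is eager for $o$ at a round and is not already satisfied by items weakly preferred to $o$, the item $o$ must be exhausted at that round; hence at each round every item is either driven to zero supply or every one of its active eager agents becomes satisfied, which is exactly the dichotomy enforced by the $\gamma_\omega$ rule in \Cref{eq:precon}. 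I would first record this as a lemma, together with its consequence that along any $\pcr{}$ execution each agent consumes her positively-held items in strictly decreasing order of $\succ_j$ and consumes each such item within a single round (an item that survives a round frees all its eager agents, who are then satisfied and drop out).

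For soundness I would set $P=\pcr_{\omega}(R)$ and prove by induction on $r$ that the set $M'$ of available items at the start of round $r$ equals $M_{P,r}$ and that $N_o=E_{P,r}(o)$. The base case is immediate since $M'=M=M_{P,1}$ and $N_o=\{j:\tp{j}{M}=o\}=E_{P,1}(o)$. For the step, the only agents who ever consume $o$ lie in $\bigcup_{r'}E_{P,r'}(o)$, each contributing her full share in her single active round, so the residual supply of $o$ after round $r$ is $1-\sum_{k\in\bigcup_{r'\le r}E_{P,r'}(o)}p_{k,o}$; thus $o$ remains available exactly when $o\in M_{P,r+1}$, advancing the induction. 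The defining condition of $\sdcrf{}$ then follows from the dichotomy above: if $o\in M_{P,r}$ with $r>r'$ and $j\in E_{P,r'}(o)$, then $o$ was not exhausted at round $r'$, so the group stopped on the satisfaction branch, giving $t_j=1$ and $\sum_{o'\in\ucs(\succ_j,o)}p_{j,o'}=1$ since $j$ only ever consumes items weakly preferred to her current top item.

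For completeness I would read off the round decomposition $M_{P,1}\supseteq M_{P,2}\supseteq\cdots$ and the sets $E_{P,r}(o)$ from $P$, and build each $\omega_j$ as a piecewise-constant function whose successive pieces correspond to the items $o_{j,1}\succ_j o_{j,2}\succ_j\cdots$ that $j$ holds with positive probability, the piece for $o_{j,l}$ having integral $p_{j,o_{j,l}}$. Within each (round, item) group I take the speeds proportional to the target shares $p_{k,o}$ and fix the common duration $\rho$ by the group's scale, so that over $\rho$ each $k$ eats exactly $p_{k,o}$ and the proportional split reproduces the intended shares the instant $o$ is exhausted or the group is satisfied. I would then verify, by the same round induction, that with these functions the $\gamma_\omega$-driven rounds of $\pcr_{\omega}(R)$ coincide with $M_{P,r}$, so each agent is offered her intended item in her intended round and $\pcr_{\omega}(R)=P$; the freedom in the absolute durations of the non-terminal ``exhaust'' pieces is what lets one place the round boundaries consistently.

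The main obstacle is this last consistency check, specifically the bookkeeping of the personal clocks $t_j$. Because agents eating different items in the same round advance their clocks by different amounts, the clocks necessarily desynchronize, yet an agent who becomes satisfied must reach personal time exactly $1$ at the precise instant her final item is exhausted or her group is satisfied. In a round where an item is exhausted by a group containing both a soon-to-be-satisfied agent $j$ and a continuing agent $k$, matching the common group duration $\rho_o=1-t_j$ forces $t_k<t_j$; establishing that the durations of earlier rounds can always be chosen so that these clock orderings hold — equivalently, that the piecewise-constant $\omega_j$ are realizable with positive pieces and $\int_0^1\omega_j=1$ — is the delicate part. I would discharge it by carrying, as an induction invariant, an explicit schedule of the clocks $t_j$ at each round boundary and checking the two $\gamma_\omega$ equations against it at every round.
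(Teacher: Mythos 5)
Your satisfaction direction is sound and is essentially the paper's own argument (\Cref{lem:pre}): an induction on rounds showing that the available set $M'$ and the consumer sets $N_o$ in \Cref{alg:pcr} coincide with $M_{P,r}$ and $E_{P,r}(o)$ from \Cref{dfn:sdcrf}, after which the exhaust-or-satisfy dichotomy built into Eq~(\ref{eq:precon}) yields \sdcrf{}.

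The completeness direction, however, has a genuine gap, and it sits exactly where you say it does. You build $\omega_j$ from pieces whose integrals are the target shares, but you let each (round, item) group have its own duration fixed ``by the group's scale,'' and you must then verify that the resulting pieces are realizable: every agent's pieces must fit inside $[0,1]$, integrate to $1$, and trigger the two $\gamma_\omega$ stopping conditions at the intended instants even though clocks of agents in different groups drift apart. You acknowledge this as ``the delicate part'' and propose to discharge it by carrying an explicit clock schedule through the induction, but you never exhibit a choice of durations for which that invariant can be maintained; as written this is a promissory note, not a proof. Your diagnosis of the obstacle is also partly off: a satisfied agent's personal clock need not reach exactly $1$ (her eating function may simply be $0$ on a terminal segment of $[0,1]$), so the claimed forcing $\rho_o=1-t_j$ and the resulting clock orderings are not real constraints --- the real constraint is only that each agent's total group durations stay within $[0,1]$. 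The paper dissolves the entire issue with one specific choice that your construction misses: put every piece on a uniform grid, defining $\omega_j(t)=n\cdot q_{j,o}$ for $t\in[\frac{r-1}{n},\frac{r}{n}]$, where $r$ is the first round at which $j\in E_{Q,r}(o)$, and $\omega_j(t)=0$ elsewhere. Using \Cref{lem:sdcrf} (remaining supply of an item equals, or exceeds, its current eager agents' remaining demand, according to whether the item is exhausted or survives), every group in every round then stops after exactly $1/n$ units of time, so all clocks advance in lockstep with $t_j=(r-1)/n$ at the start of round $r$; since the \sdcrf{} decomposition has at most $n$ relevant rounds, the grid fits in $[0,1]$, and the round-by-round induction that $\pcr{}_\omega(R)=Q$ goes through with no scheduling freedom left to reconcile. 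Repairing your proof essentially amounts to replacing your free per-group durations with this synchronized uniform choice.
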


\begin{sketch}
    {\bf (Satisfaction)}
    Let $P=\pcr{}_{\omega}(R)$ where $\omega$ is any collection of eating functions.
	We show that at every round $r$,
	\begin{enumerate}[label=(\arabic*),wide,leftmargin=*,labelindent=0pt,topsep=0em,itemsep=0pt]
	    \item $M_{P,r}$ and $E_{P,r}(o)$ are exactly the sets of items with remaining shares and agents who consume $o$, respectively, and
	    \item if $o$ is available for later rounds, then all agents in $E_{P,r}(o)$ are satisfied.
	\end{enumerate}
    We see that (1) is trivially true for round $1$.
    Agents in $E_{P,1}(o)$ are those in the set $N_o$ on~Line~4 of~\Cref{alg:pcr}.
    Also, they consume $o$, and stop as soon as either $o$ is exhausted, or they are satisfied according to the consumption process, which means (2) is true for round $1$.
    By~Lines~7 and~9, we see that $s(o)$ is updated to ensure that $M'$ does not contain item $o'$ with $\sum_{k\in E_{P,1}(o')}p_{k,o'}=0$.
	With the updated $M'$, the condition (1) holds for $r=2$ trivially, and we can prove condition (2) according to the selection of items to be consumed at round $2$.
	The proof follows from an inductive argument along similar lines.
	
	\vspace{0.5em}\noindent{\bf (Uniqueness)}
	For any \sdcrf{} assignment $Q$, we find out a member of \pcr{} with the following eating function such that $P=$\pcr{}$_\omega(R)$ coincides with $Q$.
	\begin{equation*}
		\omega_j(t)=
		\begin{cases}
			
			n~\cdot~q_{j,o}, & {\mbox{$t\in[\frac{r-1}{n},\frac{r}{n}],$ where $r=\min(\{\hat{r}\mid j\in E_{Q,\hat{r}}(o)\})$, }}\\
			
			0, & \mbox{others.}
			
		\end{cases}
	\end{equation*}
	
    Constructing such a eating function is to ensure the following conditions successively for each round $r$ in the execution of \pcr{}$_\omega(R)$:
    
    \begin{enumerate}[label=(\arabic*),leftmargin=*,labelindent=0em,topsep=0em,itemsep=0pt]
        \item items with remaining shares are the same, i.e., $M_{Q,r}=M_{P,r}$, and each agent is eager for the same item, i.e., $E_{Q,r}(o)=E_{P,r}(o)$;
        \item for any an unsatisfied agent $j$ which is going to consume item $o$ at round $r$, she starts consumption at time $t_j=(r-1)/n$, and exactly consumes for $\rho_o=1/n$ long; and
        \item after the consumption, for any $o\in M_{Q,r}$, each agent $j$ eager for it obtains $q_{j,o}$ units of shares of item $o$, i.e., $p_{j,o}=q_{j,o}$.
    \end{enumerate}
    
    It is easily to see that the condition (1) above hold for $r=1$.
    As for condition (2) when $r=1$, we have $t_j=(r-1)/n=0$, which is initially set in~\Cref{alg:pcr}, and $N_o=E_{P,1}(o)=E_{Q,1}(o)$ by Line~4.
    Then $\sum_{j\in N_o}p_{j,o}=s(o)=1$ because otherwise, $o$ has remaining shares for the later round while agents in $E_{Q,1}(o)$ are not satisfied, a violation to $Q$ satisfying \sdcrf{}.
    By construction of $\omega$, $\sum_{j\in E_{Q,1}(o)}\int_0^{1/n}\omega_j(t){\rm d}t = \sum_{j\in E_{Q,1}(o)}q_{j,o} = 1$, and we can infer that the consumption time for $o$ $\rho_o=1/n$.
    
    For $r=1$, conditions (1) and (2) easily lead to condition (3).
    Moreover, with conditions (1) and (2) established for $r$, we can obtain them for $r+1$ following a similar analysis, and therefore condition (3) holds for $r+1$.
    In this way, we have $p_{j,o}=q_{j,o}$ for any $o\in M_{Q,r}$ with any $r\ge1$.
    It implies that $Q=P$, which completes the proof.
\end{sketch}

With~\Cref{prop:sdcrf} and~\Cref{thm:familychar}, we can conclude that every member of \pcr{} also satisfies \sdopt{} and \epopt{} (\Cref{cor:pcrp}).

\begin{corollary}\label{cor:pcrp}
For any collection of eating speed functions $\omega$, $\pcr{}_\omega$ satisfies \fepopt{} (\epopt{}) and \fsdopt{} (\sdopt{}).
\end{corollary}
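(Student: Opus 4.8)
The plan is to obtain both parts of the corollary as immediate consequences of results already established, chaining them together rather than re-deriving anything from scratch. The key observation is that \Cref{thm:familychar} is stated as an \emph{if and only if}, so in particular its forward (``satisfaction'') direction tells us that for every collection of eating speed functions $\omega$ and every profile $R$, the output $\pcr_\omega(R)$ satisfies $\sdcrf{}$. That is precisely the hook needed to invoke \Cref{prop:sdcrf}.

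First I would establish $\sdopt{}$. Fix an arbitrary $\omega$ and an arbitrary profile $R$, and set $P = \pcr_\omega(R)$. By the satisfaction direction of \Cref{thm:familychar}, $P$ satisfies $\sdcrf{}$. By \Cref{prop:sdcrf}, $\sdcrf{}$ implies $\sdopt{}$, so $P$ is $\sdopta{}$. Since $R$ and $\omega$ were arbitrary, $\pcr_\omega$ satisfies $\sdopt{}$. Next I would deduce $\epopt{}$ directly from $\sdopt{}$, using the known implication (noted earlier in the paper and attributed to \citeA{Bogomolnaia01:New}) that any $\sdopta{}$ random assignment is also $\epopta{}$; intuitively, sd-efficiency of a random assignment forces every deterministic assignment in its Birkhoff--von Neumann decomposition to be Pareto-efficient. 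Applying this to $P = \pcr_\omega(R)$, which we have just shown to be $\sdopta{}$, gives that $P$ is $\epopta{}$, and hence $\pcr_\omega$ satisfies $\epopt{}$.

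I do not expect any genuine obstacle here, since the corollary is a pure composition of earlier results. The only points to be careful about are (i) that we invoke only the forward direction of the characterization in \Cref{thm:familychar}, not its uniqueness direction, which is irrelevant for this conclusion, and (ii) that the step $\sdopt{} \Rightarrow \epopt{}$ is cited as an established fact rather than proved anew. One could alternatively argue $\epopt{}$ without passing through $\sdopt{}$, but routing it through $\sdopt{}$ keeps the argument to a single sentence given the results in hand.
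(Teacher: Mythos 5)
Your proposal is correct and matches the paper's own argument exactly: the paper derives \Cref{cor:pcrp} by combining the satisfaction direction of \Cref{thm:familychar} ($\pcr{}_\omega$ outputs are \sdcrf{}) with \Cref{prop:sdcrf} (\sdcrf{} $\Rightarrow$ \sdopt{}), and then uses the known implication \sdopt{} $\Rightarrow$ \epopt{} from \citeA{Bogomolnaia01:New} noted in the paper's earlier remark. Nothing further is needed.
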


\section{Impossibility Results}

\begin{figure}[b]
    \centering
    
    \subfigure[\efcr{} mechanisms]{
    \includegraphics[width=0.45\linewidth]{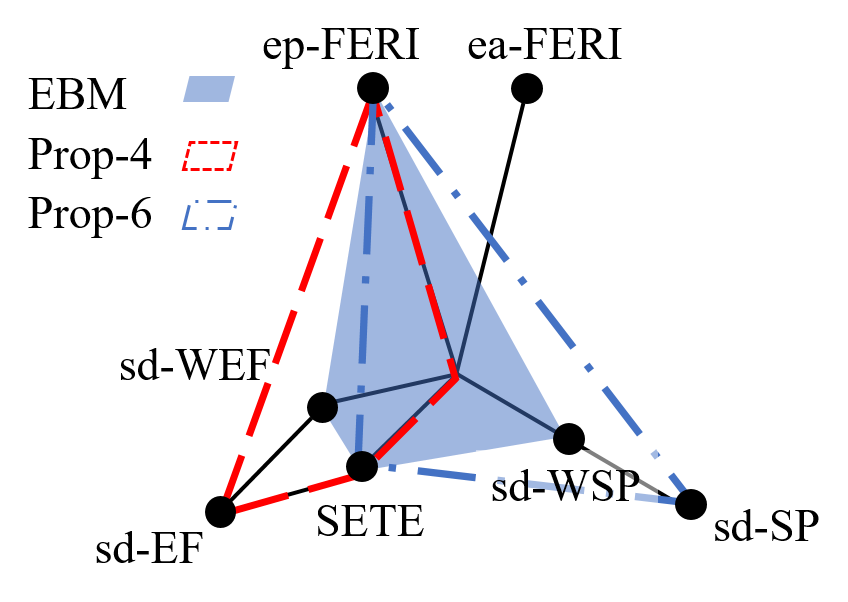}
    }
    \subfigure[\sdcrf{} mechanisms]{
    \includegraphics[width=0.45\linewidth]{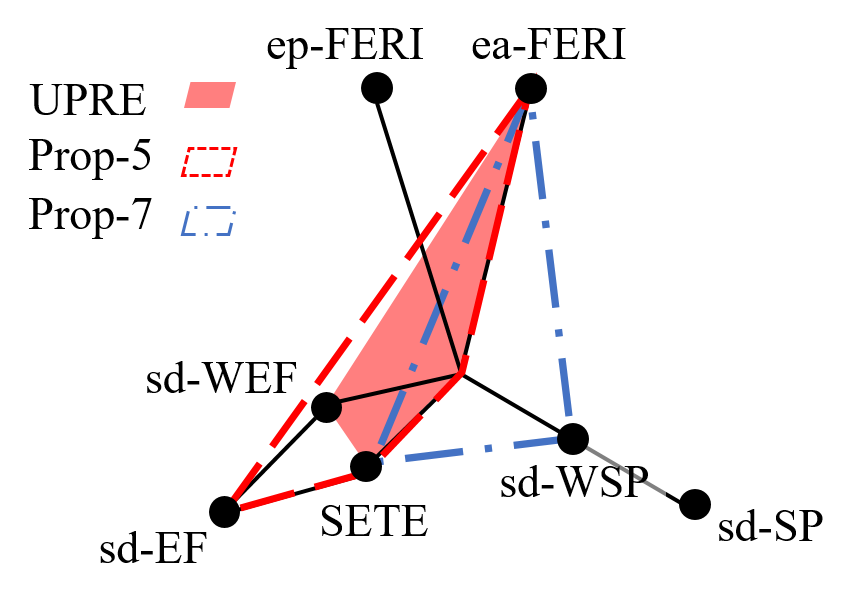}
    }
    \caption{The impossibility results for \fhcr{}  mechanisms.}
    \label{fig:imp}
\end{figure}

A natural question that follows the positive results in Sections~\ref{sec:ep:ebm} and~\ref{sec:ea:pre}, is whether it is possible to design \efcr{} or \sdcrf{} mechanisms which provide a stronger guarantee of either efficiency, fairness, or strategyproofness. In this section, we show that it is impossible to design mechanisms that satisfy some combinations of properties involving stronger variants of either fairness or strategyproofness. We summarize these results in the form of a spider graph in \Cref{fig:imp}, where the nodes represent properties, and a broken line joining the nodes represents a combination of properties that are impossible to satisfy by any assignment mechanism. The properties satisfied by our mechanisms are represented by the nodes at the corners of the shaded polygons. Each of the arms radiating away from the center of the graph represent a type of efficiency, fairness, or strategproofness property, with nodes farther away from the center of the graph representing stronger properties. For example, \sdef{} implies both the \sdwef{} and \etep{} fairness properties.
Our impossibility results for the design of \sdcrf{} mechanisms with stronger  guarantees requires a technical lemma as an auxiliary tool, so we provide the detailed proofs in~\Cref{sec:app:proof}.

Propositions~\ref{prop:impefcr1} and~\ref{prop:impsdcfr1} show that we cannot improve the fairness guarantee of \efcr{} or \sdcrf{} mechanisms from \sdwef{} to \sdef{}.

\begin{restatable}{prop}{propimpefcr1}{}\label{prop:impefcr1}
No mechanism simultaneously satisfies \fefcr{} (\efcr{}) and \fsdef{} (\sdef{}).
\end{restatable}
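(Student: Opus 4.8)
The plan is to prove the impossibility by exhibiting a single small instance in which \emph{every} $\efcr{}$ assignment violates $\sdef{}$. Since $\efcr{}$ is defined (\Cref{dfn:efcr}) as a convex combination of $\fhcr{}$ deterministic assignments, my strategy is to (i) fully enumerate the deterministic $\fhcr{}$ assignments on the chosen instance, (ii) parametrize the entire family of $\efcr{}$ assignments as convex combinations of these, and (iii) show that no member of that family is $\sdef{}$. Because a mechanism satisfying $\efcr{}$ must output an $\efcr{}$ assignment on this instance, this suffices.

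I would use three agents $N=\{1,2,3\}$ and three items $M=\{a,b,c\}$ with $\succ_1:a\succ_1 b\succ_1 c$, $\succ_2:a\succ_2 b\succ_2 c$, and $\succ_3:b\succ_3 a\succ_3 c$, so that agents $1,2$ have identical preferences while agent $3$ is the unique agent topping $b$. The first key step is to pin down $\tps{A}{1}=\{a,b\}$ (the items ranked top by some agent): $\fhcr{}$ then forces $b$ to the unique agent topping it, namely agent $3$, and forces $a$ to one of agents $1,2$, after which the leftover item $c$ goes to the remaining agent of $\{1,2\}$. Hence the only deterministic $\fhcr{}$ assignments are $A_1=(1\!\gets\!a,\,2\!\gets\!c,\,3\!\gets\!b)$ and $A_2=(1\!\gets\!c,\,2\!\gets\!a,\,3\!\gets\!b)$, so every $\efcr{}$ assignment has the form $P=\lambda A_1+(1-\lambda)A_2$ for some $\lambda\in[0,1]$; in particular agent $3$ receives $b$ with probability $1$ under any such $P$.

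The second step is to show $\sdef{}$ fails for all $\lambda$. The crucial observation is that agents $1,2$ rank $b$ second but never receive any share of $b$, while agent $3$ holds $b$ entirely. Evaluating the stochastic-dominance condition at the contour $\ucs(\succ_1,b)=\{a,b\}$ gives $\sum_{o'\in\{a,b\}}p_{1,o'}=\lambda<1=\sum_{o'\in\{a,b\}}p_{3,o'}$ whenever $\lambda<1$, so $P_1\nsd{1}P_3$; symmetrically, using $\ucs(\succ_2,b)=\{a,b\}$ we get $\sum_{o'\in\{a,b\}}p_{2,o'}=1-\lambda<1$ and hence $P_2\nsd{2}P_3$ whenever $\lambda>0$. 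For every $\lambda\in[0,1]$ at least one of the conditions $\lambda<1$ and $\lambda>0$ holds, so at least one of agents $1,2$ fails to (weakly) stochastically dominate agent $3$ with respect to its own preference, contradicting $\sdef{}$.

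I expect no deep technical obstacle here; the only care needed is in the enumeration step, where I must argue that the $\fhcr{}$ constraint at round $r=1$ already determines the whole assignment up to the symmetric choice of who among $\{1,2\}$ receives $a$, so that the parametrization by $\lambda$ is exhaustive rather than merely illustrative. Establishing that $b$ \emph{must} go to agent $3$ — precisely the point where $\fhcr{}$ (through its implication of $\fcm{}$, \Cref{rm:fhcr}) bites — is the linchpin that makes the $\sdef{}$ violation unavoidable across the entire convex family.
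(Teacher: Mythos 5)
Your proof is correct, and it follows the same high-level template as the paper's: construct an instance where \fhcr{} (through \fcm{}) forces an item to the unique agent who ranks it first, so that agents ranking that item second can never hold any share of it, then exhibit the resulting sd-envy. The execution, however, is genuinely different. The paper uses a four-agent, four-item profile and invokes the fact that \sdef{} implies \etep{} to pin down the shares (the three agents topping $a$ must each receive $1/3$ of it), after which a single dominance comparison between agents $3$ and $4$ at the contour $\{a,b\}$ yields the contradiction. You instead use a three-agent profile, exhaustively enumerate the deterministic \fhcr{} assignments (there are exactly two), parametrize the entire \efcr{} polytope by a single parameter $\lambda$, and show by case analysis that for every $\lambda\in[0,1]$ at least one of the two agents with identical preferences envies agent $3$. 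Your route is more elementary and self-contained: it needs no auxiliary fairness implication (the paper's shortcut rests on the externally cited fact that \sdef{} implies \etep{}), and it verifies the violation pointwise over the whole family of \efcr{} assignments rather than only over those additionally satisfying \etep{}. What the paper's route buys is economy elsewhere: appealing to \etep{} collapses the symmetric agents at once and avoids the case split, and its four-agent instance is reused verbatim in the proofs of \Cref{prop:impefcr2} and \Cref{prop:impsdcfr1}, so the larger construction does double duty across several impossibility results.
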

\begin{proof}
    We prove it with the following preference profile $R$.
	By \fhcr{}, one of agents in $\{1,2,3\}$ gets $a$, and agent $4$ must get $b$.
	If assignment $Q$ satisfies \efcr{} and \etep{} which is implied by \sdef{}, then it is in the following form.	
	
	\vspace{1em}\noindent
	\begin{minipage}{\linewidth}
		\centering
		\begin{minipage}{0.4\linewidth}
			\begin{center}
			    Preference Profile $R$\\
				$\succ_1$: $a\succ_1 c\succ_1 b\succ_1 d$,\\
				$\succ_2$: $a\succ_2 c\succ_2 b\succ_2 d$,\\
				$\succ_3$: $a\succ_3 b\succ_3 c\succ_3 d$,\\
				$\succ_4$: $b\succ_4 a\succ_4 d\succ_4 c$.\\
			\end{center}
		\end{minipage}
		\begin{minipage}{0.4\linewidth}
			\centering
			\begin{center}
				\centering
				\begin{tabular}{c|cccc}
					\multicolumn{5}{c}{Assignment $Q$}\\
					&  a & b & c & d\\\hline
					1 & $\frac{1}{3}$ & $0$ & ? & ?\\
					2 & $\frac{1}{3}$ & $0$ & ? & ?\\
					3 & $\frac{1}{3}$ & $0$ & ? & ?\\
					4 & $0$ & $1$ & $0$ & $0$\\
				\end{tabular}
			\end{center}
		\end{minipage}
	\end{minipage}\vspace{1em}

	Then we do not have $Q_3\sd{3}Q_4$ since $\sum_{o'\in\ucs(\succ_3,b)}q_{3,o'}<\sum_{o'\in\ucs(\succ_3,b)}q_{4,o'}$, a contradiction to \sdef{}.
\end{proof}

\begin{restatable}{prop}{propimpsdcfri}{}\label{prop:impsdcfr1}
No mechanism simultaneously satisfies \fsdcrf{} (\sdcrf{}) and \fsdef{} (\sdef{}).
\end{restatable}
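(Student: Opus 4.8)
The plan is to reuse the four-agent, four-item profile $R$ from the proof of \Cref{prop:impefcr1} and derive the same envy violation, but to pin down the structure of the output assignment directly from the \sdcrf{} layers. Suppose for contradiction that some mechanism satisfies both \sdcrf{} and \sdef{}, and let $Q$ be its output on $R$. In $R$, agents $1,2,3$ all rank $a$ first and agent $4$ ranks $b$ first, so in the first layer $M_{Q,1}=M$, $E_{Q,1}(a)=\{1,2,3\}$, and $E_{Q,1}(b)=\{4\}$. The key structural step is to show $q_{4,b}=1$: since $\ucs(\succ_4,b)=\{b\}$, if $b$ retained positive remaining supply after the first layer (i.e. $b\in M_{Q,2}$), then applying the \sdcrf{} condition with $r=2$, $r'=1$, and the agent $4\in E_{Q,1}(b)$ would force $\sum_{o'\in\ucs(\succ_4,b)}q_{4,o'}=q_{4,b}=1$, contradicting $b\in M_{Q,2}$. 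Hence $q_{4,b}=1$, which in turn gives $q_{4,a}=0$ and $q_{3,b}=0$ (the supply of $b$ is exhausted by agent $4$), so the entire supply of $a$ is shared among agents $1,2,3$, i.e. $q_{1,a}+q_{2,a}+q_{3,a}=1$.

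I expect this layered bookkeeping—reading off $M_{Q,r}$ and $E_{Q,r}(o)$ from the \sdcrf{} definition—to be the main obstacle, in contrast to the simpler ex-post \fcm{} invocation available for \efcr{}. As a shortcut, one may instead cite \Cref{rm:sdcrf} that \sdcrf{} implies ex-post \fcm{}, which yields the identical structure ($q_{4,b}=1$ and $q_{1,a}+q_{2,a}+q_{3,a}=1$) exactly as in \Cref{prop:impefcr1}; I would present the direct \sdcrf{} argument as the primary route and note this equivalence.

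The remainder is routine and mirrors \Cref{prop:impefcr1}. Since \sdef{} implies \etep{}, and the common prefix $\succ_{1,2}$ is the whole order while $\succ_{1,3}$ and $\succ_{2,3}$ each contain only $a$, \etep{} forces $q_{1,a}=q_{2,a}=q_{3,a}=1/3$, so $Q$ coincides with the matrix displayed in \Cref{prop:impefcr1}. Finally I would exhibit agent $3$'s sd-envy of agent $4$: taking the threshold item $b$ with $\ucs(\succ_3,b)=\{a,b\}$, we get $\sum_{o'\in\ucs(\succ_3,b)}q_{3,o'}=q_{3,a}+q_{3,b}=1/3<1=q_{4,a}+q_{4,b}=\sum_{o'\in\ucs(\succ_3,b)}q_{4,o'}$, hence $Q_3\nsd{3}Q_4$, contradicting \sdef{}. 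This completes the argument.
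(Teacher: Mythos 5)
Your proposal is correct and takes essentially the same approach as the paper's proof: the same profile from \Cref{prop:impefcr1}, the same structural conclusion that agent $4$ receives all of $b$ while agents $1$--$3$ share all of $a$, then \etep{} (implied by \sdef{}) pins down the matrix $Q$, and agent $3$'s sd-envy of agent $4$ yields the contradiction. The only cosmetic difference is that the paper derives the structure via \Cref{lem:sdcrf}~(\romannumeral2), whereas you argue directly from \Cref{dfn:sdcrf} together with double stochasticity of $Q$.
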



As for strategyproofness, \Cref{prop:impefcr2} shows with the weak fairness requirement of \etep{}, that \sdsp{} cannot be improved to \sdssp{} for any \efcr{} mechanism, and \Cref{prop:impsdcfr2} shows that even \sdsp{} cannot be satisfied by \sdcrf{} mechanisms.

\begin{restatable}{prop}{propimpefcr2}{}\label{prop:impefcr2}
No mechanism simultaneously satisfies \fefcr{} (\efcr{}), \fetep{} (\etep{}), and \fsdssp{} (\sdssp{}).
\end{restatable}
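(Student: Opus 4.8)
The plan is to prove this impossibility by exhibiting a single profile $R$ together with a one-agent deviation $R'=(\succ_1',\succ_{-1})$, and showing that \efcr{} and \etep{} \emph{force} the relevant entries of $f(R)$ and $f(R')$ to values that make $f(R)\sd{1}f(R')$ fail, contradicting \sdssp{}. Because these forced values depend only on \efcr{} and \etep{}, the contradiction rules out every mechanism possessing all three properties simultaneously, without ever having to exhibit a mechanism.

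Concretely, I would use four agents and four items with true profile $\succ_1:a\succ b\succ c\succ d$, identical $\succ_2,\succ_3:a\succ b\succ c\succ d$, and $\succ_4:b\succ a\succ c\succ d$, and let agent $1$ deviate by swapping her top two items, reporting $\succ_1':b\succ a\succ c\succ d$. In the truthful profile the round-$1$ top items are $\{a,b\}$; \fhcr{} forces $b$ to the unique agent ranking it first (agent $4$) in every deterministic assignment of the decomposition, so $p_{1,b}=0$, while \etep{} (agents $1,2,3$ are identical) splits $a$ equally, giving $p_{1,a}=1/3$. Hence $\sum_{o'\in\ucs(\succ_1,b)}p_{1,o'}=1/3$. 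In the deviated profile the round-$1$ top items are again $\{a,b\}$, but now \fhcr{} sends $a$ to one of the agents $\{2,3\}$ (so the deviated share of $a$ for agent $1$ is $0$) and $b$ to one of $\{1,4\}$; since agents $1$ and $4$ are identical under $R'$, \etep{} forces agent $1$'s share of $b$ to be $1/2$, so the deviated value of $\sum_{o'\in\ucs(\succ_1,b)}p_{1,o'}$ is $1/2$.

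Comparing cumulative shares with respect to the \emph{true} preference $\succ_1$, at the cutoff item $o=b$ the deviation yields $1/2>1/3$, so the cumulative inequality required for $f(R)\sd{1}f(R')$ fails; this is exactly a violation of \sdssp{}, which completes the argument. Note that the allocation of $c$ and $d$ never enters the comparison, so it need not be resolved.

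I would expect the main obstacle to be the \emph{design} of the profile rather than its verification. The construction must simultaneously (i) dilute the manipulator's share of her true top item $a$ strictly below $1/2$, which requires \emph{at least three} agents contesting $a$ so that \etep{} pins $p_{1,a}=1/3$ (with only two contestants one gets $p_{1,a}=1/2$ and the dominance comparison ties rather than breaks), and (ii) make her true second item $b$ \emph{uncontested} in the truthful profile, so $p_{1,b}=0$, yet \emph{co-contested with exactly one identical agent} after the deviation, so \etep{} forces the deviated $b$-share to $1/2$. Arranging \efcr{} and \etep{} to jointly determine all four relevant entries while keeping the profile small is the delicate part; once the entries are fixed, the stochastic-dominance comparison at cutoff $b$ is immediate.
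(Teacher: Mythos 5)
Your proof is correct and takes essentially the same route as the paper's: the paper also uses a four-agent, four-item profile in which the manipulator mimics the lone $b$-lover, so that \efcr{} and \etep{} force a truthful cumulative share of $1/3$ at the second-ranked item versus $1/2$ after the deviation, breaking the dominance required by \sdssp{}. The only differences are cosmetic — in the paper the manipulator is agent $3$ and the other two $a$-lovers rank $c$ second (the profile is reused from its \sdef{} impossibility result), whereas you make all three $a$-lovers identical — but the forced allocations, the arithmetic, and the cutoff item are identical.
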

\begin{proof}
  We prove it with the preference profile $R$ in \Cref{prop:impefcr1}.
	If agent $3$ misreports her preference as agent $4$, i.e., $R'=(\succ'_3,\succ_{-3})$ with $\succ'_3=\succ_4$, then one of agents $1$ and $2$ gets $a$, and one of agents $3$ and $4$ get $b$ by \fhcr{}.
	For the remaining items $M'=\{c,d\}$, $\tp{1}{M'}=\tp{2}{M'}=c$ and $\tp{\succ'_3}{M'}=\tp{4}{M'}=d$, which means that agent $1$ (or $2$) gets $c$ when she does not get $a$, and agent $3$ (or $4$) gets $d$ when she does not get $b$.
	Then the following assignment $Q'$ is the only one satisfying \efcr{} and \etep{} for $R'$.

	\vspace{1em}\noindent
	\begin{minipage}{\linewidth}
		\centering
		\begin{minipage}{0.4\linewidth}
			\begin{center}
			    Preference Profile $R'$\\
				$\succ_1$: $a\succ_1 c\succ_1 b\succ_1 d$,\\
				$\succ_2$: $a\succ_2 c\succ_2 b\succ_2 d$,\\
				$\succ'_3$: $b\succ_3 a\succ_3 d\succ_3 c$,\\
				$\succ_4$: $b\succ_4 a\succ_4 d\succ_4 c$.\\
			\end{center}
		\end{minipage}
		\begin{minipage}{0.4\linewidth}
			\centering
				\begin{center}
		\centering
		\begin{tabular}{c|cccc}
			\multicolumn{5}{c}{Assignment $Q'$}\\
			&  a & b & c & d\\\hline
			1 & $\frac{1}{2}$ & $0$ & $\frac{1}{2}$ & $0$\\
			2 & $\frac{1}{2}$ & $0$ & $\frac{1}{2}$ & $0$\\
			3 & $0$ & $\frac{1}{2}$ & $0$ & $\frac{1}{2}$\\
			4 & $0$ & $\frac{1}{2}$ & $0$ & $\frac{1}{2}$\\
		\end{tabular}
	\end{center}
		\end{minipage}
	\end{minipage}\vspace{1em}

	Comparing $Q'$ with $Q$ in \Cref{prop:impefcr1} which is under the true preference $R$, we see that $Q_3$ does not dominate $Q'_3$ on $\ucs(\succ_3,b)$, which means that no \efcr{} and \etep{} mechanism can be \sdssp{}.
\end{proof}

\begin{restatable}{prop}{propimpsdcfrii}{}\label{prop:impsdcfr2}
No mechanism simultaneously satisfies \fsdcrf{} (\sdcrf{}), \fetep{} (\etep{}), and \fsdsp{} (\sdsp{}).
\end{restatable}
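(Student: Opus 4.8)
The plan is to follow the same template as \Cref{prop:impefcr1} and \Cref{prop:impefcr2}: exhibit a small profile $R$, show that \sdcrf{} together with \etep{} pins down the row of one designated agent $j$, let $j$ misreport to a profile $R'=(\succ'_j,\succ_{-j})$, pin down $j$'s row there, and compare the two rows. The essential difference from the \sdssp{} result in \Cref{prop:impefcr2} is that a violation of \sdsp{} requires a \emph{strictly} sd-dominating deviation: I must produce $R'$ with $f(R')_j \sd{j} f(R)_j$ and $f(R')_j \neq f(R)_j$, rather than merely a pair of sd-incomparable rows. So the profile must be engineered so that the forced row on $R'$ weakly beats the forced row on $R$ at \emph{every} prefix of $\succ_j$ and strictly beats it at one.

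First I would fix the manipulating agent $j$ and build $R$ around a block of agents symmetric with $j$ on a common prefix, so that \etep{} forces $j$ to \emph{share} the contested top items equally rather than absorb them. This equal split is what leaves $j$'s truthful allocation strictly improvable. I would then choose $\succ'_j$ so that, under $R'$, agent $j$ joins a different symmetric block whose members are served their preferred items more fully by the eager-distribution that \sdcrf{} dictates. \Cref{dfn:sdcrf} supplies exactly the bookkeeping I need: the eager sets $E_{P,r}(o)$, together with the condition that an agent already counted eager for $o$ in an earlier round must be satisfied by items in $\ucs(\succ_j,o)$ before $o$ may survive to round $r$, let me compute each forced share round by round, in the same spirit as the satisfaction direction of \Cref{thm:familychar}.

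The forcing step is where the auxiliary technical lemma is needed. \sdcrf{} together with \etep{} does \emph{not} in general determine the entire assignment — when preferences share no common prefixes, \etep{} is vacuous and every member of \pcr{} is admissible — so I cannot simply read off the value produced by \upre{}. Instead the lemma must certify that the particular off-diagonal shares in $j$'s row are forced for \emph{every} \sdcrf{} assignment consistent with \etep{}, i.e.\ that on both $R$ and $R'$ the eager sets $E_{P,r}(\cdot)$ and the exhaustion pattern are rigid. With $j$'s two rows thereby pinned, the final step is the routine cumulative comparison $\sum_{o'\in \ucs(\succ_j,o)} f(R')_{j,o'} \ge \sum_{o'\in \ucs(\succ_j,o)} f(R)_{j,o'}$ for all $o\in M$, strict for some $o$, which contradicts \sdsp{}.

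The hard part, and where I expect to spend the most effort, is the joint construction of $R$ and $\succ'_j$ that simultaneously (i) leaves enough slack that $j$'s truthful row sits strictly below what re-reporting can reach, yet (ii) constrains the eager sets tightly enough that the lemma forces \emph{both} rows uniquely. Obtaining a \emph{strict} (not merely weak or incomparable) sd-gain is delicate, because the eager-distribution underlying \sdcrf{} tends to award each agent the largest feasible share of her successively ranked items; the profile must be arranged so that \etep{} coerces $j$ into an equal split that is strictly worse at some prefix than the allocation she secures by mimicking a more favourably served block, and verifying that this gap is strict and survives for every \sdcrf{} assignment compatible with \etep{} — not just for \upre{} — is the crux of the argument.
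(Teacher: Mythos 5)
Your proposed architecture is indeed the paper's: exhibit a profile, use the auxiliary lemma (\Cref{lem:sdcrf}) together with \etep{} to pin down the manipulator's row under both the true and the misreported profile, then compare prefix sums to get a strict sd-gain. But the proposal stops exactly where the content of this proposition begins, and the one concrete construction strategy you sketch --- having $j$ misreport so as to ``join a different symmetric block'' --- points the wrong way for \sdsp{}. In these symmetric constructions the contested top items are over-demanded, so under any \sdcrf{} assignment they are exhausted among the agents eager for them in the early rounds (\Cref{lem:sdcrf}, and recall by \Cref{thm:familychar} every \sdcrf{} assignment is a \pcr{} outcome, in which an agent only ever consumes her current top remaining item). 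Hence if $j$'s report abandons eagerness for a truly top, contested item of which she gets a positive share when truthful, then $f(R')_{j}$ has prefix sum $0$ at that item while $f(R)_{j}$ has a positive one: $f(R')_j$ cannot sd-dominate $f(R)_j$. Block-switching yields at best sd-incomparable rows, which is enough against \sdssp{} (as in \Cref{prop:impefcr2}) but never against \sdsp{}. A \sdsp{}-violating misreport must preserve everything $j$ receives near the top of $\succ_j$ and add something strictly below.

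The paper's construction achieves this with two devices your plan does not supply. First, the manipulator (agent $8$) does not change blocks; she \emph{extends} her common prefix with the large block $\{3,\dots,7\}$ from $c\succ d$ to $c\succ d\succ e$ by demoting her privately demanded item $b$ by one position. The demotion is costless: one round later she is still the unique agent eager for $b$, so by \Cref{lem:sdcrf}~(iii) she still collects all she needs of it, while \etep{} now forces an equal share $1/6$ of $e$ into her row. Second --- the detail that makes the gain strict rather than nil --- the profile contains an auxiliary agent ($1$, calibrated by agent $2$ splitting $a$ with her) who absorbs half of $b$ in an earlier round. This caps the residual supply of $b$ at $1/2$, below agent $8$'s residual demand $2/3$, so truthful reporting leaves her with $1/6$ of demand that spills onto items ranked below $e$, whereas the misreport fills that same $1/6$ with $e$; the prefix sums are then equal through $b$, and $5/6$ versus $1$ at $e$. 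Without the supply cap the truthful row would already reach $1$ at $b$ and no domination would occur. These two devices, not the generic template (which is shared with \Cref{prop:impefcr1,prop:impefcr2}), are the proof; your proposal explicitly leaves them as an unresolved ``crux,'' and its one stated candidate for producing the gain would not work.
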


\Cref{prop:impefcrsdcfr} shows that the simultaneous satisfaction of \efcr{} and \sdcrf{} cannot be achieved given the fairness requirement \etep{}.

\begin{restatable}{prop}{propimpefcrsdcfr}{}\label{prop:impefcrsdcfr}
No mechanism simultaneously satisfies \fefcr{} (\efcr{}), \fsdcrf{} (\sdcrf{}), and \fetep{} (\etep{}).
\end{restatable}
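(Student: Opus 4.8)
The plan is to exhibit a single preference profile $R$ on which the constraints forced by \efcr{} together with \etep{} are incompatible with those forced by \sdcrf{} together with \etep{}, so that no random assignment can satisfy all three at once. Following the pattern of \Cref{prop:impefcr1,prop:impefcr2}, I would choose $R$ to contain a group of agents with identical (or common-prefix) rankings, so that \etep{} pins their shares on the shared items to equal values, together with extra agents whose rankings interleave with the group \emph{below} the shared prefix. The two directions of the argument are: characterize the $\etep{}+\sdcrf{}$ assignments, characterize the $\etep{}+\efcr{}$ assignments, and show the two sets are disjoint.

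First I would nail down the $\etep{}+\sdcrf{}$ side. Using the round decomposition $M_{P,r}, E_{P,r}(\cdot)$ of \Cref{dfn:sdcrf} and the fact that \sdcrf{} forces any agent eager for an item at an early round to be fully satisfied by weakly preferred items before that item can survive to a later round, I would argue that on $R$ the eating structure is rigid: \etep{} fixes the shares of the commonly-topped items, and then the $\sdcrf{}$ exhaustion conditions propagate these, round by round, to a single assignment $P^{\ast}$ (the one produced by \upre{}, which is \sdcrf{} by \Cref{thm:familychar}). Second I would treat the $\etep{}+\efcr{}$ side: by \Cref{thm:abmchar} every such assignment is a \emph{symmetric} convex combination of integral \fhcr{} assignments, so I would enumerate the deterministic \fhcr{} assignments on $R$ via the sets $\tps{A}{r}$, impose the \etep{} equalities on the mixing weights, and describe the resulting polytope of feasible shares.

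The crux — and the step I expect to be the main obstacle — is to show $P^{\ast}$ lies strictly outside this \efcr{} polytope, i.e.\ that the forced $\sdcrf{}+\etep{}$ assignment cannot be written as any symmetric lottery over integral \fhcr{} assignments. This is delicate, because for many natural profiles the eating outcome \emph{does} decompose into integral \fhcr{} assignments, so that \efcr{} and \sdcrf{} happen to coincide under \etep{}; the profile must therefore be engineered so that the continuous hand-off of a partially-consumed lower item under \sdcrf{} produces a fractional share for some agent that no integral \fhcr{} assignment can place there while respecting both the round structure and the \etep{} equalities. Concretely, I would hunt for an agent $j$ and item $o$ such that \sdcrf{} forces $p^{\ast}_{j,o}>0$ while every \etep{}-constrained \efcr{} decomposition forces $o$ to be exhausted among the agents eager for it at earlier rounds (hence $p_{j,o}=0$ in every feasible \efcr{} assignment), which exhibits the contradiction as a single linear inequality separating $P^{\ast}$ from the polytope.

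As a cleaner lens that packages this obstacle, I would invoke \Cref{prop:sdcrf}: since \sdcrf{} implies \sdopt{}, any mechanism satisfying all three properties would in particular satisfy $\efcr{}+\etep{}+\sdopt{}$. It therefore suffices to find a profile on which \emph{no} $\etep{}+\efcr{}$ assignment is \sdopt{}, i.e.\ on which every symmetric lottery over integral \fhcr{} assignments is stochastically dominated. The natural dominating assignment is exactly $P^{\ast}$, and verifying domination of the whole \efcr{} polytope by $P^{\ast}$ (which cannot itself lie in the polytope, else it would dominate itself) is typically easier to check than a bare separation argument, since it only requires comparing each candidate against one explicit assignment. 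Either way, the entire difficulty is concentrated in showing that the symmetry-forced ex-ante eating outcome genuinely escapes the ex-post integral-\fhcr{} hull on the chosen instance.
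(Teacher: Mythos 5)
Your high-level plan does match the paper's actual strategy: exhibit one profile, use the round decomposition of \Cref{dfn:sdcrf} together with \etep{} (via the technical \Cref{lem:sdcrf}) to force the relevant coordinates of any \sdcrf{}+\etep{} assignment $P$, and then show this forced $P$ cannot be written as a convex combination of integral \fhcr{} assignments. You have even identified the right shape of separation: on the paper's instance there is an agent $x$ and an item ($c_6$) with $p_{x,c_6}=1/12>0$ forced ex ante, while no deterministic \fhcr{} assignment can give $x$ that item. But that is precisely where your proposal stops, and that is where the entire proof lives. For an impossibility-by-counterexample result, the engineered instance and the two verifications are not a technical afterthought --- they are the whole content, and you explicitly defer them (``I would hunt for an agent $j$ and item $o$\dots'', ``the entire difficulty is concentrated in\dots''). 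What you have is a research plan, not a proof. The paper's instance is genuinely delicate: three blocks of items $\{a_i\},\{b_i\},\{c_i\}$; decoy agents ($3$ and $6$) whose rankings keep $a_3,b_3$ alive past round $3$ of the ex-ante computation; and a special agent $x$ whose ranking interleaves $a_3,b_3$ between $c_3$ and the \emph{inverted} pair $c_5\succ_x c_4$. The inversion is what makes every integral \fhcr{} execution hand $x$ its top remaining item at round $4$ (one of $a_3$, $b_3$, or $c_5$), so that $A(x)=c_6$ is impossible, while the ex-ante process forced by \sdcrf{}+\etep{} diverts $x$ onto thirds of $a_3$ and $b_3$ and returns it a $1/12$ share of $c_6$. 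None of this can be recovered from the plan alone, so there is a genuine gap.

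Two smaller points. First, your claim that \etep{}+\sdcrf{} pins down a unique assignment (equal to \upre{}'s output) is stronger than what is needed or true: on the paper's instance only the coordinates over $\{a_i,b_i,c_i\}$ are forced, and that suffices. Relatedly, you do not need to characterize the whole \etep{}+\efcr{} polytope (nor is \etep{} a condition ``on the mixing weights'' --- it constrains the resulting mixture, not the weights): once \sdcrf{}+\etep{} forces $P$ on the relevant coordinates, it is enough to show this single $P$ admits no \fhcr{} decomposition, which is exactly what the paper does. Second, your \sdopt{} detour is logically sound (by \Cref{prop:sdcrf}) but replaces a membership question about one point with a universal statement over an entire polytope of \etep{}+\efcr{} assignments; that is more work, not less, and the paper never needs it.
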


\section{Conclusion and Future Work}
In this paper, we provide the first random mechanisms that are guaranteed to output ex-post \fcm{} assignments simultaneously with other desirable properties of efficiency (\epopt{} and \sdopt{}), fairness (\etep{} and \sdwef{}), and strategyproofness (\sdsp{}) properties.
As we summarize in \Cref{fig:cmp}, our positive results expand the envelope for \fcm{} and \opt{} mechanisms along the dimensions of both fairness and strategyproofness, while our impossibility results, summarized in~\Cref{fig:imp}, help define the limits of what may be possible.

\begin{figure}[htb]
    \centering
    \includegraphics[width=0.5\linewidth]{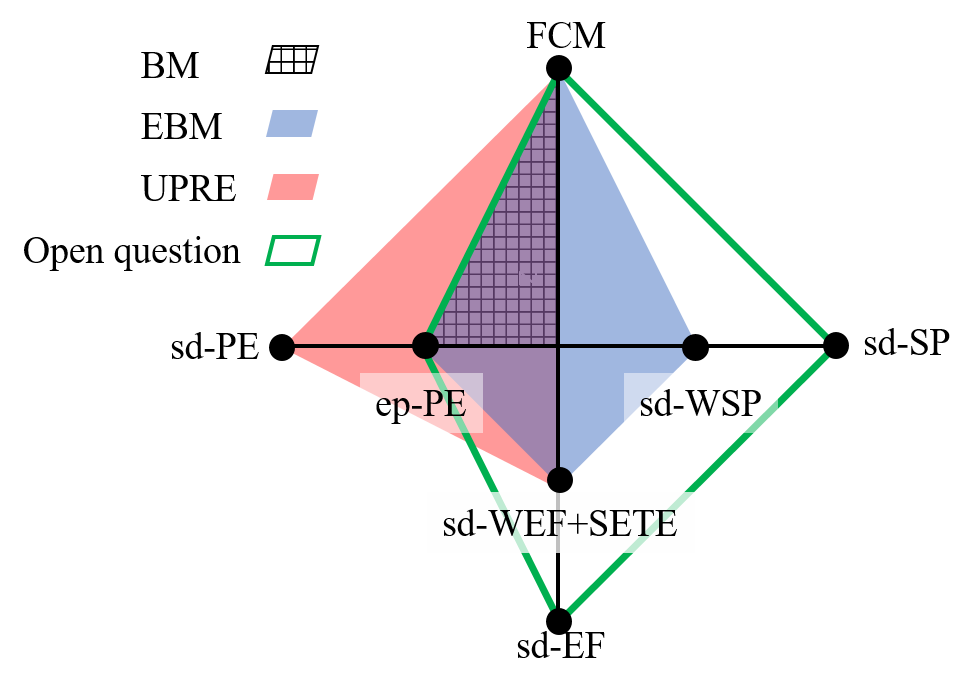}
    \caption{The properties of \fcm{}  mechanisms on \opt{}, envy-freeness, and strategyproofness.}
    \label{fig:cmp}
\end{figure}

We hope that our results encourage the search for \fcm{} and \opt{} mechanisms which also satisfy other efficiency, fairness, and strategyproofness desiderata (like the ``Open question'' in~\Cref{fig:cmp}).
For example, if \fcm{} is deemed to be an indispensable property, an interesting open question is whether it is possible to design mechanisms that satisfy relaxations of one property, such as Pareto optimality among \fcm{} assignments, while satisfying stronger properties of fairness and strategyproofness, along lines of similar work on constrained-optimal mechanisms in the school choice literature~\cite{abdulkadirouglu2003school,abdulkadirouglu2017minimizing}.


Another natural direction for future work concerns generalizations of the assignment problem. 
An immediate question is whether our results may be extended to settings where ties or incomparability between items are allowed in agents' preferences~\cite{Katta06:Solution,wang2020multi}.
When agents may demand multiple items~\cite{Heo14:Probabilistic,kojima2009:Random,wang2020multi,budish2011combinatorial}, the question of whether a mechanism can satisfy a natural extension of \fcm{} together with other desiderata also remains open.

\begin{acks}
LX acknowledges NSF \#1453542 and \#1716333 for support. YC acknowledges NSFC under Grants 62172016  and 61932001 for support. HW acknowledges NSFC under Grant 61972005 for support.
\end{acks}


%

\appendix

\section{Additional Results}

\label{sec:app:results}

\subsection{Relationships between Efficiency Properties}\label{sec:app:results:relation}

We summarize the efficiency properties we discuss in the main body with~\Cref{fig:nwaxioms}.

\begin{figure}[htb]
	\centering
	\includegraphics[width=0.5\linewidth]{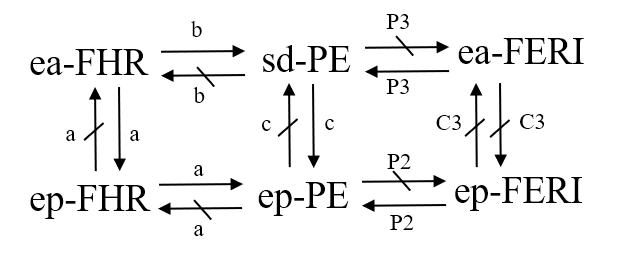}
	\caption{Relationship between \efcr{}, \sdcrf{}{}, \efr{}, \sdrf{}, \epopt{}, and \sdopt{}.}
	\begin{flushleft}
	  {\small Note: The property A points to another property B means that A implies B.
	An arrow annotated with $\texttt{a}$, $\texttt{b}$, or $\texttt{c}$ refers to a result due to~\protect\citeA{Ramezanian2021:Ex-post},~\protect\citeA{Chen2021:Theprobabilistic}, and~\protect\citeA{Bogomolnaia01:New}, respectively, and an edge annotated with P refers to a Proposition in this paper.}
	\end{flushleft}
	\label{fig:nwaxioms}
\end{figure}

\begin{restatable}{prop}{propdeterfhcrsdcrf}{}\label{prop:deterfhcrsdcrf}
	A deterministic assignment satisfies \ffhcr{} (\fhcr{}) if and if only it satisfies \fsdcrf{} (\sdcrf{}).
\end{restatable}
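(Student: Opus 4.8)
The plan is to show that, for a deterministic $A$ (viewed as its $0/1$ assignment matrix $P=A$), the two recursions underlying \fhcr{} and \sdcrf{} generate exactly the same sequence of rounds, after which the two satisfaction conditions become literally the same statement. Write $V_r=M\setminus\bigcup_{r'<r}\tps{A}{r'}$ for the items still unremoved at the start of round $r$ in the \fhcr{} recursion, so that $V_1=M$ and $V_{r+1}=V_r\setminus\tps{A}{r}$. The single arithmetic fact I would lean on throughout is that, for a deterministic assignment and any agent set $S$, $\sum_{k\in S}p_{k,o}=1$ exactly when $A^{-1}(o)\in S$, and $\sum_{o'\in\ucs(\succ_j,o)}p_{j,o'}=1$ exactly when $A(j)\succeq_j o$; both sums are otherwise $0$.

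First I would prove by induction on $r$ that $M_{A,r}=V_r$, and hence $E_{A,r}(o)=\{j:\tp{j}{V_r}=o\}$. The base case $r=1$ is immediate since $M_{A,1}=M=V_1$. For the step, the deterministic fact gives that $o\notin M_{A,r+1}$ iff $A^{-1}(o)$ is eager for $o$ at some round $\le r$, i.e. $\tp{A^{-1}(o)}{V_{r'}}=o$ for some $r'\le r$. For $o\in V_r\setminus\tps{A}{r}$ I would argue that its holder $A^{-1}(o)$ (active at round $r$, since it holds $o\in V_r$) tops $o$ at no round $r'\le r$: not at $r'=r$ because $o\notin\tps{A}{r}$, and not at $r'<r$ because an active agent whose favorite is $o$ would force $o\in\tps{A}{r'}$, contradicting $o\in V_r$; hence $o\in M_{A,r+1}$. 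The remaining case $o\in\tps{A}{r}$ is the one place where the two directions use their hypotheses.

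For \fhcr{}$\Rightarrow$\sdcrf{}, the hypothesis says the holder $A^{-1}(o)$ of each $o\in\tps{A}{r}$ satisfies $\tp{A^{-1}(o)}{V_r}=o$, so $A^{-1}(o)\in E_{A,r}(o)$ and $o\notin M_{A,r+1}$, closing the induction; to then verify the \sdcrf{} condition I take $o\in M_{A,r}=V_r$ and $j\in E_{A,r'}(o)$ with $r'<r$, and observe that since $o$ survives to round $r$, agent $j$ cannot have been active-and-unplaced at round $r'$ (that would place $o\in\tps{A}{r'}$ and remove it), so $j$ was already placed at some round $<r'$ with $A(j)=\tp{j}{V_s}$ over a superset $V_s\supseteq V_{r'}$, giving $A(j)\succeq_j\tp{j}{V_{r'}}=o$, which is exactly $\sum_{o'\in\ucs(\succ_j,o)}p_{j,o'}=1$. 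For the converse \sdcrf{}$\Rightarrow$\fhcr{}, I close the induction by contradiction: if some active $j\neq A^{-1}(o)$ tops $o\in\tps{A}{r}$, then $A(j)\prec_j o$, so $j$ is unsatisfied with respect to $o$; were $o\in M_{A,r+1}$, \sdcrf{} applied to $j\in E_{A,r}(o)$ with $r<r+1$ would be violated, hence $o\notin M_{A,r+1}$. With $M_{A,r}=V_r$ established, \sdcrf{} forces \fhcr{}: for $o\in\tps{A}{r}$ we have $o\notin M_{A,r+1}$, so $A^{-1}(o)$ is eager for $o$ at some $r'\le r$, and the $r'<r$ case is impossible as above, leaving $\tp{A^{-1}(o)}{V_r}=o$, the \fhcr{} requirement.

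The main obstacle is conceptual rather than computational: the two definitions organize the process differently, since \fhcr{} partitions items into disjoint layers $\tps{A}{1},\tps{A}{2},\dots$ indexed by when they are assigned, whereas \sdcrf{} tracks a nested shrinking family $M_{A,1}\supseteq M_{A,2}\supseteq\cdots$ together with agent sets $E_{A,r}(o)$ that may include agents already placed. Making the correspondence $M_{A,r}=V_r$ and $\tps{A}{r}=M_{A,r}\setminus M_{A,r+1}$ airtight requires care in distinguishing agents that are \emph{active} (unplaced) from those merely \emph{eager}, and in handling the fact that an agent placed at an earlier round can reappear in a later $E_{A,r}(o)$; both subtleties are resolved by the single observation that an active agent's current favorite is always removed in that same round, so no item can survive once it becomes the favorite of some active agent.
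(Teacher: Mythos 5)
Your proof is correct and takes essentially the same route as the paper's: an induction on rounds establishing that the \fhcr{} layer recursion and the \sdcrf{} remaining-supply recursion coincide for deterministic assignments (i.e., $M_{A,r}=M\setminus\bigcup_{r'<r}\tps{A}{r'}$), with the $0/1$ arithmetic of deterministic assignments translating each property's condition into the other's. The paper interleaves the property condition and the set equality as two joint inductive conditions rather than isolating the set equality first, but the substance of both directions is the same.
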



\begin{proof}
    The sets $\tps{A}{r}$, $M_{A,r}$, and $E_{A,r}(o)$ in the following have already been defined in \Cref{dfn:fhcr} and \Cref{dfn:sdcrf}.

	\vspace{0.5em}\noindent{\bf (\fhcr{}$\Rightarrow{}$\sdcrf{})}
	For any $A$ satisfying \fhcr{}, we show that for any $r$, the following two conditions hold:
	\begin{enumerate}[label={\bf Condition (\arabic*):},align=left,wide,labelindent=0pt]
		\item $M_{A,r}=M\setminus\bigcup_{r'<r}\tps{A}{r'}$, and
		\item for any item $o\in M_{A,r^*}$ with $r^*>r$, $\sum_{\hat{o}\in\ucs(\succ_j,o)}A_{j,\hat{o}}=1$ for any $j\in E_{A,r}(o)$.
	\end{enumerate}
	
	\paragraph{Base Case.}
	Condition (1) trivially holds for $r=1$.
	We show condition (2) holds by contradiction.
	Assume that there exist an item $o\in M_{A,r^*}$ with $r^*>1$ and an agent $j\in E_{A,1}(o)$ such that $\sum_{\hat{o}\in\ucs(\succ_j,o)}A_{j,\hat{o}}=0$.
	Since $j\in E_{A,1}(o)$, we know that $o=\tp{j}{M_{A,1}}=\tp{j}{M}$, which also means that $o\in\tps{A}{1}$.
	By $o\in M_{A,r^*}$, we know that $\sum_{j\in\bigcup_{r'\le r}E_{A,r'}(o)}A_{j,o}=0<1$, which also means that no agent $j'\in E_{A,1}(o)=\{k\in N\mid o=\tp{k}{M}\}$ gets $o$, a contradiction to $A$ satisfying \fhcr{}.

	\paragraph{Induction Step.}
	Supposing that they hold for each of $r'=1,\dots,r-1$, we prove that they also hold for $r$.
	
	\vspace{0.5em}\noindent{\bf Condition (1) holds for $r$.} For any $r'<r$, by \fhcr{}, $o'=\tp{A^{-1}(o')}{M_{A,r'}}$ for every item $o'\in \tps{A}{r'}$, which means that $A^{-1}(o')\in E_{A,r'}(o')$ and $\sum_{j\in\bigcup_{r'<r}E_{A,r'}(o')}A_{j,o'}=1$.
	Therefore $o'\notin M_{A,r}$.
	It is easy to prove the opposite direction that $o\in \tps{A}{r}$ for any $o\notin M_{A,r}$ with the similar analysis.
	Together they mean lead to condition (1) for $r$.
	
	\vspace{0.5em}\noindent{\bf Condition (2) holds for $r$.}
	For any item $o\in M_{A,r^*}$ with $r^*>r$, we have that $\sum_{j\in\bigcup_{r'\le r}E_{A,r'}(o)}A_{j,o}<1$.
	Since $A$ is deterministic, it means that $A_{j',o}=0$ for any $j'\in E_{A,r'}(o)$ with $r'\le r$.
	Besides, $o\in M_{A,r^*}\subseteq M_{A,r}$, and then by \fhcr{} and (1) for $r$, we obtain that $o\notin\tps{A}{r}$, which means that for any $j\in E_{A,r}(o)$, i.e. $o=\tp{j}{M_{A,r}}$, we have that $A(j)\in\tps{A}{r'}$ with some $r'<r$.
	Again by (1) for $r'$, we know that $A(j)=\tp{j}{M_{A,r'}}\succ_j o$ since $M_{A,r}\subseteq M_{A,r'}$, and therefore $\sum_{o'\in\ucs(\succ_j,o)}A_{j,o'}=1$, i.e. condition (2) for $r$.
	
    With that condition (2) holds for any $r$, we obtain the \fhcr{} assignment $A$ satisfies \sdcrf{}.
	
	\vspace{0.5em}\noindent{\bf (\sdcrf{}$\Rightarrow{}$\fhcr{})}
	For any $A$ satisfying \sdcrf{}, we show that for any $r$, the following two conditions hold:
	\begin{enumerate}[label=(\arabic*),leftmargin=*,labelindent=0pt,topsep=0.5em,itemsep=0pt]
		\item $M\setminus\bigcup_{r'<r}\tps{A}{r'}=M_{A,r}$, and
		\item $o=\tp{A^{-1}(o)}{M_{A,r}}$ for every item $o\in \tps{A}{r}$.
	\end{enumerate}
	
	\paragraph{Base Case.}
	Condition (1) trivially holds for $r=1$.
	We show condition (2) holds by contradiction.
	If $\tp{A^{-1}(o)}{M_{A,1}}=\tp{A^{-1}(o)}{M}\neq o$, then $A^{-1}(o)\notin E_{A,1}(o)$, and $\sum_{k\in E_{P,1}(o)}p_{k,o}<1$.
	It follows that $o\in M_{A,2}$, while for any $j\in E_{P,1}(o)\neq\emptyset$, $\sum_{o'\in(\succ_j,o)A_{j,o'}}=A_{j,o}=0<1$, a contradiction to $A$ satisfying \sdcrf{}.
	
	\paragraph{Induction Step.} Supposing that both condition hold for each of $r'=1,\dots,r-1$, we prove that they also hold for $r$.
	
	\vspace{0.5em}\noindent{\bf Condition (1) holds for $r$.} For every $o\in M_{A,r}$, we have that $\sum_{j\in\bigcup_{r'< r}E_{A,r}(o)}A_{j,o}<1$, which implies that $A_{j,o}=0$ for any $j\in\bigcup_{r'<r}E_{A,r'}(o)$ since $A$ is deterministic.
	By the assumption that $A$ is \sdcrf{}, $\sum_{\hat{o}\in\ucs(\succ_j,o)}p_{j,\hat{o}}=1$ for any $j\in\bigcup_{r'<r}E_{A,r'}(o)$. Now,
	\begin{itemize}[leftmargin=*,topsep=0pt,itemsep=0pt]
	    \item If $E_{A,r'}(o)=\emptyset$ for any $r'<r$, then there does not exist $j$ with $\tp{j}{M_{A,r'}}=o$ and therefore $o\notin\tps{A}{r'}$.
	    \item If $E_{A,r'}(o)\neq\emptyset$ for some $r'<r$, then we have that $o\notin\tps{A}{r'}$, since condition (2) holds for $r'$ and $A_{j,o}=0$ for any $j\in E_{A,r'}(o)$.
	\end{itemize}
	Then, we have that $o\notin\bigcup_{r'<r}\tps{A}{r'}$.
	It is easy to prove the opposite direction that $o\in M_{A,r}$ for any $o\notin\bigcup_{r'<r}\tps{A}{r'}$ with a similar argument.
	Together they mean that $M_{A,r}= M\setminus\bigcup_{r'<r}\tps{A}{r'}$, i.e., condition (1) holds for $r$.

	\vspace{0.5em}\noindent{\bf Condition (2) holds for $r$.}
	For any item $o\in \tps{A}{r}$, assume for the sake of contradiction that $o\neq\tp{A^{-1}(o)}{M_{A,r}}$, which means that, $A(j)\neq o$ for any $j$ with $o=\tp{j}{M_{A,r}}$, i.e., $j\in E_{A,r}(o)$.
	
	Since $o\in \tps{A}{r}$, then $o\in M_{A,r}$ by condition (1) for $r$ which we have just proved, and therefore $\sum_{j\in\bigcup_{r'< r}E_{A,r'}(o)}A_{j,o}=0<1$ since $A$ is deterministic.
	Together with the assumption, we have that $\sum_{j\in\bigcup_{r'< r+1}E_{A,r'}(o)}A_{j,o}=0<1$, which means that $o\in M_{A,r+1}$.
	Again by $o\in \tps{A}{r}$, we know that there exists $j'\in E_{A,r}(o)$, i.e., $o=\tp{j'}{M_{A,r}}$ with $A(j')\notin\bigcup_{r'<r}\tps{A}{r'}$.
	It also means that $A(j')\in M_{A,r}$ by condition (1) for $r$, and therefore $o\succ_{j'}A(j')$.
	Then $\sum_{\hat{o}\in\ucs(\succ_{j'},o)}p_{j',\hat{o}}=0$ with $o\in M_{A,r+1}$ and $j'\in E_{A,r}(o)$, a contradiction to the assumption that $A$ satisfies \sdcrf{}.
	By the contradiction, we have that $o=\tp{A^{-1}(o)}{M_{A,r}}$ for every item $o\in \tps{A}{r}$, i.e., condition (2) holds for $r$.
	
	Together we have condition (2) established for any $r$, which means that the \sdcrf{} assignment $A$ satisfies \fhcr{}.
\end{proof}

\begin{corollary}\label{cor:frenfea}
     {\em[\efcr{}$\not\Rightarrow{}$\sdcrf{}, \sdcrf{}$\not\Rightarrow{}$\efcr{}]}
     A random assignment satisfying \fsdcrf{} (\sdcrf{}) do not need to satisfy \fefcr{} (\efcr{}), and vice versa.
\end{corollary}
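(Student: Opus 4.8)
The plan is to establish the two non-implications by separate witnesses, using the observation that the gap between the ex-post notion \efcr{} and the ex-ante notion \sdcrf{} can only open up for genuinely randomized assignments: by \Cref{prop:deterfhcrsdcrf}, \fhcr{} and \sdcrf{} coincide on deterministic assignments, so every counterexample must be a non-degenerate lottery. I would therefore exhibit one random assignment that is \efcr{} but not \sdcrf{}, and one that is \sdcrf{} but not \efcr{}.

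For \efcr{} $\not\Rightarrow$ \sdcrf{}, the idea is to mix two \fhcr{} deterministic assignments that resolve a common item at \emph{different} rounds, so that aggregation creates a ``leak''. Concretely I would use four agents and items $\{a,b,c,d\}$ with $\succ_1\colon a\succ c\succ d\succ b$, $\succ_2\colon a\succ d\succ c\succ b$, $\succ_3\colon b\succ c\succ d\succ a$, and $\succ_4\colon b\succ d\succ c\succ a$. Consider $B_1=(1\!\to\!a,\,2\!\to\!c,\,3\!\to\!b,\,4\!\to\!d)$ and $B_2=(1\!\to\!d,\,2\!\to\!a,\,3\!\to\!c,\,4\!\to\!b)$; checking \Cref{dfn:fhcr} shows both are \fhcr{} (in $B_1$, agent $4$ takes $d$ at the second round and agent $2$ is pushed down to $c$ at the third; symmetrically in $B_2$). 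Hence $P=\tfrac12 B_1+\tfrac12 B_2$ is \efcr{}. Now $a$ and $b$ are fully consumed at the first round, so $M_{P,2}=\{c,d\}$ with $E_{P,2}(c)=\{1,3\}$ and $E_{P,2}(d)=\{2,4\}$. But half of the supply of $c$ was handed to agent $2$, who is not eager for $c$ at round $2$, so $p_{1,c}+p_{3,c}=\tfrac12<1$ and $c$ survives into $M_{P,3}$; meanwhile agent $1\in E_{P,2}(c)$ has $\sum_{o'\in \ucs(\succ_1,c)}p_{1,o'}=p_{1,a}+p_{1,c}=\tfrac12<1$, violating \sdcrf{}.

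For \sdcrf{} $\not\Rightarrow$ \efcr{}, I would reuse the profile of \Cref{fig:fhr} and the output $P$ of \upre{} displayed in \Cref{prop:impefr}. Being a member of \pcr{}, \upre{} is \sdcrf{} by \Cref{thm:familychar}. However, $P$ assigns agent $6$ the item $f$ with probability $3/4$, whereas a case analysis on who wins $c$ shows that in \emph{every} \fhcr{} deterministic assignment of this profile agent $6$ receives either $c$ (when she wins the first-round competition for $c$) or $d$ (otherwise), and never $e$ or $f$. Thus $p_{6,f}>0$ cannot be reproduced by any convex combination of \fhcr{} deterministic assignments, so $P$ is not \efcr{}.

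The main obstacle is the first direction: one must design an instance together with two \fhcr{} deterministic assignments that route the same item through different rounds, so that the aggregate places part of that item's mass on an agent who is not eager for it at the round the aggregate ``reaches'' it, producing a genuine leak alongside an eager but unsatisfied agent. The second direction is comparatively routine, since it only requires spotting that \upre{} can put probability mass where no \fhcr{} deterministic assignment ever does.
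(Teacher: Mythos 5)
Your first direction is correct and takes a genuinely different route from the paper. The paper derives \efcr{}$\not\Rightarrow$\sdcrf{} from mechanism-level facts: \am{} satisfies \efcr{} (\Cref{thm:amp}) but its expected output violates \sdopt{} on a ten-agent profile (\Cref{prop:ebm}), hence violates \sdcrf{} because \sdcrf{} implies \sdopt{} (\Cref{prop:sdcrf}). You instead give a self-contained four-agent witness and verify the violation of \Cref{dfn:sdcrf} directly; I checked that $B_1$ and $B_2$ are both \fhcr{} and that in $P=\tfrac12 B_1+\tfrac12 B_2$ one has $c\in M_{P,3}$ while agent $1\in E_{P,2}(c)$ holds only $p_{1,a}+p_{1,c}=\tfrac12<1$ of the mass on $\ucs(\succ_1,c)$. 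This half is clean, and arguably more elementary than the paper's argument.

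The second direction, which you call routine, is where the genuine gap lies. The assignment displayed in \Cref{prop:impefr}, with $p_{6,f}=3/4$, is \emph{not} the output of \upre{} on the profile of \Cref{fig:fhr}, and it is not \sdcrf{}: there $M_{P,2}=\{d,e,f\}$ and $E_{P,2}(d)=\{1,2,6\}$, whose members hold no share of $d$, so $d\in M_{P,3}$; yet $\sum_{o'\in\ucs(\succ_6,d)}p_{6,o'}=p_{6,c}=1/4<1$, violating \Cref{dfn:sdcrf}. (You were misled by an error in \Cref{eg:pcrnefr}: its own round-by-round description, and the table in \Cref{prop:upre}, show that \upre{} actually gives agent $6$ item $d$ with probability $3/4$, not $f$.) Replacing your witness by the true \upre{} output does not help either, because that assignment \emph{is} \efcr{}. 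Indeed, your own (correct) case analysis shows the \fhcr{} assignments are exactly (a) $6\gets c$ with $\{3,4,5\}$ sharing $\{d,e,f\}$, or (b) some $j\in\{3,4,5\}$ gets $c$, $6\gets d$, and the other two get $e$ and $f$; mixing type (a) uniformly with weight $1/4$ and type (b) uniformly with weight $3/4$ reproduces the \upre{} output exactly (each of agents $3$, $4$, $5$ gets $c$ with $1/4$, $d$ with $1/12$, $e$ with $1/3$, $f$ with $1/3$; agent $6$ gets $c$ with $1/4$, $d$ with $3/4$). So no counterexample to \sdcrf{}$\Rightarrow$\efcr{} can be built from \upre{} on the \Cref{fig:fhr} profile, and you need a different instance. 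The paper's route avoids this: \upre{} satisfies \sdcrf{} (\Cref{thm:familychar}) and \etep{} (\Cref{thm:uprep}), and \Cref{prop:impefcrsdcfr} proves that \efcr{}, \sdcrf{}, and \etep{} are jointly unsatisfiable, the witnessing profile being the large eighteen-agent instance constructed in that proof.
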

\begin{proof}
    {\bf(\efcr{}$\not\Rightarrow{}$\sdcrf{})} It follows from the fact that \am{} satisfies \efcr{} (\Cref{thm:amp}), but \am{} does not satisfy \sdcrf{} (\Cref{prop:ebm}).

	\vspace{0.5em}\noindent{\bf (\sdcrf{}$\not\Rightarrow{}$\efcr{})} It follows from the fact that \upre{} satisfies \sdcrf{} (\Cref{thm:familychar}), but \upre{} does not satisfy \efcr{} (\Cref{prop:upre}).
\end{proof}

\begin{definition}
    \cite{abraham2007popular,belahcene2021combining}
    Given an instance, a deterministic assignment $A$ satisfies \fpop{} (\pop{}) if there does not exist another $A'$ such that $|\{j\in N:A'(j)\succ_j A(j)\}|>|\{j\in N:A(j)\succ_j A'(j)\}|$.
\end{definition}

\begin{restatable}{lemma}{lempop}{}\label{lem:pop}
     {\rm\cite{abraham2007popular}} An assignment $A$ is popular if and only if every agent is assigned {\rm(1)} her most preferred item, or {\rm(2)} the most preferred item which is not ranked first by any agent.
\end{restatable}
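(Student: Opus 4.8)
The plan is to restate the lemma in the language of \emph{first-choice items}. Call an item $o$ an \emph{f-item} if $o=\tp{j}{M}$ for some $j\in N$, and let $F$ be the set of f-items. For each agent $j$ write $f(j)=\tp{j}{M}$ (her top item, always an f-item) and $s(j)=\tp{j}{M\setminus F}$ (her most preferred non-f-item). The lemma then reads: $A$ satisfies \pop{} if and only if $A(j)\in\{f(j),s(j)\}$ for every $j\in N$. Two structural observations drive the proof. First, every item that $j$ strictly prefers to $s(j)$ lies in $F$, since $s(j)$ is by definition the best non-f-item. Second, because $|N|=|M|$ the assignment is a perfect matching, and no f-item can be any agent's $s(j)$; hence once we know each agent receives $f(j)$ or $s(j)$, every f-item is automatically held by one of its first-choice applicants, which is why the statement can be phrased using only conditions (1) and (2). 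For a pair of assignments I will write $\Delta(A',A)=|\{j:A'(j)\succ_j A(j)\}|-|\{j:A(j)\succ_j A'(j)\}|$, so that $A$ is \fpopa{} exactly when $\Delta(A',A)\le 0$ for all $A'$.

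\emph{Sufficiency.} Assume $A(j)\in\{f(j),s(j)\}$ for all $j$ and fix any competitor $A'$. An agent with $A(j)=f(j)$ already holds her top item and can never strictly prefer $A'$, so the set $P=\{j:A'(j)\succ_j A(j)\}$ of agents who improve consists only of $s$-holders; moreover for $j\in P$ the item $A'(j)$ is strictly better than $s(j)$ and hence lies in $F$. Define $g(j)=A^{-1}(A'(j))$, the agent holding this f-item in $A$. By the structural observation this f-item is $g(j)$'s own first choice, i.e. $A(g(j))=f(g(j))=A'(j)$; since $A'$ gives that item to $j\neq g(j)$, agent $g(j)$ loses her unique top item and therefore strictly prefers $A$. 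The map $g$ is injective because $A'$ is a matching, so $P$ injects into $\{k:A(k)\succ_k A'(k)\}$, giving $\Delta(A',A)\le 0$. As $A'$ was arbitrary, $A$ is \fpopa{}.

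\emph{Necessity.} I will argue the contrapositive: if some agent $j_0$ has $A(j_0)\notin\{f(j_0),s(j_0)\}$, then a strictly more popular $A'$ exists. There are two cases to handle, namely $A(j_0)\prec_{j_0}s(j_0)$ (an item below her best non-f-item) and $A(j_0)$ being an f-item other than $f(j_0)$, and in each case the idea is to \emph{promote} $j_0$ toward $s(j_0)$ (respectively, re-seat a first-choice applicant onto the misallocated f-item) and then propagate the resulting displacement along an alternating path until it closes off with a net gain of votes. The main obstacle is precisely this propagation: a single swap only yields $\Delta=0$, since the promoted agent gains one vote while the displaced agent loses exactly one, so to obtain a \emph{strict} improvement one must chain reassignments in augmenting-path fashion and verify that the chain terminates at an agent who can be made strictly better off without harming anyone new. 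This is where the matching-theoretic structure of \citeA{abraham2007popular} enters, and the cleanest route is to invoke their characterization directly, instantiated to the perfect-matching regime $|N|=|M|$ in which, as noted above, the ``every f-item is matched to a first-choice applicant'' condition is subsumed by the ``$f(j)$ or $s(j)$'' condition.
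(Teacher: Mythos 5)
The paper never actually proves this lemma: it is stated as a known result attributed to \citeA{abraham2007popular} and is used only as an imported fact in the proof of \Cref{prop:pop}. So there is no in-paper argument to measure you against, and the relevant question is whether your proposal stands on its own. The sufficiency half does. Your two structural observations are exactly right (every item strictly above $s(j)$ is an f-item; in a perfect matching no f-item can be anyone's $s$-item, so the ``first-choice applicants hold the f-items'' clause of the original characterization is subsumed), and the injection $j\mapsto A^{-1}(A'(j))$ from improving agents into agents who lose their unique first choice is a correct, complete, and pleasingly elementary way to get $\Delta(A',A)\le 0$. That is already more than the paper offers.

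The necessity half, however, is a genuine gap, and the justification you give for the gap is itself off. Deferring to ``invoke their characterization directly'' is circular here: the characterization \emph{is} the lemma being proved, and if that citation is admissible then both directions follow at once and your sufficiency argument was unnecessary. (It is also not a free invocation: \citeA{abraham2007popular} define popularity over matchings that may leave agents unmatched, while the paper's \fpopa{} quantifies only over perfect assignments, so even the citation needs a small bridging step.) More importantly, the obstacle you describe --- that a single swap gives $\Delta=0$, so one must chain reassignments along an augmenting path of unknown length --- does not exist: a cycle of length at most three always yields a strict gain, because the displaced intermediate agent can be sent to her \emph{own first choice}, which she provably does not hold, so two agents vote for the change and at most one against. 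Concretely, if $A(j_0)\in F\setminus\{f(j_0)\}$, pick $a$ with $f(a)=A(j_0)$ and let $c=A^{-1}(f(j_0))$ (note $a\neq j_0$ and $c\neq j_0$ by the case assumption); the assignment $A'$ that sets $A'(a)=A(j_0)$, $A'(j_0)=f(j_0)$, $A'(c)=A(a)$ and agrees with $A$ elsewhere makes $a$ and $j_0$ strictly better off while only $c$ can be worse off, and in the degenerate sub-case $c=a$ (i.e., $A(a)=f(j_0)$) it collapses to a swap in which both agents receive their first choices. If instead $A(j_0)\notin F$, then $s(j_0)\succ_{j_0}A(j_0)$; let $b=A^{-1}(s(j_0))$ and $c=A^{-1}(f(b))$. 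Since $A(b)=s(j_0)\notin F$ we have $A(b)\neq f(b)$, hence $b\neq c$ and $b$ strictly gains from receiving $f(b)$, and $c\neq j_0$ because $A(j_0)\notin F$ while $f(b)\in F$; rotating $A'(j_0)=s(j_0)$, $A'(b)=f(b)$, $A'(c)=A(j_0)$ again gives at least two strict gains against at most one loss. Either way $A$ is defeated, which finishes necessity in a few lines and with no citation at all.
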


\begin{restatable}{prop}{proppop}{}\label{prop:pop}
     A \fpopa{}~(\pop{}) deterministic assignment satisfies \ffhcr{} (\fhcr{}), but not vice versa.
\end{restatable}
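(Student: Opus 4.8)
The plan is to prove the forward implication \pop{}\ $\Rightarrow$\ \fhcr{} using the structural characterization of popular assignments in \Cref{lem:pop}, and to refute the converse with an explicit counterexample.

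For the direction that every \pop{} assignment $A$ is \fhcr{}, let $F$ denote the set of items ranked first by at least one agent, and note that $F$ coincides with $\tps{A}{1}$ from \Cref{dfn:fhcr} (since the round-$1$ constraint $A(j)\notin\emptyset$ is vacuous). By \Cref{lem:pop}, each agent $j$ is assigned either $\tp{j}{M}$ (her top item, which lies in $F$) or $\tp{j}{M\setminus F}$ (her most preferred item outside $F$, which lies in $M\setminus F$); as these two items are distinct, the agents partition into a set $N_1$ receiving top items and a disjoint set $N_2$ receiving top-outside-$F$ items. I would first verify the round-$1$ condition: any $o\in\tps{A}{1}=F$ is received by $A^{-1}(o)$, and since items of $F$ cannot go to agents in $N_2$ (who get items outside $F$), we must have $A^{-1}(o)\in N_1$, whence $o=\tp{A^{-1}(o)}{M}$, as required. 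Next I would identify $\tps{A}{2}$: the agents with $A(j)\notin F$ are exactly those in $N_2$, and for each such $j$ we have $A(j)=\tp{j}{M\setminus F}$; hence $\tps{A}{2}=\{A(j):j\in N_2\}=M\setminus F$, and every $o\in\tps{A}{2}$ is received by an agent $A^{-1}(o)\in N_2$ with $o=\tp{A^{-1}(o)}{M\setminus F}=\tp{A^{-1}(o)}{M\setminus\tps{A}{1}}$, giving the round-$2$ condition. Finally, since $\tps{A}{1}\cup\tps{A}{2}=F\cup(M\setminus F)=M$, no agent has an allocation outside $\bigcup_{r'\le 2}\tps{A}{r'}$, so $\tps{A}{r}=\emptyset$ for $r\ge 3$ and the remaining conditions hold vacuously. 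Thus $A$ satisfies \fhcr{}.

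For the converse (\fhcr{}\ $\not\Rightarrow$\ \pop{}), I would exhibit a \fhcr{} assignment whose decomposition requires three or more rounds, since any assignment obeying \Cref{lem:pop} uses only the two ``layers'' $F$ and $M\setminus F$. A convenient witness is the assignment $A^*$ of \Cref{eg:fcr} on the profile of \Cref{fig:fhr}, already shown there to be \fhcr{}: here $F=\{a,b,c\}$, and agent $5$ receives $f$, whereas her first choice is $c$ and her most preferred item not ranked first by any agent is $e$; since $f$ is neither, \Cref{lem:pop} implies $A^*$ is not \pop{}. Alternatively, the three-agent instance with all agents reporting $a\succ b\succ c$ and the \fhcr{} outcome $1\gets a,\,2\gets b,\,3\gets c$ works: it is produced round-by-round by the heuristic of \Cref{dfn:fhcr}, yet agent $3$ obtains $c$, which is neither her top item nor $b=\tp{3}{M\setminus\{a\}}$, so again \Cref{lem:pop} gives non-popularity.

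The main obstacle is the forward direction, specifically the bookkeeping that ties the two-category description of a popular assignment in \Cref{lem:pop} to the round sets $\tps{A}{r}$ of \Cref{dfn:fhcr}. The crux is the identity $\tps{A}{2}=M\setminus F$, which requires arguing that every non-first-choice item is in fact assigned to some $N_2$ agent as that agent's top remaining item. This uses that $A$ is a bijection (so $F$ is assigned entirely within $N_1$ and $M\setminus F$ entirely within $N_2$) together with the fact that the two cases of \Cref{lem:pop} concern distinct items and hence genuinely partition the agents; once this is in place, the two-round termination and the per-round \fhcr{} conditions follow immediately.
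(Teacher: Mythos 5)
Your proposal is correct and takes essentially the same route as the paper: the forward direction invokes \Cref{lem:pop} and verifies the \fhcr{} conditions round by round (identifying $\tps{A}{1}$ with the set $F$ of first-ranked items, showing the second-round items are exactly $M\setminus F$ and are assigned as required, and noting vacuity for $r\ge 3$), and the converse uses the same witness $A^*$ from \Cref{eg:fcr} on the profile of \Cref{fig:fhr}. The only cosmetic difference is in certifying non-popularity: the paper exhibits an explicitly more popular assignment $A'$ and counts votes, whereas you apply the characterization of \Cref{lem:pop} directly to agent $5$, whose item $f$ is neither her top item nor her top item outside $F$ --- both are valid one-line arguments.
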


\begin{proof}
{\bf ($\pop{}\Rightarrow\fhcr{}$)}
By~\Cref{lem:pop}, an assignment $A$ is popular if and only if for every agent $j$, $A(j)=\tp{j}{M}$ or $A(j)=\tp{j}{M'}$ where $M'=\{o\in M|o=\tp{k}{M} \text{ for any }k\in N\}$.

By~\Cref{dfn:fhcr}, we observe that:
\begin{enumerate}[label=\rm(\roman*),wide,labelindent=0pt,topsep=0em,itemsep=0pt]
    \item For $r=1$, $\tps{A}{1} = \{\tp{j}{M}|j\in N\}$ for any $j\in N$. By the characterization above, any $o\in\tps{A}{1}$ is assigned to one of the agents who rank it as their top item in $M$, i.e. $o=\tp{A^{-1}(o),M}$, which satisfies \fhcr{} for $r=1$.
    \item For $r=2$, $\tps{A}{2}=\{o\in M:o=\tp{j}{\allowbreak M\setminus\tps{A}{1}},\allowbreak A(j)\notin\tps{A}{1}\}$.
    We note that $M'=M\setminus\tps{A}{1}$ because by~(\romannumeral1), any $o$ satisfies $o\in\tps{A}{1}$ if and only if $o=\tp{k}{M}$ for some agent $k$.
    According to the characterization of \pop{}, for any $j$ with $A(j)\notin\tps{A}{1}$, we trivially have $A(j)\neq=\tp{j}{M}$ and therefore $A(j)=\tp{j}{M'}$.
    It also means that for any $o\in\tps{A}{2}$, $o=\tp{A^{-1}(o)}{M'}$, which satisfies \fhcr{} for $r=2$.
    \item for $r>2$, $\tps{A}{r}=\emptyset$ because $A(j)=\tp{j}{M}\in\tps{A}{1}$ or $A(j)=\tp{j}{M'}\in\tps{A}{2}$, and we have that $A$ satisfies \fhcr{} trivially.
\end{enumerate}

\vspace{0.5em}\noindent{\bf ($\fhcr{}\not\Rightarrow\pop{}$)}
Consider the profile in~\Cref{fig:fhr} and the assignment $A^*$ in~\Cref{eg:fcr} which satisfies \fhcr{}.
Now, consider the assignment $A'$ below:
\begin{equation*}
	\begin{split}
		A':~&1\gets a,~2\gets b,~3\gets f,~4\gets c,~5\gets e,~6\gets d.\\
	\end{split}
\end{equation*}
Notice that $A'$ is more popular than $A^*$: $A'(4)\succ_4 A^*(4)$, $A'(5)\succ_4 A^*(5)$, while $A^*(3)\succ_4 A'(3)$, and $A'(j)=A^*(j)$ for $j\in\{1,2,6\}$. Therefore, \fhcr{} does not imply \pop{}.
\end{proof}

\subsection{\rkm{} is Not Compatible with Fairness and Strategyproofness}\label{sec:app:results:rkm}

We also consider the the compatibility of \frkm{} (\rkm{}) with fairness and strategyproofness. 

\begin{definition}
    \cite{irving2006rank}
    Given an instance, a deterministic assignment $A$ satisfies \frkm{}  (\rkm{}) if there is no assignment $A'$ such that its {\em signature} dominates $A$, where the signature of $A$ is an $n$-vector $\vx=(x_r)_{r\le n}$ such that for each $r\in[n]$, the $r$-th component is the number of agents who are allocated their $r$-th ranked item, and a signature $\vx$ dominates signature $\vy$ if there exists $r'$ such that $x_{r'}>y_{r'}$ and for every $r''<r'$, $x_{r''}\ge y_{r''}$.
\end{definition}

Since \rkm{} implies \fhr{}~\cite{belahcene2021combining}, it automatically means that ex-post \rkm{} (\erkm{}) suffers the same incompatibility with fairness and strategyproofness as \fhr{}. In fact, \erkm{} is incompatible even with \sdwef{} alone as we show in~\Cref{prop:imprkm}.

\begin{restatable}{prop}{propimprkm}{}\label{prop:imprkm}
	No mechanism satisfies \ferkm{} (\erkm{}) and \fsdwef{} (\sdwef{}) simultaneously.
\end{restatable}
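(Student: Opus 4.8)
The plan is to reuse the six-agent instance $R$ of \Cref{fig:fhr} and to show that every \rkm{} deterministic assignment for $R$ forces the same rigid outcome, from which a violation of \sdwef{} follows for every convex combination of such assignments, i.e., for every \erkm{} assignment. Throughout I would work directly from the definition of \rkm{} as lexicographic maximality of the signature.

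First I would pin down the \rkm{} signature. Since $a$ is ranked first only by agent $1$, $b$ only by agent $2$, and $c$ only by agents $3$--$6$, at most three agents can receive a rank-$1$ item; moreover three rank-$1$ allocations require $1\gets a$, $2\gets b$, and $c$ given to one of agents $3$--$6$, since $a$ (resp.\ $b$) yields a rank-$1$ allocation only for agent $1$ (resp.\ $2$). Hence the leading signature entry is $3$ and agents $1,2$ are forced to $a,b$ in every \rkm{} assignment. It then remains to compare, over the remaining agents and items $\{d,e,f\}$, the two cases for who receives $c$. If agent $6$ receives $c$ (Case A), then agents $3,4,5$, whose common restriction to the leftovers is $e\succ d\succ f$, split $\{d,e,f\}$, yielding exactly one rank-$2$, one rank-$3$, and one rank-$4$ allocation, so the signature is $(3,1,1,1,0,0)$. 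If instead one of $3,4,5$ receives $c$ (Case B), then agent $6$, whose restriction to $\{d,e,f\}$ is $d\succ e\succ f$ sitting at overall ranks $4,5,6$, must absorb a low-ranked item; a short enumeration of the three ways to assign $\{d,e,f\}$ shows the lexicographically best attainable signature is $(3,1,1,0,0,1)$, which is dominated by $(3,1,1,1,0,0)$ at the rank-$4$ entry. Therefore every \rkm{} assignment falls in Case A: $6\gets c$, while agents $3,4,5$ receive a permutation of $\{d,e,f\}$.

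Having established this rigidity, I would conclude as follows. In any \erkm{} assignment $P$, being a convex combination of Case-A deterministic assignments, we have $p_{6,c}=1$ while $p_{3,c}=p_{4,c}=p_{5,c}=0$, since agents $3,4,5$ only ever receive items from $\{d,e,f\}$. Because $c$ is the top item of $\succ_3$, comparing cumulative shares from agent $3$'s viewpoint gives $\sum_{o'\in\ucs(\succ_3,o)}p_{6,o'}\ge\sum_{o'\in\ucs(\succ_3,o)}p_{3,o'}$ for every $o\in M$ (the left-hand side equals $1$ already from $o=c$ onward), with strict inequality at $o=c$. Thus $P_6\sd{3}P_3$ while $P_6\neq P_3$, which violates \sdwef{}.

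The main obstacle is the signature bookkeeping: one must verify both that exactly three rank-$1$ allocations are forced and uniquely realized by $1\gets a,2\gets b$ together with some $c$-lover taking $c$, and that no Case-B allocation of $\{d,e,f\}$ can match the rank-$4$ entry of Case A (so that \rkm{} selects Case A). Once the set of \rkm{} assignments is shown to coincide with Case A, the stochastic-dominance comparison between agents $6$ and $3$ and the resulting contradiction are routine.
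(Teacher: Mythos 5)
Your proposal is correct and follows essentially the same route as the paper's proof: the same instance from \Cref{fig:fhr}, the same conclusion that every \rkm{} assignment must give $c$ to agent $6$ because the signature $(3,1,1,1,0,0)$ dominates the best alternative $(3,1,1,0,0,1)$, and the same final violation of \sdwef{} via $P_6\sd{3}P_3$ with $P_6\neq P_3$. The only difference is cosmetic: the paper shortcuts the case analysis by citing that \rkm{} implies \fcm{} and \fhr{}, whereas you verify those sub-steps by direct enumeration of signatures, which is slightly longer but more self-contained.
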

\begin{proof}
	Let $R$ be the profile in~\Cref{fig:fhr}.
	By \rkm{} implying \fcm{}, $a$ and $b$ must be assigned to agents $1$ and $2$, respectively.
	Although agents $3$-$6$ all rank $c$ on top, in any \rkm{} assignment, item $c$ can only be allocated to agent $6$ and $\{d,e,f\}$ to agents $3$-$5$, which leads to a signature $\vy=(3,1,1,1,0,0)$.
	Otherwise, if $c$ is not assigned to agent $6$ in an \rkm{} assignment, then by \rkm{} implying \fhr{}, agent $6$ can only get $f$ since $\rk{6}{d}>\rk{j}{d}$ and $\rk{6}{e}>\rk{j}{e}$ for $j\in\{3,4,5\}$, and agents $3$-$5$ get $\{c,d,e\}$, which results in the signature $\vx=(3,1,1,0,0,1)$ dominated by $\vy$, a contradiction. Then, since any random assignment $P$ satisfying \erkm{} is a convex combination over the set of possible \rkm{} assignments, we have that for any item $o$, $\sum_{o'\in\ucs(\succ_3,o)}p_{3,o'}\le1=p_{6,c}=\sum_{o'\in\ucs(\succ_6,o)}p_{6,o'}$, and it is strict for items other than $f$, a violation of \sdwef{}.
\end{proof}

\subsection{Properties that RP, PS, BM, \abm{}, \am{}, PR, and \upre{} Fail to Satisfy}\label{sec:app:results:properties}

\Cref{prop:rp,prop:ps} show that RP and PS are not \ffcma{}, and therefore they do not satisfy \efcr{} since \fhcr{} implies \ffcm{}.

\begin{restatable}{prop}{proprp}{}\label{prop:rp}
	 RP does not satisfy \ffcm{} (\fcm{}) ex-post.
\end{restatable}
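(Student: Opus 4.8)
The plan is to exhibit a single instance on which the output distribution of RP provably cannot be written as a convex combination of $\ffcma{}$ deterministic assignments; by the definition of ex-post satisfaction, this is exactly what it means for RP to fail $\fcm{}$ ex-post. I would take $n=3$ with items $M=\{a,b,c\}$ and preferences $\succ_1: a\succ_1 b\succ_1 c$, $\succ_2: a\succ_2 b\succ_2 c$, and $\succ_3: b\succ_3 a\succ_3 c$.

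First I would pin down the structure of \emph{all} $\ffcma{}$ deterministic assignments for this profile. Since agents $1$ and $2$ both rank $a$ first while agent $3$ ranks $b$ first, and $a$ has a single unit, at most one of agents $1,2$ can receive her top item; hence the maximal number of agents receiving a rank-$1$ item is $2$. Moreover, the only way to achieve two first choices is to give $a$ to one of $\{1,2\}$ and $b$ to agent $3$ (if agent $3$ does not get $b$ she cannot be a first-choice, forcing both $a$-lovers to be satisfied, which is impossible). Thus the only two $\ffcma{}$ assignments are $(1{:}a,2{:}c,3{:}b)$ and $(1{:}c,2{:}a,3{:}b)$, and crucially \emph{both} assign $b$ to agent $3$.

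Consequently, in any random assignment that is ex-post $\fcm{}$, i.e. any convex combination of $\ffcma{}$ deterministic assignments, agent $3$'s share of $b$ must equal $1$ and her share of $c$ must be $0$. It then suffices to show RP places positive probability on an outcome in which agent $3$ does not receive $b$. I would do this by exhibiting one priority order: under $1\succ 2\succ 3$, serial dictatorship gives $a$ to agent $1$, then $b$ to agent $2$ (her best remaining item), leaving $c$ for agent $3$. Since this order is drawn with probability $1/6>0$, the RP assignment satisfies $p_{3,c}\ge 1/6>0$, contradicting the requirement $p_{3,c}=0$ above. Therefore $\text{RP}(R)$ is not a convex combination of $\ffcma{}$ assignments, so RP does not satisfy $\fcm{}$ ex-post.

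The argument has no substantive obstacle beyond correctly determining the set of $\ffcma{}$ assignments. The one point requiring care is that the disproof must rule out \emph{every} convex decomposition into $\ffcma{}$ assignments, not merely observe that one natural priority order yields a non-$\ffcma{}$ outcome; this is precisely why I reduce the claim to the marginal constraint $p_{3,b}=1$ forced on any ex-post $\fcm{}$ assignment and then contradict it with a single positive-probability serial-dictatorship run. No computation of the full RP matrix is needed.
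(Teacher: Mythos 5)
Your proof is correct, and it uses the same basic strategy as the paper (a three-agent counterexample in which some serial-dictatorship realization of RP denies agent $3$ her unanimously-required top item $b$), but your logical route is genuinely tighter. The paper's proof, on the profile $\succ_1: a,b,c$; $\succ_2: a,c,b$; $\succ_3: b,a,c$, merely exhibits the realization of RP under the priority order $2\impord{}1\impord{}3$ and observes that this deterministic assignment is not \ffcma{}. Under the paper's own definition of ex-post properties (the random assignment matrix must admit \emph{some} convex decomposition into \ffcma{} deterministic assignments), exhibiting one non-\ffcma{} realization of RP's natural lottery does not by itself finish the argument, since the Birkhoff--von Neumann decomposition is not unique; the paper leaves that step implicit. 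You close exactly this gap: by characterizing the full set of \ffcma{} assignments for your profile and noting that all of them give $b$ to agent $3$, you derive the marginal constraint $p_{3,b}=1$ (hence $p_{3,c}=0$) that every ex-post \fcm{} assignment must satisfy, and then contradict it with the single positive-probability realization under $1\impord{}2\impord{}3$, which forces $p_{3,c}\ge 1/6$. The same marginal argument would also repair the paper's proof verbatim on its own instance (there $p_{3,c}=1/6$ as well), so what your approach buys is robustness to arbitrary decompositions at essentially no extra cost; the paper's version is shorter but relies on the reader identifying that the exhibited realization's failure propagates to every decomposition.
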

\begin{proof}
    We show it with the instance with following profile:
	\begin{center}
		$\succ_1:~a,b,c$;\quad$\succ_2:~a,c,b$;\quad$\succ_3:~b,a,c$.
	\end{center}
	The following $A$ is one possible output of RP when the priority order is $2\impord{}1\impord{}3$.
	In $A$, we see that only agent $2$ gets her first choice.
	However, there exists another assignment $A'$ as shown below where both agents $1$ and $3$ get their first choices, which means that RP does not satisfy \fcm{}.
			
	\vspace{1em}\noindent
	\begin{minipage}{\linewidth}
		\centering
		\begin{minipage}{0.4\linewidth}
			\begin{center}
				\centering
				\begin{tabular}{c|ccc}
					\multicolumn{4}{c}{Assignment $A$}\\
					&  a & b & c\\\hline
					$1$ & $0$ & $1$ & $0$\\
					$2$ & $1$ & $0$ & $0$\\
					$3$ & $0$ & $0$ & $1$\\
				\end{tabular}
			\end{center}
		\end{minipage}
		\begin{minipage}{0.4\linewidth}
			\centering
			\begin{center}
				\centering
				\begin{tabular}{c|ccc}
					\multicolumn{4}{c}{Assignment $A'$}\\
					&  a & b & c\\\hline
					$1$ & $1$ & $0$ & $0$\\
					$2$ & $0$ & $0$ & $1$\\
					$3$ & $0$ & $1$ & $0$\\
				\end{tabular}
			\end{center}
		\end{minipage}
	\end{minipage}\vspace{1em}
\end{proof}

\begin{restatable}{prop}{propps}{}\label{prop:ps}
	 PS does not satisfy \ffcm{} (\fcm{}) ex-post.
\end{restatable}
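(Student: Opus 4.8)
The plan is to reuse the three-agent, three-item instance from \Cref{prop:rp} with preferences $\succ_1: a, b, c$; $\succ_2: a, c, b$; and $\succ_3: b, a, c$, and to show that the PS output on this instance cannot be written as a convex combination of \ffcma{} deterministic assignments. First I would run the simultaneous eating algorithm that defines PS: initially agents $1$ and $2$ both eat $a$ while agent $3$ eats $b$, so $a$ is exhausted at time $1/2$; then agent $1$ switches to $b$ and agent $2$ to $c$, and $b$ (now eaten by agents $1$ and $3$) is exhausted at time $3/4$; finally all three agents eat $c$ until time $1$. Carrying out these routine integrations yields exactly the random assignment $P$ already displayed in \Cref{prop:sdcrf}, in which $p_{3,b}=3/4$ and $p_{1,b}=1/4$.

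Next I would pin down the \ffcm{} benchmark for this instance. Since agents $1$ and $2$ share the same top item $a$, at most one of them can receive her first choice, so the maximum number of agents assigned their top item is $2$. I would then argue that every deterministic assignment achieving this maximum must allocate $b$ to agent $3$: if agent $3$ does not receive $b$, the only remaining way to reach two first choices is for both $1$ and $2$ to get $a$, which is impossible, so at most one agent gets her top item and the assignment is not \ffcma{}. Hence the only \ffcma{} deterministic assignments are $1\gets a,\,2\gets c,\,3\gets b$ and $1\gets c,\,2\gets a,\,3\gets b$, both of which assign $b$ to agent $3$.

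Finally I would conclude as follows. Because ex-post \fcm{} requires $P$ to be a convex combination of \ffcma{} deterministic assignments, and every such assignment gives agent $3$ item $b$ with probability $1$, any ex-post \ffcma{} random assignment must satisfy $p_{3,b}=1$. But the PS output has $p_{3,b}=3/4<1$, a contradiction, so PS does not satisfy \fcm{} ex-post.

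I expect the only nontrivial step to be the characterization of the \ffcma{} deterministic assignments; the eating-algorithm computation is routine once the order of item exhaustion is tracked carefully, and the final contradiction is immediate. An equally short alternative for the last step is simply to note that $p_{1,b}=1/4>0$ while agent $1$ receives $b$ in no \ffcma{} assignment, which already rules out any \ffcma{} decomposition of $P$.
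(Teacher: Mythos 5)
Your proposal is correct, and it uses the same instance and the same PS outcome as the paper, but your concluding step is genuinely different. The paper's proof asserts that the specific non-\ffcma{} assignment $A$ from \Cref{prop:rp} (namely $1\gets b$, $2\gets a$, $3\gets c$) must appear in any convex decomposition of the PS outcome $P$, and concludes from there; this forced-membership claim is true (the support of $P$ admits only three permutation matrices, $1\gets a,2\gets c,3\gets b$; $1\gets b,2\gets a,3\gets c$; and $1\gets c,2\gets a,3\gets b$, and matching the first row of $P$ forces the unique weights $1/2,1/4,1/4$, so $A$ has weight $1/4$), but the paper leaves that enumeration implicit. You instead characterize the \ffcma{} assignments---exactly the two assignments that give $b$ to agent $3$ and $a$ to one of agents $1,2$---and observe that any convex combination of them must have $p_{3,b}=1$ (equivalently $p_{1,b}=0$), which the PS output violates since $p_{3,b}=3/4$ and $p_{1,b}=1/4$. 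Your route buys self-containment: it needs no claim about which assignments appear in a decomposition, only a marginal constraint satisfied by every \ffcma{} assignment, so the contradiction is immediate where the paper's argument requires the reader to verify the forced membership of $A$. The paper's route is slightly more economical on the page, since it reuses the assignments $A$ and $A'$ already displayed in \Cref{prop:rp} rather than re-deriving the \fcm{} benchmark, but your argument is the more rigorous of the two as written.
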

\begin{proof}
    {\bf(neither \efcr{} nor \fcm{})}
    we continue to use the instance with the profile in~\Cref{prop:rp}.
	The following $P$ is the outcome of PS.
	We see that $A$ in~\Cref{prop:rp} must be among the deterministic assignments which constitute the convex combination for $P$, which means that PS does not satisfy \fcm{}, and therefore it is not \efcr{}.
	\begin{center}
				\centering
				\begin{tabular}{c|ccc}
					\multicolumn{4}{c}{Assignment $P$}\\
					&  a & b & c\\\hline
					$1$ & $1/2$ & $1/4$ & $1/4$\\
					$2$ & $1/2$ & $0$ & $1/2$\\
					$3$ & $0$ & $3/4$ & $1/4$\\
				\end{tabular}
			\end{center}
\end{proof}

\begin{restatable}{prop}{propnbm}{}\label{prop:nbm}
BM with a uniform probability distribution over all the priority orders of agents does not satisfy \fefcr{} (\efcr{}), \fsdcrf{} (\sdcrf{}) or \fsdwef{} (\sdwef{}), but satisfies \fetep{} (\etep{}).
\end{restatable}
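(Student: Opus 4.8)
The plan is to settle the three negative claims with the single instance $R$ of \Cref{fig:fhr}, and to prove \etep{} separately by a lottery-symmetry argument. First I would identify $\mathbb E(\mathrm{BM}(R))$ under the uniform priority distribution, recalling that in BM each agent applies for her $r$-th ranked item in round $r$ irrespective of whether it is still available. Conditioning on which of agents $3$ through $6$ wins $c$ in round $1$ (each with probability $1/4$ by symmetry), the dynamics are forced: when agent $6$ loses $c$ she applies in turn for her already-taken choices $a,b$ and reaches $d,e$ only in rounds $4,5$, by which point the remaining two of $\{3,4,5\}$ have claimed $d,e$ in rounds $2,3$, so agent $6$ always ends with $f$; when agent $6$ wins $c$, agents $3,4,5$ split $\{e,d,f\}$. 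Averaging gives exactly the assignment $P$ of \Cref{prop:impefr}, in particular $p_{6,c}=1/4$, $p_{6,f}=3/4$, with agents $3,4,5$ each receiving $(c,d,e,f)=(1/4,1/3,1/3,1/12)$.

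With $\mathbb E(\mathrm{BM}(R))=P$ in hand, the three failures follow. \Cref{prop:impefr} already shows $P$ is not \sdwef{}, so BM is not \sdwef{}. For \efcr{}, I would show every \fhcr{} assignment on $R$ gives agent $6$ either $c$ or $d$: by \Cref{dfn:fhcr} we have $\tps{A}{1}=\{a,b,c\}$ and agent $6$ tops only $c$ among these, so if she misses $c$ then at $r=2$ the item $d=\tp{6}{\{d,e,f\}}$ lies in $\tps{A}{2}$ while agents $3,4,5$ top $e$, forcing $d$ to agent $6$. Hence $p_{6,f}=0$ in every \fhcr{} deterministic assignment, so $P$ (with $p_{6,f}=3/4$) admits no \fhcr{} decomposition and is not \efcr{}. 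For \sdcrf{}, I would apply \Cref{dfn:sdcrf} to $P$ directly: one checks $M_{P,2}=\{d,e,f\}$, that $6\in E_{P,2}(d)$, and that $d\in M_{P,3}$, yet $\sum_{o'\in\ucs(\succ_6,d)}p_{6,o'}=p_{6,c}=1/4\neq1$, violating \sdcrf{}.

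For \etep{}, fix agents $j,k$ with common prefix $o_1\succ\cdots\succ o_m$; I must show $p_{j,o_r}=p_{k,o_r}$ for every $r\le m$. The crux is a tail-independence lemma: whether $j$ receives $o_r$ is decided within rounds $1,\dots,r\le m$ and is a function only of the common prefix, the other agents' (fixed) preferences, and the priority order, independent of the tails of $\succ_j,\succ_k$. This holds because through round $m$ both $j$ and $k$ only ever apply for common-prefix items, so the entire execution state (assigned agents, taken items, remaining supply) up to round $m$ never consults a position beyond $m$ in $\succ_j$ or $\succ_k$; I would prove this by induction on the round number. Granting the lemma, swapping $j$ and $k$ inside a priority order is a measure-preserving involution of the uniform distribution that exchanges the events ``$j$ gets $o_r$'' and ``$k$ gets $o_r$'' (using anonymity of BM, since $j$ and $k$ are interchangeable on the common prefix), so the two events are equiprobable and $p_{j,o_r}=p_{k,o_r}$.

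I expect the \etep{} direction to be the main obstacle, and within it the tail-independence lemma: one must argue that altering the tails of $j$ or $k$ cannot perturb the allocation of any common-prefix item to $j$ or $k$, even though it can change the downstream allocation of other items. The inductive claim that the round-by-round state through round $m$ depends only on the common prefix and the other agents resolves this cleanly. By contrast, once $\mathbb E(\mathrm{BM}(R))$ is pinned down as $P$, the three impossibility parts are routine.
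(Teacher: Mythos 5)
Your proof is correct, but it takes a genuinely different route from the paper's. The paper disposes of the three negative claims almost entirely by reduction to earlier results: \sdwef{} fails because BM satisfies \efr{} (by \citeA{Ramezanian2021:Ex-post}) and \etep{}, and \Cref{prop:impefr} rules out that combination; \sdcrf{} fails because BM is not \sdopt{} (by \citeA{Chen2021:Theprobabilistic}) and \sdcrf{} implies \sdopt{} by \Cref{prop:sdcrf}; and \efcr{} fails because the circled assignment of \Cref{fig:fhr}, a possible BM outcome under the priority $1\impord{}2\impord{}3\impord{}4\impord{}5\impord{}6$, is not \fhcr{}. You instead compute $\mathbb E(\mathrm{BM}^u(R))$ explicitly (your computation, including the naive-BM dynamic in which agent $6$ wastes rounds applying for the exhausted items $a,b$, is correct and does yield the assignment $P$ of \Cref{prop:impefr}) and read all three failures off $P$ itself. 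This buys you two things. First, self-containedness: you need neither of the two cited external results. Second, and more substantively, your \efcr{} argument is tighter than the paper's: since Birkhoff--von Neumann decompositions are not unique, exhibiting one non-\fhcr{} assignment in the priority-order decomposition of $P$ does not by itself show that \emph{no} decomposition into \fhcr{} assignments exists; your observation that every \fhcr{} assignment gives agent $6$ either $c$ or $d$, hence $p_{6,f}=0$ in any convex combination of \fhcr{} assignments while $p_{6,f}=3/4$ in $P$, closes this gap. The cost is that all three negative parts now hinge on the explicit computation of the expected BM outcome, which the paper avoids. On \etep{} the two proofs coincide: both are the swap-the-pair-in-the-priority argument, and your tail-independence lemma is precisely a formalization of the step the paper dismisses with ``it is easy to see,'' namely that in rounds $1,\dots,m$ both agents only ever apply for common-prefix items, so swapping them in the priority order swaps their allocations on the prefix.
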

\begin{proof}
    We refer to BM with a uniform probability distribution as BM$^u$.

    \vspace{0.5em}\noindent{\bf(not \efcr{})} For the instance with profile in~\Cref{fig:fhr}, the assignment indicated by circled item is one possible outcome of BM given the priority $1\impord{}2\impord{}3\impord{}4\impord{}5\impord{}6$, which does not satisfy \fhcr{} as we discuss in~\Cref{eg:fcr}, and therefore it is not \efcr{}.

    \vspace{0.5em}\noindent{\bf (not \sdcrf{})} This follows from the fact that it is not \sdopta{}~\cite{Chen2021:Theprobabilistic}.

    \vspace{0.5em}\noindent{\bf (not \sdwef)} This follows from~\Cref{prop:impefr} and the fact that it satisfies \etep{} (shown below) and \efr{}~\cite{Ramezanian2021:Ex-post}.

    \vspace{0.5em}\noindent{\bf (\etep{})}
    Let $P=\mathbb E(\text{BM}^u(R))$ for any given profile $R$.
    For any agents $j,k$ and their common prefix $\succ_{j,k}$, given a priority order $\impord{}$ with $j\impord{} k$, if $j$ gets an item $o$ appearing in $\succ_{j,k}$, then it is easy to see that $k$ gets $o$ given $\impord{}'$ which just swaps the positions of $j$ and $k$ in $\impord{}$.
	Due to the assumption, we know that any such pair of priorities $\impord{}$ and $\impord{}'$ have the equal probability to be drawn, and therefore we have that $p_{j,o}=p_{k,o}$, which means \etep{}.
\end{proof}

We recall~\Cref{lem:BM} from \citeA{Bogomolnaia01:New} used in~\Cref{prop:ebm} and the proof of~\Cref{prop:sdcrf}.

\begin{lemma}\label{lem:BM}
	\cite{Bogomolnaia01:New}
	Given a preference profile $R$ and a random assignment $P$, let $\tau(P,R)$ be the relation over all the items such that: if there exists an agent $j$ such that $o_a\succ_j o_b$ and $p_{j,o_b}>0$, then $o_a\tau(P,R)o_b$.
	The  random assignment $P$ is \sdopta{} if and only if $\tau(P,R)$ is acyclic.
\end{lemma}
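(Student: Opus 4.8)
The plan is to prove the equivalence in its contrapositive form: the random assignment $P$ fails \sdopta{} if and only if $\tau(P,R)$ admits a directed cycle $o_1\,\tau\,o_2\,\tau\cdots\tau\,o_m\,\tau\,o_1$ of distinct items. I would treat the two implications separately, since they call for different techniques.

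For the ``cycle $\Rightarrow$ inefficient'' direction, suppose such a cycle exists. By the definition of $\tau$, for each $i$ (indices modulo $m$) there is an agent $j_i$ with $o_i\succ_{j_i}o_{i+1}$ and $p_{j_i,o_{i+1}}>0$. I would build a dominating assignment $Q$ by a small ``rotation'': for a sufficiently small $\epsilon>0$ (say $\epsilon<\min_i p_{j_i,o_{i+1}}$), let agent $j_i$ shift $\epsilon$ units of probability off $o_{i+1}$ onto the strictly preferred item $o_i$, leaving all other entries unchanged. The key checks are that $Q$ remains doubly stochastic and sd-dominates $P$. Row sums are preserved because each $j_i$ both gains and loses $\epsilon$; column sums are preserved because each item $o_i$ receives $+\epsilon$ from $j_i$ and $-\epsilon$ from $j_{i-1}$, so the net change on every column is zero; and $\epsilon$ small keeps all entries in $[0,1]$. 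Since every elementary move pushes mass from a less- to a more-preferred item, $Q_j\sd{j}P_j$ for all $j$ with $Q\neq P$, so $P$ is not \sdopta{}.

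For the harder ``inefficient $\Rightarrow$ cycle'' direction, suppose $Q_j\sd{j}P_j$ for all $j$ with $Q\neq P$, and set $\Delta=Q-P$, a nonzero matrix with all row and column sums equal to $0$. The crux is a per-agent transport argument: for each agent $j$, because the cumulative sums of the entries of $Q_j-P_j$ taken in the order $\succ_j$ are all nonnegative and sum to $0$, I can decompose $\Delta_j$ as a sum of elementary moves $\delta\,(e_{o_a}-e_{o_b})$ with $\delta>0$ and $o_a\succ_j o_b$, chosen so that every item that is \emph{lost from} is a net loser (so $q_{j,o_b}<p_{j,o_b}$, hence $p_{j,o_b}>0$) and every item \emph{gained} is a net gainer. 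This is a standard majorization / earth-mover construction, matching gainers with lower-ranked losers greedily from the top. Each such move then certifies $o_a\,\tau\,o_b$. I would orient an edge $o_a\to o_b$ of weight $\delta$ for every move; summing over agents, the total weight leaving any item equals the total weight entering it, which is exactly the zero-column-sum condition on $\Delta$ together with the fact that each $(j,o)$ cell is a pure source or pure sink in the decomposition. Hence the weighted item digraph is a nonzero circulation, a nonzero circulation contains a directed cycle, and that cycle is precisely a cycle of $\tau(P,R)$.

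I expect the main obstacle to be the transport decomposition in the second direction: showing that sd-dominance for a single agent can be realized by moves that always go strictly upward \emph{and} only withdraw mass from items the agent genuinely holds (so that $p_{j,o_b}>0$ and each move reads as a $\tau$-edge). Getting this right, in particular ensuring each cell is purely a source or purely a sink so the column-sum condition becomes a clean flow-conservation identity, is the technical heart. Once it is in hand, the passage from ``nonzero circulation'' to ``directed cycle'' is routine. The forward direction is comparatively mechanical, the only care being the choice of $\epsilon$ and the bookkeeping that an agent appearing on several edges of the cycle still receives a net sd-improvement.
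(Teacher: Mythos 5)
The paper never proves \Cref{lem:BM}: it is imported verbatim from \citeA{Bogomolnaia01:New} and only invoked (in \Cref{prop:sdcrf} and \Cref{prop:ebm}), so the only meaningful comparison is against the standard argument in that source. Your proof is correct. Your forward direction (cycle $\Rightarrow$ dominated) is exactly the classical $\epsilon$-rotation argument. Your converse, however, takes a genuinely different route: the classical proof is a lightweight chain-chasing argument --- if $Q\neq P$ dominates $P$, pick an item $o_1$ on which some agent gains mass; since column sums are fixed, some agent $j$ loses mass on $o_1$, so $p_{j,o_1}>0$, and nonnegativity of $j$'s prefix sums forces $j$ to gain on some $o_2\succ_j o_1$, giving $o_2\,\tau(P,R)\,o_1$; iterating over the finite item set closes a cycle --- whereas you build a full per-agent transport decomposition and aggregate it into a nonzero circulation on the item digraph before extracting a cycle. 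Your version needs more machinery (the source/sink decomposition and flow conservation) but is sound and yields a stronger object, a global circulation certificate, rather than a single cycle. One assertion in your forward direction does need a line of justification: you claim $Q\neq P$ without argument. This holds because no single agent can certify every edge of the cycle (that would give $o_1\succ_j o_2\succ_j\cdots\succ_j o_m\succ_j o_1$, contradicting that $\succ_j$ is a linear order), so some agent's certified edges form a proper nonempty subset of the cycle edges, and her row change cannot telescope to zero; relatedly, note that entries staying at most $1$ is automatic once nonnegativity and unit row sums hold. These are minor repairs, not gaps.
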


\begin{restatable}{prop}{propebm}{}\label{prop:ebm}
\am{} with a uniform probability distribution over all the priority orders of agents does not satisfy \fefr{} (\efr{}), \fsdopt{} (\sdopt{}), \fsdcrf{} (\sdcrf{}), \fsdrf (\sdrf{}), \fsdef{} (\sdef{}) or \fsdssp{} (\sdssp{}).
\end{restatable}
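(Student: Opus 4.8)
The plan is to refute all six properties with a handful of small profiles, using the known implications to avoid redundant work. Write $P=\mathbb E(\am(R))$ for the expected assignment produced by the uniform tie-breaking. Two reductions come for free: since \sdrf{} implies both \efr{} and \sdopt{}, refuting either gives ¬\sdrf{}; and since \sdcrf{} implies \sdopt{} (\Cref{prop:sdcrf}), refuting \sdopt{} gives ¬\sdcrf{}. So the genuine work is to produce witnesses for ¬\efr{}, ¬\sdopt{}, ¬\sdef{}, and ¬\sdssp{}. For ¬\efr{} I would reuse the profile in \Cref{fig:fhr}. The proof of \Cref{prop:impefr} already shows that every \fhr{} deterministic assignment gives agent $6$ either $c$ or $f$, so every ex-post-\fhr{} assignment has $p_{6,d}=0$; but the trace in \Cref{eg:amnfhr} shows that whenever agent $6$ loses the round-$1$ lottery for $c$ (probability $3/4$), $d$ becomes her unique top remaining item and she receives it, so $p_{6,d}=3/4>0$. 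Hence $P$ is not a convex combination of \fhr{} assignments, giving ¬\efr{} and therefore ¬\sdrf{}.

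For ¬\sdopt{} I would use the four-agent profile $\succ_1\colon a,b,c,d$; $\succ_2\colon b,a,c,d$; $\succ_3\colon a,b,d,c$; $\succ_4\colon b,a,d,c$. In round $1$ agents $1,3$ contest $a$ and agents $2,4$ contest $b$; the two losers then contest the leftover pair $\{c,d\}$, where $\{1,2\}$ prefer $c$ while $\{3,4\}$ prefer $d$. Computing marginals, with positive probability (when $3,4$ win round $1$) both losers lie in $\{1,2\}$, forcing one of them to take $d$ although she prefers $c$, so $p_{1,d}>0$; symmetrically both losers lie in $\{3,4\}$ with positive probability, forcing one to take $c$ although she prefers $d$, so $p_{3,c}>0$. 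Consequently $\tau(P,R)$ contains both $c\,\tau\,d$ (via agent $1$) and $d\,\tau\,c$ (via agent $3$), a cycle, so by \Cref{lem:BM} $P$ is not \sdopta{}. This yields ¬\sdopt{}, hence ¬\sdcrf{} and again ¬\sdrf{}.

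For ¬\sdef{} and ¬\sdssp{} I would reuse the profile $R$ of \Cref{prop:impefcr1} ($\succ_1=\succ_2\colon a,c,b,d$, $\succ_3\colon a,b,c,d$, $\succ_4\colon b,a,d,c$) with the misreport $\succ'_3=\succ_4$ from \Cref{prop:impefcr2}. A trace of $\am(R)$ gives agent $3$ the allocation assigning probability $1/3$ to each of $a,c,d$, while agent $4$ receives $b$ with probability $1$. Since $\sum_{o'\in\ucs(\succ_3,b)}p_{3,o'}=1/3<1=\sum_{o'\in\ucs(\succ_3,b)}p_{4,o'}$, we have $P_3\nsd{3}P_4$, i.e. agent $3$ sd-envies agent $4$, refuting \sdef{}. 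Under the misreport agent $3$ instead obtains $b$ and $d$ with probability $1/2$ each; evaluating both allocations under her true preference, the truthful allocation is strictly larger on $\ucs(\succ_3,a)$ but strictly smaller on $\ucs(\succ_3,b)$, so the two are sd-incomparable. Hence $\am(R)_3\nsd{3}\am(R')_3$, which refutes \sdssp{}; this is consistent with \am{} being \sdsp{} (\Cref{thm:amp}) precisely because the manipulation yields an incomparable, not a dominating, allocation.

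The main obstacle is the ¬\sdopt{} step. On many profiles \am{} reproduces the (sd-efficient) probabilistic-serial outcome — in particular whenever all agents contesting a popular item agree on the ranking of the remaining items, no inefficiency appears — so one must engineer a profile in which two groups with opposing rankings over a lower pair $\{c,d\}$ are each pushed, with positive probability, into a same-group collision on that pair; verifying that $\tau(P,R)$ genuinely contains a $2$-cycle rather than a single edge is where care is needed. A secondary subtlety is the ¬\sdssp{} step: because \am{} is already \sdsp{}, the witnessing manipulation cannot strictly dominate the truthful allocation, so the violation must be exhibited as sd-incomparability, which requires checking that the truthful allocation wins on one upper-contour set while the manipulated one wins on another.
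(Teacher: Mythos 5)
Your proposal is correct, and every computation checks out (I verified the marginals in your $4$-agent \sdopt{} profile and the traces on the profile of \Cref{prop:impefcr1}), but it takes a genuinely different route from the paper on three of the four pieces of real work. For \textbf{not \efr{}}, the paper simply observes that one realized outcome of \am{} is the non-\fhr{} assignment $A^*$ of \Cref{eg:fcr}; your argument via $p_{6,d}=3/4>0$ while every \fhr{} assignment gives agent $6$ either $c$ or $f$ is strictly tighter, since ex-post \fhr{} quantifies over \emph{all} Birkhoff--von Neumann decompositions of $P$, not just the one induced by \am{}'s internal lotteries, so a single bad realization does not by itself settle the matter. For \textbf{not \sdopt{}}, the paper builds a $10$-agent profile and exhibits two realized outcomes giving $p_{7,y}>0$ and $p_{10,x}>0$, then applies \Cref{lem:BM}; your $4$-agent profile does the same job with the same lemma (the cycle $c\,\tau\,d$ via agent $1$ and $d\,\tau\,c$ via agent $3$) and is considerably smaller. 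For \textbf{not \sdef{}} and \textbf{not \sdssp{}}, the paper does not compute anything: it invokes \Cref{prop:impefcr1} and \Cref{prop:impefcr2} together with \Cref{thm:amp} (\am{} satisfies \efcr{} and \etep{}), whereas you re-derive the violating allocations by direct trace; your version is self-contained and does not lean on \Cref{thm:amp}, at the cost of redoing arithmetic the impossibility propositions already encode. The reductions you use for \textbf{not \sdcrf{}} and \textbf{not \sdrf{}} (via \Cref{prop:sdcrf} and the implication \sdrf{} $\Rightarrow$ \sdopt{}, \efr{}) coincide with the paper's. One stylistic note: your closing remark that the \sdssp{} violation must appear as sd-incomparability because \am{} is \sdsp{} is exactly the right sanity check, and it is implicit but never stated in the paper's proof.
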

\begin{proof}
    {\bf(not \efr{})} For the profile in~\Cref{fig:fhr}, one of its possible outcome is the assignment $A^*$ in~\Cref{eg:fcr} which does not satisfy \fhr{}, and therefore \am{} is not \efr{}.

    \vspace{0.5em}\noindent{\bf (not \sdopt{})} We show it by the instance with following $R$:
	\begin{center}
		\begin{tabular}{l}
			$\succ_1$: $a,b,c,$others;\quad$\succ_2$: $a,b,d,$others;\quad$\succ_3$: $a,b,e,$others;\\
			$\succ_4$: $a,b,f,$others;\quad$\succ_{5\text{-}7}$: $a,b,g,c,d,x,y,$others;\\
			$\succ_{8\text{-}10}$: $a,b,h,e,f,y,x,$others.
		\end{tabular}
	\end{center}
	The following are two possible outcomes of \am{} where $j\gets o$ means that agent $j$ gets item $o$:
	\begin{equation*}
		\begin{split}
			A:&1\gets a,2\gets b,3\gets e,4\gets f,5\gets g,\\
			&6\gets c,7\gets d,8\gets h,9\gets y,10\gets x.\\
			A':&1\gets c,2\gets d,3\gets a,4\gets b,5\gets g,\\
			&6\gets x,7\gets y,8\gets h,9\gets e,10\gets f\\
		\end{split}
	\end{equation*}
	Let $P=\mathbb E(\text{AM}(R))$.
	Then $p_{7,y}>0$ and $p_{10,x}>0$.
	With $x\succ_7 y$ and $y\succ_{10} x$ and~\Cref{lem:BM}, we have that $x\tau(P,R)y$ and $y\tau(P,R)x$, which means that $P$ is not \sdopta{}.
	
	\vspace{0.5em}\noindent{\bf (neither \sdcrfa{} nor \sdrfa{})} It follows from the fact that \am{} is not \sdopta{}.
	
	\vspace{0.5em}\noindent{\bf (neither \sdefa{} nor \sdsspa{})} This follows from \Cref{prop:impefcr1,prop:impefcr2} and the fact that it satisfies \efcr{} and \etep{}.
\end{proof}

\begin{restatable}{prop}{propabm}{}\label{prop:abm}
\abm{} satisfies \etep{}, but does not satisfy \fefr{} (\efr{}), \fsdopt{} (\sdopt{}), \fsdcrf{} (\sdcrf{}), \fsdrf (\sdrf{}), or \fsdef{} (\sdef{}).
\end{restatable}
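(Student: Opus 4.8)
The plan is to prove the single positive claim (\etep{}) by a coupling argument over priority orders, and to obtain each of the five negative claims by combining the characterization in \Cref{thm:abmchar} with the instances and auxiliary results already developed for \am{} in \Cref{prop:ebm}. Throughout I write ABM$^u$ for the adaptive Boston mechanism with the uniform distribution over priority orders, matching the convention used for BM$^u$ in \Cref{prop:nbm}; this is the intended reading, since \etep{} plainly fails for distributions concentrated on a single order.

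For \etep{}, fix a profile $R$, a pair $j,k$ with common prefix $\succ_{j,k}$ over an item set $M'$, and let $P=\mathbb{E}(\text{ABM}^u(R))$. I would mirror the swap argument of \Cref{prop:nbm}: pair each priority order $\impord{}$ with the order $\impord{}'$ obtained by transposing $j$ and $k$. The key lemma is that for every $o\in M'$, agent $j$ receives $o$ in $\abm{}^{\impord{}}(R)$ if and only if agent $k$ receives $o$ in $\abm{}^{\impord{}'}(R)$. The reason is that, as long as $M'$ still has available items and neither agent is yet satisfied, both $j$ and $k$ apply for the current top remaining item of $M'$, because $M'$ occupies the top $|M'|$ positions of both lists and they rank its items identically; hence the allocation of all $M'$-items is decided by a competition in which $j$ and $k$ play symmetric roles, and transposing their priorities transposes exactly their outcomes on $M'$ while leaving every other agent's $M'$-win unchanged. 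Since $\impord{}$ and $\impord{}'$ are equiprobable, summing over the bijection yields $p_{j,o}=p_{k,o}$ for all $o\in M'$, i.e.\ \etep{}.

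The five negative claims then follow with little further work. For \emph{not} \efr{}: every ABM output is \efcr{} by \Cref{thm:abmchar}, so the \fhcr{} assignment $A^*$ of \Cref{eg:fcr} is an ABM-computable (hence positive-probability) outcome on the profile of \Cref{fig:fhr}; as $A^*$ violates \fhr{}, ABM$^u$ is not \efr{}. For \emph{not} \sdopt{}: reusing the ten-agent instance of \Cref{prop:ebm}, both deterministic outputs $A$ and $A'$ there are \fhcr{} and so, again by \Cref{thm:abmchar}, occur with positive probability under ABM$^u$, giving $p_{7,y}>0$ and $p_{10,x}>0$; with $x\succ_7 y$ and $y\succ_{10}x$, \Cref{lem:BM} exhibits a cycle in $\tau(P,R)$, so $P$ is not \sdopt{}. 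Because \sdcrf{} implies \sdopt{} (\Cref{prop:sdcrf}) and \sdrf{} implies \sdopt{}, ABM$^u$ satisfies neither. Finally, ABM$^u$ satisfies \efcr{} (\Cref{thm:abmchar}), so by the impossibility \Cref{prop:impefcr1} it cannot be \sdef{}.

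The main obstacle is the \etep{} coupling. One must check that the adaptive ``eager'' behavior keeps $j$ and $k$ applying for the same item throughout the allocation of $M'$: this holds because neither agent looks outside $M'$ until all of $M'$ is exhausted, after which no $M'$-share can still be assigned, so divergence over $M\setminus M'$ is irrelevant. The genuinely delicate point is what happens once one of the pair, say $j$ under $\impord{}$ (equivalently $k$ under $\impord{}'$), wins an $M'$-item and is removed: the surviving agent must inherit exactly the priority slot that keeps the two executions step-for-step mirror images on $M'$, and making this priority-inheritance precise is what renders the bijection $\impord{}\mapsto\impord{}'$ outcome-preserving. The negative parts are essentially immediate consequences of \Cref{thm:abmchar,lem:BM,prop:sdcrf,prop:impefcr1}.
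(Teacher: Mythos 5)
Your proposal is correct, and on five of the six claims it is essentially the paper's own proof: the \etep{} argument via the transposition $\impord{}\mapsto\impord{}'$ of $j$ and $k$ (with the observation that the two executions mirror each other on the common prefix) is exactly the paper's coupling; not-\sdopt{} via the ten-agent instance of \Cref{prop:ebm}, \Cref{thm:abmchar}, and the cycle detected by \Cref{lem:BM} is identical; not-\sdcrf{} and not-\sdrf{} by reduction to \sdopt{} are identical; and not-\sdef{} via \efcr{} plus \Cref{prop:impefcr1} is identical. The one place you genuinely diverge is not-\efr{}. The paper derives it indirectly: \abm{}$^u$ satisfies \etep{} (this proposition) and \sdsp{} (by \citeA{Mennle2021:Partial}, since the uniform distribution has full support), and \citeA{Ramezanian2021:Ex-post} showed that no mechanism satisfies \efr{}, \etep{}, and \sdsp{} simultaneously. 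Your route is more self-contained—exhibit the non-\fhr{} outcome $A^*$ of \Cref{eg:fcr} in the support via \Cref{thm:abmchar}—but as stated it has a small formal slip: ex-post properties are defined by the \emph{existence} of some decomposition into \fhr{} assignments, so producing one decomposition (the mechanism's own lottery) that contains a bad assignment does not by itself rule out a different, all-\fhr{} decomposition of the same expected assignment. The patch is one line: on the \Cref{fig:fhr} profile every \fhr{} assignment gives agent $6$ either $c$ or $f$ (as the proof of \Cref{prop:impefr} shows), so any convex combination of \fhr{} assignments has $p_{6,d}=0$, whereas your positive-probability realization $A^*$ forces $p_{6,d}>0$. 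Since the paper uses the same shortcut for \am{} in \Cref{prop:ebm}, this is a matter of rigor rather than a wrong idea; with the patch, both routes are sound—the paper's avoids the decomposition issue but leans on two external results, while yours needs only \Cref{thm:abmchar} and the running example.
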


\begin{proof}

We refer to \abm{} with a uniform probability distribution as \abm{}$^u$.

\vspace{0.5em}{\bf\noindent(not \efhr{})} It follows from the fact that no mechanism satisfies \efhr{} with \etep{} and \sdsp{}~\cite{Ramezanian2021:Ex-post}.

\vspace{0.5em}\noindent{\bf (not \sdopt{}, \sdcrfa{}, or \sdrfa{})}
Consider the preference profile $R$ , and the two deterministic assignments $A$, $A'$ in~\Cref{prop:ebm} which are \fhcr{} since \am{} satisfies \efcr{}.
According to the proof of~\Cref{thm:abmchar}, we know that any \fhcr{} assignment is a possible output of \abm{} with certain priority.
Then $A$ and $A'$ are among the deterministic assignments that constitute convex combinations for $P=\mathbb E(\text{\am}^u(R))$ as the priority is chosen uniformly.
From the proof of~\Cref{prop:ebm}, we know that $P$ is not \sdopt{}, therefore not \sdcrf{} and \sdrf{}.

\vspace{0.5em}{\bf\noindent(not \sdef{})} It follows from~\Cref{prop:impefcr1} and the fact that ABM satisfies \efcr{}.

\vspace{0.5em}{\bf\noindent(\etep{})} Let $R$ be any preference profile. 
For any two agents $j$ and $k$ and a given priority $\impord{}$ with $j\impord{}k$, let $\impord{}'$ be another priority which only swaps $j$ and $k$ in $\impord{}$, $A=\text{ABM}^{\impord{}}(R)$, and  $A'=\text{ABM}^{\impord{}'}(R)$.
If $A(j)$ appears in $\succ_{j,k}$, it is easy to see that $A'(k)=A(j)$.
Furthermore, if $A(k)$ also appears in $\succ_{j,k}$, then $A'(j)=A(k)$.
Let $P=\mathbb E(\text{\am}^u(R))$.
Since the pair of priorities like $\impord{}$ and $\impord{}'$ have the same probability to be chosen, we see that for any item $o$ appearing in $\succ_{j,k}$, $p_{j,o}=p_{k,o}$, which means that $P$ satisfies \etep{}.
\end{proof}

\begin{restatable}{prop}{proppr}{}\label{prop:pr}
 PR does not satisfy \fefcr{} (\efcr{}) or \fsdcrf{} (\sdcrf{}), but satisfies \etep{}.
\end{restatable}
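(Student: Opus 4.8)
The plan is to settle all three claims on the single profile $R$ of \Cref{fig:fhr}, exploiting that PR is \sdrf{} (hence \efr{}) together with the \etep{} I establish here, which jointly pin down the output of PR exactly. For the \etep{} claim I would argue from the symmetry of PR, mirroring the \etep{} proofs given for BM and \abm{} in \Cref{prop:nbm,prop:abm}: PR is anonymous and its randomization is symmetric across agents, so for any two agents $j,k$ with common prefix $\succ_{j,k}$ over some $M'$, each outcome in which $j$ receives an item $o\in M'$ is matched by an equiprobable outcome obtained by swapping the roles of $j$ and $k$, in which $k$ receives $o$; hence $p_{j,o}=p_{k,o}$ for every $o$ appearing in $\succ_{j,k}$. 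This is the one step that genuinely needs the internal definition of PR (which is not in the excerpt), so it is where I expect the most friction.

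Next I pin down the output. Since PR is \sdrf{}, its output is \efr{}; combined with \etep{}, \Cref{prop:impefr} shows that on $R$ the \emph{unique} assignment that is simultaneously \efr{} and \etep{} is the matrix $P$ displayed there (in particular $p_{6,c}=1/4$, $p_{6,f}=3/4$, and each of agents $3$–$5$ receives $c=1/4,d=1/3,e=1/3,f=1/12$). Hence $\text{PR}(R)=P$, and it remains only to verify that $P$ is neither \efcr{} nor \sdcrf{}. For the failure of \efcr{}, I claim no \fhcr{} deterministic assignment $A$ can have $A(6)=f$: one has $\tps{A}{1}=\{a,b,c\}$, and if agent $6$ does not receive $c$ in the first round, then among the remaining items $\{d,e,f\}$ her top item is $d=\tp{6}{M\setminus\tps{A}{1}}$, whereas agents $3$–$5$ rank $e$ above $d$, so $d\in\tps{A}{2}$ and agent $6$ is the only unassigned agent eager for $d$; thus $A(6)\in\{c,d\}$, never $f$. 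Since $p_{6,f}=3/4>0$, every decomposition of $P$ must put positive weight on assignments with $A(6)=f$, none of which is \fhcr{}, so $P$ is not a convex combination of \fhcr{} assignments.

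For the failure of \sdcrf{}, I would track the round sets of \Cref{dfn:sdcrf}. Here $E_{P,1}(d)=\emptyset$, and once $a,b,c$ are exhausted agent $6$'s top remaining item is $d$, so $6\in E_{P,2}(d)$ (indeed $E_{P,1}(d)\cup E_{P,2}(d)=\{1,2,6\}$). But $\sum_{k\in E_{P,1}(d)\cup E_{P,2}(d)}p_{k,d}=p_{1,d}+p_{2,d}+p_{6,d}=0<1$, so $d\in M_{P,3}$, the shares of $d$ having instead gone to agents $3$–$5$, who only become eager for $d$ at round $3$. Since $6\in E_{P,2}(d)$ with $2<3$ while $\sum_{o'\in\ucs(\succ_6,d)}p_{6,o'}=p_{6,c}=1/4\neq1$, the defining condition of \sdcrf{} is violated.

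Beyond the \etep{} step flagged above, the main obstacle is the careful bookkeeping of the sets $E_{P,r}(\cdot)$ and $M_{P,r}$ needed to exhibit the \sdcrf{} violation cleanly; everything else reduces to reading off $P$ from \Cref{prop:impefr} and to the short eagerness argument showing agent $6$ can never be allocated $f$ in an \fhcr{} assignment.
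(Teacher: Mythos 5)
Your two negative claims are handled correctly, and in fact slightly more carefully than the paper's own proof: the paper asserts that the specific circled assignment of \Cref{fig:fhr} ``must be among'' the decomposition of $P$, whereas you prove the cleaner statement that \emph{no} \fhcr{} assignment can give agent $6$ item $f$ (she must receive $c$ or $d$), so $p_{6,f}=3/4>0$ rules out \efcr{}; and your bookkeeping for the \sdcrf{} violation is the same as the paper's, except that your $E_{P,1}(d)\cup E_{P,2}(d)=\{1,2,6\}$ is actually more accurate than the paper's $E_{P,2}(d)=\{6\}$ (agents $1$ and $2$ also rank $d$ top among $\{d,e,f\}$; the discrepancy is harmless in both proofs since $p_{1,d}=p_{2,d}=0$). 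Your route to pinning down $\text{PR}(R)=P$ indirectly --- \sdrf{} (cited) implies \efr{}, and \efr{}${}+{}$\etep{} on this profile forces $P$ by the uniqueness argument inside \Cref{prop:impefr} --- is a genuinely different and legitimate move; the paper instead just writes down $\text{PR}(R)$ from the definition of PR.

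The genuine gap is exactly where you predicted friction: the \etep{} step. Your argument presumes that PR randomizes over deterministic assignments, so that an outcome in which $j$ gets $o$ can be swapped for an ``equiprobable'' outcome in which $k$ gets $o$. That is the right argument template for RP, BM, and \abm{} (lotteries over priority orders), but PR is not such a mechanism: it is a rank-based fractional (eating-style) mechanism whose output is a single random assignment computed deterministically, so there are no discrete equiprobable worlds to swap. The paper's proof closes this step differently: it invokes the \emph{equal-rank envy-freeness} property of PR established by \citeA{Chen2021:Theprobabilistic} --- for agents $j,k$ and any item $o$ with $\rk{j}{o}=\rk{k}{o}$, one has $\sum_{o'\succ_j o}p_{j,o'}+p_{k,o}\le\sum_{o'\in\ucs(\succ_j,o)}p_{j,o'}$ --- and applies it to items in the common prefix (where ranks coincide) in both directions to get $p_{j,o}=p_{k,o}$. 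This matters beyond the \etep{} claim itself: because your identification $\text{PR}(R)=P$ routes through \etep{}, the gap propagates to your \efcr{} and \sdcrf{} arguments as well; without either the Chen et al.\ property or PR's definition, the assignment those arguments are applied to is not pinned down.
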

\begin{proof}
    {\bf(not \efcr{})}
    We show it by the instance with $R$ in~\Cref{fig:fhr}.
    Let $P=\text{PR}(R)$ is shown in the following.
	\begin{center}
		\centering
		\begin{tabular}{r|cccccc}
			\multicolumn{7}{c}{Assignment $P$}\\
			&  a & b & c & d & e & f\\\hline
			$1$ & $1$ & $0$ & $0$ & $0$ & $0$ & $0$\\
			$2$ & $0$ & $1$ & $0$ & $0$ & $0$ & $0$\\
			$3$-$5$ & $0$ & $0$ & $1/4$ & $1/3$ & $1/3$ & $1/12$\\
			$6$ & $0$ & $0$ & $1/4$ & $0$ & $0$ & $3/4$\\
		\end{tabular}
	\end{center}
	Let  $A$ be the deterministic assignment indicated by circled items in~\Cref{fig:fhr}, and we see that $A$ must be among the deterministic assignments which constitute the convex combination for $P$.
	We know that $A$ does not satisfy \fhcr{} as shown in~\Cref{eg:fcr}, which means that $P$ does not satisfy \efcr{}.
	
	\vspace{0.5em}\noindent{\bf (not \sdcrf{})}
	We continue to use the instance with $R$ in~\Cref{fig:fhr}. In the assignment $P$ above, it is easy to obtain that $M_{P,2}=\{d,e,f\}$ since agents in $E_{P,1}(o)$ for $o\in\{a,b,c\}$ owns all the shares of $o$. ($M_{P,r}$ and $E_{P,r}(o)$ are defined in~\Cref{dfn:sdcrf}).
	Then we have that $E_{P,2}(d)=\{6\}$ and $\sum_{j\in\bigcup_{r<3}E_{P,d}}p_{j,d}=p_{6,d}=0<1$, which means that $d\in M_{P,3}$ while $\sum_{o\in\ucs(6,d)}p_{6,o}=1/4<1$, which violates \sdcrf{}.
	
	\vspace{0.5em}\noindent{\bf (\etep{})}
    PR satisfies equal-rank envy-freeness by~\cite{Chen2021:Theprobabilistic} which requires that in $P=\text{PR}(R)$ for the given preference profile $R$, for any agents $j,k$ and item $o$ with $\rk{j}{o}=\rk{k}{o}$, $\sum_{o'\succ_j,o}p_{j,o'}+p_{k,o}\le\sum_{o'\in\ucs(\succ_j,o)}p_{j,o'}$.
    Then for any item $o$ appearing in $\succ_{j,k}$, $\rk{j}{o}=\rk{k}{o}$, and therefore $\sum_{o'\succ_j,o}p_{j,o'}+p_{k,o}=\sum_{o'\in\ucs(\succ_j,o)}p_{j,o'}$, i.e., $p_{j,o}=p_{k,o}$, which means that PR satisfies \etep{}. 	
\end{proof}

\begin{restatable}{prop}{propupre}{}\label{prop:upre}
\upre{} does not satisfy \fefcr{} (\efcr{}), \fefr{} (\efr{}), \fsdrf{} (\sdrf{}), \fsdef{} (\sdef{}), or \fsdssp{} (\sdsp{}).
\end{restatable}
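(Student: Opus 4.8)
The plan is to obtain all five negative claims as immediate corollaries of the impossibility results already established in the paper, combined with the positive guarantees that \upre{} is known to enjoy. The one fact I will lean on throughout is that \upre{}, being a member of \pcr{}, satisfies \sdcrf{} (\Cref{thm:familychar}), and in addition satisfies \sdwef{} and \etep{} (\Cref{thm:uprep}). Each failure then reduces to a one-line contradiction: if \upre{} also had the target property, it would satisfy a forbidden combination. This mirrors exactly the style used in the proofs of \Cref{prop:ebm,prop:abm}, where a property is ruled out by pairing a known positive property with an impossibility theorem.

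Concretely, I would argue as follows. For \emph{not \efcr{}}: if \upre{} were \efcr{}, then since it is also \sdcrf{} and \etep{}, it would simultaneously satisfy all three properties forbidden by \Cref{prop:impefcrsdcfr}, a contradiction. For \emph{not \efr{}}: \upre{} satisfies \sdwef{} and \etep{}, so if it were additionally \efr{} this would contradict \Cref{prop:impefr}. For \emph{not \sdrf{}}: the same pair \sdwef{} and \etep{}, together with \sdrf{}, is ruled out by \Cref{cor:impsdrf} (alternatively, one may note that \sdrf{} implies \efr{}, which we have just excluded). For \emph{not \sdef{}}: since \upre{} is \sdcrf{}, adding \sdef{} contradicts \Cref{prop:impsdcfr1}. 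Finally, for strategyproofness I would prove the stronger statement that \upre{} fails even the weak notion \sdsp{}: as \upre{} is \sdcrf{} and \etep{}, being \sdsp{} as well would contradict \Cref{prop:impsdcfr2}; and because \sdssp{} implies \sdsp{}, this simultaneously rules out the strong notion, so the conclusion holds under either reading of the statement.

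The work here is bookkeeping rather than mathematics, and I expect no genuine obstacle. The two things to verify carefully are: (i) that each cited impossibility pairs the target property with \emph{exactly} the subset of $\{\sdcrf{},\sdwef{},\etep{}\}$ that \upre{} is proved to satisfy, and (ii) that the argument is non-circular — the positive facts come from \Cref{thm:uprep,thm:familychar} and every impossibility invoked is established independently of \upre{}, so using them to rule out properties of \upre{} is sound. The only real subtlety is the mismatch in the statement between the full name (\fsdssp{}) and the parenthetical abbreviation (\sdsp{}); I resolve it by establishing the \sdsp{} failure, which is the stronger negative result and, via $\sdssp{}\Rightarrow\sdsp{}$, covers both notions at once.
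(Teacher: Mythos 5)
Your proof is correct, and for three of the five claims (not \efcr{}, not \sdef{}, not \sdsp{}) it is exactly the paper's argument: the paper also derives these from \Cref{prop:impefcrsdcfr,prop:impsdcfr1,prop:impsdcfr2} together with the facts that \upre{} satisfies \sdcrf{} (\Cref{thm:familychar}) and \etep{}, \sdwef{} (\Cref{thm:uprep}); like you, the paper proves the stronger ``not \sdsp{}'' and thereby covers the \sdssp{}/\sdsp{} ambiguity in the statement. Where you genuinely diverge is on ``not \efr{}'' and ``not \sdrf{}''. The paper handles these by direct computation: it runs \upre{} on the profile of \Cref{fig:fhr}, writes down the resulting random assignment $P$ explicitly, observes that the \fhcr{}-but-not-\fhr{} assignment $A^*$ of \Cref{eg:fcr} must appear in every convex decomposition of $P$ (so $P$ is not \efr{}), and then reads off a direct \sdrf{} violation from $P$ (agent $6$ has $p_{6,d}>0$ while agent $3$, who ranks $d$ higher, is unsatisfied). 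You instead reduce both claims to \Cref{prop:impefr} and \Cref{cor:impsdrf} via the \sdwef{} and \etep{} guarantees of \Cref{thm:uprep}; this is sound and non-circular, since those impossibilities are proved independently of \upre{}. The trade-off: your route is uniform and computation-free, making the whole proposition a corollary of the paper's impossibility landscape, but it leans on the full strength of \Cref{thm:uprep} (whose \sdwef{} proof is one of the paper's harder arguments); the paper's route is self-contained on these two items and produces a concrete witness assignment, which is more informative and does not inherit any dependence on the fairness theorems.
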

\begin{proof}
    {\bf(not \efcr{})} This follows from~\Cref{prop:impefcrsdcfr} and the fact that it satisfies \sdcrf{} by~\Cref{thm:familychar} and \etep{} by~\Cref{thm:uprep}.

    \vspace{0.5em}\noindent{\bf (not \efr{})}
	For the profile $R$ in~\Cref{fig:fhr}, the following assignment $P$ is the outcome of \upre{}.
	\begin{center}
    	\centering
    	\begin{tabular}{r|cccccc}
    		\multicolumn{7}{c}{Assignment $P$}\\
    		&  a & b & c & d & e & f\\\hline
    		$1$ & $1$ & $0$ & $0$ & $0$ & $0$ & $0$\\
    		$2$ & $0$ & $1$ & $0$ & $0$ & $0$ & $0$\\
    		$3$-$5$ & $0$ & $0$ & $1/4$ & $1/12$ & $1/3$ & $1/3$\\
    		$6$ & $0$ & $0$ & $1/4$ & $3/4$ & $0$ & $0$\\
    	\end{tabular}
    \end{center}
	The deterministic assignment $A^*$ in the following, where $j\gets o$ means agent $j$ is allocated item $o$, is the one in~\Cref{eg:fcr} which is not \fhcr{}.
	It is easy to see that $A^*$ must be among those which constitute the convex combination for $P$, which means that $P$ does not satisfy \efr{}.
	\begin{equation*}
		\begin{split}
			A^*:&1\gets a,2\gets b,3\gets c,4\gets e,5\gets f,6\gets d\\
		\end{split}
	\end{equation*}
	
	\vspace{0.5em}\noindent{\bf (not \sdrf{})} We continue to use the instance with $R$ in~\Cref{fig:fhr}. In the assignment $P$ above , where $p_{6,d}>0$ but $\rk{3}{d}<\rk{2}{d}$ and $\sum_{o\in\ucs(\succ_3,d)}p_{3,o}<1$, which violates \sdrf{}.
	
	\vspace{0.5em}\noindent{\bf (not \sdefa{}, not \sdspa{})}  It follows from \Cref{prop:impsdcfr1,prop:impsdcfr2} and the fact that it satisfies \sdcrf{} and \etep{}.
\end{proof}

\subsection{Running Time Analysis for \am{} and \upre{}}\label{sec:app:results:time}

\begin{restatable}{prop}{propebmtime}{}\label{prop:ebmtime}
Given a profile $R$, the deterministic assignment $\text{\am{}}(R)$ can be computed in polynomial time in the number of agents, if $G$ is a polynomial time algorithm\footnote{Such algorithm exists, like Xorshift RNG~\cite{thomson1958modified} and linear congruential generator~\cite{JSSv008i14}}.
\end{restatable}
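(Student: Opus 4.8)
The plan is to bound separately the number of iterations of the \texttt{while} loop in~\Cref{alg:am} and the work done within a single iteration, and then multiply the two bounds. The only non-trivial ingredient is showing that the loop terminates after at most $n$ rounds.

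First I would establish the invariant that $|N'| = |M'|$ holds at the start of every iteration, so that $M' \neq \emptyset$ if and only if $N' \neq \emptyset$. This holds initially since $|N'| = |M'| = n$. Within a round, each remaining agent $j \in N'$ applies only for $\tp{j}{M'}$, a single item, so she belongs to exactly one set $N_o$ and can be the lottery winner $j_o$ for at most one item; hence the winners removed on Line~8 are distinct, and their number equals the number of items $o$ with $N_o \neq \emptyset$ that are deleted from $M'$ on the same line. The invariant is therefore preserved. I would then show that each iteration removes at least one item: whenever $M' \neq \emptyset$ at the top of the loop, the invariant gives $N' \neq \emptyset$, and every agent in $N'$ applies for her top remaining item, so $\{o \in M' \mid N_o \neq \emptyset\} \neq \emptyset$ and at least one item is deleted. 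Consequently the loop runs at most $n$ times.

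Next I would account for the per-iteration cost. Computing $\tp{j}{M'}$ for a single agent takes $O(n)$ time by scanning her preference order and returning the first item still in $M'$ (using a Boolean membership array for $O(1)$ lookups), so forming all the sets $N_o$ costs $O(n^2)$; invoking the lottery $G(N_o)$ for the at most $n$ items with applicants costs $O(n \cdot g(n))$, where $g(n)$ is the polynomial running time of $G$; and the updates to $M'$, $N'$, and $A$ cost $O(n^2)$. Hence one iteration runs in time polynomial in $n$, and multiplying by the $O(n)$ bound on the number of iterations yields an overall polynomial running time, completing the argument.

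The main obstacle is the per-round progress guarantee, namely that at least one item is always removed so that the number of rounds is bounded by $n$. This relies on the invariant $|N'| = |M'|$ together with the fact that each agent applies for exactly one (her top remaining) item, which both keeps $N'$ and $M'$ equinumerous and makes the lottery winners distinct; the remaining running-time accounting is routine once this bound on the number of rounds is in place.
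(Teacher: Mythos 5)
Your proof is correct and takes essentially the same approach as the paper's: bound the number of \textbf{while}-loop iterations by $n$ (because at least one item is allocated per round) and multiply by a polynomial per-iteration cost of forming the sets $N_o$, running the lotteries $G(N_o)$, and updating $M'$, $N'$, and $A$. The only difference is that you explicitly justify the per-round progress claim via the invariant $\lvert N'\rvert=\lvert M'\rvert$ and the distinctness of lottery winners, whereas the paper simply asserts that at least one item is allocated each round; this is a harmless (indeed welcome) strengthening rather than a divergence.
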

\begin{proof}
	In~\Cref{alg:am},~Line~2 is the initial setting which takes $O(n^2)$ time, and the {\bf While} loop is executed at most $n$ times because there is at least one item is allocated in each round.
	Below, we analyze the time for each line in the main body of the {\bf While} loop.
	
	For Line~5, identifying the top item for each agent among $M'\subseteq M$, takes $O(n\cdot n)$ time.
	
	For Line~6, it issues a lottery for each $o$  over $N_o$ and the implementation runs in in polynomial time with respect to $n$ by the condition.
	Here the range is $\lvert N_o\rvert<n$, which means the implementation of lottery is in polynomial time with respect to $n$.
	
	Line~7 take $O(n')$ time where $n'$ is the number of items being allocated at that round, and it takes $O(n)$ time in total since at most $n$ items are allocated in one run.
	Together we have that~\Cref{alg:am} runs in polynomial time.
\end{proof}

\begin{remark}
\rm
	Although it is in polynomial time for \am{} to output a deterministic assignment as shown in~\Cref{prop:ebmtime}, there is no guarantee for the time complexity of computing the expected results of \am{}. We conjecture that it is  \#P-complete to compute the expected output of \am{}, just as computing the expected result of RP, where the priority order is generated randomly and uniformly is \#P-complete~\cite{Saban13:Complexity}.
\end{remark}

\begin{restatable}{prop}{propupretime}{}\label{prop:upretime}
Given a profile $R$, the random assignment $\text{\upre{}}(R)$ can be computed in polynomial time in the number of agents.
\end{restatable}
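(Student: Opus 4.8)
The plan is to follow the line-by-line style of the analysis for \Cref{prop:ebmtime}: first bound the number of iterations of the \textbf{While} loop in \Cref{alg:pcr} specialized to \upre{} (where $\omega_j(t)=1$ on $[0,1]$), then bound the work done in a single iteration, and finally check that all arithmetic stays within polynomially many bits.

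First I would bound the number of rounds by a potential argument. Let $\Phi$ denote $|M'|$ plus the number of currently unsatisfied agents (those with $t_j<1$). Initially $\Phi=2n$, and $\Phi\ge 0$ always; moreover $\Phi$ is nonincreasing, since each $t_j$ only grows and items are only removed from $M'$. The key claim is that every iteration with $M'\neq\emptyset$ strictly decreases $\Phi$. To see this, note that whenever $M'\neq\emptyset$ there must exist an unsatisfied agent: total supply equals total demand ($n$ units each), so if all agents were satisfied then $\sum_o s(o)=0$ and $M'$ would already be empty. Fix any unsatisfied agent $j$ and let $o=\tp{j}{M'}$, so $j\in N_o$. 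Since $s(o)>0$ and $1-t_j>0$, the consumption time $\rho_o$ given by \Cref{eq:precon} is strictly positive, and by definition it equals either the time at which the supply of $o$ is exhausted, in which case $o$ is removed on Line~9, or the time at which every agent of $N_o$ (in particular $j$) becomes satisfied. In either case $\Phi$ drops by at least one, so the loop runs at most $2n$ times. This simultaneously shows termination, since the loop halts exactly when $M'=\emptyset$, equivalently when all agents are satisfied.

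Next I would bound the cost of a single round. Computing $\tp{j}{M'}$ for all $j$, and hence all sets $N_o$ on Line~4, costs $O(n^2)$. For \upre{}, evaluating $\gamma_{\omega}(N_o,(t_j)_{j\in N},s(o))$ reduces to a piecewise-linear root computation: the amount consumed after time $\rho$ is $\sum_{k\in N_o}\min(\rho,\,1-t_k)$, so the exhaustion time is obtained by sorting the remaining demands $1-t_k$ and locating the piece on which this sum equals $s(o)$, while the all-satisfied time is just $\max_{k\in N_o}(1-t_k)$; taking the smaller of the two yields $\rho_o$ in $O(n\log n)$ time, hence $O(n^2\log n)$ over all items. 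The updates on Lines~6.2, 7, 8, and 9 cost $O(n)$. With $O(n)$ rounds, the total running time is polynomial in $n$. For the bit-complexity I would then observe that all of $p_{j,o}, s(o), t_j, \rho_o$ are rationals produced by additions, subtractions, and divisions by integers at most $n$ (the cardinalities $|N_o|$); over $O(n)$ rounds the denominators are products of $O(n)$ integers bounded by $n$, so they have bit-length $O(n\log n)$ and every arithmetic operation is itself polynomial.

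The main obstacle is the round-count bound: one must rule out iterations that make no progress, i.e.\ in which no item is exhausted and no agent becomes satisfied (in particular, an item whose top-ranking remaining agents are all already satisfied could otherwise linger in $M'$ indefinitely). The conservation-of-supply observation is exactly what forecloses this, since it guarantees an unsatisfied agent whenever $M'\neq\emptyset$, whose top remaining item is consumed for a strictly positive time and therefore forces one of the two monotone quantities tracked by $\Phi$ to decrease.
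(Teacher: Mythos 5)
Your proof is correct and follows essentially the same structure as the paper's: bound the number of \textbf{While}-loop iterations by $O(n)$, then show each round costs polynomial time by exploiting the piecewise-linear form $\sum_{k}\min(\rho,1-t_k)$ of the consumption function, which is exactly the paper's sort-and-scan computation of $\rho_o$. Your potential-function justification of the round bound (and the closing bit-complexity remark) is somewhat more careful than the paper's one-line claim that ``an agent consumes different items in each round,'' but it formalizes the same progress observation --- each round either exhausts an item or satisfies an agent --- so the two arguments coincide in substance.
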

\begin{proof}
	Recall that \upre{} is~\Cref{alg:pcr} using~Eq~(\ref{eq:eatingfunction}) as eating functions, and then $\int_{t_j}^{t_j+\rho}\omega_j{\rm d}t=\min(\rho,1-t_j)$ when $t_j<1$, which can be computed in $O(1)$ time.
	In~\Cref{alg:pcr}, Line~2 is the initial setting which takes $O(n^2)$ time.
	The {\bf While} loop is executed at most $n$ times because an agent consumes different items in each round.
	Below, we analyze the time for each line in the main body of the {\bf While} loop.
	
	On {\bf Line~4}, identifying the top item for each agent among $M'\subseteq M$, takes $O(n\cdot n)$ time.
	
	For~{\bf Line~6.1}, for each $o$,	we can compute $\rho_o$ with the following steps:
	\begin{enumerate}[label={\bf Step \arabic*.},leftmargin=*,topsep=0.5em,itemsep=0em]
		\item Sort agents in $N_o$ by $1-t_j$ in increasing order and obtain the sequence $j_{1},j_{2},\dots,j_{n_o}$ where $n_o=\lvert N_o\rvert$, which takes $O(n^2)$ time since $n_o\le n$;
		\item For each $i\in\{1,\dots,n_o\}$, test if $\sum_{k\in N_{o}}\int^{t_k+\rho}_{t_k}\omega_{k}(t){\rm d}t\le s(o)$ with $\rho=1-t_{j_i}$ and stop when it is not. This takes $O(n)$ time;
		\item If $i'<n_o$ is the maximum value for which the test in step (2) passes, then $\rho_o\ge 1-t_{j_{i'}}$ and computing $\rho_o=\max(\{\rho\mid \rho\cdot (n_o-i')+\sum_{1}^{i'}(1-t_{i})\le s(o)\})\allowbreak=\frac{s(o)-\sum_{i=1}^{i'}(1-t_{i})}{n_o}$ takes $O(1)$ time; if $i=n_o$, then $\rho_o= 1-t_{j_{n_o}}$.
	\end{enumerate}
	In this way, we have that Line~6.1 runs in $O(n^2)$ time.
	
	Line~6.2 needs us to perform one integration for each agent and can also be computed in polynomial time since each integration can be done in $O(1)$ time.
	
	Line~7 updates the supply of each $o\in M'$, Line~8 updates $t_j$ for each agent $j$, and~Line~9 checks if $s(o)=0$, each of which needs addition/subtraction for no more than $n$ times.
	
	Together we have that \upre{} runs in polynomial time.
\end{proof}

\subsection{\am{} is Not A Member of \abm{}}\label{sec:app:results:abm}

    
    We show that \am{} is not a member of \abm{} by proving that there does not exist a distribution $\pi$ over all the priority orders such that \abm{}$^\pi(R)=\am{}(R)$ for any preference profile $R$.
    
    For an instance of assignment problems with $N=\{1,2,\dots,5\}$ and $M=\{a,b,\dots,e\}$, there are $\Myperm[5]{5}=120$ priority orders in total.
    For the preference $R^*$ where all the agents have an identical preference, we trivially have that there are $\Myperm[5]{5}=120$ possible outcomes of \am{}, each with the same probability.
    By~\Cref{thm:abmchar}, we also have that each possible $\am{}(R)$ corresponds to a unique priority order.
    Then $\mathbb E(\am{}(R^*))$ coincides with $\mathbb E(\text{ABM}^{\mU}(R^*))$ where $\mU$ is the uniform distribution over all the priority orders.
    
    Now we show that \am{} is different from \abm{}$^{\mU}$ by comparing the probability of agent $1$ obtaining $c$ in the outcomes of \am{} and \abm{} applied to the profile $R$ for $5$ agents below:
    \begin{equation*}
            1\text{-}2: a\succ c\succ \{others\},\quad 3\text{-}5: b\succ c\succ \{others\}.
    \end{equation*}

    First we consider \am{}.
    Let $A=\am{}(R)$ be a possible outcome of $\am{}$ and $P=\mathbb E(\am{}(R))$. 
    In the first round, \am{} issues a lottery for $a$ among agents $1$-$3$ and one for $b$ among $4,5$.
    Then $Pr(A(1)\neq a)=1/2$.
    In the second round, \am{} issues a lottery for $c$, and the number of participants is always $3$, which means that $Pr(A(1)=c\mid A(1)\neq a)=1/3$.
    It follows that $p_{1,c}=Pr(A(1)=c)=1/6$.
    
    Then we consider ABM$^{\mU}$.
    Let $Q=\mathbb E(\text{ABM}^{\mU}(R))$.
    Agent $1$ gets $c$ when in the priority order,
    \begin{enumerate}[label=(\arabic*)]
        \item agent $2$ is ranked above agent $1$,
        \item at least two agents $j,k\in\{3,4,5\}$, i.e., the agents who do not get $b$, are ranked below agent $1$.
    \end{enumerate}
    The priority orders satisfying this where $j=4$ and $k=5$ are the ones corresponding to the topological orderings in\Cref{fig:tp2}, of which there are $6$ as listed below:
    \begin{equation*}
        \begin{split}
            3\impord{}2\impord{}1\impord{}4\impord{}5,\quad
            3\impord{}2\impord{}1\impord{}5\impord{}4,\\
            2\impord{}3\impord{}1\impord{}4\impord{}5,\quad
            2\impord{}3\impord{}1\impord{}5\impord{}4,\\
            2\impord{}1\impord{}3\impord{}4\impord{}5,\quad
            3\impord{}1\impord{}3\impord{}5\impord{}4.\\
        \end{split}
    \end{equation*}
  With fact that $j,k$ are chosen in $\{3,4,5\}$, there are $6\cdot \Mycomb[2]{3}=18$ priority orders where agent $1$ gets $c$, i.e., $q_{1,c}=18/120=3/20$.
  
      \begin{figure}[ht]
	\centering
	\includegraphics[width=.5\linewidth]{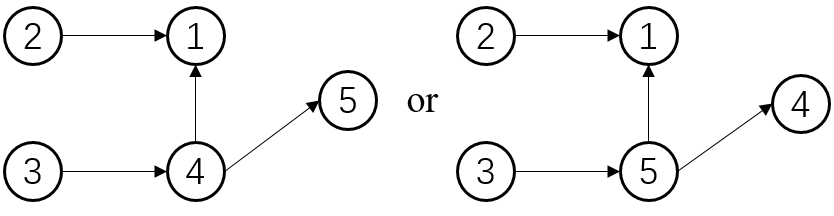}
	\caption{Possible topological orders of priority orders where agent $1$ gets $c$.}\label{fig:tp2}
    \end{figure}

  Together we see that $q_{1,c}\neq p_{1,c}$, which means that $Q\neq P$.
  Therefore, while \abm{}$^\pi$ is uniquely equal to \am{} for the preference profile $R^*$ where agents have an identical preference, \abm{}$^\pi$ is not equal to \am{} for $R$ above. It follows that \am{} is not a member of \abm{}.

\section{Omitted Proofs}\label{sec:app:proof}

\thmamp*

\ifx\oldproof\undefined
\begin{proof}
	Consider an arbitrary instance $(N,M)$ and a strict linear preference profile $R$, and let $P=\mathbb E(\text{\am}(R))$.
	
	\vspace{0.5em}\noindent{\bf Part 1: $\mathbb E(\text{\am}(R))$ is \efcr{}.}\quad

	Let $A=\text{\am}(R)$ be any one of the possible outcomes of \am{} applied to $R$. We show that $A$ is \fhcr{}. 
	We prove by mathematical induction that the items to be assigned at each round $r$ of~\Cref{alg:am} are exactly those in $\tps{A}{r}$ (defined in~\Cref{dfn:fhcr}), and that $A$ is \fhcr{} by showing that every item $o\in \tps{A}{r}$ is assigned to an agent most eager for it in round $r$, i.e.,
	\begin{equation}\label{eq:thm:amchar:1}
		o=\tp{A^{-1}(o)}{M\setminus\bigcup_{r^*<r}\tps{A}{r^*}}.
	\end{equation}
	
	\vspace{0.5em}\noindent{\bf Base case.} When $r=1$, any $o\in \tps{A}{1}$ satisfies that $o=\tp{j}{M}$ for some $j$.
	In~\Cref{alg:am}, all such agents are in $N_o$ on Line~5, and $o$ is assigned to one of them by Line~6 at that round.
	It means that $o=\tp{A^{-1}(o)}{M}$, which is equivalent to Eq~(\ref{eq:thm:amchar:1}) when $r=1$.
	
	\vspace{0.5em}\noindent{\bf Inductive step.} Consider the case that $r>1$. Suppose that for each $r'<r$, $\am{}(R)$ assigns the items in $\tps{A}{r'}$ during round $r'$ and it holds for every item $o'\in\tps{A}{r'}$, that $o'=\tp{A^{-1}(o')}{M\setminus\bigcup_{r^*<r'}\tps{A}{r^*}}$. We will show that at the end of round $r$, for every $o\in \tps{A}{r}$, $o=\tp{A^{-1}(o)}{M\setminus\bigcup_{r^*<r}\tps{A}{r^*}}$.
	By the assumption and Line~3, we have that at the beginning of round $r$:
	\begin{itemize}[leftmargin=*,itemsep=0em,topsep=1em]
	    \item the set of remaining items is $M'=M\setminus\bigcup_{r^*<r}\tps{A}{r^*}$, since items in $\bigcup_{r^*<r}\tps{A}{r^*}$ are allocated before round $r$, and
	    \item for any agent $k\in N'$ who has not received any item yet, it holds that $A(k)\notin \tps{A}{r'}$ for any $r'<r$.
	\end{itemize}
	Therefore, if there exists an agent $j\in N_o\subseteq N'$ (i.e., $o=\tp{j}{M'}$) on Line~5, then an agent in $N_o$ gets $o$ by Line~6, i.e., $A^{-1}(o)\in N_o$, which implies Eq~(\ref{eq:thm:amchar:1}).
	
	By the induction hypothesis we have that Eq~(\ref{eq:thm:amchar:1}) holds for any $o\in \tps{A}{r}$ with $r\ge1$, which means that $A$ satisfies \fhcr{}.
	It follows that $\mathbb E(\text{\am}(R))$ is \efcr{}, and therefore \epopt{} by~\Cref{prop:fhcr}.
	
	\vspace{0.5em}\noindent{\bf Part 2: $\mathbb E(\text{\am}(R))$ is \sdwef{}.}\quad
	
	Before proceeding with the proof we introduce some notation for convenience:
	\begin{itemize}[label=-,itemindent=1em,leftmargin=0pt,topsep=0pt,itemsep=0pt]
	    \item For ease of exposition, we use a possible world, denoted $w$, to represent one possible execution of \am{}, and $\text{\am}^w(R)$ to be the corresponding deterministic assignment.
    	It is easy to see that if $w\neq w'$, then $\text{\am}^w(R)\neq \text{\am}^{w'}(R)$.
    	Let $W(R)$ be the set of all possible worlds for the given instance with $R$, and $W$ for short when $R$ is clear given the context.
    	The probability of $w$, denoted $Pr(w)$, can be computed according to lotteries in each round.

	    \item We use $l$ to refer to a lottery and $N(l)$ be the set of agents who participate in $l$.
	    Let $L(w,r)$ denote the set of lotteries in round $r$ of world $w$ ($w$ can be omitted when clear), and $r(w)$  be the total rounds of $w$.
    	Specially, $l^w_o$ refers to the lottery for item $o$ in $w$.
    	Then we have that 
    	$$Pr(w)=\Pi_{l\in L(r),r\le r(w)}\frac{1}{\lvert N(l)\rvert}=\Pi_{o\in M}\frac{1}{\lvert N(l^w_o)\rvert}$$ 
    	since every item can only be allocated once through a lottery.
    	Let $Pr(W')=\sum_{w\in W'}Pr(w)$ for $W'\subseteq W$.
    	If $W'$ is the set of all the worlds with the same lotteries and winners for the first $r$ rounds, then $Pr(W')=\Pi_{l\in L(r'),r'\le r}\frac{1}{\lvert N(l)\rvert}$.
    	For $P=\mathbb E(\text{\am}(R))$, we have that $p_{j,o}=Pr(\{w\in W\mid \am^w(R)(j)=o\})$, i.e., the probability of all the worlds where $j$ gets items $o$.

    	\item For ease of exposition, we use $M^w_{r},N^w_{r},N^w_{o,r}$ to refer to the values of variables $M',N',N_{o}$ at the beginning of each round $r$ during the execution of~\Cref{alg:am} in the world $w$, and omit $w$ when it is clear from the context.
	\end{itemize}
	
	\vspace{1em}
	We now proceed with the proof. Consider an arbitrary pair of agents $j$ and $k$ such that $P_{k}\sd{j}P_j$ and $j\neq k$.
	Without loss of generality, let $\succ_j$ be $o_1\succ_j o_2 \succ_j \cdots \succ_j o_n$.
	We show by mathematical induction, on the rank $i=1,2,\dots,n$ with respect to $\succ_j$, that the following conditions hold:
	\begin{enumerate}[label={\bf Condition (\arabic*):},align=left,wide,labelindent=0pt,itemsep=0em,topsep=0pt]
		\item if there exits a round $r$ such that $j\in N_{o_i,r}$ and $k\in N_r$ in a certain world $w\in W$, then $k\in N_{o_i,r}$,
		\item if in a certain world $w\in W$, $j$ gets some $o\succ_j o_i$ at round $r'$, then for each $r>r'$ with $k\in N_r$ and $M_{r}\cup\ucs(\succ_j,o_i)\neq\emptyset$, we have that $k\in N_{\tp{j}{M_r},r}$, and
		\item $p_{k,o_i}=p_{j,o_i}$.
	\end{enumerate}

	\vspace{0.5em}\noindent{\bf Base case.} First we prove conditions (1)-(3) for $i=1$.
	Since no $o\succ_j o_1$, condition (2) holds trivially.
	For every possible world $w\in W$, we have that $j\in N_{o_1,1}$ and $k\in N_1=N$.
	If  $k\notin N_{o_1,1}$, then $k$ does not participate in the lottery for $o$ , and she does not get $o$ in any $w$, which means that $p_{k,o_1}=0<p_{j,o_1}$, a contradiction to $P_{k}\sd{j}P_j$.
	Therefore condition (1) holds for $i=1$. It follows that $p_{j,o_1}=Pr(\{w\in W\mid \text{\am}^w(R)(j)=o_1\})=Pr(\{w\in W\mid \text{\am}^w(R)(k)=o_1\})=p_{k,o_1}$, i.e., condition (3) holds for $i=1$.
	
	\vspace{0.5em}\noindent{\bf Inductive step.} Assume that conditions (1)-(3) hold for $i'<i$. We now show that they also hold for $i$. First, we show that $p_{k,o_i}\le p_{j,o_i}$ by comparing the probabilities of worlds where $k$ gets $o_i$ with those where $j$ gets $o_i$ in cases \romannumeral1~--~\romannumeral3~below.
	We note that we only consider the worlds where $k$ is possible get $o_i$.
	
	\vspace{0.5em}\noindent{\bf Case 1:}
	Consider any world $w'\in W$ where agent $j$ does not get an item ranked higher than $o_i$ according to $\succ_j$, agent $k$ does not get an item ranked higher than $o_i$ according to $\succ_k$.
	
	\vspace{0.5em}\noindent{\bf Case 1.1:} 
    Suppose there does not exist a round $r$ that $j\in N_{o_i,r}$. Then, clearly $\text{\am}^{w'}(R)(j)\neq o_i$. Moreover, there does not exists any round $r^*$ with $k\in N_{o_i,r^*}$ either. Suppose for the sake of contradiction that such a round $r^*$ exists, then it must hold that $o_i\in M_{r^*}$. Now suppose $j$'s top item in $M_{r^*}$ is $o_{i^*}=\tp{j}{M_{r^*}}\succ_j o_i$, i.e, that $i^*<i$.
	However, by condition (1), for every $i'<i$, when $j$ applies for $o_{i'}$, $k$ does too, which implies that $i^*\ge i$ since $o_{i^*}\neq o_i$, a contradiction.
	Together we see that both agents do not get $o_i$ in $w'$, and therefore we can ignore such a world $w'$.

	\vspace{0.5em}\noindent{\bf Case 1.2:} 
	Suppose there exists a round $r$ such that $j\in N_{o_i,r}$. 
	Let $W_1$ be the set of worlds where lotteries conducted and winners are the same as $w'$ for any round $r'<r$. Therefore, in every world in $W_1$, we have the same $M_{r},N_{r},N_{o,r}$ as $w'$ at beginning of round $r$.
	By the selection of $w'$, we have that $j,k\in N_{r}$.
	Below, we compare the probabilities that $j$ and $k$ get $o_i$ in $W_1$ through a case analysis.

    \begin{itemize}[topsep=0pt,itemsep=0em,leftmargin=0pt,itemindent=1em]
	\item If $k\in N_{o_i,r}$, then $k$ participates in the lottery for $o_i$ at round $r$ and her chance to win is equal to $j$'s, which means that
	\begin{equation*}
	\begin{split}
	    & Pr(\{w\in W_1\mid \am^w(R)(j)=o_i\})\\
	    =& Pr(W_1)\cdot\frac{1}{\lvert N(l_{o_i})\rvert}\\
	    =& Pr(\{w\in W_1\mid \am^w(R)(k)=o_i\}).
	\end{split}
	\end{equation*}

	\item If $k\notin N_{o_i,r}$, then by $j\in N_{o_i,r}\neq\emptyset$, item $o_i$ is allocated to some agent in $N_{o_i,r}$, subsequently removed at the end of round $r$, and never appears in later rounds This means that $Pr(\{w\in W_1\mid \am^w(R)(j)=o_i\})> Pr(\{w\in W_1\mid \am^w(R)(k)=o_i\})=0$.
	\end{itemize}
	
	\vspace{0.5em}\noindent{\bf Case 2:}
	Consider any world where exactly one of agents $j$ and $k$ gets an item ranked higher than $o_i$ according to $\succ_j$ and $\succ_k$ respectively.
	\xiaoxi{In the old version I want say that the worlds discussed in this case satisfies that $j$ and $k$ get items better than $o_i$. But now I find it seems to deviate from my meaning if we put it as the first sentence. We can only consider case 2.1, since the worlds in case 2.2 do not allow $k$ to get $o_i$}
	
	\vspace{0.5em}\noindent{\bf Case 2.1:} Consider any world $w$ in which $j$ gets a higher ranked item $o_h\succ_j o_i$ at a round $r'$ and $k$ does not get an item ranked higher than $o_i$ in $\succ_k$. Since agent $k$ does not get an item it ranks higher than $o_i$, either $k$ gets $o_i$ or it gets an item it ranks strictly lower than $o_i$. We do not need to consider the worlds in which $k$ does not get $o_i$. We analyze an arbitrary world in which $k$ gets $o_i$ instead. Now, suppose $k\in N_{o_i,r^*}$ at a round $r^*$.
	\xiaoxi{NOTE: although the proof is easy to read with $k$ and $r^*$, but in this way we cannot prove condition(2) for rank $i$.}
	\begin{enumerate}[label={\bf Case 2.1.\arabic*.},align=left]
	
	    \item Suppose that $j\in N_{r^*}$. We will show that $k$ does not get $o_i$, meaning we do not need to consider such a world. Let $o_g$ be the top ranked item in $M_{r^*}$ according to $\succ_j$, i.e., $o_g=\tp{j}{M_{r^*}}$. Then, clearly $o_g\succ_j o_i$, i.e. $g<i$. Then, by our inductive assumption, and specifically, by condition (1), $k\in N_{o_g,r^*}$, a contradiction.\xiaoxi{move it to a higher level as 2.1? and the rest at 2.2. It might be not good to have 4 levels}
	
	    \item\label{case:sdwef212} Suppose that $j\not\in N_{r^*}$. This immediately implies that $r^*>r'$. 
	    
	    \begin{enumerate}[label={\bf{\ref{case:sdwef212}\alph*.}},align=left]
	        
	        \item Now, if $\tp{j}{M_{r^*}}=o_g\neq o_i$, then by the observation that at round $r^*>r'$, $k\in N_{r^*}$ and $M_{r^*}\cup\ucs(\succ_j,o_i)\neq\emptyset$, and by our inductive assumption and condition (2) in particular, it must hold that $k\in N_{o_g,r^*}$, a contradiction. This means we do not need to consider such a world.
	        
	        \item\label{case:sdwef212b} Now, suppose $\tp{j}{M_{r^*}}=o_i$. If $k$ loses the lottery for $o_i$ at round $r^*$, we do not need to consider such a world. Then, consider the world $w_k$ where $k$ wins $o_i$ in $r^*$. We will show that for every such world $w_k$, there exists a world $w_j$ with equal probability in which $k$ wins $o_h$ in round $r'$ and $j$ gets $o_i$ in round $r^*$.
	        
	        Recall that in $w_k$, $j$ gets $o_h\succ_j o_i$. Then, by our inductive assumption and condition (1) in particular, it must hold at every round $r<r'$ that neither $j$ nor $k$ applies to any item $o$ ranked lower than $o_h$ in $\succ_j$. It follows that $r^*>r'$.
	        
	        Consider the world $w_j$ in which:
	        \begin{itemize}
	            \item for every round $r\in\{1,\dots,r'-1,r'+1,\dots,r^*\}$, the lotteries and winners are identical to those in $w_k$, 
	            \item at round $r'$ all of the lotteries are identical to those in $w_k$, and all of the winners are identical, except for the lottery for item $o_h$,
	            \item agent $k$ wins the lottery for $o_h$ in round $r'$, and
	            \item agent $j$ wins the lottery for $o_i$ in round $r^*$.
	            \item for every round $r\in\{r^*+1\dots\}$, the lotteries and winners are identical to those in $w_k$.
	        \end{itemize}
	        Such a world $w_j$ must exist because:
	        \begin{itemize}
	            \item By our assumption, $h<i$ and by condition (1) of our inductive assumption, if at round $r'$, $j\in N_{o_h,r'}$ then $k\in N_{o_h,r'}$. This means agent $k$ is a possible winner of the lottery for $o_h$, and necessarily, $o_h\succ_k o_i$. 
	            \item Then, due to our inductive assumption that condition (2) is true for all pairs of agents for all $g<i$, it must hold that for every world $w'\in W_k$, there is a world $w''\in W_j$ such that $j$  participates in every lottery that agent $k$ participates in from round $r'+1$ to round $r^*$, and then $j$ wins the lottery for $o_i$ in round $r^*$.
	        \end{itemize} 
	        Now, notice that the number of agents applying for $o_h$ at round $r'$ and applying for $o_i$ in round $r^*$ are identical in every world in both $w_j$ or $w_k$, i.e., $|N_{o_h,r'}^{w_j}|=|N_{o_h,r'}^{w_k}|$ and $|N_{o_i,r^*}^{w_j}|=|N_{o_i,r^*}^{w_k}|$. Finally, it is easy to see that after the end of round $r^*$, the remaining items and agents are identical in $w_j$ and $w_k$. It follows from the choice of $w_k$ and construction of $w_j$ that they occur with equal probability.
	    \end{enumerate}
	\end{enumerate}
	
	\vspace{0.5em}\noindent{\bf Case 2.2:}\xiaoxi{we might not need this part, because we are consider the worlds where $k$ can get $o_i$} Consider any world $w$ in which $k$ gets an item $o_h$ ranked higher than $o_i$ in $\succ_j$, i.e., $o_h\succ_j o_i$ at a round $r'$ and $j$ does not get an item ranked higher than $o_i$ in $\succ_j$. By the argument in~\ref{case:sdwef212b}, for every such world $w_j$ where $k$ gets $o_h$ and $j$ gets $o_i$, there is a unique corresponding world $w_k$ where $j$ gets $o_h$ and $k$ gets $o_i$ and $w_j$ and $w_k$ occur with equal probability.
	

	\vspace{0.5em}\noindent{\bf Case 3:}\xiaoxi{we might not need case this either. Agent $k$ always does not get a better item than $o_i$ w.r.t. $\succ_k$, and there are two cases in total: $j$ does not get (case 1) a better item than $o_i$ w.r.t. $\succ_j$, or gets one such item (case2).}
	We do not need to consider worlds where both agents $j$ and $k$ get items ranked higher than $o_i$ in $\succ_j$, since in this case, neither of them gets $o_i$.
	
	\vspace{0.5em}\noindent{\bf Concluding the inductive step.} From cases 1-3, we have covered all the worlds that agent $k$ can possibly get item $o_i$, and we obtain that $p_{k,o_i}\le p_{j,o_i}$.
	With the assumption that $P_{k}\sd{j}P_j$ and condition (3) holds for $i'<i$, it follows that  $p_{k,o_i}\ge p_{j,o_i}$.
	Therefore we have $p_{k,o_i}=p_{j,o_i}$, i.e., condition (3) holds for $i$.
	The equality also requires $k\in N_{o_i,r}$ in case (\romannumeral1) and $o_i=\tp{j}{M_r}=\tp{k}{M_r}$ in case (\romannumeral2), i.e., conditions (1) and (2) for $i$.
	
	With the induction above, we prove that $p_{k,o_i}=p_{j,o_i}$ for any $i$. It follows that if $P_k\succ_j P_j$, $P_k=P_j$.
	
	\vspace{0.5em}\noindent{\bf Part 3: $\mathbb E(\text{\am}(R))$ is \sdsp{}.}\quad
    We continue to use the new notations introduced at the beginning of Part 2.
	Without loss of generality, let $\succ_j$ be $o_1\succ_j o_2 \succ_j \cdots$.
	Let profile $R'=(\succ'_j,\succ_{-j})$ where $\succ'_j$ is any preference that agent $j$ misreports, and $Q=\am(R')$.
	Assume that $Q_j\sd{} P_j$.
	We show by mathematical induction, for rank $i=1,2,\dots$ with respect to agent $j$, that the following conditions hold:
	\begin{enumerate}[label={\bf Condition (\arabic*):},align=left,wide,labelindent=0pt,itemsep=0.5em,topsep=1em]
		\item when $j\in N^{w}_{o_i,r}$ in a world $w\in W(R)$,
		for any $w'\in W(R')$ where lotteries and winners are the same as $w$ before round $r$, we have that $j\in N^{w'}_{o_i,r}$, and
		\item $p_{j,o_i}=q_{j,o_i}$.
	\end{enumerate}

	\vspace{0.5em}\noindent{\bf Base case.}
	First we show condition (1) for $i=1$.
	It is easy to see that in any $w\in W(R)$, $j$ applies for $o_1$ at round $1$, i.e., $j\in N^{w}_{o_1,1}$.
	We claim that $j\in N^{w'}_{o_1,1}$ for any $w'\in W(R')$.
	Otherwise, if $j\notin N^{w'}_{o_1,1}$ in some $w'$, we show that both of the possible cases below lead to a contradiction to our assumption that $Q_j\sd{} P_j$.
	
	\begin{itemize}[topsep=0pt,itemindent=1em,itemsep=0em,leftmargin=0pt]
	    \item When $N^{w'}_{o_1,1}\neq\emptyset$, $o_1$ is assigned to some agent in $N^{w'}_{o_1,1}$ in $w'$. It follows that $p_{j,o_1}>q_{j,o_1}=0$, a contradiction to the assumption.
	    \item When $N^{w'}_{o_1,1}=\emptyset$, i.e., $N^{w}_{o_1,1}=\{j\}$, $o_1$ is assigned to the only applicant $j$ in $w$, while she applies for item $o'\neq o_1$ in $w'$ and $o_1\succ_j o'$ trivially.
	    It follows that
    	\begin{equation*}
    	\begin{split}
    		p_{j,o_1}=&Pr(\{w\in W(R)\mid \text{\am}^w(R)(j)=o_1\})=1\\
    		>&1-Pr(\{w\in W(R')\mid \text{\am}^w(R')(j)=o'\})\\
    		\ge&Pr(\{w\in W(R')\mid \text{\am}^w(R')(j)=o_1\})=q_{j,o_1},\\
    	\end{split}
    	\end{equation*}
    	a contradiction to the assumption.
	\end{itemize}
	
	In this way, we have $j\in N^{w'}_{o_1,1}$ for any $w'\in W(R')$, i.e., condition (1) for $i=1$, which means that $\lvert N(l^w_{o_1})\rvert=\lvert N(l^{w'}_{o_1})\rvert$ and
	\begin{equation*}
		\begin{split}
			p_{j,o_1}=&Pr(\{w\in W(R)\mid \text{\am}^w(R)(j)=o_1\}=\frac{1}{\lvert N(l^w_{o_1})\rvert}\\
			=&\frac{1}{\lvert N(l^{w'}_{o_1})\rvert}=Pr(\{w\in W(R')\mid \text{\am}^w(R')(j)=o_1\})=q_{j,o_1},\\
		\end{split}
	\end{equation*}
	 i.e., condition (2) for $i=1$.
	
	\vspace{0.5em}\noindent{\bf Inductive step.} Supposing conditions (1) and (2) hold for $i'<i$, we show that they also hold for $i$.
	First we show condition (1) for $i$.
	For an arbitrary world $w^*\in W(R)$ with $j\in N^{w^*}_{o_i,r}$, let $W_1\subseteq W(R)$ and $W_2\subseteq W(R')$ be the sets of worlds where lotteries and winners are the same as $w^*$ before round $r$ with respect to $R$ and $R'$, respectively.
	By construction of $W_1$ and $W_2$, $Pr(W_1)=Pr(W_2)$.
	For any $w\in W_1$ and $w'\in W_2$, $M^w_r=M^{w'}_r$, $N^{w}_r=N^{w'}_r$, and $N^{w}_{o,r}=N^{w'}_{o,r}$ for any $o\in M^w_r\setminus\{o_i\}$.
	We have that $j\in N^{w}_{o_i,r}$ by the selection of $w^*$, and we claim that $j\in N^{w'}_{o_i,r}$ for any $w'\in W(R')$.
	Otherwise, if $j\notin N^{w'}_{o_i,r}$ in some $w'$, we show that both of the possible cases below lead to a contradiction to our assumption that $Q_j\sd{} P_j$.
	
	\begin{itemize}[topsep=0pt,itemindent=1em,itemsep=0em,leftmargin=0pt]
	    \item 	When $N^{w'}_{o_i,r}\neq\emptyset$, item $o_i$ is assigned to some agent in $N^{w'}_{o_i,r}$ in $w'$. It follows that
	\begin{equation}\label{eq:thm:amp:1}
		\begin{split}
			&Pr(\{w\in W_1\mid \text{\am}^w(R)(j)=o_i\})=Pr(W_1)\\
			>&Pr(\{w\in W_2\mid \text{\am}^w(R')(j)=o_i\})=0.
		\end{split}
	\end{equation}
	With condition (1) for $i'<i$, in world $w'$, agent $j$ can only apply for $o_i$ at round $r'\ge r$ not earlier than she does in $w$, which means that $p_{j,o_i}>q_{j,o_i}=0$ with Eq~(\ref{eq:thm:amp:1}).
	Together with condition (2) for $i'<i$, we have a contradiction to the assumption that $Q_j\sd{} P_j$.
	\item 	When $N^{w'}_{o_i,r}=\emptyset$, i.e., $N^{w}_{o_i,r}=\{j\}$, item $o_i$ is assigned to the only applicant $j$ in $w$ while she applies for item $o'\neq o_i$ in $w'$ and $o_i\succ_j o'$ by the selection.
	It follows that
	\begin{equation*}
		\begin{split}
			&Pr(\{w\in W_1\mid \text{\am}^w(R)(j)=o_i\})=Pr(W_1)\\
			>&Pr(W_2)-Pr(\{w\in W_2\mid \text{\am}^w(R')(j)=o'\})
			\ge Pr(\{w\in W_2\mid \text{\am}^w(R')(j)=o_i\}).
		\end{split}
	\end{equation*}
	This means that $p_{j,o_i}>q_{j,o_i}$, a contradiction to the assumption that condition (2) holds for $i'<i$.
	\end{itemize}

	In this way, we have condition (1) for $i$, which means that $\lvert N(l^w_{o_i})\rvert=\lvert N(l^{w'}_{o_i})\rvert$ and
	\begin{equation*}
		\begin{split}
			&Pr(\{w\in W_1\mid \text{\am}^w(R)(j)=o_i\}=Pr(W_1)\cdot\frac{1}{\lvert N(l^w_{o_i})\rvert}\\
			=&Pr(W_2)\cdot\frac{1}{\lvert N(l^{w'}_{o_i})\rvert}=Pr(\{w\in W_2\mid \text{\am}^w(R')(j)=o_1\}),\\
		\end{split}
	\end{equation*}
	which implies $p_{j,o_i}=q_{j,o_i}$, i.e., condition (2) for $i$.
	
	By mathematical induction, we have that $p_{j,o}=q_{j,o}$ for any $o$, and therefore if $Q_j\sd{} P_j$, that $Q_j=P_j$ .
	
	\vspace{0.5em}\noindent{\bf Part 4: $\mathbb E(\text{\am}(R))$ is \etep{}.}\quad
	
	For agents $j$ and $k$, we prove that $p_{j,o}=p_{k,o}$ for any item $o$ appearing in $\succ_{j,k}$.
	We compare probability of possible worlds where agent $j$ gets $o\in\ucs(\succ_{j,k},o_m)$ with those where agent $k$ gets $o$.
	
	First we consider the world $w$ where $j$ gets $o$ at round $r$ and $k$ gets $o'\in\ucs(\succ_{j,k},o_m)$ at round $r'$.
	Let $w'$ satisfy $j$ gets $o'$ at round $r$, $k$ gets $o$ at round $r'$, and the result of other lotteries keep the same as $w$.
	In $w'$, we see that $k$ wins the lottery for $o$ instead of $j$, and $j$ participates in lotteries at rounds $r+1$ to $r'$ instead of $k$.
	We also see that for every lottery $l_o$, $\lvert N(l_o)\rvert$ keep the same in worlds $w$ and $w'$.
	Therefore we have that $Pr(w)= Pr(w')$.
	
	Then we consider the world $w$ where $j$ gets $o$ at round $r$ and $k$ does not get items appearing in $\succ_{j,k}$.
	Let $o'$ be the last item $k$ applies for in $\succ_{j,k}$ at round $r'$, and $W_j$ be the set of worlds which are the same as $w$ from rounds $1$ to $r'$.
	Here the probability of $W_j$ can also be computed as $Pr(W_j)=\Pi_{l\in L(r^*),r^*\le r'}\frac{1}{\lvert N(l)\rvert}$.
	We construct another set $W_k$ such that for any $w\in W_j$,
	\begin{enumerate*}[label=(\roman*)]
		\item the winners of lotteries are the same as $w$ at round $1$ to $r-1$,
		\item the winner of $l(o)$ is $k$ at round $r$, and any other $l\in L(r)$ is the same as $w$,
		\item $j$ participates in lotteries at rounds $r+1$ to $r'$ instead of $k$.
	\end{enumerate*}
	Then we see that for every lottery $l\in L(r^*)$ with $r^*\le r'$, $\lvert N(l)\rvert$ are the same in any world $w\in W_j$ and $w'\in W_k$.
	Therefore we have that $Pr(W_j)= Pr(W_k)$.
	
	Together we have that $p_{j,o}= p_{k,o}$ for any $o$ appearing in $\succ_{j,k}$.
\end{proof}

\else
\begin{proof}
	Consider an arbitrary instance $(N,M)$ and a strict linear preference profile $R$, and let $P=\mathbb E(\text{\am}(R))$.
	
	\vspace{0.5em}\noindent{\bf Part 1: $\mathbb E(\text{\am}(R))$ is \efcr{}.}\quad

	Let $A=\text{\am}(R)$ be any one of the possible outcomes of \am{} applied to $R$. We show that $A$ is \fhcr{}. 
	We prove by mathematical induction that the items to be assigned at each round $r$ of~\Cref{alg:am} are exactly those in $\tps{A}{r}$ (defined in~\Cref{dfn:fhcr}), and that $A$ is \fhcr{} by showing that every item $o\in \tps{A}{r}$ is assigned to an agent most eager for it in round $r$, i.e.,
	\begin{equation}\label{eq:thm:amchar:1}
		o=\tp{A^{-1}(o)}{M\setminus\bigcup_{r^*<r}\tps{A}{r^*}}.
	\end{equation}
	
	\vspace{0.5em}\noindent{\bf Base case.} When $r=1$, any $o\in \tps{A}{1}$ satisfies that $o=\tp{j}{M}$ for some $j$.
	In~\Cref{alg:am}, all such agents are in $N_o$ on Line~5, and $o$ is assigned to one of them by Line~6 at that round.
	It means that $o=\tp{A^{-1}(o)}{M}$, which is equivalent to Eq~(\ref{eq:thm:amchar:1}) when $r=1$.
	
	\vspace{0.5em}\noindent{\bf Inductive step.} Consider the case that $r>1$. Suppose that for each $r'<r$, $\am{}(R)$ assigns the items in $\tps{A}{r'}$ during round $r'$ and it holds for every item $o'\in\tps{A}{r'}$, that $o'=\tp{A^{-1}(o')}{M\setminus\bigcup_{r^*<r'}\tps{A}{r^*}}$. We will show that at the end of round $r$, for every $o\in \tps{A}{r}$, $o=\tp{A^{-1}(o)}{M\setminus\bigcup_{r^*<r}\tps{A}{r^*}}$.
	By the assumption and Line~3, we have that at the beginning of round $r$:
	\begin{itemize}[leftmargin=*,itemsep=0em,topsep=1em]
	    \item the set of remaining items is $M'=M\setminus\bigcup_{r^*<r}\tps{A}{r^*}$, since items in $\bigcup_{r^*<r}\tps{A}{r^*}$ are allocated before round $r$, and
	    \item for any agent $k\in N'$ who has not received any item yet, it holds that $A(k)\notin \tps{A}{r'}$ for any $r'<r$.
	\end{itemize}
	Therefore, if there exists an agent $j\in N_o\subseteq N'$ (i.e., $o=\tp{j}{M'}$) on Line~5, then an agent in $N_o$ gets $o$ by Line~6, i.e., $A^{-1}(o)\in N_o$, which implies Eq~(\ref{eq:thm:amchar:1}).
	
	By the induction hypothesis we have that Eq~(\ref{eq:thm:amchar:1}) holds for any $o\in \tps{A}{r}$ with $r\ge1$, which means that $A$ satisfies \fhcr{}.
	It follows that $\mathbb E(\text{\am}(R))$ is \efcr{}, and therefore \epopt{} by~\Cref{prop:fhcr}.
	
	\vspace{1em}\noindent{\bf Part 2: $\mathbb E(\text{\am}(R))$ is \sdwef{}.}\quad
	
	Before proceeding with the proof we introduce some notation for convenience:
	\begin{itemize}[label=-,itemindent=1em,leftmargin=0pt,topsep=0pt,itemsep=0pt]
	    \item For ease of exposition, we will refer to each possible execution of \am{} as a ``world'', denoted $w$, and $\text{\am}^w(R)$ to be the corresponding deterministic assignment output by \am{}.
    	It is easy to see that if $w\neq w'$, then $\text{\am}^w(R)\neq \text{\am}^{w'}(R)$.
    	Let $W(R)$ be the set of all possible worlds for the given instance with $R$, and $W$ for short when $R$ is clearly given in the context.
    	The probability of $w$, denoted $Pr(w)$, can be computed according to the lotteries in each round.

	    \item We use $l$ to refer to a lottery and $N(l)$ be the set of agents who participate in $l$.
	    Let $L(w,r)$ denote the set of lotteries in round $r$ of world $w$ ($w$ can be omitted when clear), and $r(w)$  be the total rounds of $w$.
    	Specially, $l^w_o$ refers to the lottery for item $o$ in $w$.
    	Then we have that 
    	$$Pr(w)=\Pi_{l\in L(r),r\le r(w)}\frac{1}{\lvert N(l)\rvert}=\Pi_{o\in M}\frac{1}{\lvert N(l^w_o)\rvert}$$ 
    	since every item can only be allocated once through a lottery.
    	Let $Pr(W')=\sum_{w\in W'}Pr(w)$ for $W'\subseteq W$.
    	If $W'$ is the set of all the worlds with the same lotteries and winners for the first $r$ rounds, then $Pr(W')=\Pi_{l\in L(r'),r'\le r}\frac{1}{\lvert N(l)\rvert}$.
    	For $P=\mathbb E(\text{\am}(R))$, we have that $p_{j,o}=Pr(\{w\in W\mid \am^w(R)(j)=o\})$, i.e., the probability of all the worlds where $j$ gets items $o$.

    	\item For ease of exposition, we use $M^w_{r},N^w_{r},N^w_{o,r}$ to refer to the values of variables $M',N',N_{o}$ at the beginning of each round $r$ during the execution of~\Cref{alg:am} in the world $w$, and omit $w$ when it is clear from the context.
	\end{itemize}
	
	\paragraph{Start of proof of Part 2.} Consider an arbitrary pair of agents $j$ and $k$ such that $P_{k}\sd{j}P_j$ and $j\neq k$.
	Without loss of generality, let $\succ_j$ be $o_1\succ_j o_2 \succ_j \cdots \succ_j o_n$.
	We show by mathematical induction, on the rank $i=1,2,\dots,n$ with respect to $\succ_j$, that the following conditions hold:
	\begin{enumerate}[label={\bf Condition (\arabic*):},align=left,wide,labelindent=0pt,itemsep=0em,topsep=0pt]
		\item if there exits a round $r$ such that $j\in N_{o_i,r}$ and $k\in N_r$ in a certain world $w\in W$, then $k\in N_{o_i,r}$,
		\item if in a certain world $w\in W$, $j$ gets some $o\succ_j o_i$ at round $r'$, then for  $r>r'$ with $k\in N_r$ and $\tp{j}{M_r}=o_i$, we have that $k\in N_{o_i,r}$, and
		\item $p_{k,o_i}=p_{j,o_i}$.
	\end{enumerate}

	\vspace{0.5em}\noindent{\bf Base case.} First we prove conditions (1)-(3) for $i=1$.
	Since no $o\succ_j o_1$, we have condition (2) trivially true.
	For every possible world $w\in W$, we have that $j\in N_{o_1,1}$ and $k\in N_1=N$.
	If  $k\notin N_{o_1,1}$, then $k$ does not participate in the lottery for $o$ , and she does not get $o$ in any $w$, which means that $p_{k,o_1}=0<p_{j,o_1}$, a contradiction to $P_{k}\sd{j}P_j$.
	Therefore we have condition (1) for $i=1$. It follows that $p_{j,o_1}=Pr(\{w\in W\mid \text{\am}^w(R)(j)=o_1\})=Pr(\{w\in W\mid \text{\am}^w(R)(k)=o_1\})=p_{k,o_1}$, i.e., condition (3) holds for $i=1$.
	
	\vspace{0.5em}\noindent{\bf Inductive step.} Assume that conditions (1)-(3) hold for $i'<i$, we show that they also hold for $i$.
	We show that $p_{k,o_i}\le p_{j,o_i}$ by comparing the probabilities of worlds where $j$ gets $o_i$ with those where $k$ gets $o_i$ in the following cases (\romannumeral1)-(\romannumeral3).
	
	\vspace{0.5em}\noindent{\bf Case \romannumeral1:}
	Consider any world $w'\in W$ where agents $j$ do not get items better than $o_i$ according to $\succ_j$, and agents $k$ do not get items better than $o_i$ according to $\succ_k$.

    We first show that we do not need to consider the case that there does not exist a round $r$ such that $j\in N_{o_i,r}$ in world $w'$.
    If such $r$ does not exist, $\text{\am}^{w'}(R)(j)\neq o_i$, and there does not exists $r^*$ with $k\in N_{o_i,r^*}$ either.
	Otherwise, if such $r^*$ exists, $o_i\in M_{r^*}$.
	Let $o_{i^*}=\tp{j}{M_{r^*}}\succ_j o_i$, which means that $i>i^*$.
	By condition (1) for $i'<i$, when $j$ applies for $o_{i'}$, $k$ does too.
	It follows that $i^*\ge i$ since $o_{i^*}\neq o_i$, a contradiction.
	Together we see that both agents do not get $o_i$ in $w'$, and therefore $w'$ is out of discussion.

	Then we consider the case that there exists a round $r$ such that $j\in N_{o_i,r}$.
	Let $W_1$ be the set of worlds where lotteries and winners are the same as $w'$ for any round $r'<r$, and therefore all the worlds in $W_1$ have the same $M_{r},N_{r},N_{o,r}$ as $w'$ for $r$.
	By selection of $w'$, we have that $j,k\in N_{r}$.
	In the following, we compare the probabilities that $j$ and $k$ get $o_i$ in $W_1$ by cases.

	\vspace{0.5em}\noindent{\bf Case \romannumeral1~(a):} 
	If $k\in N_{o_i,r}$, then she participates in the lottery for $o_i$ at round $r$ and her chance to win is equal to $j$'s, which means that
	\begin{equation*}
	\begin{split}
	    & Pr(\{w\in W_1\mid \am^w(R)(j)=o_i\})\\
	    =& Pr(W_1)\cdot\frac{1}{\lvert N(l_{o_i})\rvert}\\
	    =& Pr(\{w\in W_1\mid \am^w(R)(k)=o_i\}).
	\end{split}
	\end{equation*}

	\vspace{0.5em}\noindent{\bf Case \romannumeral1~(b):} 
	If $k\notin N_{o_i,r}$, then by $j\in N_{o_i,r}\neq\emptyset$, $o_i$ is allocated to some agent in $N_{o_i,r}$ and never appears in later rounds, which means that 
	\begin{equation}
 	\label{eq:thm:amp:2:1}
	Pr(\{w\in W_1\mid \am^w(R)(j)=o_i\})> Pr(\{w\in W_1\mid \am^w(R)(k)=o_i\})=0.
	\end{equation}
	
	\vspace{0.5em}\noindent{\bf Case \romannumeral2:}
	In this case, we consider the world where one of agents $j$ and $k$ gets an item better than $o_i$ with respect to $\succ_j$.
	For any world $w_k\in W$ in which agent $j$ gets item $o_h \succ_j o_i$ at round $r'$ and $k$ does not get any item better than $o_i$ with respect to $\succ_k$, let $r$ satisfy $o_i=\tp{j}{M_r}$.
	
    First we show that we do not need to consider the case where such an $r$ does not exist in $w_k$.
	In that case, we have that $k\notin N_{o_i,r^*}$ for any $r^*$.
	Otherwise, if $k\in N_{o_i,r^*}$ for some $r^*$, then $k\in N_{r^*}$.
	Let $o_{i^*}=\tp{j}{M_{r^*}}$.
	We have that $o_{i^*}\succ_j o_i$, i.e, $i^*<i$, and the fact that $o_{i^*}\neq o_i$ contradicts condition (1) for $i^*$ if $j\in N_{r^*}$, and condition (2) for $i^*$ if $j\notin N_{r^*}$.
	Therefore $k\notin N_{o_i,r^*}$ for any $r^*$ if such $r$ does not exist, which means that $\text{\am}^{w_k}(R)(k)\neq o_i$ and we do not need to consider $w_k$.
	
	Then we consider the case where such $r$ exists.
	Recall that agent $j$ gets item $o_h \succ_j o_i$ at round $r'$, and $r$ satisfy $o_i=\tp{j}{M_r}$.
	By condition (1) for $i'\le h$, neither of agents applies for any $o$ with $o_i\succ_j o$ before round $r'$, and it follows that $r>r'$.
	Let $W_k$ be the set of worlds where lotteries and winners are the same as $w_k$ for any round $r^*<r$.
	Correspondingly, we find a set of worlds $W_j$ such that
	\begin{itemize}[topsep=0pt,itemsep=0em,leftmargin=0pt,itemindent=1em]
		\item for any round in $\{1,\dots,r'-1,r'+1,\dots,r\}$, lotteries and winners are the same as $w_k$,
		\item for round $r'$, lotteries are the same as $w_k$, and so do winners except the one for item $o_h$, and
		\item agent $k$ wins the lottery of $o_h$ at round $r'$.
	\end{itemize}
	We have that $W_j\neq\emptyset$, because $k$ participates in the lottery for $o_h$ at round $r'$ since $k\in N_{o_h,r'}$ by condition (1) for $h$, which means that $k$ is possible to win $o_h$ instead of $j$, and then $j$ participates in the same lotteries instead of $k$ does in $w_k$ till round $r$ by condition (2) for $h<i'\le i$.
	By construction of $W_j$ and $W_k$, $Pr(W_j)=Pr(W_k)$.
	For any $w\in W_j$ and $w'\in W_k$, we have that $M^w_{r}=M^{w'}_r$, $j\in N^w_{r}$, $k\in N^{w'}_{r}$, and $N^w_{r}\setminus\{j\}=N^{w'}_{r}\setminus\{k\}$.
	By selection of $r$ such that $o_i=\tp{j}{M_r}$, we obtain that $j\in N^w_{o,r}$.
	In the following, we compare the probabilities that $j$ and $k$ get $o_i$ in $W_j$ and $W_k$ respectively by cases.
	
	\vspace{0.5em}\noindent{\bf Case \romannumeral2~(a):}
	If $o_i=\tp{k}{M_r}$, i.e., $k\in N^{w'}_{o_i,r}$, then by construction of $W_j$ and $W_k$, $N(l^w_{o_i})\setminus\{j\}=N(l^{w'}_{o_i})\setminus\{k\}$.
	    	It follows that $\lvert N(l^w_{o_i})\rvert =\lvert N(l^{w'}_{o_i})\rvert$ and
    	    \begin{equation*}
            	\begin{split}
            	    & Pr(\{w\in W_j\mid \am^w(R)(j)=o_i\})=Pr(W_j)\cdot\frac{1}{\lvert N(l^w_{o_i})\rvert}= Pr(W_k)\cdot\frac{1}{\lvert N(l^{w'}_{o_i})\rvert}\\
            	    =& Pr(\{w\in W_k\mid \am^w(R)(k)=o_i\}).
            	\end{split}
        	\end{equation*}
	
	\vspace{0.5em}\noindent{\bf Case \romannumeral2~(b):}
	If $o_i\neq\tp{k}{M_r}$, i.e., $k\notin N^{w'}_{o_i,r}$, we discuss in case of $N^{w'}_{o_i,r}$.
	When $N^{w'}_{o_i,r}\neq\emptyset$, then $o_i$ is allocated to some agent in $N^{w'}_{o_i,r}$, which means that $Pr(\{w\in W_j\mid \am^w(R)(j)=o_i\})> Pr(\{w\in W_k\mid \am^w(R)(k)=o_i\})=0$.
	When $N^{w'}_{o_i,r}=\emptyset$,  we have that $N^w_{o_i,r}=\{j\}$.
	It means that $j$ is the only applicant for $o_i$, i.e., $\lvert N(l^w_{o_i})\rvert=1$, and therefore gets it in any $w\in W_j$.
	As for agent $k$, she applies for $o'\neq o_i$ at round $r$ in $w'\in W_k$, and $o_i\succ_j o'$ by the selection of $o_i$.
	It follows that
	\begin{equation}
 	\label{eq:thm:amp:2:2}
	\begin{split}
	    & Pr(\{w\in W_j\mid \am^w(R)(j)=o_i\})= Pr(W_j)\cdot\frac{1}{\lvert N(l^w_{o_i})\rvert}=Pr(W_j)\\
	    >& 1 - Pr(\{w\in W_k\mid \am^w(R)(k)=o'\})\ge Pr(\{w\in W_k\mid \am^w(R)(k)=o_i\}).
	\end{split}
	\end{equation}
	

	
	\vspace{0.5em}\noindent{\bf Concluding the inductive step.} From cases \romannumeral1{} and \romannumeral2, we have covered all the worlds that agent $k$ can possibly get item $o_i$, and we obtain that $p_{k,o_i}\le p_{j,o_i}$.
	With the assumption that $P_{k}\sd{j}P_j$ and condition (3) holds for $i'<i$, it follows that  $p_{k,o_i}\ge p_{j,o_i}$.
	Therefore we have $p_{k,o_i}=p_{j,o_i}$, i.e., condition (3) holds for $i$.
	The equality also requires that
	\begin{itemize}[topsep=0pt,itemsep=0em,leftmargin=0pt,itemindent=1em]
	    \item If in the world $w$, $j\in N_{o_i,r}$ and $k\in N_r$, then $w$ belongs to case \romannumeral1{} where agent $j$ do not get items better than $o_i$ according to $\succ_j$ and agents $k$ do not get items better than $o_i$ according to $\succ_k$.
	    By case \romannumeral1, we have that $k\in N_{o_i,r}$, i.e. condition (1) for rank $i$.
	    Otherwise if $k\notin N_{o_i,r}$, then it fits into case~\romannumeral1~(b), and we have Eq~(\ref{eq:thm:amp:2:1}) which leads to $p_{k,o_i}< p_{j,o_i}$, a contradiction to assumption that $P_{k}\sd{j}P_j$.
	    
	    \item If in the world $w$, agent $j$ gets some $o\succ_j o_i$ at round $r'$, and there exists round $r$ such that $k\in N_r$ with $o_i=\tp{j}{M_r}$, then $w$ belongs to case \romannumeral2{} where agent $j$ gets an item better than $o_i$ according to $\succ_j$ while agents $k$ do not get items better than $o_i$ according to $\succ_k$.
	    By case \romannumeral2~(b), we have that $o_i=\tp{k}{M_r}$, i.e. $k\in N_{o_i,r}$, and it follows that condition (2) holds for rank $i$.
	    Otherwise if $o_i\neq\tp{k}{M_r}$, then it fits into case~\romannumeral2~(b), and we have~Eq~(\ref{eq:thm:amp:2:2}) which leads to $p_{k,o_i}< p_{j,o_i}$, a contradiction to assumption that $P_{k}\sd{j}P_j$.
	\end{itemize}

	With the induction above, we prove that $p_{k,o_i}=p_{j,o_i}$ for any $i$. It follows that if $P_k\succ_j P_j$, $P_k=P_j$.
	
	\vspace{1em}\noindent{\bf Part 3: $\mathbb E(\text{\am}(R))$ is \sdsp{}.}\quad
    We continue to use the new notations introduced at the beginning of Part 2.
	Without loss of generality, let $\succ_j$ be $o_1\succ_j o_2 \succ_j \cdots$.
	Let profile $R'=(\succ'_j,\succ_{-j})$ where $\succ'_j$ is any preference that agent $j$ misreports, and $Q=\am(R')$.
	Assume that $Q_j\sd{} P_j$.
	We show by mathematical induction, for rank $i=1,2,\dots$ with respect to agent $j$, that the following conditions hold:
	\begin{enumerate}[label={\bf Condition (\arabic*):},align=left,wide,labelindent=0pt,itemsep=0.5em,topsep=1em]
		\item when $j\in N^{w}_{o_i,r}$ in a world $w\in W(R)$,
		for any $w'\in W(R')$ where lotteries and winners are the same as $w$ before round $r$, we have that $j\in N^{w'}_{o_i,r}$, and
		\item $p_{j,o_i}=q_{j,o_i}$.
	\end{enumerate}

	\vspace{0.5em}\noindent{\bf Base case.}
	First we show condition (1) for $i=1$.
	It is easy to see that in any $w\in W(R)$, $j$ applies for $o_1$ at round $1$, i.e., $j\in N^{w}_{o_1,1}$.
	We claim that $j\in N^{w'}_{o_1,1}$ for any $w'\in W(R')$.
	Otherwise, if $j\notin N^{w'}_{o_1,1}$ in some $w'$, we show that both of the possible cases below lead to a contradiction to our assumption that $Q_j\sd{} P_j$.
	
	\begin{itemize}[topsep=0pt,itemindent=1em,itemsep=0em,leftmargin=0pt]
	    \item When $N^{w'}_{o_1,1}\neq\emptyset$, $o_1$ is assigned to some agent in $N^{w'}_{o_1,1}$ in $w'$. It follows that $p_{j,o_1}>q_{j,o_1}=0$, a contradiction to the assumption.
	    \item When $N^{w'}_{o_1,1}=\emptyset$, i.e., $N^{w}_{o_1,1}=\{j\}$, $o_1$ is assigned to the only applicant $j$ in $w$, while she applies for item $o'\neq o_1$ in $w'$ and $o_1\succ_j o'$ trivially.
	    It follows that
    	\begin{equation*}
    	\begin{split}
    		p_{j,o_1}=&Pr(\{w\in W(R)\mid \text{\am}^w(R)(j)=o_1\})=1\\
    		>&1-Pr(\{w\in W(R')\mid \text{\am}^w(R')(j)=o'\})\\
    		\ge&Pr(\{w\in W(R')\mid \text{\am}^w(R')(j)=o_1\})=q_{j,o_1},\\
    	\end{split}
    	\end{equation*}
    	a contradiction to the assumption.
	\end{itemize}
	
	In this way, we have $j\in N^{w'}_{o_1,1}$ for any $w'\in W(R')$, i.e., condition (1) for $i=1$, which means that $\lvert N(l^w_{o_1})\rvert=\lvert N(l^{w'}_{o_1})\rvert$ and
	\begin{equation*}
		\begin{split}
			p_{j,o_1}=&Pr(\{w\in W(R)\mid \text{\am}^w(R)(j)=o_1\}=\frac{1}{\lvert N(l^w_{o_1})\rvert}\\
			=&\frac{1}{\lvert N(l^{w'}_{o_1})\rvert}=Pr(\{w\in W(R')\mid \text{\am}^w(R')(j)=o_1\})=q_{j,o_1},\\
		\end{split}
	\end{equation*}
	 i.e., condition (2) for $i=1$.
	
	\vspace{0.5em}\noindent{\bf Inductive step.} Supposing conditions (1) and (2) hold for $i'<i$, we show that they also hold for $i$.
	First we show condition (1) for $i$.
	For an arbitrary world $w^*\in W(R)$ with $j\in N^{w^*}_{o_i,r}$, let $W_1\subseteq W(R)$ and $W_2\subseteq W(R')$ be the sets of worlds where lotteries and winners are the same as $w^*$ before round $r$ with respect to $R$ and $R'$, respectively.
	By construction of $W_1$ and $W_2$, $Pr(W_1)=Pr(W_2)$.
	For any $w\in W_1$ and $w'\in W_2$, $M^w_r=M^{w'}_r$, $N^{w}_r=N^{w'}_r$, and $N^{w}_{o,r}=N^{w'}_{o,r}$ for any $o\in M^w_r\setminus\{o_i\}$.
	We have that $j\in N^{w}_{o_i,r}$ by the selection of $w^*$, and we claim that $j\in N^{w'}_{o_i,r}$ for any $w'\in W(R')$.
	Otherwise, if $j\notin N^{w'}_{o_i,r}$ in some $w'$, we show that both of the possible cases below lead to a contradiction to our assumption that $Q_j\sd{} P_j$.
	
	\begin{itemize}[topsep=0pt,itemindent=1em,itemsep=0em,leftmargin=0pt]
	    \item 	When $N^{w'}_{o_i,r}\neq\emptyset$, item $o_i$ is assigned to some agent in $N^{w'}_{o_i,r}$ in $w'$. It follows that
	\begin{equation}\label{eq:thm:amp:1}
		\begin{split}
			&Pr(\{w\in W_1\mid \text{\am}^w(R)(j)=o_i\})=Pr(W_1)\\
			>&Pr(\{w\in W_2\mid \text{\am}^w(R')(j)=o_i\})=0.
		\end{split}
	\end{equation}
	With condition (1) for $i'<i$, in world $w'$, agent $j$ can only apply for $o_i$ at round $r'\ge r$ not earlier than she does in $w$, which means that $p_{j,o_i}>q_{j,o_i}=0$ with Eq~(\ref{eq:thm:amp:1}).
	Together with condition (2) for $i'<i$, we have a contradiction to the assumption that $Q_j\sd{} P_j$.
	\item 	When $N^{w'}_{o_i,r}=\emptyset$, i.e., $N^{w}_{o_i,r}=\{j\}$, item $o_i$ is assigned to the only applicant $j$ in $w$ while she applies for item $o'\neq o_i$ in $w'$ and $o_i\succ_j o'$ by the selection.
	It follows that
	\begin{equation*}
		\begin{split}
			&Pr(\{w\in W_1\mid \text{\am}^w(R)(j)=o_i\})=Pr(W_1)\\
			>&Pr(W_2)-Pr(\{w\in W_2\mid \text{\am}^w(R')(j)=o'\})
			\ge Pr(\{w\in W_2\mid \text{\am}^w(R')(j)=o_i\}).
		\end{split}
	\end{equation*}
	This means that $p_{j,o_i}>q_{j,o_i}$, a contradiction to the assumption that condition (2) holds for $i'<i$.
	\end{itemize}

	In this way, we have condition (1) for $i$, which means that $\lvert N(l^w_{o_i})\rvert=\lvert N(l^{w'}_{o_i})\rvert$ and
	\begin{equation*}
		\begin{split}
			&Pr(\{w\in W_1\mid \text{\am}^w(R)(j)=o_i\}=Pr(W_1)\cdot\frac{1}{\lvert N(l^w_{o_i})\rvert}\\
			=&Pr(W_2)\cdot\frac{1}{\lvert N(l^{w'}_{o_i})\rvert}=Pr(\{w\in W_2\mid \text{\am}^w(R')(j)=o_1\}),\\
		\end{split}
	\end{equation*}
	which implies $p_{j,o_i}=q_{j,o_i}$, i.e., condition (2) for $i$.
	
	By mathematical induction, we have that $p_{j,o}=q_{j,o}$ for any $o$, and therefore if $Q_j\sd{} P_j$, that $Q_j=P_j$ .
	
	\vspace{1em}\noindent{\bf Part 4: $\mathbb E(\text{\am}(R))$ is \etep{}.}\quad
	
	For agents $j$ and $k$, we prove that $p_{j,o}=p_{k,o}$ for any item $o$ appearing in $\succ_{j,k}$.
	We compare probability of possible worlds where agent $j$ gets $o\in\ucs(\succ_{j,k},o_m)$ with those where agent $k$ gets $o$.
	
	First we consider the world $w$ where $j$ gets $o$ at round $r$ and $k$ gets $o'\in\ucs(\succ_{j,k},o_m)$ at round $r'$.
	Let $w'$ satisfy $j$ gets $o'$ at round $r$, $k$ gets $o$ at round $r'$, and the result of other lotteries keep the same as $w$.
	In $w'$, we see that $k$ wins the lottery for $o$ instead of $j$, and $j$ participates in lotteries at rounds $r+1$ to $r'$ instead of $k$.
	We also see that for every lottery $l_o$, $\lvert N(l_o)\rvert$ keep the same in worlds $w$ and $w'$.
	Therefore we have that $Pr(w)= Pr(w')$.
	
	Then we consider the world $w$ where $j$ gets $o$ at round $r$ and $k$ does not get items appearing in $\succ_{j,k}$.
	Let $o'$ be the last item $k$ applies for in $\succ_{j,k}$ at round $r'$, and $W_j$ be the set of worlds which are the same as $w$ from rounds $1$ to $r'$.
	Here the probability of $W_j$ can also be computed as $Pr(W_j)=\Pi_{l\in L(r^*),r^*\le r'}\frac{1}{\lvert N(l)\rvert}$.
	We construct another set $W_k$ such that for any $w\in W_j$,
	\begin{enumerate*}[label=(\roman*)]
		\item the winners of lotteries are the same as $w$ at round $1$ to $r-1$,
		\item the winner of $l(o)$ is $k$ at round $r$, and any other $l\in L(r)$ is the same as $w$,
		\item $j$ participates in lotteries at rounds $r+1$ to $r'$ instead of $k$.
	\end{enumerate*}
	Then we see that for every lottery $l\in L(r^*)$ with $r^*\le r'$, $\lvert N(l)\rvert$ are the same in any world $w\in W_j$ and $w'\in W_k$.
	Therefore we have that $Pr(W_j)= Pr(W_k)$.
	
	Together we have that $p_{j,o}= p_{k,o}$ for any $o$ appearing in $\succ_{j,k}$.
\end{proof}
\fi

\thmabmchar*
\begin{proof}
    We provide the satisfaction part for the sake of completeness.

    {\bf\noindent(Satisfaction)}
    To show \abm{} satisfies \efcr{}, we first prove by mathematical induction that given a profile $R$, $A=\text{\abm}^{\impord{}}(R)$  satisfies \fhcr{} for any $\impord{}$.

    \vspace{0.5em}\noindent{\bf Base case.}
    At round $1$ of \abm$^{\impord{}}$, every agent applies for their top ranked items.
    For any item $o$, let $N^1_{o}=\{j:o=\tp{j}{M}\}$, i.e., the set of agents who rank $o$ highest.
    We also note that $o$ with $N^1_{o}\neq\emptyset$ satisfies $o\in \tps{A}{1}\subseteq M$ trivially.
    According to the priority order $\impord{}$, item $o$ is assigned to the agent with highest priority among $N^1_{o}$.
    It follows that $o=\tp{A^{-1}(o)}{M}$, which meets the requirement of \fhcr{} when $r=1$.
    
    \vspace{0.5em}\noindent{\bf Induction Step.}
    Assume that $A$ meets the requirement of \fhcr{} for any $r'<r$,
    For round $r$, let $M^r$ and $N^r$ be the set of available items and unsatisfied agents at the beginning of that round, respectively.
    By construction, it follows that $M^r=M\setminus\bigcup_{r'<r}\tps{A}{r'}$, and that for any $j\in N^r$, $A(j)\notin\bigcup_{r'<r}\tps{A}{r'}$.
    In~\Cref{alg:abm}, every agent in $N^r$ applies for their top ranked items among $M^r$.
    For any item $o$, let $N^r_{o}=\{j:o=\tp{j}{M^r}\}$, i.e., the set of agents who rank $o$ highest among $M^r$.
    We note that $o$ with $N^r_{o}\neq\emptyset$ satisfies $o\in \tps{A}{r}$ by the construction.
    Then $o$ is assigned to the agent ranked highest in $\impord{}$ among $N^r_{o}$.
    It follows that $o=\tp{A^{-1}(o)}{M^r}$, and we see that it meets the requirement of \fhcr{} for $r$.
    
    By induction, we have that the outcome of ABM$^{\impord{}}(R)$ satisfies \fhcr{} for any profile $R$, and therefore ABM$^{\pi}(R)=\sum \pi(\impord{})*\text{ABM}^{\impord{}}(R)$ satisfies \efcr{} by definition.
\end{proof}

    To prove~\Cref{thm:uprep,thm:familychar}, we show the following claim (where $M_{P,r}$ and $E_{P,r}(o)$ are defined in \Cref{dfn:sdcrf}).

	\begin{lemma}\label{lem:pre}
		Given any preference profile $R$ and any member $f$ of \pcr{}, let $P=f(R)$. Then, for any round $r$, it holds that:
		\begin{enumerate}[label=\rm(\roman*),wide,labelindent=0pt,topsep=0em,itemsep=0pt]
			\item $M_{P,r}=M'$ and $E_{P,r}(o)=N_o$ for each $o\in M_{P,r}$, where $M'$ is the set of items with remaining supply, and $N_o$ is the set of agents who are eager for item $o$ at the beginning of round $r$ during the execution of~\Cref{alg:pcr}.
			\item for any agent $j$ and item $o^*$ with $\tp{j}{M_{P,r-1}}\succ_j o^*\succ_j \tp{j}{M_{P,r}}$, it holds that $p_{j,o^*}=0$.
			\item for any round $r^*>r$, and any item $o\in M_{P,r^*}$, it holds that for any $j\in E_{P,r}(o)$,
		\end{enumerate}
		\begin{equation}\label{lem:pre:1}
			\sum_{o'\in\ucs(\succ_j,o)}p_{j,o'}=1.
		\end{equation}
	\end{lemma}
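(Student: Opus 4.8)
The plan is to prove the three parts simultaneously by induction on the round index $r$, after first isolating a structural property of the execution of \pcr{} that will be used throughout. Write $M'_r$ and $N_{o,r}$ for the values of the variables $M'$ and $N_o$ at the start of round $r$ in \Cref{alg:pcr}. The property I need is an \emph{exhaustion/satisfaction dichotomy}: at the end of any round, every item $o$ consumed during that round (i.e.\ with $N_o\neq\emptyset$) is either exhausted, hence removed from $M'$, or else every agent in $N_o$ has consumed her entire remaining demand and is satisfied. This follows directly from Eq~(\ref{eq:precon}): since $\omega_k(t)=0$ for $t>1$, we have $\int_{t_k}^{t_k+\rho}\omega_k(t){\rm d}t\le\int_{t_k}^{1}\omega_k(t){\rm d}t$ for each $k$, so the "all satisfied" equation in the second set of Eq~(\ref{eq:precon}) can hold only when each agent individually has zero remaining demand. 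Two consequences follow that I will invoke repeatedly: (a) because each agent consumes only her top available item each round, and an agent who consumes a surviving item becomes satisfied and consumes nothing thereafter, every agent consumes any fixed item in at most one round, and her total share $p_{k,o}$ is consumed in that single round; and (b) the items an agent consumes across successive rounds form a strictly $\succ$-decreasing sequence, and her cumulative consumption reaches exactly $1$ precisely when she becomes satisfied.

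For the base case $r=1$ we have $M_{P,1}=\{o:\sum_{k\in\emptyset}p_{k,o}<1\}=M=M'_1$ and $E_{P,1}(o)=\{j:o=\tp{j}{M}\}=N_{o,1}$, giving part (i); part (ii) is vacuous since $M_{P,0}=\emptyset$ forces the set of items strictly above $\tp{j}{M}$ to be empty; and part (iii) follows from the dichotomy, since an item $o=\tp{j}{M}$ surviving past round $1$ cannot have been exhausted, so every $j\in N_{o,1}$ is satisfied after consuming only $o$, whence $\ucs(\succ_j,o)=\{o\}$ and $p_{j,o}=1$.

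For the inductive step I first prove part (i) at round $r$. By the induction hypothesis $E_{P,r'}(o)=N_{o,r'}$ for each $r'<r$, so by consequence (a) the total amount of $o$ consumed strictly before round $r$ equals $\sum_{k\in\bigcup_{r'<r}N_{o,r'}}p_{k,o}=\sum_{k\in\bigcup_{r'<r}E_{P,r'}(o)}p_{k,o}$. Hence $o$ has positive remaining supply at the start of round $r$, i.e.\ $o\in M'_r$, if and only if this sum is strictly below $1$, i.e.\ $o\in M_{P,r}$; thus $M'_r=M_{P,r}$, and then $N_{o,r}=\{j:\tp{j}{M'_r}=o\}=\{j:\tp{j}{M_{P,r}}=o\}=E_{P,r}(o)$. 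Part (ii) follows from monotonicity of $M_{P,\cdot}$ (a restatement of part (i), since $M'$ only shrinks): any $o^*$ with $\tp{j}{M_{P,r-1}}\succ_j o^*\succ_j\tp{j}{M_{P,r}}$ satisfies $o^*\notin M_{P,r'}$ for $r'\ge r$ and $\tp{j}{M_{P,r'}}\succ_j o^*$ for $r'\le r-1$, so $o^*$ is never $j$'s top available item, is never consumed by her, and $p_{j,o^*}=0$. For part (iii), if $o\in M_{P,r^*}$ with $r^*>r$ then $o\in M_{P,r+1}$, so $o$ is not exhausted at the end of round $r$; by the dichotomy every $j\in E_{P,r}(o)=N_{o,r}$ is satisfied after round $r$, and by consequence (b) all items she has consumed through round $r$ are weakly preferred to $o$ and sum to $1$, so $\sum_{o'\in\ucs(\succ_j,o)}p_{j,o'}=1$.

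The main obstacle is the careful handling of the dichotomy and the associated bookkeeping. Establishing that the "all satisfied" branch of $\gamma_\omega$ forces \emph{each} agent, not merely the consuming group on aggregate, to have zero remaining demand is the crux on which everything else rests. I also need to be attentive to the harmless but genuine edge case of an already-satisfied agent who still appears in some $N_o$ while consuming nothing, so that the identification of the consumed amount of $o$ with $\sum_{k\in\bigcup_{r'<r}E_{P,r'}(o)}p_{k,o}$ in part (i) is exact.
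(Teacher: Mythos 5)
Your proposal is correct and takes essentially the same route as the paper's proof: a simultaneous induction on rounds establishing (i)--(iii) together, where your exhaustion/satisfaction dichotomy (term-wise equality forced by $\int_{t_k}^{t_k+\rho}\omega_k(t)\,\mathrm{d}t\le\int_{t_k}^{1}\omega_k(t)\,\mathrm{d}t$ in Eq~(\ref{eq:precon})) is exactly the fact the paper invokes when it asserts that each agent eating a non-exhausted item ends the round with her full remaining demand consumed. Your packaging is merely tidier---isolating the dichotomy and its consequences (a)--(b) up front, and deriving part (ii) from monotonicity of $M_{P,\cdot}$ rather than the paper's case analysis---but the underlying argument is the same.
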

	
	\begin{proof}
	We prove the claim by mathematical induction for every $r$.

    \vspace{0.5em}\noindent{\bf Base case.} When $r=1$, we see that $s(o)$ is initially set to $1$ with respect to the supply of item and $M_{P,1}=M$ which is also the initial value of $M'$ on Line~3 of~\Cref{alg:pcr}.
	Therefore $E_{P,1}(o)=\{j\mid \tp{j}{M'}\}=N_o$ by Line~4 for round $1$.
	Together we have (\romannumeral1) for $r=1$.

	Besides, since no such $o^*\succ_j o=\tp{j}{M_{P,1}}$ exists for any agent $j$, we have that (\romannumeral2) holds for $r=1$ trivially.
	
	Since $t_j$ is set to $0$ for any $j\in N$ and $\sum_{k\in N_{o}}\int^{1}_{0}\omega_{k}(t){\rm d}t\ge s(o)=1$, $o$ is consumed to exhaustion by agents in $N_o$ at round $1$.
	It also means that $\sum_{k\in E_{P,1}(o)}p_{k,o}=1$, $o\notin M_{P,r^*}$ with $r^*>1$, and therefore (\romannumeral3) holds trivially.
	
	\vspace{0.5em}\noindent{\bf Inductive step.} Supposing that (\romannumeral1)-(\romannumeral3) holds for any $r'<r$, we show that it also holds for $r$.
	In~\Cref{alg:pcr}, at the beginning of round $r$, $M'$ only contains item $o$ with positive supply, i.e., $s(o)>0$, after consumption of previous rounds by Line~8.
	Because (\romannumeral1) holds for $r'<r$, we have that only agents in $\bigcup_{r'<r}E_{P,r'}(o)$ are able to consume $o$ before round $r$, which means that $s(o)=1-\sum_{j\in\bigcup_{r'<r}E_{P,r'}(o)}p_{j,o}$, and therefore $M'=M_{P,r}$.
	Then we have that $N_o=\{j\mid o=\tp{j}{M'}\}=E_{P,r}(o)$.
	Together we have that (\romannumeral1) holds for $r$.
	
	Then we show that (\romannumeral2) holds for $r$.
	For any agent $j$ and item $o^*$ such that $\tp{j}{M_{P,r-1}}\succ_j o^*\succ_j \tp{j}{M_{P,r}}$, it means that  $o^*\notin M_{P,r}$ and agent $j$ cannot get shares of $o^*$ at round $r-1,r$ or later rounds.
	\begin{itemize}[topsep=0pt,itemindent=1em,itemsep=0em,leftmargin=0pt]
	    \item If $\tp{j}{M_{P,r-1}}=\tp{j}{M_{P,r}}$, we have that (\romannumeral2) is trivially true.
	    \item If $\tp{j}{M_{P,r-1}}\neq\tp{j}{M_{P,r}}$ and $o^*= \tp{j}{M_{P,r'}}$ for some $r'<r-1$, we know that $j$ consumes $\tp{j}{M_{P,r-1}}$ at round $r-1$, and $o^*$ at round $r'$ because (\romannumeral1) holds for $r'$, which means that $\tp{j}{M_{P,r-1}}$ and $o^*$ are available at round $r'$.
	    It follows that both items $o^*,\tp{j}{M_{P,r-1}}\in M_{P,r'}$, a contradiction to $\tp{j}{M_{P,r-1}}\succ_j o^*$.
	    Therefore $o^*\neq\tp{j}{M_{P,r'}}$ for any $r'<r-1$, i.e., agent $j$ does not get shares of $o^*$ before round $r-1$.
	\end{itemize}
    Together we have that $p_{j,o^*}=0$, i.e., (\romannumeral2) holds for $r$.

	Finally, we show (\romannumeral3) holds for $r$.
	For any $o\in M'$, if $o\in M_{P,r^*}$ for some $r^*>r$, then
	\begin{equation}\label{lem:pre:2}
		\sum_{k\in\bigcup_{r''\le r}E_{P,r''}(o)}p_{k,o}<\sum_{k\in\bigcup_{r''\ge r^*}E_{P,r''}(o)}p_{k,o}\le1.
	\end{equation}
	Eq~(\ref{lem:pre:2}) implies that $o$ is still available after the consumption at round $r$.
	Wiht $\omega_{j}(t)=0$ when $t>1$ for any $j\in N$, it follows that 
	\begin{equation}\label{{lem:pre:3}}
	    p_{j,o}=
	    \begin{cases}
	        \int^{1}_{t_j}\omega_{j}(t){\rm d}t, & t_j<1\\
	        0, &  t_j\ge1.\\
	    \end{cases}
	\end{equation}
	
	With (\romannumeral1) for $r'$, we know that for any $o'=\tp{j}{M_{P,r'}}$ with $j\in N_{o}$ and $r'<r$, agent $j\in E_{P,r'}(o')$, which means that $o'$ is consumed by $j$ if agent $j$ is not satisfied at round $r'$.
	Moreover,  $o'\in\ucs(\succ_j,o)$ due to $M_{P,r}\subseteq M_{P,r'}$, and we have that items consumed by $j$ in time period $[0, t_j]$ are not worse then $o$ according to $\succ_j$.
	Then with Eq~(\ref{{lem:pre:3}}) and (\romannumeral2) for $r'\le r$, $\sum_{\hat{o}\in\ucs(\succ_j,o)}p_{j,\hat{o}}= \int^{1}_{0}\omega_{j}(t){\rm d}t$=1, i.e., Eq~(\ref{lem:pre:1}), which complete the proof of (\romannumeral3) for current $r$.
	\end{proof}

\thmuprep*

\begin{proof}
	Given an instance with $R$, let $P=\text{\upre}(R)$.

	\vspace{0.5em}\noindent{\bf Part 1: $\upre{}(R)$ is \sdwef{}.}\quad
	
	Assume that there exist agents $j$ and ${j'}$ such that $P_{j'} \sd{j} P_j$.
	Without loss of generality, let $o_r$ be the item such that $j\in N_{o_r}$ at round $r$, and we have the following claim:
	\begin{claim}\label{clm:thm:pcrp:1}
		For $r'<r$, if $o_r\neq o_{r'}$, then $o_r\succ_j o_{r'}$.
	\end{claim}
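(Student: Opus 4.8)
The plan is to prove Claim~\ref{clm:thm:pcrp:1} by contradiction, keeping the standing hypothesis $P_{j'}\sd{j}P_j$ in play throughout rather than setting it aside. The facts about \upre{} I would rely on are: all agents read off their top item from the \emph{same} pool $M'$ of still-available items; this pool only shrinks across rounds, and an item leaves it exactly when its supply is exhausted; and at each round an unsatisfied agent consumes her top available item, so $o_r=\tp{j}{M'}$ for the pool $M'$ present at round $r$. I would fix $r'<r$ with $o_r\neq o_{r'}$ and suppose, toward a contradiction, that the conclusion fails; because $\succ_j$ is a strict linear order, this supposition is exactly $o_{r'}\succ_j o_r$.

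First I would extract the structural consequence of this supposition. Since $o_r$ is $j$'s top item in the pool $M'$ present at round $r$ and $o_{r'}\succ_j o_r$, the item $o_{r'}$ cannot still belong to that pool (otherwise $o_{r'}$, not $o_r$, would be $j$'s top at round $r$); as the pool loses items only through exhaustion, $o_{r'}$ must have been consumed to exhaustion at some round in $\{r',\dots,r-1\}$. Next I would bring in the hypothesis: $P_{j'}\sd{j}P_j$ yields $\sum_{o\in\ucs(\succ_j,o_{r'})}p_{j',o}\ge\sum_{o\in\ucs(\succ_j,o_{r'})}p_{j,o}$, so $j'$ accumulates at least as much total probability as $j$ on the prefix of items weakly preferred to $o_{r'}$. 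Using \Cref{lem:pre}, which identifies the bookkeeping sets $M_{P,r}$ and $E_{P,r}(o)$ with the remaining-supply pool and the set of consumers of $o$ at each round, I would trace round by round that $j'$ must be among the consumers of $o_{r'}$ (and of every item $j$ ranks weakly above it) whenever $j$ is, so that the combined demand placed on $o_{r'}$ and its superior prefix is incompatible with $o_{r'}$ being exhausted strictly before $j$ reaches $o_r$ — contradicting either the unit supply of $o_{r'}$ or the dominance inequality above.

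The hard part will be converting ``$o_{r'}$ is exhausted before round $r$'' into a quantitative contradiction with $P_{j'}\sd{j}P_j$, because \upre{} advances the consumption clocks $t_j$ asynchronously — within a single while-loop iteration distinct items are eaten for distinct durations $\rho_o$ — so I cannot naively compare elapsed times of $j$ and $j'$ at a common round. To get around this I would lean on part~(iii) of \Cref{lem:pre}: any agent eager for an item that survives to a later round is already fully satisfied by weakly-preferred items. Applied to $j'$ and to the exhausted item $o_{r'}$, this lets me compare the \emph{total} shares $j$ and $j'$ obtain on $\ucs(\succ_j,o_{r'})$ without tracking the exact timing, and it is this timing-free comparison that closes the contradiction and thereby establishes $o_r\succ_j o_{r'}$.
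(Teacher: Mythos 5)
Your proposal cannot be completed, and the obstruction is worth naming precisely. The inequality in the printed claim is reversed (a typo in the paper): what is true, what the paper's own one-line justification establishes, and what the subsequent induction actually uses is $o_{r'}\succ_j o_r$ for $r'<r$ --- the item consumed \emph{earlier} is the more preferred one. (Check the later use: condition (2) of the induction quantifies over items $o'$ with $o_{r-1}\succ_j o'\succ_j o_r$, a range that is nonempty only if $o_{r-1}\succ_j o_r$.) The paper's proof is purely deterministic and takes one line: by \Cref{lem:pre}~(i), $o_r=\tp{j}{M_{P,r}}$ and $o_{r'}=\tp{j}{M_{P,r'}}$; the pool of items with remaining supply only shrinks across rounds, so $M_{P,r}\subseteq M_{P,r'}$, hence $o_r\in M_{P,r'}$, and since $o_{r'}$ is $j$'s top item of $M_{P,r'}$ and $o_r\neq o_{r'}$, strictness of $\succ_j$ gives $o_{r'}\succ_j o_r$. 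The standing hypothesis $P_{j'}\sd{j}P_j$ plays no role whatsoever: the claim is a statement about agent $j$'s consumption sequence alone, valid for every agent in every run of \upre{}. (The containment ``$M_{P,r'}\subseteq M_{P,r}$'' printed in the paper's proof is typo'd in the same direction as the claim itself.)

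Because of this, your proof by contradiction is attempting the impossible: the supposition you make ``toward a contradiction,'' namely $o_{r'}\succ_j o_r$, is exactly what always happens, and it is fully consistent with $P_{j'}\sd{j}P_j$, with unit supply, and with $o_{r'}$ being exhausted before round $r$. Concretely, run \upre{} on the profile of \Cref{fig:fhr}: agents $3$ and $4$ have identical preferences and receive identical allocations, so $P_4\sd{3}P_3$ holds, yet agent $3$ consumes $c$ at round $1$ (where $c$ is exhausted) and $e$ at round $2$, with $c\succ_3 e$. So the quantitative contradiction you defer to the final step --- that exhaustion of $o_{r'}$ before round $r$ is ``incompatible with the unit supply of $o_{r'}$ or the dominance inequality'' --- simply does not exist, and no amount of bookkeeping with \Cref{lem:pre}~(iii) will produce it. The repair is not to strengthen your argument but to reverse the claim's inequality and prove it directly as above, with no reference to $j'$ at all.
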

 	The claim holds because by~\Cref{lem:pre}~(\romannumeral1), $o_r=\tp{j}{M_{P,r}}$ and $o_{r'}\in M_{P,r'}\subseteq M_{P,r}$.

	We prove by mathematical induction that the following conditions hold for any round $r$ with $t_j<1$:
	\begin{enumerate}[label={\bf Condition (\arabic*):},align=left,wide,labelindent=0pt,itemsep=0pt,topsep=0pt]
		\item $t_{j'}=t_j$,
		\item $p_{j,o'}=p_{j',o'}=0$ for any $o'$ with $o_{r-1}\succ_j o'\succ_j o_r$,
		\item $j'\in N_{o_r}$, and
		\item $p_{j,o_r}=p_{j',o_r}$.
	\end{enumerate}

	\vspace{0.5em}\noindent{\bf Base case.}
	With $j\in N_{o_1}$ at round $1$, we know that $t_k=0$ for every $k$, i.e., condition (1) holds for $r=1$, and $o_1=\tp{j}{M'}=\tp{j}{M}$ by~\Cref{lem:pre}~(\romannumeral1).
	
	Condition (2) is trivially true since no item $o'\succ_j o_1$ exists.
	
	Then, we show that condition (3) holds for $r=1$.
	Item $o_1$ is consumed to exhaustion at this round by Line~6.1 because $\sum_{k\in N_{o_1}}\int^{t_k+1}_{t_k}\omega_{k}(t){\rm d}t\ge s(o_1)=1$, which means that no agent can get $o_1$ at any round $r^*>1$.
	Therefore $p_{j,o_1}>0$ due to $j\in N_{o_1}$.
	If $j'\notin N_{o_1}$, then $j'$ does no consume $o_1$, which means that $p_{j',o_1}=0<p_{j,o_1}$, a contradiction to the assumption that $P_{j'} \sd{j} P_j$.
	Then we have that $j'\in N_{o_1}$, i.e., condition (3) holds for $r=1$.
	
	Because $t_j=t_{j'}=0$, $p_{j,o_1}=\int_{0}^{\rho_{o_1}}\omega_j(r){\rm d}t=\int_{0}^{\rho_{o_1}}\omega_{j'}(r){\rm d}t=p_{j',o_1}$ by Eq~(\ref{eq:eatingfunction}), i.e., condition (4) for $r=1$.
	
	\vspace{0.5em}\noindent{\bf Inductive step.}
	Supposing that conditions (1)-(4) hold for any $r'<r$,
	we show that they also hold for $r$ with $t_j<1$.
	
	We see that condition (1) trivially holds for $r$, i.e., $t_{j'}=t_j$ due to the fact that by condition (3) for any $r'<r$, both $t_{j'}$ and $t_j$ increase by the same value $\rho_{o_{r'}}$ on Line~6.1 in each round $r'$.
	
	Then, we prove condition (2) for $r$.
	Here $o_r\neq o_{r-1}$, because otherwise we know that after the consumption at round $r-1$, $o_{r-1}$ is not exhausted, which means that agent $j\in N_{o_{r-1}}$ is satisfied.
	It follows that $t_j\ge 1$ at the beginning of round $r$, which we do not need to consider.
	
	For any $o'$ with $o_{r-1} \succ_j o'\succ_j o_r$, $p_{j,o'}=0$ by~\Cref{lem:pre}~(\romannumeral1) and~(\romannumeral2), and we show that $p_{j',o'}=0$.
	We have that $o'\notin M'$ at round $r$ because $j\in N_{o_r}$, i.e., $\tp{j}{M'}=o_r$, which means that $o'$ is unavailable for round $r^*\ge r$.
	With~\Cref{clm:thm:pcrp:1}, $o'$ is not consumed by $j$ at any round $r'<r$, and therefore it is also not consumed by $j'$ according to condition (3) for $r'<r$.
	Then $o'$ is never consumed by $j$ or $j'$, which means that $p_{j,o'}=p_{j',o'}=0$, i.e., condition (2) for $r$.
	
	Next, we prove condition (3) for $r$.
	We consider the following cases.
	\begin{itemize}[topsep=0pt,itemindent=1em,itemsep=0em,leftmargin=0pt]
	    \item If $o_r\in M_{P,r+1}$, then by~\Cref{clm:thm:pcrp:1} and condition (2) for $r'\le r$, it must hold that $\sum_{o^*\succ_j o_r}p_{j,o^*}\allowbreak=\int_{0}^{t_j}\omega_j(r){\rm d}t=t_j<1$.
	    By~\Cref{lem:pre}~(\romannumeral1), $j\in N_{o_r}=E_{P,r}(o)$.
	    By~\Cref{lem:pre}~(\romannumeral3) and $o_r\in M_{P,r+1}$, $\sum_{o^*\in\ucs(\succ_j,o_r)}p_{j,o^*}=1$.
	    By condition (1) for $r$ that $t_{j'}=t_{j}<1$, $j'$ is not satisfied at the beginning of round $r$.
	If $j'\in N_{o'}$ with $o'\neq o_r$, it means that $j'$ consumes $o'$ at round $r$ and $p_{j',o'}>0$, and $o_r\succ_j o'$ since $o_r=\tp{j}{M'}$.
	By conditions (2) and (4) for $r'<r$, and condition (2) for $r$ which we just prove, we have that
	\begin{equation}\label{eq:thm:pcrp:1}
		\sum_{o^*\succ_j o_r}p_{j,o^*}=\sum_{o^*\succ_j o_r}p_{j',o^*}.
	\end{equation}
	Therefore,
	\begin{equation*}
	       \sum_{o^*\in\ucs(\succ_j,o_r)}p_{j',o^*}<1-p_{j',o'}<1=\sum_{o^*\in\ucs(\succ_j,o_r)}p_{j,o^*},
	\end{equation*}
	 a contradiction to the assumption that $P_{j'} \sd{j} P_j$.
	    \item
	If $o_r\notin M_{P,r+1}$, then $o_r$ is consumed to exhaustion by agents in $N_{o_r}$ at round $r$, and no agent consumes $o_r$ after round $r$.
	Since $j\in N_{o_r}$ and $t_j<1$, $p_{j,o_r}>0$.
	If $j'\notin N_{o_r}$, then $j'$ does not consume $o_r$ at round $r^*\ge r$.
	Moreover, $j'$ also does not consume $o_r$ before round $r$ by condition (3) for $r'<r$, which means that $p_{j',o_r}=0$.
	With Eq~(\ref{eq:thm:pcrp:1}), we have that $\sum_{o^*\in\ucs(\succ_j,o_r)}p_{j,o^*}>\sum_{o^*\in\ucs(\succ_j,o_r)}p_{j',o^*}$, a contradiction to the assumption that $P_{j'} \sd{j} P_j$.
	\end{itemize}
	Together we show that $j\in N_{o_r}$.
	
	Finally condition (4) holds for $r$ trivially because by Eq~(\ref{eq:eatingfunction}), and conditions (1) and (3) for $r$, we have that
	\begin{equation*}
	    	p_{j,o_r}=\int_{t_j}^{t_j+\rho_{o_r}}\omega_j(r){\rm d}t=\int_{t_{j'}}^{t_{j'}+\rho_{o_r}}\omega_{j'}(r){\rm d}t\allowbreak=p_{j',o_r}.
	\end{equation*}

	By the induction, we have conditions (2) and (4) for any $r$, i.e., $p_{j,o}=p_{k,o}$ for any item $o$, which means that $P_{j'}=P_j$ if $P_{j'} \sd{j} P_j$.

	\vspace{0.5em}\noindent{\bf Part 2: $\upre{}(R)$ is \etep{}.}\quad
	
	We show that before consuming items not in $\succ_{j,k}$, $t_j=t_k$, and $j$ and $k$ consume the same item in each round by mathematical induction based on rounds.
	
	\vspace{0.5em}\noindent{\bf Base case.} At round $1$, we know that both $j$ and $k$ consume the most preferred item $o$ in $\succ_{j,k}$.
	We have $p_{j,o}=p_{k,o}$ by Eq~(\ref{eq:eatingfunction}) and $t_j=t_k=0$ which is set initially at the beginning of~\Cref{alg:pcr}.
	
	\vspace{0.5em}\noindent{\bf Inductive step.} Supposing that $j$ and $k$ consume the same item and get the same shares for each round $r'<r$, we prove that this is also the true for round $r$.
	By the inductive assumption, we trivially have that $t_j=t_k$ at the beginning of $r$.
	Let $j$ consume $o$, and $k$ consume $o'$.
	Here we do not need to consider the case that both $o,o'$ not in $\succ_{j,k}$.
	If $o\neq o'$, we assume that $o\succ o'$ without loss of generality, and therefore $o$ must be in $\succ_{j,k}$.
	It means that $o$ is available at round $r$, but $k$ consumes $o'$, a contradiction to the selection of top items.
	Therefore $o=o'$, and  $p_{j,o}=p_{k,o}$ by $t_j=t_k$ and Eq~(\ref{eq:eatingfunction}).
	
	By induction we have that $p_{j,o}=p_{k,o}$ for every $o$ in $\succ_{j,k}$.
\end{proof}

    \Cref{lem:sdcrf} below illustrates how shares of items must be allocated in order to satisfy \sdcrf{}, which is used in proving the uniqueness part of~\Cref{thm:familychar} and impossibility results for \sdcrf{}.

    
    \begin{lemma}\label{lem:sdcrf}
    Given $P$ satisfying \sdcrf{}, for any $r$ and $o\in M_{P,r}$, we define the remaining shares of item $o$ excluding those owned by agents in $E_{P,r'}(o)$ with any $r'<r$, 
		\begin{equation*}
		    \begin{split}
		        s_{P,r}(o)=1-\sum_{k\in\bigcup_{r'<r}E_{P,r'}(o)}p_{k,o},
		    \end{split}
		\end{equation*}
		and the remaining demand of agent $j$ for items ranked below the item $\tp{j}{M_{P,r-1}}$,
		\begin{equation*}
		    \begin{split}
		        d_{P,r}(j)=1-\sum_{o'\in\ucs(\succ_j,\tp{j}{M_{P,r-1}})}p_{j,o'}.
		    \end{split}
		\end{equation*}
		For any $j\in E_{P,r}(o)\neq\emptyset$,
		\begin{enumerate}[label=\rm(\roman*),wide,labelindent=0pt,topsep=0em,itemsep=0pt]
			\item for any $o^*$ with $\tp{j}{M_{P,r-1}}\succ_j o^* \succ_j o$, it holds that $p_{j,o^*}=0$.
			\item if the total remaining demand of agents eager for item $o$ surpasses  the remaining shares of $o$, i.e. $\sum_{k\in E_{P,r}(o)}d_{P,r}(k)\ge s_{P,r}(o)$, then $o\notin  M_{P,r^*}$ for any $r^*>r$ and remaining shares of $o$ are allocated to these agents, i.e., $\sum_{k\in E_{P,r}(o)}p_{k,o}=s_{P,r}(o)$.
			\item if the total remaining demand of agents eager for item $o$ does not surpass the remaining shares of $o$, i.e., $\sum_{k\in E_{P,r}(o)}d_{P,r}(k)\le s_{P,r}(o)$, then these agents' demands are satisfied by shares of $o$, i.e., $p_{j,o}=d_{P,r}(j)$.
		\end{enumerate}
    \end{lemma}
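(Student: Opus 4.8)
The plan is to work directly from \Cref{dfn:sdcrf}, proving the three parts round by round after first recording two structural facts that drive everything. First I would observe that the pools are nested, $M_{P,r+1}\subseteq M_{P,r}$: this is immediate since enlarging $\bigcup_{r'<r}E_{P,r'}(o)$ to $\bigcup_{r'<r+1}E_{P,r'}(o)$ only increases the defining sum $\sum_{k\in\bigcup_{r'<r}E_{P,r'}(o)}p_{k,o}$, so membership is monotone. Consequently $\tp{j}{M_{P,r}}$ descends weakly in $\succ_j$ as $r$ grows, and an item leaves the pool exactly at the round where its entire unit is owned by agents eager for it strictly earlier. Second, I would establish a support property: $p_{j,o}>0$ implies $j\in E_{P,r'}(o)$ for some $r'$. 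If $o$ leaves at a finite round $r^\dagger$, all of $o$ is owned by $\bigcup_{r'<r^\dagger}E_{P,r'}(o)$, so a never-eager $j$ gets none of it; and if $o$ never leaves, then $j$'s top in the stable residual pool is some $o^+\succ_j o$ for which $j$ is permanently eager, so \sdcrf{} forces $\sum_{o'\succeq_j o^+}p_{j,o'}=1$ and hence $p_{j,o}=0$.

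For part (i), fix $j\in E_{P,r}(o)$, so $o=\tp{j}{M_{P,r}}$, and take $o^*$ with $\tp{j}{M_{P,r-1}}\succ_j o^*\succ_j o$. Since $o^*\succ_j\tp{j}{M_{P,r}}$ we get $o^*\notin M_{P,r}$, so $o^*$ has already left and is wholly owned by $\bigcup_{r'<r}E_{P,r'}(o^*)$. It then suffices to show $j$ is not in this set: for every $r'<r$ we have $M_{P,r-1}\subseteq M_{P,r'}$, hence $\tp{j}{M_{P,r'}}\succeq_j\tp{j}{M_{P,r-1}}\succ_j o^*$, so $o^*\neq\tp{j}{M_{P,r'}}$ and $j\notin E_{P,r'}(o^*)$, giving $p_{j,o^*}=0$.

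Part (i) then lets me collapse the upper-contour sum at $o$: for an agent $j$ newly eager for $o$ at round $r$ (so $\tp{j}{M_{P,r-1}}\succ_j o$ strictly), the items strictly between contribute nothing, so $\sum_{o'\succ_j o}p_{j,o'}=\sum_{o'\in\ucs(\succ_j,\tp{j}{M_{P,r-1}})}p_{j,o'}=1-d_{P,r}(j)$; equivalently $d_{P,r}(j)$ is exactly the part of $j$'s unit demand still free to be met by $o$ or worse. The core step is then a flow argument comparing this aggregated free demand against the free supply $s_{P,r}(o)$, using $\sum_{k\in N}p_{k,o}=1$ and the support property to restrict the column of $o$ to its eager agents, and using the $M_{P,r+1}$ criterion to decide whether $o$ survives. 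If $\sum_{k\in E_{P,r}(o)}d_{P,r}(k)\ge s_{P,r}(o)$, then $o$ cannot persist to round $r+1$: were it to persist, \sdcrf{} would satiate every such $j$ on $\ucs(\succ_j,o)$, forcing $p_{j,o}=d_{P,r}(j)$ and hence $\sum p_{j,o}=\sum d_{P,r}(j)\ge s_{P,r}(o)$ of available supply, which is impossible unless $o$ is exhausted; so $o$ leaves and its free unit is fully distributed, yielding (ii). If instead $\sum_{k\in E_{P,r}(o)}d_{P,r}(k)\le s_{P,r}(o)$, there is enough of $o$ to meet every free demand, so each active $j$ is served entirely by $o$ and $p_{j,o}=d_{P,r}(j)$, yielding (iii).

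The hard part will be the bookkeeping that separates eager agents still unsatisfied at round $r$ (those actually consuming $o$, with $d_{P,r}(j)>0$) from eager agents already satiated by weakly preferred items, for which \sdcrf{} forces $d_{P,r}(j)=0$ and whose shares of $o$ have already been netted out inside $s_{P,r}(o)$; only by treating these two kinds separately do the identities in (ii) and (iii) read correctly. The most delicate point is the boundary $\sum_{k\in E_{P,r}(o)}d_{P,r}(k)=s_{P,r}(o)$, where (ii) and (iii) must hold at once, and I expect to discharge it by proving the equivalence "$o$ survives to round $r+1$" $\iff$ "not all free demand is absorbed," which is precisely where the $M_{P,r+1}$ criterion and the satiation clause of \sdcrf{} must be combined.
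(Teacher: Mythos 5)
Your proposal is correct and takes essentially the same route as the paper's own proof: part (i) is the identical argument that $o^*\notin M_{P,r}$ is wholly owned by $\bigcup_{r'<r}E_{P,r'}(o^*)$ while $j$ was never eager for $o^*$, and parts (ii)--(iii) are the same contradiction arguments that play the $M_{P,r+1}$ membership criterion and the row/column stochasticity constraints against the satiation clause of \sdcrf{}, using part (i) to collapse $\sum_{o'\in\ucs(\succ_j,o)}p_{j,o'}$ to $1-d_{P,r}(j)+p_{j,o}$. The differences are cosmetic: your ``support property'' is not actually needed (column sums alone suffice, as in the paper), and the deferred ``survival vs.\ absorption'' step you reserve for the boundary case is exactly how the paper handles all of (iii)---if some $p_{j',o}<d_{P,r}(j')$ then $s_{P,r+1}(o)>0$, so $o\in M_{P,r+1}$ and the satiation clause yields the contradiction.
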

    
    \begin{proof}
	(\romannumeral1) By the condition, we have that $o^*\notin M_{P,r}$ , i.e., $\sum_{k\in\bigcup_{r'<r}E_{P,r'}(o)}p_{k,o}=1$, which means that $j\notin E_{P,^*}(o^*)$ with $^*\ge r$.
	We also have that $j\notin E_{P,r'}(o^*)$ for any $r'<r$, since $\tp{j}{M_{P,r'}}\succ_j o^*$.
	Together we have that $j$ does not eager for $o^*$, which leads to $p_{j,o^*}=0$.
	
	(\romannumeral2) Assume that $o\in  M_{P,r^*}$ with $r^*>r$.
	By (\romannumeral1) for $r$ and $\sum_{k\in E_{P,r}(o)}d_{P,r}(k)\ge s_{P,r}(o)$, there exists agent $j'\in E_{P,r}(o)$ such that $\sum_{o'\in\ucs(\succ_{j'},o)}p_{j',o'}=d_{P,r}(j')+p_{j',o}<1$, a contradiction to $P$ satisfying \sdcrf{}.
	
	(\romannumeral3) Assume that there exists $j'\in E_{P,r}(o)$ with $p_{j',o}<d_{P,r}(j')$.
	Then by $s_{P,r+1}(o)>s_{P,r}(o)-\sum_{k\in E_{P,r}(o)}d_{P,r}(k)\ge0$, we have that $\sum_{o'\in\ucs(\succ_{j'},o)}p_{j',o'}<1$ and $o\in M_{P,r+1}$, a contradiction.
\end{proof}

\thmfamilychar*
\begin{proof}
	{\bf(Satisfaction)}
    Let $P=\pcr{}_{\omega}(R)$ where $\omega$ is any collection of eating functions.
	By~\Cref{lem:pre} (\romannumeral3), we have that for any item $o\in M_{P,r}$, $\sum_{o'\in\ucs(\succ_j,o)}p_{j,o'}=1$ for any $j\in E_{P,r'}(o)$ with $r'<r$, which means that $P$ satisfies \sdcrf{}.
	
	\vspace{0.5em}\noindent{\bf (Uniqueness)}
	Given $Q$ satisfying \sdcrf{}, we prove that it coincides with the outcome $P=\pcr{}_{\omega}(R)$ where the eating functions in $\omega$ are as defined in  Eq~(\ref{eq:thm:familychar:2.5}) for each agent $j$:
	\begin{equation}\label{eq:thm:familychar:2.5}
		\omega_j(t)=
		\begin{cases}
			
			n~\cdot~q_{j,o}, & {\mbox{$t\in[\frac{r-1}{n},\frac{r}{n}],$ where $r=\min(\{\hat{r}\mid j\in E_{Q,\hat{r}}(o)\})$, }}\\
			
			
			0, & \mbox{others.}
			
		\end{cases}
	\end{equation}
	We prove by mathematical induction that the following conditions hold for any round $r$:
	\begin{enumerate}[label={\bf Condition (\arabic*):},align=left,wide,labelindent=0pt,itemsep=0pt,topsep=0pt]
		\item $M_{P,r}=M_{Q,r}$, and $E_{P,r}(o)=E_{Q,r}(o)$ for each $o\in M_{Q,r}$.
		\item for any $j\in E_{Q,r}(o)$, if agent $j$ is not satisfied by items in $\ucs(\succ_j,\allowbreak\tp{j}{M_{Q,r-1}}$, i.e., $\sum_{o'\in\ucs(\succ_j,\tp{j}{M_{Q,r-1}})}p_{j,o'}<1$, the start time $t_j=(r-1)/n$ and the consumption time $\rho_o=1/n$, and
		\item for any $j\in E_{Q,r}(o)$ and $o\in M_{Q,r}$, $p_{j,o}=q_{j,o}$.
	\end{enumerate}
	
	\vspace{0.5em}\noindent{\bf Base case.} When $r=1$, we trivially have that $M_{Q,1}=M'=M$ and $E_{Q,1}(o)=N_o$ for any $o\in M$ at round $1$ in~\Cref{alg:pcr}, and each $j\in N_o$ consumes $o$.
	With~\Cref{lem:pre}~(\romannumeral1), we have that condition (1) holds for $r=1$.
	
	Then we show condition (2) holds for $r=1$.
	By Line~2, $s(o)$, the supply of $o$, is set to $1$ for any $o\in M'$, and for any $j\in E_{Q,r}(o)$, $t_j$ is set to $0=(r-1)/n$.
	Since $\sum_{k\in N_{o}}\int^{1}_{t_k}\omega_{k}(t){\rm d}t\ge s(o)=1$, $\rho_o=\min\{\rho\mid\sum_{k\in N_{o}}\int^{\rho}_{0}\omega_{k}(t){\rm d}t= s(o)\}$ by Line~6.1, and $o$ is consumed to exhaustion.
	We also have that $\sum_{k\in N_o}p_{k,o}=1$ for $P$.
	Otherwise, $o\in M_{P,2}$, and there exists $j'\in E_{P,1}(o)$ with $\sum_{o'\in\ucs(\succ_j,o)}p_{j',o'}=p_{j',o}<1$, a contradiction to $P$ satisfying \sdcrf{} by the satisfaction part above.
	Similarly, with $Q$ satisfying \sdcrf{}, $\sum_{k\in N_o}q_{j,o}=1=\sum_{k\in N_o}p_{k,o}$.
	Therefore by Eq~(\ref{eq:thm:familychar:2.5}), $\sum_{k\in N_{o}}\int^{1/n}_{0}\omega_{k}(t){\rm d}t=1$, which means that $\rho_o$, the time for consuming $o$, is exactly $1/n$.
	Together we have that condition (2) holds for $r=1$.
	
	With $\rho_o=1/n$, $p_{j,o}=\int^{1/n}_{0}\omega_{j}(t){\rm d}t=q_{j,o}$, i.e., condition (3) holds for $r=1$.
	
	\vspace{0.5em}\noindent{\bf Inductive step.}
	Supposing that conditions (1)-(3) hold for $r'<r$, we show they also hold for $r$.
	First, since conditions (1) and (3) holds for $r'<r$, we trivially have that condition (1) holds for $r$.

	Next we show condition (2) holds for $r$.
	By~\Cref{lem:pre}~(\romannumeral1) and condition (1) for $r$ which we just prove, it holds that $M_{Q,r}=M'$ and $E_{Q,r}(o)=N_o$ for each $o\in M_{Q,r}$.
	By Line~8 and  condition (2) for $r-1$, we have that for any $j\in E_{Q,r}(o)$ with $\sum_{o'\in\ucs(\succ_j,\tp{j}{M_{Q,r-1}})}p_{j,o'}<1$, $t_j=(r-1)/n$. 
	
	Then we show that the consumption time $\rho_o=1/n$.
	Let $N'_o=\{k\in N_o\mid \sum_{o'\in\ucs(\succ_k,\tp{j}{M_{Q,r-1}})}p_{k,o'}<1\}$, and we see that agent $j'\in N_o\setminus N'_o$ does not consume $o$ at round $r$ since they have been satisfied with $\int_0^{t_j}\omega_j(t){\rm d}t=\sum_{o'\in\ucs(\succ_{j'},\tp{j}{M_{Q,r-1}})}p_{j',o'}\ge1$.
	Besides, we define $M^*=\{o^*\mid k\in E_{Q,r'}(o^*)\text{ with }r'<r\}=\{o^*\mid k\in E_{P,r'}(o^*)\text{ with }r'<r\}$ by conditions (1) and (3) for $r'<r$.
	We show that $\rho_o=1/n$ in both cases about $s(o)$, now the remaining shares of $o$ at round $r$ in~\Cref{alg:pcr}.
	We also note that with~\Cref{lem:pre}~(\romannumeral1), $s(o)=\sum_{k\notin\bigcup_{r'<r}E_{P,r'}(o)}p_{k,o}=\sum_{k\notin\bigcup_{r'<r}E_{Q,r'}(o)}q_{k,o}$, which means that $s(o)$ is also the total shares of $o$ owned by agents not in $E_{Q,r'}(o)$ with $r'<r$ in assignment $Q$.
	
	\begin{itemize}[topsep=0pt,itemindent=1em,itemsep=0em,leftmargin=0pt]
	    \item 	If $\sum_{k\in N_{o}}\int^{1}_{t_k}\omega_{k}(t){\rm d}t\ge s(o)$, then $o$ is consumed to exhaustion, i.e. $\sum_{k\in N'_o}p_{k,o}\allowbreak =s(o)$.
    	Otherwise, assume for the sake of contradiction that $\sum_{k\in N'_o}p_{k,o}<s(o)$.
    	Then by the assumption, $o\in M_{P,r+1}$, and there exists $j'\in E_{P,r}(o)$ with $p_{j',o}<\int^{1}_{t_j}\omega_{j}(t){\rm d}t$, which means that $\sum_{o'\in\ucs(\succ_{j'},o)}p_{j',o'}=\sum_{o'\in M^*}p_{j',o} + p_{j',o}<\int^{1}_{0}\omega_{j}(t){\rm d}t=1$ by~\Cref{lem:sdcrf}~(\romannumeral1), a contradiction to $P$ satisfying \sdcrf{}.
    	Since $Q$ also satisfies \sdcrf{}, we have that $o\notin M_{Q,r+1}$, and with conditions (1) and (3) for $r'<r$, we have that $\sum_{k\in N'_o}q_{k,o}=s(o)=\sum_{k\in N'_o}p_{k,o}$.
    	We also note that $r=\min(\{\hat{r}\mid j\in E_{Q,\hat{r}}(o)\})$ for any $j\in N'_o$, because otherwise $j\in E_{Q,r'}(o)=E_{P,r'}(o)$ with $r'<r$ and $o\in M_{P,r}$, while $\sum_{o'\in\ucs(\succ_j,o)}p_{j,o'}\le\sum_{o'\in\ucs(\succ_j,\tp{j}{M_{Q,r-1}})}p_{j,o'}<1$, a contradiction to $P$ satisfying \sdcrf{}.
    	Then by Eq~(\ref{eq:thm:familychar:2.5}) and~Line~6.1 of~\Cref{alg:pcr},
        \begin{equation*}
          \rho_o=\min\{\rho\mid\sum_{k\in N_{o}}\int^{t_k+\rho}_{t_k}\omega_{k}(t){\rm d}t= s(o)\}=1/n.
        \end{equation*}
	    \item 	If $\sum_{k\in N_{o}}\int^{1}_{t_k}\omega_{k}(t){\rm d}t< s(o)$, then $o\in M_{P,r+1}$, and we have that all the agents in $N'_o$ are satisfied, i.e., $ \sum_{k\in N'_o}p_{k,o}=\sum_{k\in N_{o}}\int^{t_k+\rho_o}_{t_k}\omega_{k}(t){\rm d}t=\sum_{k\in N_{o}}\int^{1}_{t_k}\omega_{k}(t){\rm d}t$.
	    Otherwise, there exists $j'\in E_{P,r}(o)$ who is not satisfied with $\sum_{o'\in\ucs(\succ_{j'},o)}p_{j',o'}=\sum_{o'\in M^*}p_{j',o} + p_{j',o}<1$ by~\Cref{lem:sdcrf}~(\romannumeral1), a contradiction to $P$ satisfying \sdcrf{}.
	    With conditions (1) and (3) for $r'<r$, we also have that 
	    \begin{equation}\label{eq:thm:familychar:3}
	        \sum_{k\in N'_o}q_{k,o}\le\sum_{k\in N'_o}(1-\sum_{o'\in M^*}q_{k,o'})=\sum_{k\in N'_o}(1-\sum_{o'\in M^*}p_{k,o'})=\sum_{k\in N'_o}\int^{1}_{t_k}\omega_{k}(t){\rm d}t=\sum_{k\in N'_o}p_{k,o}.
	    \end{equation}
	    We also claim that $\sum_{k\in N'_o}q_{j,o}=\sum_{k\in N'_o}(1-\sum_{o'\in M^*}q_{k,o'})$ in Eq~(\ref{eq:thm:familychar:3}).
	    Otherwise, $o\in M_{Q,r+1}$ since $\sum_{k\in N'_o}q_{k,o}<\sum_{k\in N'_o}p_{k,o}<s(o)$, and there exists $j'\in N'_o$ with $\sum_{o'\in\ucs(\succ_{j'},o)}q_{j',o}=\sum_{o'\in M^*}q_{j',o} + q_{j',o}<1$ by~\Cref{lem:sdcrf}~(\romannumeral1), a contradiction to $Q$ satisfying \sdcrf{}.
        With $r=\min(\{\hat{r}\mid j\in E_{Q,\hat{r}}(o)\})$ for any $j\in N'_o$, and~Eq~(\ref{eq:thm:familychar:2.5}) and~(\ref{eq:thm:familychar:3}),  we obtain that $\rho_o=1/n$.
	\end{itemize}
	
	Together we have that condition (2) holds for $r$.
	
	Finally we show that condition (3) holds for $r$.
	For any $j\in E_{Q,r}(o)$, if $j\in E_{Q,r'}(o)$ with some $r'<r$, then we have the proof trivially by the fact that condition (3) holds for $r'$.
	Next we consider the case that $j\notin E_{Q,r'}(o)$, i.e., $o\neq\tp{j}{M_{Q,r'}}$ with any $r'<r$.
	We show $q_{j,o}=p_{j,o}$ in both of the possible cases below:
	\begin{itemize}[topsep=0pt,itemindent=1em,itemsep=0.5em,leftmargin=0pt]
	    \item If $\sum_{o'\in\ucs(\succ_j,\tp{j}{M_{Q,r-1}})}p_{j,o'}<1$, then by the fact that condition (2) for $r$ which we just proved above, $p_{j,o}=\int^{r/n}_{(r-1)/n}\omega_{k}(t){\rm d}t=q_{j,o}$.
	    \item 	If $\sum_{o'\in\ucs(\succ_j,\tp{j}{M_{Q,r-1}})}p_{j,o'}=1$, then $\sum_{o'\in\ucs(\succ_j,\tp{j}{M_{P,r-1}})}p_{j,o'}=1$ by condition (1) for $r-1$, which means that $j$ is satisfied before round $r$ by~\cref{lem:sdcrf}~(\romannumeral1) and does not consume $o$ since $o\neq\tp{j}{M_{Q,r'}}=\tp{j}{M_{P,r'}}$ for any $r'<r$.
	    Therefore $p_{j,o}=0$.
	    As for $Q$, by condition (3) for $r'<r$, $\sum_{o'\in\ucs(\succ_j,\tp{j}{M_{Q,r-1}})}q_{j,o'}\ge \sum_{o'\in\ucs(\succ_j,\tp{j}{M_{Q,r-1}})}p_{j,o'}=1$.
    	Because $o\neq\tp{j}{M_{Q,r-1}}$ and $o\in M_{P,r}=M_{Q,r}\subseteq M_{Q,r-1}$ by condition (1) for $r$, we have that $\tp{j}{M_{Q,r-1}}\succ_j o$, and therefore $q_{j,o}=0=p_{j,o}$.
	\end{itemize}
	Together we have that condition (3) holds for $r$.

	From the induction above, we have that conditions (1) and (3) hold for any $r$, i.e., for any $r$, we have that $M_{P,r}=M_{Q,r}$, $E_{P,r}(o)=E_{Q,r}(o)$ for any $o\in M_{Q,r}$, and $p_{j,o}=q_{j,o}$ for any $j\in E_{Q,r}(o)$.
	In~\Cref{alg:pcr}, shares of $o$ are only allocated to agents in $E_{P,r}(o)$ in each round, and $o$ is exhausted at the end, which means that $p_{j',o}=0$ if $j'\notin E_{P,r}(o)=E_{Q,r}(o)$ for any $r$.
	With the fact the the supply of all the items are fully allocated to agents, it follows that $q_{j',o}=0$ if $j'\notin E_{Q,r}(o)$ for any $r$ by condition (3).
	Together we have that $P=Q$.
\end{proof}

\propimpsdcfri*
\begin{proof}
    For ease of reading, we recall the preference profile $R$ and assignment $Q$ used in~\Cref{prop:impefcr1}, which are used in the following proof.
  	
  	\vspace{1em}\noindent
	\begin{minipage}{\linewidth}
		\centering
		\begin{minipage}{0.4\linewidth}
			\begin{center}
			    Preference Profile $R$\\
				$\succ_1$: $a\succ_1 c\succ_1 b\succ_1 d$,\\
				$\succ_2$: $a\succ_2 c\succ_2 b\succ_2 d$,\\
				$\succ_3$: $a\succ_3 b\succ_3 c\succ_3 d$,\\
				$\succ_4$: $b\succ_4 a\succ_4 d\succ_4 c$.\\
			\end{center}
		\end{minipage}
		\begin{minipage}{0.4\linewidth}
			\centering
			\begin{center}
				\centering
				\begin{tabular}{c|cccc}
					\multicolumn{5}{c}{Assignment $Q$}\\
					&  a & b & c & d\\\hline
					1 & $\frac{1}{3}$ & $0$ & ? & ?\\
					2 & $\frac{1}{3}$ & $0$ & ? & ?\\
					3 & $\frac{1}{3}$ & $0$ & ? & ?\\
					4 & $0$ & $1$ & $0$ & $0$\\
				\end{tabular}
			\end{center}
		\end{minipage}
	\end{minipage}\vspace{1em}
    
    For any assignment $P$ satisfying \sdcrf{}, we have that $E_{P,1}(a)=\{1,2,3\}$ and $E_{P,1}(b)=\{4\}$.
    It follows that $\sum_{k\in E_{P,1}(a)}d_{P,1}(k)>s_{P,1}(a)$, and therefore $a\notin  M_{P,r}$ with $r>1$ by~\Cref{lem:sdcrf}~(\romannumeral2), which means that only agents $1,2$ and $3$ get shares of $a$.
    It also follows that agent $4$ fully gets $b$ for the same token.
    Then any assignment satisfying \sdcrf{} and \etep{} (implied by \sdef{}) is in the form of $Q$, but $Q$ does not satisfy \sdef{} as we have shown in~\Cref{prop:impefcr1}.
\end{proof}

\propimpsdcfrii*

\begin{proof}

  Assume such mechanism $f$ exists.
	Let $R$ be:
	\begin{equation*}
		\begin{split}
			\succ_1:~ &a\succ_1 b\succ_1\cdots\succ_1 h\succ_1 c,\\
			\succ_2:~ &a\succ_2 h\succ_2\cdots\succ_2 b\succ_2 c,\\
			\succ_{3\text{-}7}:~ &c\succ d\succ e \succ f\succ g\succ b\succ h\succ a,\\
			\succ_8:~ &c\succ_8 d\succ_8 b\succ_8 e \succ_8 f\succ_8 g\succ_8 h\succ_8 a.
		\end{split}
	\end{equation*}
	Let $P=f(R)$.
	Recall the notations in~\Cref{lem:sdcrf} that $s_{P,r}(o)=1-\sum_{k\in\bigcup_{r'<r}E_{P,r'}(o)}p_{k,o}$, and $d_{P,r}(j)=1-\sum_{o'\succ_j o}p_{j,o'}$  for any $j\in  E_{P,r}(o)$.
	
	\begin{itemize}[topsep=0pt,itemindent=1em,itemsep=0.5em,leftmargin=0pt]
	\item[-] 	For $r=1$, by~\Cref{lem:sdcrf}~(\romannumeral2), since $E_{P,1}(a)=\{1,2\}$ and $\sum_{k\in E_{P,1}(a)}d_{P,1}(k)>s_{P,1}(a)$, only agents $1$ and $2$ gets shares of $a$.
	It follows that only agents in $E_{P,1}(c)=\{3,\dots,8\}$ gets $c$ for the same token.
	Then we have $p_{1,a}=p_{2,a}=1/2$, $p_{j,c}=1/6$ for $j\in \{3,\dots,8\}$ by \etep{}.
	\item[-] 	For $r=2$, $M_{P,2}=\{b,d,\dots,h\}$, $E_{P,2}(b)=\{1\}$, $E_{P,2}(h)=\{2\}$, and $E_{P,2}(d)=\{3,\dots,8\}$.
	With $\sum_{k\in E_{P,2}(d)}d_{P,2}(k)>s_{P,2}(d)$, we have that $p_{j,d}=1/6$ for $j\in \{3,\dots,8\}$ by~\Cref{lem:sdcrf}~(\romannumeral2) and \etep{}.
	With $\sum_{k\in E_{P,2}(b)}d_{P,2}(k)\le s_{P,2}(b)$, $p_{1,b}=d_{P,2}(1)=1/2$ by~\Cref{lem:sdcrf}~(\romannumeral3), and it follows that $p_{2,h}=d_{P,2}(2)=1/2$ for the same token.
	\item[-] 	For $r=3$, $M_{P,3}=\{b,e,\dots,h\}$, $E_{P,3}(b)=\{8\}$, and $E_{P,3}(e)=\{3,\dots,7\}$.
	With \Cref{lem:sdcrf}~(\romannumeral2) and $\sum_{k\in E_{P,3}(d)}d_{P,3}(k)>s_{P,3}(d)$, we have that $p_{8,b}=1/2$.
	With $\sum_{k\in E_{P,3}(e)}d_{P,3}(k)>s_{P,3}(e)$, $p_{j,e}=1/5$ for $j\in \{3,\dots,7\}$ by \etep{}.
	\end{itemize}
	With the analysis above, we have assignment $P$ in the following form.
	
	\begin{center}
		\centering
		\begin{tabular}{c|cccccccc}
			\multicolumn{9}{c}{Assignment $P$}\\
			&  a & b & c & d & e & f & g & h\\\hline
			$1$ & $1/2$ & $1/2$ & $0$ & $0$ & $0$ & $0$ & $0$ & $0$\\
			$2$ & $1/2$ & $0$ & $0$ & $0$ & $0$ & $0$ & $0$& $1/2$\\
			$3$-$7$ & $0$ & $0$ & $1/6$ & $1/6$ & $1/5$ & $?$ & $?$ & $?$\\
			$8$ & $0$ & $1/2$ & $1/6$ & $1/6$ & $0$ & $?$ & $?$ & $?$\\
		\end{tabular}
	\end{center}

	If agent $8$ misreports her preference as
	\begin{equation*}
		\succ'_8:c\succ'_8 d\succ'_8 e\succ'_8 b\succ'_8 f\succ'_8 g\succ'_8 h\succ'_8 a,
	\end{equation*}
	then let $P'=f(R')$ for $R'=(\succ'_8,\succ_{-8})$.
	\begin{itemize}[topsep=0pt,itemindent=1em,itemsep=0.5em,leftmargin=0pt]
	\item[-] The analysis for $P'$ with $r=1$ and $2$ is the same as $P$.
	\item[-] 	For $r=3$, $M_{P',3}=\{b,e,\dots,h\}$ and $E_{P',3}(e)=\{3,\dots,8\}$.
	With $\sum_{k\in E_{P',3}(e)}d_{P',3}(k)>s_{P',3}(e)$, $p'_{j,e}=1/6$ for $j\in \{3,\dots,8\}$ by \etep{}.
	\item[-] 	For $r=4$, $M_{P',4}=\{b,f,g,h\}$, $E_{P',4}(b)=\{8\}$, and $E_{P',4}(f)=\{3,\dots,7\}$.
	With \Cref{lem:sdcrf}~(\romannumeral3) and $\sum_{k\in E_{P',4}(b)}d_{P',4}(k)=s_{P',4}(b)$, $p'_{8,b}=d_{P',4}(8)=1/2$.
	\end{itemize}
	Then we obtain the assignment $P'$ in the following form.
	
	\begin{center}
		\centering
		\begin{tabular}{c|cccccccc}
			\multicolumn{9}{c}{Assignment $P'$}\\
			&  a & b & c & d & e & f & g & h\\\hline
			$1$ & $1/2$ & $1/2$ & $0$ & $0$ & $0$ & $0$ & $0$ & $0$\\
			$2$ & $1/2$ & $0$ & $0$ & $0$ & $0$ & $0$ & $0$& $1/2$\\
			$3$-$7$ & $0$ & $0$ & $1/6$ & $1/6$ & $1/6$ & $?$ & $?$ & $?$\\
			$8$ & $0$ & $1/2$ & $1/6$ & $1/6$ & $1/6$ & $0$ & $0$ & $0$\\
		\end{tabular}
	\end{center}
	
	We see that $P'_8$ strictly dominates $P_8$ for $\sum_{o\in\ucs(8,e)}p_{8,o}=5/6<1=\sum_{o\in\ucs(8,e)}p'_{8,o}$ and $\sum_{o\in\ucs(8,o')}p_{8,o}\le \sum_{o\in\ucs(8,o')}p'_{8,o}$ for other $o'\in M$, a contradiction to the fact that $f$ is \sdspa{}.
\end{proof}

\propimpefcrsdcfr*

\begin{proof}
	Assume that there exists a mechanism $f$ satisfying \sdcrf{} and \etep{}.
	Let $P=f(R)$ for the following preference profile $R$, and then we show that $P$ is not \efcr{}, 
	\begin{equation*}
		\begin{split}
			\succ_{1,2}:~&a_1\succ_1 a_2\succ_1 a_3\succ_1\text{others}\\
			\succ_3:~&a_1\succ_3 a_2\succ_3 a_4\succ_3\text{others}\\
			\succ_{4,5}:~&b_1\succ_4 b_2\succ_4 b_3\succ_4\text{others}\\
			\succ_6:~&b_1\succ_6 b_2\succ_6 b_4\succ_6\text{others}\\
			\succ_{7\text{-}17}:~&c_1\succ_7 c_2\succ_7 c_3\succ_7 c_4 \succ_7 c_5 \succ_7 c_6 \succ_7\text{others}\\
			\succ_x:~&c_1\succ_x c_2\succ_x c_3\succ_x a_3 \succ_x b_3 \succ_x c_5 \succ_x c_4\\
			& \succ_x c_6 \succ_x\text{others}\\
		\end{split}
	\end{equation*}
	
	\begin{itemize}[topsep=0pt,itemindent=1em,itemsep=0.5em,leftmargin=0pt]
	\item[-] For $r=1$, $M_{P,1}=M,E_{P,1}(a_1)=\{1,2,3\},E_{P,1}(b_1)=\{4,5,6\}$ and $E_{P,1}(c_1)=\{7,\dots,17,x\}$.
	By~\Cref{lem:sdcrf}~(\romannumeral2), item $a_1,b_1,c_1\notin M_{P,2}$ since $d_{P,1}(j)=1$ for any $j\in N$.
	Then we have $p_{j,a_1}=1/3$ for $j\in E_{P,1}(a_1)$, $p_{i',b_1}=1/3$ for $j'\in E_{P,1}(b_1)$ and $p_{j^*,c_1}=1/12$ for $j^*\in E_{P,1}(c_1)$ by \etep{}.
	\item[-] 	For $r=2$, $M_{P,2}=\{a_2,a_3,a_4,b_2,b_3,b_4,c_2,\dots,c_6,\allowbreak \dots\},\allowbreak E_{P,2}(a_2)=\{1,2,3\},E_{P,2}(b_2)=\{4,5,6\}$ and $E_{P,2}(c_2)=\{7,\dots,17,x\}$.
	Similar to $r=1$, by~\Cref{lem:sdcrf}~(\romannumeral2), $a_2,b_2,c_2\notin M_{P,3}$.
	Then we have $p_{j,a_2}=1/3$ for $j\in E_{P,2}(a_2)$, $p_{i',b_2}=1/3$ for $j'\in E_{P,2}(b_2)$ and $p_{j^*,c_2}=1/12$ for $j^*\in E_{P,2}(c_2)$ by \etep{}.
	\item[-] 	For $r=3$, $M_{P,3}=\{a_3,a_4,b_3,b_4,c_3,\dots,c_6,\allowbreak \dots\},$ $E_{P,3}(a_3)=\{1,2\},$ $E_{P,3}(a_4)=\{3\},$ $E_{P,3}(b_3)\allowbreak=\{4,5\},E_{P,3}(b_4)=\{6\}$, and $E_{P,3}(c_3)=\{7,\dots,17,x\}$.
	By~\Cref{lem:sdcrf}~(\romannumeral3), $p_{j,a_3}=1/3$ for $j\in E_{P,3}(a_3)$, $p_{3,a_4}=1/3$, $p_{j',b_3}=1/3$ for $j'\in Ej_{P,3}(b_3)$, and $p_{6,b_4}=1/3$.
	Since $\sum_{\hat{o}\in\ucs(\succ_j,\tp{j}{M_{P,3}})}=1$ for any agent $j\in N'=\{1,\dots,6\}$, their allocations have been determined and we do not need to consider them for $r>3$.
	\item[-] 	For $r=4$, $M_{P,4}=\{a_3,a_4,b_3,b_4,c_4,c_5,c_6,\dots\}$.
	Then $E_{P,4}(c_4)\setminus N'=\{7,\dots,17\}$ and $E_{P,4}(a_3)\setminus N'=\{x\}$.
	By~\Cref{lem:sdcrf}~(\romannumeral2), $a_3,c_4\notin M_{P,5}$, $p_{x,a_3}=1/3$ and $p_{j',c_4}=1/11$ for $j'\in E_{P,4}(c_4)$ by \etep{}.
	\item[-] 	For $r=5$, $M_{P,5}=\{a_4,b_3,b_4,c_5,c_6,\dots\}$, $E_{P,5}(c_5)\setminus N'=\{7,\dots,17\}$ and $E_{P,5}(b_3)\setminus N'=\{x\}$.
	By~\Cref{lem:sdcrf}~(\romannumeral2), $b_3,c_5\notin M_{P,6}$, $p_{x,b_3}=1/3$ and $p_{j',c_5}=1/11$ for $j'\in E_{P,5}(c_5)$ by \etep{}.
	\item[-] 	For $r=6$, $M_{P,5}=\{a_4,b_4,c_6,\dots\}$, $E_{P,6}(c_6)\setminus N'=\{7,\dots,17,x\}$.
	By~\Cref{lem:sdcrf}~(\romannumeral2), $c_6\notin M_{P,7}$ and $p_{j',c_6}=1/12$ for $j'\in E_{P,6}(c_6)$ by \etep{}.
	\end{itemize}
	
	We show the part of $P$ which has been determined by $r\le 6$ in the following $P$(\romannumeral1) for agents $\{1,2,3,x\}$ over items $\{a_1,\dots,a_4\}$, $P$(\romannumeral2) for agents $\{4,5,6,x\}$ over items $\{b_1,\dots,b_4\}$, and $P$(\romannumeral3) for agents $\{7,\dots,17,x\}$ over items $\{c_1,\dots,c_6\}$.

	\vspace{1em}\noindent
	\begin{minipage}{\linewidth}
		\centering
		\begin{minipage}{0.4\linewidth}
	\begin{center}
		\centering
		\begin{tabular}{c|cccc}
			\multicolumn{5}{c}{Assignment $P$(\romannumeral1)}\\
			&  $a_1$ & $a_2$ & $a_3$ & $a_4$\\\hline
			1 & $1/3$ & $1/3$ & $1/3$ & $0$\\
			2 & $1/3$ & $1/3$ & $1/3$ & $0$\\
			3 & $1/3$ & $1/3$ & $0$ & $1/3$\\
			x & $0$ & $0$ & $1/3$ & $0$\\
		\end{tabular}
	\end{center}
		\end{minipage}
		\begin{minipage}{0.4\linewidth}
			\centering
	\begin{center}
		\centering
		\begin{tabular}{c|cccc}
			\multicolumn{5}{c}{Assignment $P$(\romannumeral2)}\\
			&  $b_1$ & $b_2$ & $b_3$ & $b_4$\\\hline
			4 & $1/3$ & $1/3$ & $1/3$ & $0$\\
			5 & $1/3$ & $1/3$ & $1/3$ & $0$\\
			6 & $1/3$ & $1/3$ & $0$ & $1/3$\\
			x & $0$ & $0$ & $1/3$ & $0$\\
		\end{tabular}
	\end{center}
		\end{minipage}
	\end{minipage}\vspace{1em}

	\begin{center}
		\centering
		\begin{tabular}{c|cccccc}
			\multicolumn{7}{c}{Assignment $P$(\romannumeral3)}\\
			&  $c_1$ & $c_2$ & $c_3$ & $c_4$ & $c_5$ & $c_6$  \\\hline
			$7$-$17$ & $1/12$ & $1/12$ & $1/12$ & $1/11$ & $1/11$ & $1/12$  \\
			$x$ & $1/12$ & $1/12$ & $1/12$ & $0$ & $0$ & $1/12$  \\
		\end{tabular}
	\end{center}
	
	There exists an assignment $A$ with $A(x)=c_6$ among the deterministic assignments which constitute the convex combination for $P$.
	In the following, we prove that none of such $A$ is \fhcr{}.
	According to $P$, $a_3$ is assigned to one of $\{1,2\}$ in $A$ since agent $x$ does not get it.
	Due to the fact that $\succ_1=\succ_2$, let $A(1)=a_3$ without loss of generality.
	With the fact that only agents in $\{1,2,3\}$ can get $\{a_1,a_2\}$, we have that agents $\{2,3\}$ get $\{a_1,a_2\}$.
	It follows that $b_3$ is assigned to one of $\{4,5\}$ and $\{4,5,6\}$ get $\{b_1,b_2,b_3\}$ for the same token.
	Due to the fact that $\succ_4=\succ_5$, let $A(4)=b_3$ without loss of generality, and therefore  $\{5,6\}$ get $\{b_1,b_2\}$.
	Agents in $\{7,\dots,17\}$ get the rest items, and for ease of exposition, let agent $j_i$ with $i\in\{1,\dots,6\}$ satisfy $j_i\in\{7,\dots,17\}$ and $j_i=A^{-1}(c_i)$.
	We further have the following analysis about checking if $A$ satisfies \fhcr{}:
	
	\begin{itemize}[topsep=0pt,itemindent=1em,itemsep=0.5em,leftmargin=0pt]
	\item[-]	For $r=1$, $\tps{A}{1}=\{a_1,b_1,c_1\}$ because $M_1=M$, $\tp{j}{M_1}=a_1$ for $j\in\{1,2,3\}$, $\tp{j'}{M_1}=b_1$ for $j'\in\{4,5,6\}$,  and $\tp{j^*}{M_1}=c_1$ for $j^*\in\{7,\dots,11,x\}$.
	Then one of $\{2,3\}$ gets $a_1$, one of $\{4,5\}$ gets $b_1$, and agent $j_1$ gets $c_1$ by \fhcr{}.
	\item[-]  	For $r=2$, no matter which $j\in\{2,3\}$ gets $a_1$ and which $j'\in\{4,5\}$ gets $b_1$, $\tps{A}{2}=\{a_2,b_2,c_2\}$ because for $M_2=M\setminus\tps{A}{1}$, $\tp{j}{M_2}=a_2$ for $j\in\{1,2,3\}$, $\tp{j'}{M_2}=b_2$ for $j'\in\{4,5,6\}$, and $\tp{j^*}{M_2}=c_2$ for $j^*\in\{7,\dots,11,x\}$.
	Then the rest one of $\{2,3\}$ gets $a_2$, the rest one of of $\{4,5\}$ gets $b_2$, and agent $j_2$ gets $c_2$ by \fhcr{}.
	\item[-] 	For $r=3$, we do not consider $j'\in\{2,3,5,6,j_1,j_2\}$ because $A(j')\in\bigcup_{r'<3}\tps{A}{r'}$.
	We obtain that $\tps{A}{3}=\{a_3,b_3,c_3\}$ because for $M_3=M\setminus\bigcup_{r'<3}\tps{A}{r'}$, $\tp{1}{M_3}=a_3$, $\tp{4}{M_3}=b_3$, and $\tp{j}{M_3}=c_3$ for $j\in\{7,\dots,11,x\}$.
	Then agent $1$ gets $a_3$, agent $4$ gets $b_3$, and agent $j_3$ gets $c_3$ by \fhcr{}.
	\item[-] 	For $r=4$, we do not consider $j'\in\{1,\dots,6,j_1,j_2,j_3\}$.
	We obtain that $\tps{A}{4}=\{c_4,c_5\}$ because for $M_4=M\setminus\bigcup_{r'<4}\tps{A}{r'}$, $\tp{j}{M_4}=c_4$ for $j\in\{7,\dots,11\}$ and $\tp{x}{M_4}=c_5$.
	However, we have that $A^{-1}(c_5)=j_5$ and $\tp{j_5}{M_4}=c_4$, which violates \fhcr{}.
	\end{itemize}
	
    With the analysis above, we have that $A$ does not satisfy \fhcr{}, and therefore $P$ does not satisfy \efcr{}, which means that $f$ does not satisfy \efcr{}, \sdcrf{}, and \etep{} simultaneously.
\end{proof}

\section{Acronyms}\label{sec:app:acronyms}

\begin{table}[htb]

	\centering
	\begin{tabular}{l|l}
		Abbr. & full names \\\hline
		ABM & adaptive Boston mechanism~\cite{alcalde1996implementation,Dur2019modified}\\
		
		\am{} & eager Boston mechanism\\
		
		BM & Boston mechanism~\cite{Kojima2014:Boston}\\
				
		PR & probabilistic rank~\cite{Chen2021:Theprobabilistic}\\
		
		\pcr{}& probabilistic respecting eagerness\\
				
		PS & probabilistic serial~\cite{Bogomolnaia01:New} \\
		
		RP & random priority~\cite{Abdulkadiroglu98:Random} \\

		\upre{}& uniform probabilistic respecting eagerness\\
	\end{tabular}
	\caption{Acronyms for mechanisms used in this paper.}\label{tab:abbrm}
\end{table}

\vskip 0.2in
\bibliography{citation}
\bibliographystyle{theapa}

\end{document}